\theoremstyle{plain}
\newtheorem{theorem}{Theorem}%[section]
\newtheorem{lemma}[theorem]{Lemma}
\theoremstyle{remark}
\newtheorem{remark}[theorem]{Remark}
\newcommand\R{{\ensuremath {\mathbb R} }}
\newcommand\T{{\ensuremath {\mathbb T} }}
\newcommand\Z{{\ensuremath {\mathbb Z} }}
\newcommand\1{{\ensuremath {\mathds 1} }}
\newcommand\dGamma{{\rm d}\Gamma}
\newcommand\nn{\nonumber}
\renewcommand\phi{\varphi}
\newcommand{\bH}{\mathbb{H}}
\newcommand{\cC}{\mathcal{C}}
\newcommand{\cM}{\mathcal{M}}
\newcommand{\cE}{\mathcal{E}}
\newcommand{\cF}{\mathcal{F}}
\newcommand{\cN}{\mathcal{N}}
\newcommand{\eps}{\epsilon}
\newcommand{\F}{\mathcal{F}}
\def\d{\,{\rm d}}
\renewcommand{\epsilon}{\varepsilon}
\DeclareMathOperator{\tr}{{\rm Tr}}
\DeclareMathOperator{\Tr}{{\rm Tr}}
\renewcommand{\ge}{\geqslant}
\renewcommand{\le}{\leqslant}
\renewcommand{\geq}{\geqslant}
\renewcommand{\leq}{\leqslant}
\renewcommand{\hat}{\widehat}
\renewcommand{\tilde}{\widetilde}
\title[Derivation of 3D energy-critical NLS and excitations]{Derivation of 3D energy-critical nonlinear Schr\"odinger equation and Bogoliubov excitations for Bose gases}
\author[P.T. Nam]{Phan Th\`anh Nam}
\address{Department of Mathematics, LMU Munich, Theresienstrasse 39, 80333 Munich, Germany} 
\email{nam@math.lmu.de}
\author[R. Salzmann]{Robert Salzmann}
\address{Department of Mathematics, LMU Munich, Theresienstrasse 39, 80333 Munich, Germany} 
\email{r.salzmann@campus.lmu.de, rals.salzmann@web.de}
\begin{document}
%\date{today}

\begin{abstract} We derive the 3D quintic NLS as the mean field limit of a Bose gas with three-body interactions. The quintic NLS is energy-critical, leading to  several new difficulties in comparison with the cubic NLS which emerges from Bose gases with pair-interactions. Our method is based on Bogoliubov's approximation, which also provides the information on the fluctuations around the condensate in terms of a norm approximation for the $N$-body wave function. 
\end{abstract}

\maketitle

\setcounter{tocdepth}{2}
\tableofcontents

\section{Introduction}

The 3D energy-critical nonlinear Schr\"odinger equation (NLS) reads
\begin{equation}
\label{eq:Hartree-NLS}
\left\{
\begin{aligned}
i\partial_t \varphi (t,x) &= - \Delta \varphi(t,x) + b_0  |\varphi (t,x)|^4\varphi(t,x), \quad x\in \R^3, \quad t>0 \\
\varphi(0,x) &= \varphi_0(x). 
\end{aligned}
\right.
\end{equation}
The well-posedness of \eqref{eq:Hartree-NLS} in the defocusing case $b_0>0$ was first proved by Bourgain \cite{Bourgain-99} and Grillakis \cite{Grillakis-00} for radial data and then by Colliander, Keel, Staffilani, Takaoka, and Tao \cite{ColKeelStafTao-08} for general data. In this paper, we will derive  \eqref{eq:Hartree-NLS} as a macroscopic description for the microscopic many-body Schr\"odinger equation of bosons in a mean-field limit.

From first principles of quantum mechanics, the dynamics of a Bose gas  in 3D with $N$ particles  is described by the $N$-body Schr\"odinger equation
\begin{align} \label{eq:MBSch}
\left\{
\begin{aligned}
i\partial_t \Psi_N(t) &= H_N \Psi_N(t),\\
\Psi_N(t=0) &= \Psi_{N,0}.
\end{aligned}
\right.
\end{align}
Here $\Psi_N(t)$ is a wave function in the symmetric space $L^2_s((\R^3)^N)$ and $H_N$ is the Hamiltonian of the system. In this paper, we consider the case  of non-relativistic bosons interacting via a three-body interaction potential,
\begin{align} \label{eq:HN}
H_N=\sum_{j=1}^N -\Delta_{x_i} + \frac{1}{N^2} \sum_{1\le i<j<k \le N} N^{6\beta} V( N^{\beta}(x_i-x_j), N^{\beta}(x_i-x_k)).
\end{align}
Here $V:\R^3\times \R^3\to \R$ has the following symmetry conditions
\begin{align}
V(x,y) = V(y,x), \quad\quad V(x-y,x-z)=V(y-x,y-z)=V(z-y,z-x).
\end{align}
These symmetry conditions ensure that the total interaction of 3 particles only depends on the relative distances between them. 

Note that for any fixed parameter $\beta>0$, in the limit $N\to \infty$ the re-scaled potential 
\begin{align} \label{eq:VN}
V_N(x-y,x-z) =  N^{6\beta} V( N^{\beta}(x-y), N^{\beta}(x-z)) 
\end{align}
converges weakly to the delta interaction
\begin{align} \label{eq:delta-int}
b_0 \delta_{x=y=z}, \quad b_0= \frac{1}{2}\int_{\R^3\times \R^3} V(x,y) \d x \d y.
\end{align}
The bigger $\beta$ is, the more singular the potential is. Nevertheless, we may think of $V_N$ as a quantity of order 1.  The coupling constant $N^{-2}$ in front of the interaction terms in \eqref{eq:HN} places us in the mean-field regime, when the kinetic energy and the interaction energy are comparable in the large $N$ limit (they are both of order $N$, given that the system occupies a volume of order 1).

We are interested in the macroscopic behavior of the system when $N\to \infty$. To the leading order we expect the system to exhibit the Bose-Einstein condensation. This is the phenomenon when most of particles occupy a common single quantum state, namely in terms of the wave function 
\begin{align} \label{eq:BEC-formal}
\Psi_N(t) \approx \varphi(t)^{\otimes N}
\end{align}
in an appropriate sense, for a  function $\varphi(t)$ in $L^2(\R^3)$. A formal computation using the limiting interaction potential \eqref{eq:delta-int} suggests that $\varphi(t)$ solves the quintic NLS \eqref{eq:Hartree-NLS}. Making this computation rigorous, however, is a nontrivial problem.

In the present paper, we will justify the approximation \eqref{eq:BEC-formal} (with $\varphi(t)$ solving the quintic NLS \eqref{eq:Hartree-NLS}) for all $0< \beta<1/6$, leading to an extension of the recent important result of X. Chen and  Holmer \cite{CheHol-18} (see also T. Chen and Pavlovi\'c \cite{ChePav-11} for related results in lower dimensions). Moreover, we will go beyond the leading order and obtain information on the fluctuations around the condensate, in terms of a norm approximation for the wave function. In particular, we will also extend the norm approximation obtained by X. Chen \cite{Chen-12} in the mean-field case $\beta=0$.

When the particles interact only via pair interactions, the condensate should be effectively described by the cubic NLS (instead of the quintic NLS \eqref{eq:Hartree-NLS}). The well-posedness of the defocusing cubic NLS has been proved by Bourgain \cite{Bourgain-98} and Dodson \cite{Dodson-12}.   The rigorous derivation of the cubic NLS from many-body Schr\"odinger equation is the subject of a vast literature; see  \cite{Spohn-80,BarGolMau-00,AdaGolTet-07,ErdSchYau-07,ErdSchYau-10,AmmNie-08,FroKnoSch-09,KlaMac-08,RodSch-09,KnoPic-10,Pickl-15,BenOliSch-15,CheHaiPavSei-15,BreSch-17}. In particular, we refer to the seminal work of Erd\"os, Schlein and Yau \cite{ErdSchYau-10} on the critical case $\beta=1$, where the cubic NLS is replaced by the Gross-Pitaevskii equation with the true scattering length of the pair interaction (see also \cite{KlaMac-08,BenOliSch-15,Pickl-15,CheHaiPavSei-15,BreSch-17} for later developments). The norm approximation with pair interactions has also attracted many studies \cite{Hepp-74,GinVel-79,GriMacMar-10,GriMac-13,LewNamSch-15,Pizzo-15,NamNap-15,BocCenSch-15,Kuz-17,NamNap-17,BreNamNapSch-17}; in particular, we refer to \cite{BreNamNapSch-17} for  the last development which covers all $0<\beta<1$ (the case $\beta=1$ remains open). 

We will benefit from the methods developed to handle the pair-interaction case, in particular the justification of Bogoliubov's argument \cite{Bogoliubov-47} in \cite{LewNamSch-15,NamNap-15,NamNap-17,BreNamNapSch-17}. However, it turns out that the analysis in the case of three-body interactions is significantly more complicated  and several new ideas are needed. Our main results are presented in the next section.
  
\bigskip

\noindent\textbf{Acknowledgements.} We thank Jean-Claude Cuenin, Thomas Chen, Xuwen Chen, Justin Holmer and the referee for helpful remarks. 

\section{Main results}

\subsection{Convergence of reduced density matrices} The proper meaning of the Bose-Einstein condensation \eqref{eq:BEC-formal},
$$
\Psi_N(t) \approx \varphi(t)^{\otimes N},
$$
should be given in terms of the reduced density matrices. Recall that the one-body density matrix $\gamma_{\Psi_N}^{(1)}$ of a $N$-body wave function $\Psi_N$ is a non-negative trace class operator on $L^2(\R^3)$ with kernel \begin{align} \label{eq:def-1pdm-HN}
\gamma_{\Psi_N}(x;y) = N\int \Psi_N (x,x_2, \dots , x_N) \overline{\Psi_N(y,x_2,\dots,x_N)}\, \mathrm{d} x_{2}\cdots \mathrm{d} x_{N}.
\end{align}
(We use the convention that inner products in Hilbert spaces are linear in the second argument and anti-linear in the first.)

The precise meaning of \eqref{eq:BEC-formal} reads, in the limit $N\to \infty$, 
\begin{align} \label{eq:BEC}
\frac{\langle \varphi(t), \gamma_{\Psi_N}^{(1)} \varphi(t)\rangle}{N} \to  1,
\end{align}
namely the expectation of the number of particles in mode $\varphi(t)$ is mostly equal to $N$. Equivalently, we can rewrite \eqref{eq:BEC} as
\begin{align} \label{eq:BEC-2}
\frac{\Tr (Q(t)  \gamma_{\Psi_N}^{(1)} Q(t) ) }{N} \to  0,\quad Q(t)=1-|\varphi(t)\rangle \langle \varphi(t)|.
\end{align}
Moreover, since $|\varphi(t) \rangle \langle \varphi(t)|$ is a rank-one projection, \eqref{eq:BEC}-\eqref{eq:BEC-2} is equivalent to the trace class convergence
\begin{align}\label{eq:BEC-tr}
\Tr \left|  N^{-1}\gamma_{\Psi_N}^{(1)} - |\varphi(t) \rangle \langle \varphi(t)| \right| \to 0.
\end{align}

As explained in the introduction, it is natural to expect that $\varphi(t)$ solves the quintic NLS \eqref{eq:Hartree-NLS}. Our first result is a rigorous justification of this fact.

\begin{theorem}[{Convergence of reduced density matrices}]	\label{thm:main1} Assume that $0\le V \in \cC_c(\R^6)$. Let $\varphi(t)$ be the solution to the quintic NLS \eqref{eq:Hartree-NLS} with initial function $\varphi(0)\in H^4(\R^3)$, $\|\varphi(0)\|_{L^2}=1$. Let $\Psi_N(t)$ be the solution to the Schr\"odinger equation \eqref{eq:MBSch} with the initial state $\Psi_{N,0}$ in $L^2_s((\R^3)^N)$ satisfying
\begin{align} \label{eq:leading-order}
\Tr \Big[ (1-\Delta) Q(0) \gamma_{\Psi_{N,0}}^{(1)} Q(0)  \Big] \le C.
\end{align}
Assume that $0<\beta<1/6$. Then for all $t>0$ and for all $\alpha< \min\{\beta/2,(1-6\beta)/4\}$ fixed, we have
\begin{equation}
\tr\left|N^{-1}\gamma^{(1)}_{\Psi_N(t)} - |\varphi(t)\rangle\langle \varphi(t)|\right| \le C_t N^{-\alpha}. 
\end{equation}
Here $C_t$ is continuous in $t$ and independent of $N$. 
\end{theorem}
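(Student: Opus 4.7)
The approach I would take is a quantitative Grönwall argument for a suitable "counting functional" that measures the number of excitations, in the spirit of Pickl's method adapted to singular three-body interactions. Let $P(t) = |\varphi(t)\rangle\langle\varphi(t)|$, $Q(t) = 1 - P(t)$, and let $P_j, Q_j$ denote the corresponding projections acting on the $j$-th particle. Set
\[
\alpha_N(t) := \frac{1}{N}\Tr\bigl[Q(t)\gamma_{\Psi_N(t)}^{(1)}\bigr] = \Big\langle \Psi_N(t), \frac{\mathcal{N}_+(t)}{N}\Psi_N(t)\Big\rangle, \qquad \mathcal{N}_+(t) := \sum_{j=1}^N Q_j(t).
\]
Since $P(t)$ has rank one, standard arguments (e.g.\ the Cauchy--Schwarz bound $\|\gamma - P\|_1 \le C\sqrt{\Tr[Q\gamma Q]\cdot \Tr\gamma}$) show it suffices to prove $\alpha_N(t) \le C_t N^{-2\alpha}$. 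To close the estimates near the energy-critical scaling, I would run the Grönwall argument jointly with a "kinetic excitation" functional
\[
\kappa_N(t) := \frac{1}{N}\Tr\bigl[(1-\Delta)\,Q(t)\gamma_{\Psi_N(t)}^{(1)}Q(t)\bigr],
\]
whose initial value is bounded by the hypothesis \eqref{eq:leading-order}, and which controls the regularity of the non-condensate part needed to absorb the singular potential.

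The main computation is the time derivative of $\alpha_N$. Writing $i\partial_t\mathcal{N}_+ = \sum_j i(\dot Q_j)$ and using the quintic NLS for $\varphi(t)$, the "projection" contribution from $i\partial_t Q$ precisely cancels the leading-order mean-field part of $i[H_N,\mathcal{N}_+]/N$, leaving a remainder in which the true interaction $V_N$ is replaced by its deviation from the limiting point interaction $b_0\,\delta_{x=y=z}$. The interaction commutator $i[W_N,\mathcal{N}_+]$ (with $W_N$ the three-body term) is then decomposed into contributions parametrised by which of the three particles are in $P$ or $Q$: the $PPP$--$PPP$ and $QQQ$--$QQQ$ parts drop out, while the nontrivial $PPP$--$PPQ$, $PPQ$--$PQQ$, $PQQ$--$QQQ$ channels (and adjoints) must be estimated. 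I would bound each of these channels by $\alpha_N + \kappa_N + N^{-2\alpha}$ times a constant depending on $\|\varphi(t)\|_{H^4}$, which is uniformly bounded in $t$ because $H^4$-regularity is propagated by the defocusing quintic NLS when $\varphi(0) \in H^4$. An analogous (but heavier) computation gives a closed differential inequality for $\kappa_N(t)$, after which Grönwall's lemma yields $\alpha_N(t) + \kappa_N(t) \le C_t N^{-2\alpha}$ and hence the trace-norm convergence.

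The main obstacle is the estimation of the channels where $V_N = N^{6\beta}V(N^\beta\cdot,N^\beta\cdot)$ couples few excitations to many condensate particles, e.g.\ terms of the form
\[
\langle \Psi_N, P_1 P_2 Q_3\, V_N(x_1-x_2, x_1-x_3)\, P_1 P_2 P_3 \Psi_N\rangle.
\]
Naive bounds lose powers of $N$ because of the $N^{6\beta}$ prefactor, and the short range $N^{-\beta}$ forces Sobolev-type estimates of the schematic form
\[
V_N(x-y,x-z) \le C\,N^{6\beta - \gamma}(1-\Delta_x)(1-\Delta_y)(1-\Delta_z)
\]
(with the largest admissible $\gamma$ coming from a Hardy--Sobolev analysis of $V_N$). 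Balancing the two sources of error --- one of order $N^{-\beta/2}$ coming from $V_N \rightharpoonup b_0\delta$ on smooth test functions of regularity $H^1$, and one of order $N^{-(1-6\beta)/4}$ coming from the Sobolev replacement above --- is precisely what dictates the restrictive rate $\alpha < \min\{\beta/2, (1-6\beta)/4\}$ appearing in the statement, and also forces $\beta < 1/6$.

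To distribute derivatives onto the non-condensate factors (so that $\kappa_N$, rather than the unavailable full kinetic energy of $\Psi_N$, is what multiplies the small powers of $N$), I would symmetrise each interaction term across the three particles and, whenever a $Q$ projector is adjacent to $V_N$, commute $(1-\Delta)^{1/2}$ past $Q$ using bounds like $\|(1-\Delta)Q_j (1-\Delta)^{-1}\|_{\mathrm{op}} \le C\|\varphi\|_{H^2}$. The regularity $\varphi(t) \in H^4$ (from the initial hypothesis) supplies the necessary room. Once all error terms are bounded by $C_t(\alpha_N + \kappa_N + N^{-2\alpha})$, Grönwall's lemma closes the proof, yielding the announced estimate.
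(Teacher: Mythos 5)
Your proposal takes a genuinely different route from the paper---a Pickl-style counting functional with a joint Gr\"onwall argument directly in the $N$-body space, rather than the excitation-Fock-space formalism via the unitary $U_N(t)$ and a comparison with the truncated dynamics $\Phi_{N,M}$---but it has a gap that I do not see how to repair without importing the paper's main new idea. First, your channel decomposition ($PPP$--$PPQ$, $PPQ$--$PQQ$, $PQQ$--$QQQ$ and adjoints) only covers transitions that change the number of excitations by $\pm 1$; a genuine three-body interaction also produces jump-by-$\pm 2$ ($PPP$--$PQQ$, $PPQ$--$QQQ$) and jump-by-$\pm 3$ ($PPP$--$QQQ$) channels. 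In particular the $PPP$--$QQQ$ channel, which in the Fock-space language is the cubic term $\sim N^{-1/2}\iiint V_N\,u\,u\,u\,a_x^*a_y^*a_z^*$ and is denoted $R_{3,1}$ in the paper, is exactly what distinguishes the three-body problem from the pair-interaction case, and it is absent from your list.

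Second, and more importantly: even once one includes the cubic channel, the claimed bound ``$\le C_t(\alpha_N+\kappa_N+N^{-2\alpha})$ with $C_t$ independent of $N$'' does not hold for it without a particle-number truncation. The sharpest operator bound available (Lemma~\ref{lem:errorbounds}, via the pairing estimate of Lemma~\ref{lem:Pairtermbound}) has the form
\begin{equation*}
\pm\big(R_{3,1}+R_{3,1}^*\big) \le \eta\,\dGamma(1-\Delta) + \eta^{-1}C_t\,m\,N^{4\beta-1},
\qquad\text{valid only for }\ \eta \ge C_t\sqrt{m\,N^{2\beta-1}},
\end{equation*}
as a quadratic form on the sector $\cN\le m$. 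Without truncation ($m=N$) the constraint forces $\eta\gtrsim N^\beta$, so after dividing by $N$ the first term contributes $\gtrsim N^\beta\kappa_N$ to the differential inequality for $\alpha_N$, and Gr\"onwall cannot absorb a prefactor that grows with $N$. Worse, the only a priori kinetic control for the \emph{untruncated} dynamics is the round bound $\kappa_N(t)\le C_t$ from energy conservation (Lemma~\ref{roughbounds}); your joint differential inequality cannot show that $\kappa_N$ itself decays. This is precisely why the paper introduces the intermediate truncated dynamics $\Phi_{N,M}$ on $\cN\le M$ with $M\ll N^{1-2\beta}$, proves the \emph{refined} kinetic bound $\langle\dGamma(1-\Delta)\rangle_{\Phi_{N,M}}\lesssim M N^{4\beta-1}+N^{\beta+\eps}$ there (Lemma~\ref{lem:truncated-dynamics}), and only then compares with $\Phi_N$ (Lemma~\ref{lem:truncated-vs-full-dynamics}); the rate $\alpha<(1-6\beta)/4$ emerges from the optimization over $M$. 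A weighted counting functional $\langle m(\cN_+/N)\rangle$ whose weight effectively suppresses the sectors $\cN_+\gg N^{1-2\beta}$ might let a Pickl-type argument go through, but some form of localization in the particle number is indispensable in the three-body setting, and your proposal does not have one.
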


This kind of results was obtained recently by X. Chen and Holmer \cite{CheHol-18} for bosons on the torus $\T^3$ and for $0<\beta<1/9$. In fact, our result can be extended to the torus case as well (see Appendix for further explanation). Unlike the BBGKY approach in \cite{CheHol-18}, our method gives explicit error estimate and can be adapted easily when an external potential or a magnetic field appears. On the other hand, here we do not cover the case $\varphi(0)\in H^1(\R^3)$ as in \cite{CheHol-18} which is of certain mathematical interest. We rather think of the physical situation when the initial state $\Psi_N(0)$ is the ground state of a trapped system, and in this case the condensate $\varphi(0)$ is expected to be sufficiently regular. 

%
%is not sensitive on the domain of the particles and our result can be extended to the case when the particles live on the torus $\T^3$ instead of $\R^3$ (to be precise, the domain issue only emerges in the level of the effective theory; for the torus case we should use ). Thus in particular we extend the result in \cite{CheHol-18} which covers the case of bosons in torus and the range  $0<\beta<1/9$ (indeed our proof can be simplified significantly if $\beta<1/9$). Moreover, 

The assumption $0<\beta<1/6$ in Theorem \ref{thm:main1} is a technical condition which allows us to control the interaction potential by the kinetic operator. Note that the total interaction potential felt by the $i$-th particle is
$$
\frac{1}{N^2} \sum_{j,k: j\ne i, k\ne k}N^{6\beta}V(N^\beta(x_i-x_j), N^\beta(x_i-x_k)).
$$
Thus the total interaction potential felt by a single particle may be as large as $N^{6\beta}$ in the worst case (when all particles collapse to a singular point). In our method, we need to control the potential energy {\em per particle} by the total kinetic energy of {\em all particles} which is normally of order $N$ (since the system occupies a volume of order $1$). This requires $N^{6\beta}\ll N$, namely $\beta<1/6$. This condition is reminiscent of the well-known threshold $\beta<1/3$ in the pair-interaction case \cite{ErdSchYau-07,Pickl-15,GriMac-13,NamNap-15} where the total interaction potential  felt by the $i$-th particle is
$$
\frac{1}{N} \sum_{j: j\ne i}N^{3\beta}\widetilde V(N^\beta(x_i-x_j))
$$
which is as large as $N^{3\beta}$ in the worst case. We will come back to the explanation for the smallness condition on $\beta$ with more details later. 

It is natural to expect that the result in Theorem \ref{thm:main1} holds true for larger $\beta$'s up to a  critical value  where some subtle correction emerges to the leading order due to few-particle scattering processes. We expect that the critical value is $\beta=1/2$, for which the coupling term $N^{6\beta-2}$ in front of the three-body interaction potential  scales the same as $N^{2\beta}$ of the Laplacian (it is different from the critical value $\beta=1$  in the two-body interaction case \cite{ErdSchYau-10} where $N^{3\beta-1}$ is compared with $N^{2\beta}$). More precisely, in this Gross-Pitaevskii like regime, we expect that the correct ansatz for the many-body wave function is 
$$
\Psi_N(t,x_1,...,x_N) \approx \prod_{j=1}^N u(t,x_j) \prod_{1\le p<k<\ell \le N} f(N^{1/2}(x_p-x_k), N^{1/2}(x_p-x_\ell))
$$
where $f$ solves the zero-scattering equation in $\R^6$:
$$
-\Delta f + \frac{1}{3}V f =0, \quad \lim_{|X|\to \infty} f(X)=1. 
$$
Consequently, the condensate should be described by the quintic NLS equation \eqref{eq:Hartree-NLS} with the coupling constant $b_0=\frac{1}{2}\int_{\R^6} V$ replaced by 
$$
a_0 =\inf\left\{ \frac{3}{2} \int_{\R^6}|\nabla f|^2 + \frac{1}{2} \int_{\R^6} V|f|^2 : \lim_{|X|\to \infty} f(X)=1 \right\}. 
$$
In fact, $a_0<b_0$ if $V\not\equiv 0$ and $\sqrt[4]{a_0/(4\pi^3)}$ plays the role of the scattering length of $V$ in $\R^6$. Deriving the quintic NLS equation in the Gross-Pitaevskii like regime is a very interesting open problem, which certainly requires a substantial improvement of the current method. 

\subsection{Norm approximation}

To describe the fluctuations around the condensate, it is convenient to switch to a Fock space representation where the number of particles is not fixed. We define the bosonic Fock space 
\[ \F = \bigoplus_{n \geq 0} L^2_s (\R^{3n}). \]
The creation operator $a^* (f)$ and the annihilation operator $a(f)$, for some $f \in L^2 (\R^3)$, are defined as 
\begin{align*}
(a^* (f) \Psi )^{(n)} (x_1,\dots,x_{n})&= \frac{1}{\sqrt{n}} \sum_{j=1}^{n} f(x_j) \Psi^{(n-1)} (x_1,\dots,x_{j-1},x_{j+1},\dots, x_{n+1}), \\
(a(f) \Psi)^{(n)} (x_1,\dots,x_{n}) &= \sqrt{n+1} \int \overline{f(x_n)} \Psi^{(n+1)} (x_1,\dots,x_n, x_{n+1}) \d x_{n+1}
\end{align*} 
for all $\Psi \in \cF$. They satisfy the canonical commutation relations (CCR)
\begin{equation}\label{eq:CCR}
[a(f),a(g)]=[a^*(f),a^*(g)]=0,\quad [a(f), a^* (g)]= \langle f, g \rangle, \quad \forall f,g \in L^2(\R^3).
\end{equation}
We will also use the operator-valued distributions $a_x^*$ and $a_x$ defined via
\begin{equation}\label{eq:opval}
a^*(f)=\int_{\R^3}  f(x) a_x^* \d x, \quad a(f)=\int_{\R^3} \overline{f(x)} a_x \d x, \quad \forall f\in  L^2(\R^3),
\end{equation}
which satisfy the CCR  
$$[a^*_x,a^*_y]=[a_x,a_y]=0, \quad [a_x,a^*_y]=\delta(x-y), \quad \forall x,y\in \R^3.$$

The creation and annihilation operators provide a convenient way to express many operators on Fock space in compact forms. For example, if $A$ is a self-adjoint operator on the one-particle space $L^2 (\R^3)$ with kernel $A(x,y)$, then we can define its quantization on Fock space by 
$$\dGamma (A) = \bigoplus_{n=0}^\infty \sum_{j=0}^n A_{j} = \iint  A(x;y) a_x^* a_y \, dx dy.$$
In particular, the number operator is 
$$
\cN = \dGamma (1) = \int a_x^* a_x  \d x
$$
and the $N$-body Hamiltonian $H_N$ can be extended to an operator on Fock space as
\begin{align} \label{eq:HN-2nd-quan}
H_N = \dGamma(-\Delta) + \frac{1}{6N^2} \iiint V_N(x,y,z) a_x^* a_y^* a_z^* a_x a_y a_z \d x \d y \d z. 
\end{align}

Now we want to go further to analyze the fluctuations around the condensate. We are interested in the norm approximation of the form 
\begin{equation} \label{eq:norm-cv-box}
\boxed{\Psi_N(t) \approx  \sum_{n=0}^N u(t)^{\otimes (N-n)} \otimes_s \psi_n(t) := \sum_{n=0}^N \frac{(a^*(u(t)))^{N-n}}{(N-n)!} \psi_n(t)}
\end{equation}
where 
$$
\Phi(t)=(\psi_n(t))_{n=0}^\infty \in \F_+(t) = \bigoplus_{n=0}^\infty \Big( \{u(t)\}^\bot \Big)^{\otimes_s n}$$ 
describes excited particles, with
$$
 \{u(t)\}^\bot = Q(t) L^2(\R^3), \quad Q(t)=1-|u(t)\rangle \langle u(t)|.
$$

The natural candidate for the condensate state $u(t)$ can be obtained by formally inserting the purely uncorrelated ansatz  $u(t)^{\otimes N}$ into the Schr\"odinger equation \eqref{eq:MBSch}, leading to the quintic Hartree equation
\begin{equation}
\label{eq:Hartree}
\left\{
\begin{aligned}
i\partial_tu(t,x) &=   \Big( - \Delta +  \frac{1}{2}\iint |u(t,y)|^2 V_N(x-y,x-z)|u(t,z)|^2\d y \d z\Big) u(t,x) \\
u(0,x) &= u_0(x). 
\end{aligned}
\right.
\end{equation}
%Here we have defined the phase 
%\begin{equation*}
%\mu_N(t) = \frac{1}{3}\iiint V_N(x-y,x-z)|u(t,x)|^2|u(t,y)|^2|u(t,z)|^2 \d x\d y\d z
%\end{equation*}
%to keep an energy compatibility. 
We have ignored the $N$-dependence in the notation of $u(t)$ for simplicity. The weak convergence \eqref{eq:delta-int} implies that $u(t)$ converges to the solution $\varphi(t)$ of the quintic NLS \eqref{eq:Hartree-NLS} when $N\to \infty$. The difference between $u(t)$ and $\varphi(t)$ is not visible in the leading order of the condensate (Theorem \ref{thm:main1}). However, the choice of $u(t)$ is better for the refined estimate  \eqref{eq:norm-cv-box}. The behavior of $u(t)$ can be controlled in a uniform way in $N$ (see Section \ref{sec:Hartree} for details).

Bogoliubov's approximation \cite{Bogoliubov-47} suggests that the excited state $\Phi(t)=(\psi_n(t))_{n=0}^\infty$ is determined by the Bogoliubov equation 
\begin{equation}
\label{eq:Bog}
\left\{
\begin{aligned}
i\partial_t \Phi(t) &= \bH(t) \Phi(t), \\
\Phi(0) &= \Phi_0
\end{aligned}
\right.
\end{equation}
with the quadratic generator
\begin{align} \label{eq:Bog-Hamiltonian}
\bH(t) =  \d\Gamma(h+K_1) + \frac{1}{2} \iint \Big( K_2(t,x,y) a_x^* a_y^* + \overline{K_2(t,x,y)}a_x a_y \Big).
\end{align} 
Here
$$
h(t) := -\Delta + \frac{1}{2} \iint  V_N(x-y,x-z)|u(t,y)|^2|u(t,z)|^2 \d y \d z
$$
is the one-body Hartree operator appearing in \eqref{eq:Hartree},  and 
\begin{align*}
K_1 &:= Q(t)\tilde{K}_1 Q(t), \quad \tilde{K}_1[f](x) = \iint V_N(x-y,x-z) |u(t,z)|^2 \overline{u(t,y)} u(t,x) f(y) \d y \d z, \\
K_2 &:= \Big(Q(t)\otimes Q(t)\Big)\tilde{K}_2, \quad \tilde{K}_2(x,y) = \Big(\int V_N(x-y,x-z) |u(t,z)|^2  \d z \Big) u(t,x)u(t,y).
\end{align*}

The existence and uniqueness of the solution to the Bogoliubov equation \eqref{eq:Bog} is well-known \cite{LewNamSch-15} (see Section \ref{sec:Bog-dyn} for further discussions). 

Our second result is a rigorous derivation for the Bogoliubov equation \eqref{eq:Bog}.

\begin{theorem}[Norm approximation] \label{thm:main2} Assume that $0\le V \in \cC_c(\R^6)$. Let $u(t)$ be the Hatree dynamics \eqref{eq:Hartree} with an initial state $u(0)\in H^4(\R^3)$, $\|u(0)\|_{L^2}=1$. Let $\Phi(t)=(\psi_n(t))_{n=0}^\infty$ be the Bogoliubov dynamics \eqref{eq:Bog} such that the initial state $\Phi(0)$ satisfies
\begin{equation} \label{eq:Phi0-mainthm}
\Phi_0 \in \bigoplus_{n=0}^\infty \Big( \{u(0)\}^\bot \Big)^{\otimes_s n},\quad \|\Phi(0)\|=1,  \quad \langle\Phi_0,\dGamma(1-\Delta)\Phi_0\rangle \le C. 
\end{equation}
Let $\Psi_N(t)$ be the Schr\"odinger dynamics \eqref{eq:MBSch} with the initial state
$$
\Psi_{N,0} = \sum_{n=0}^N u(0)^{\otimes (N-n)} \otimes_s \psi_n(0). 
$$
Assume that $0<\beta<1/6$. Then for all $t>0$ and for all $\alpha<(1-6\beta)/4$ fixed, we have
\begin{equation}
\left\|\Psi_N(t)- e^{-i\int_0^t \chi(s) \d s}\sum_{n=0}^N u(t)^{\otimes (N-n)} \otimes_s \psi_n(t)  \right\|_{L^2((\R^3)^N)}^2 \le C_{t,\alpha} N^{-\alpha}
\end{equation}
where
\begin{equation} \label{eq:chi}
\chi(t)=  \frac{2N+3}{6} \iiint V_N(x-y,x-z)|u(t,x)|^2|u(t,y)|^2|u(t,z)|^2 \d x\d y\d z.
\end{equation}
\end{theorem}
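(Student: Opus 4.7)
The plan is to adapt the excitation-map framework developed for pair interactions in \cite{LewNamSch-15,NamNap-15,NamNap-17,BreNamNapSch-17} to the three-body setting. First, I would factor out the Hartree condensate via the Lewin--Nam--Serfaty--Solovej unitary
$$
U_N(t): L^2_s((\R^3)^N) \longrightarrow \F_+^{\le N}(t) := \bigoplus_{n=0}^{N} \bigl(\{u(t)\}^\bot\bigr)^{\otimes_s n},
$$
so that the ansatz \eqref{eq:norm-cv-box} reads $\Psi_N(t) = U_N(t)^*\Phi_N(t)$. Then $\Phi_N(t)$ obeys a Schr\"odinger equation on $\F_+^{\le N}(t)$ whose generator $\mathcal{G}_N(t) = (i\partial_t U_N(t))U_N(t)^* + U_N(t)H_N U_N(t)^*$ can be computed explicitly by conjugating the six-operator term $a_x^* a_y^* a_z^* a_x a_y a_z$ and using the CCR. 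After exploiting the Hartree equation \eqref{eq:Hartree} to cancel the terms linear in creation/annihilation of excitations, the decomposition
$$
\mathcal{G}_N(t) - \chi(t) = \bH(t) + \cR_N(t)
$$
should emerge, where $\bH(t)$ is the Bogoliubov Hamiltonian \eqref{eq:Bog-Hamiltonian} and $\cR_N(t)$ gathers cubic and quartic excitation terms together with the artefacts of the truncation to $\cN\le N$.

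\noindent\textbf{Gronwall comparison.} Next, I would extend the Bogoliubov solution $\Phi(t)$ trivially to the full Fock space and introduce
$$
D_N(t) := e^{-i\int_0^t \chi(s)\,\d s} \Phi_N(t) - \Phi(t).
$$
The two evolution equations give
$$
\partial_t \|D_N(t)\|^2 \;\le\; 2\bigl|\bigl\langle \Phi_N(t), \cR_N(t) \Phi(t)\bigr\rangle\bigr| + \text{(projection commutator terms)},
$$
and the right-hand side has to be controlled via uniform-in-$N$ a priori bounds on the moments $\langle \Phi_N(t), (\cN+1)^k \Phi_N(t)\rangle$ and on the kinetic expectation $\langle \Phi_N(t), \dGamma(1-\Delta)\Phi_N(t)\rangle$ for $k=1,2$. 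These bounds would be propagated from \eqref{eq:Phi0-mainthm} by differentiating in $t$, using the positivity $V\ge 0$, and closing a Gronwall estimate; the same strategy, combined with the uniform-in-$N$ propagation of $\|u(t)\|_{H^4}$ provided by Section \ref{sec:Hartree}, yields analogous moment bounds for $\Phi(t)$. Integrating the differential inequality for $\|D_N(t)\|^2$ in time will then produce the polynomial rate $N^{-\alpha}$.

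\noindent\textbf{Main obstacle.} The hard part will be to estimate the cubic contribution of $\cR_N(t)$, which has the schematic form
$$
\frac{1}{N}\iiint V_N(x-y,x-z)\,\overline{u(t,y)}\,u(t,z)\, a_x^* a_y^* a_z\, \d x\,\d y\,\d z + \text{h.c.}
$$
after projection onto the excitation space. In contrast to the two-body situation, integrating one variable against $u(t)$ still leaves an $L^\infty$ factor of order $N^{6\beta}$ for $V_N$ rather than an $L^2$ factor of order $N^{3\beta}$; hence a splitting $V_N = V_N^{1/2}\cdot V_N^{1/2}$ by Cauchy--Schwarz, combined with the Sobolev embedding $H^2 \hookrightarrow L^\infty$ applied to $u(t)$, leads to a bound of order $C_t N^{3\beta-1}\|(\cN+1)^{3/2}\Phi\|$. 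This is exactly where the threshold $\beta<1/6$ enters, since it makes this prefactor a negative power of $N$. The quartic remainder is smaller by an extra factor $N^{3\beta-1}$, and the projection commutators $[\mathbf{1}_{\cN\le N}, \bH(t)]$ are absorbed via the second moment bound on $\cN$. Balancing all these error terms in the Gronwall integration should produce the claimed rate $\alpha<(1-6\beta)/4$.
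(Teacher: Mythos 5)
Your setup (conjugating by the Lewin--Nam--Serfaty--Solovej unitary $U_N(t)$, decomposing the generator into $\bH(t)$ plus a remainder, and using the phase $\chi(t)$ to cancel a scalar) matches the paper exactly up through Lemma~\ref{lem:eq-PhiN}. After that, however, your plan diverges in a way that will not close.

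The direct Gronwall comparison you propose, estimating $\partial_t\|e^{-i\int\chi}\Phi_N(t)-\Phi(t)\|^2$ using uniform-in-$N$ moment and kinetic bounds on $\Phi_N(t)$, is precisely what the paper explains cannot work here. The generator $\widetilde H_N(t)$ contains (coming from $R_{3,1}$) the purely creating cubic term $\frac{1}{\sqrt N}\iiint V_N(x-y,x-z)\,a_x^*a_y^*a_z^*$ and its conjugate -- not the $a^*a^*a$ term you wrote -- and this term cannot be bounded by $C_tN^{3\beta-1}(\cN+1)^{3/2}$. What the paper actually proves (via the pairing-term estimate from \cite{NamNapSol-16} and the kernel bound in Lemma~\ref{lem:R3kernelbound}) is that on the sector $\cN\le m$ one has $\pm(R_{3,1}+R_{3,1}^*)\le \eta\,\dGamma(1-\Delta)+\eta^{-1}C_tmN^{4\beta-1}$ only for $\eta\ge C_t\sqrt{mN^{2\beta-1}}$. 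Taking $m=N$, as your approach forces you to do, gives $\eta\ge C_tN^{\beta}$, so the kinetic coefficient blows up and the Gronwall closure fails for every $\beta>0$. For the same reason the uniform moment bounds on $\Phi_N(t)$ that you intend to ``propagate using $V\ge 0$'' are not available: the round kinetic bound one gets from energy conservation (Lemma~\ref{roughbounds}) is $O(N)$, not $O(1)$.

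What the paper does instead is the localization-plus-iteration argument: it introduces a family of auxiliary truncated dynamics $\Phi_{N,M}$ with generator $\1^{\le M}\widetilde H_N\1^{\le M}$, for which the restriction $M\ll N^{1-2\beta}$ makes the cubic term genuinely small and yields the refined kinetic bound $\langle\Phi_{N,M},\dGamma(1-\Delta)\Phi_{N,M}\rangle\lesssim MN^{4\beta-1}+N^{\beta+\eps}$ (Lemma~\ref{lem:truncated-dynamics}). The norm comparison is then done in stages: $\Phi_N$ vs.\ $\Phi_{N,M_1}$ (Lemma~\ref{lem:truncated-vs-full-dynamics}), then $\Phi_{N,M_k}$ vs.\ $\Phi_{N,M_{k+1}}$ for a decreasing sequence of cutoffs (Lemma~\ref{lem:normtrunc-to-trunc}), and finally $\Phi_{N,M_K}$ vs.\ $\Phi(t)$ (Lemma~\ref{lem:normtrunc-to-Bog}). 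This iteration is not cosmetic: a single localization step only reaches $\beta<1/9$, and it is the chain of comparisons between truncated dynamics, each with its improved kinetic bound, that extends the range to $\beta<1/6$ with rate $\alpha<(1-6\beta)/4$. This mechanism is absent from your plan, and without it the threshold $\beta<1/6$ and the rate you quote do not follow.
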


As explained in \cite{NamNap-15}, the one-particle density matrix $(\gamma(t), \alpha(t))$ of the Bogoliubov dynamics $\Phi(t)$, defined by the kernels
$$
\gamma(t,x,y)= \left\langle \Phi(t), a_y^* a_x \Phi(t) \right\rangle,\quad  \alpha(t,x,y)= \left\langle \Phi(t), a_x a_y  \Phi(t)\right\rangle,
$$
is the unique solution to the 
\begin{equation} \label{eq:linear-Bog-dm} 
\left\{
\begin{aligned}
i\partial_t \gamma &= h \gamma - \gamma h + K_2 \alpha - \alpha^* K_2^*, \\
i\partial_t \alpha &= h \alpha + \alpha h^{\rm T} + K_2  + K_2 \gamma^{\rm T} + \gamma K_2,\\
\gamma(0)&=\gamma_{\Phi_0}, \quad \alpha(0)  = \alpha_{\Phi_0}.
\end{aligned}
\right.
\end{equation}
Thus our result in Theorem \ref{thm:main2} also gives a rigorous derivation for \eqref{eq:linear-Bog-dm} as an effective description for the density of the  excited particles. Moreover, note that if $\Phi_0$ is a {\em quasi-free state}, then the solution $\Phi(t)$ to the Bogoliubov equation \eqref{eq:Bog} is a quasi-free state for all $t>0$ and \eqref{eq:linear-Bog-dm} is indeed equivalent to the Bogoliubov equation \eqref{eq:Bog}. Nevertheless, our Theorem \ref{thm:main2} works in a general situation and does not require the quasi-free restriction.

Our result in Theorem \ref{thm:main2} extends the norm approximation obtained by X. Chen \cite{Chen-12} on the mean-field case $\beta=0$ (to be precise, the work in \cite{Chen-12} deals with the setting of the fluctuations around coherent states in Fock space rather than the fluctuations around factorized states in $N$-particle space, but our method applies for both cases). Our analysis is different from \cite{Chen-12} and will be explained below. 

\subsection{Ideas of the proofs} Now let us quickly explain the main ingredients of the proofs of Theorem  \ref{thm:main1} and Theorem  \ref{thm:main2}. 

First at all, as a preliminary step, we need to prove the well-posedness of the quintic Hartree equation \eqref{eq:Hartree}. In particular, it is important to derive the uniform (i.e. $N$-independent) bound in $H^4(\R^3)$,  
$$ \|u(t,\cdot)\|_{H^4(\R^3)} \le C_t \|u(0,\cdot)\|_{H^4(\R^3)}$$
(which in turn provides uniform bound on $\|u(t,\cdot)\|_{L^\infty}$ and $\|\partial_t u(t,\cdot)\|_{L^\infty}$ by Sobolev's embedding). The proof requires nontrivial modifications from the analysis for the quintic NLS \eqref{eq:Hartree-NLS}  in \cite{ColKeelStafTao-08}. More precisely, we will treat the  quintic Hartree equation \eqref{eq:Hartree} as a perturbation of the quintic NLS \eqref{eq:Hartree-NLS} and use the method developed in \cite{ColKeelStafTao-08} to extend the Strichartz's estimate in $L^{10}_{t,x}$ for the quintic Hartree solution.

Next, we start the many-body analysis with the general approach as in the pair-interaction case \cite{LewNamSch-15,NamNap-15,NamNap-17,BreNamNapSch-17}. This approach is based on a unitary transformation $U_N(t)$ introduced in \cite{LewNamSerSol-15}  which maps the original $N$-particle space $L^2((\R^3)^N)$ to the truncated Fock space built up on the orthogonal complement $\{u(t)\}^\bot$ of quintic Hartree equation solution: 
$$
U_N(t) : \sum_{n=0}^N u(t)^{\otimes (N-n)} \otimes_s \psi_n(t) \mapsto \bigoplus_{n=0}^N \psi_n(t)
$$
Heuristically, this operator $U_N(t)$ factors out the condensate and implements the c-number substitution in Bogoliubov's idea \cite{Bogoliubov-47}. Thus it remains to analyze the transformed dynamics 
 $$\Phi_N(t)=U_N(t) \Psi_N(t)$$
 in the excited Fock space. The assertion in Theorem \ref{thm:main1} is essentially equivalent to
$$\langle \Phi_N(t), \cN \Phi_N(t) \rangle \ll N. $$

To propagate the latter bound in time, we need to show that the generator of $\Phi_N$ can be controlled by its kinetic part. The main difficulty lies on the fact that the interaction part of the generator of $\Phi_N$ depends heavily on $N$ and behaves badly when there are too many particles.  To overcome this difficulty, we will use the localization technique in Fock space and focus on low particle sectors. This idea was used also in the pair-interaction case \cite{LewNamSch-15,NamNap-15,NamNap-17,BreNamNapSch-17}. However, while in the pair-interaction case it is sufficient to restrict to $\cN \ll N$, in our three-body interaction case we need to restrict further to $\cN \ll N^{1-2\beta}$. This is due to the cubic term 
\begin{equation} \label{eq:cubic-term-intro}
\frac{1}{\sqrt{N}}\iiint V_N(x-y,x-z) a_x^* a_y^* a_z^* \d x \d y \d z
\end{equation}
which does not appear in the pair-interaction case. By using the diagonalization  result on the quadratic Hamiltonian \cite{NamNapSol-16}, we can bound the above cubic term by
$$
\eta \dGamma(1-\Delta) + \eta^{-1}C_t M N^{4\beta-1}, \quad \forall \eta\ge C_t\sqrt{MN^{2\beta-1}}
$$
on the truncated Fock space of $\cN\le M$. The condition $M\ll N^{1-2\beta}$ is necessary to take $\eta$ of order 1. This leads to a good kinetic bound for the truncated dynamics $\Phi_{N,M}$, which has a generator similar to that of $\Phi_N$ but restricted to the truncated space $\cN\le M$. 

To complete the proof of Theorem \ref{thm:main1}, we need to show that the truncated dynamics $\Phi_{N,M}$ is sufficiently close to $\Phi_N$. Heuristically, this step is doable if $M$ is sufficiently large, namely the effect of the cut-off $\cN\le M$ is negligible. Technically, this step will be done by comparing the two generators and using the kinetic estimate of $\Phi_{N,M}$ (plus  the round kinetic bound $O(N)$ for $\Phi_{N}$), resulting the condition $M\gg N^{4\beta}$.   Putting the latter condition together with the previous one $M\ll N^{1-2\beta}$, we obtain the net condition $\beta<1/6$ at the end. 

The norm approximation in Theorem \ref{thm:main2} requires to compare $\Phi_N(t)$ with the Bogoliubov dynamics $\Phi(t)$. From the proof of Theorem \ref{thm:main1}, we know that $\Phi_N$ is close to $\Phi_{N,M}$ if $M\gg N^{4\beta}$. Therefore, it is natural to compare the truncated dynamics $\Phi_{N,M}$ with the Bogoliubov dynamics $\Phi(t)$. It turns out that this step can be done if $M\ll N^{1-5\beta}$. Thus if $\beta<1/9$ then the norm approximation follows. To improve the range of $\beta$, we will use an iteration technique: we will compare $\Phi_{N,M}$ with a further truncated dynamics $\Phi_{N,\widetilde M}$ with $\widetilde M \ll M$ (where we can use the improved kinetic bounds for both, instead of using the round kinetic bound $O(N)$ for $\Phi_N$). This technique allows us to bring down the cut-off parameter $\widetilde M$ to $N^{\beta}$ (after many but finite iteration steps). And finally, we compare  $\Phi_{N,\widetilde M}$ with $\Phi(t)$, which  requires $\widetilde M\ll N^{1-5\beta}$. All this leads to the condition $\beta<1/6$ for the norm approximation in Theorem \ref{thm:main2}, which is fortunately the same as the condition in Theorem \ref{thm:main1}. 

In order to extend our results  to more singular potentials (i.e. larger $\beta$'s), it is crucial to have a better bound for the cubic term \eqref{eq:cubic-term-intro}. This issue goes beyond the current knowledge of Bogoliubov theory and seems very interesting. We refer to \cite{BocBreCenSch-18} for a recent important contribution to the analysis of the cubic term in the pair-interaction case. The problem in the three-body interaction case, however, is completely open. 

\medskip

\noindent 
{\bf Organization of the paper.} We will discuss the quintic Hartree equation \eqref{eq:Hartree} in Section \ref{sec:Hartree} and Bogoliubov equation \eqref{eq:Bog} in Section \ref{sec:Bog-dyn}. Then in Section \ref{sec:gen-stra} we explain the general strategy to derive these effective equations from the many-body Schr\"odinger equation \eqref{eq:MBSch}. Then we settle in Section \ref{sec:op-Fock} key operator estimates on Fock space. Our main Theorems \ref{thm:main1} and  \ref{thm:main2} are proved in  Sections \ref{sec:thm1} and \ref{sec:thm2}, respectively. In Appendix, we explain the extension of our results with $\R^3$ replaced by the torus $\T^3$.

\section{Quintic Hartree equation}\label{sec:Hartree}
In this section we study the well-posedness of the quintic Hartree equation \eqref{eq:Hartree},
$$
\left\{
\begin{aligned}
i\partial_tu(t,x) &= \Big(- \Delta +  \frac{1}{2}\iint  V_N(x-y,x-z)|u(t,y)|^2|u(t,z)|^2\d y \d z \Big) \,u(t,x) \\
u(0,x) &= u_0(x). 
\end{aligned}
\right.
$$

We will prove 
\begin{theorem}[Uniform estimates for quintic Hartree equation] \label{thm:Hartree}
	Let $u_0 \in H^4(\R^3)$ be an initial state. Then for every time $T>0$,  there exists a unique solution to equation~\eqref{eq:Hartree} on $[0,T]$ and it satisfies
	\begin{align}
	\label{eq:H2bound1}
	\|u(t,\cdot)\|_{H^4(\R^3)} \le C_t, \quad \|\partial_t u(t,\cdot)\|_{H^2(\R^3)} \le C_t.
	\end{align}
	Here the constant $C_t$ is dependent on $\|u_0\|_{H^4}$ and $t$ (it is continuous and increasing in $t$), but independent of $N$. 
	
	Moreover, when $N\to \infty$, $u(t)$ converges to the solution $\varphi(t)$ to the quintic NLS \eqref{eq:Hartree-NLS} (with the same initial condition $\varphi_0=u_0$): \begin{equation} \label{eq:H-to-NLS}
		\|u(t,\cdot) - \phi(t,\cdot)\|^2_{L^2(\R^3)} \le C_t N^{-\beta}.
		\end{equation}

\end{theorem}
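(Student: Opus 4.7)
The plan is to treat the quintic Hartree equation \eqref{eq:Hartree} as a perturbation of the quintic NLS \eqref{eq:Hartree-NLS} and import the global $L^{10}_{t,x}$ bound of \cite{ColKeelStafTao-08} via their stability theory; this yields uniform $H^4$ bounds after a persistence-of-regularity step, and the convergence \eqref{eq:H-to-NLS} then follows by a Gronwall argument on the $L^2$ difference. Local-in-time well-posedness in $H^4$ is standard: for each fixed $N$ the potential $V_N$ is bounded and compactly supported, so the nonlinearity in \eqref{eq:Hartree} is locally Lipschitz on $H^4$, and Picard iteration yields a solution on some (a priori $N$-dependent) time interval. A uniform-in-$N$ $H^1$ bound then comes from conservation of mass and of the energy
\[
E(u) = \int_{\R^3}|\nabla u|^2 \d x + \frac{1}{6}\iiint V_N(x-y,x-z)|u(x)|^2|u(y)|^2|u(z)|^2 \d x\d y\d z \ge \|\nabla u\|_{L^2}^2,
\]
where $V\ge 0$ ensures the potential term is nonnegative.

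The heart of the argument is upgrading this to a uniform $L^{10}_{t,x}([0,T]\times\R^3)$ bound. I would rewrite the Hartree equation as
\[
i\partial_t u + \Delta u = b_0 |u|^4 u + e(u), \quad e(u)(x) = \frac{u(x)}{2}\iint V_N(x-y,x-z)\bigl(|u(y)|^2|u(z)|^2 - |u(x)|^4\bigr)\d y\d z.
\]
Since $V_N$ is supported in $\{|x-y|,|x-z|\le CN^{-\beta}\}$ and $\|V_N\|_{L^1(\R^6)} = 2b_0$ is $N$-independent, a Taylor expansion $\bigl||u(y)|^2 - |u(x)|^2\bigr|\le 2\|u\|_{L^\infty}\|\nabla u\|_{L^\infty}|y-x|$ on the support yields the pointwise bound
\[
|e(u)(x)| \le C\, N^{-\beta}\, \|u\|_{L^\infty}^3\, \|\nabla u\|_{L^\infty}\, |u(x)|.
\]
The NLS solution $\varphi$ enjoys $\|\varphi\|_{L^{10}_{t,x}([0,T]\times\R^3)}\le M(T,\|\varphi_0\|_{H^1})$ by \cite{ColKeelStafTao-08}. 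Partitioning $[0,T]$ into finitely many sub-intervals on which the $L^{10}_{t,x}$ norm of $\varphi$ is small enough for the stability lemma of \cite{ColKeelStafTao-08}, and feeding in the above estimate for $e(u)$ in a Strichartz-dual norm, I obtain the uniform $L^{10}_{t,x}$ bound for $u$ via a continuity/bootstrap argument, carried out in tandem with a control of $\|u\|_{L^\infty_t W^{1,\infty}_x}$ coming from Strichartz estimates applied to $\Delta^2 u$.

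Once $\|u\|_{L^{10}_{t,x}}$ is uniformly bounded, Strichartz estimates applied to $\Delta^2 u$ --- using $H^4\hookrightarrow W^{1,\infty}$ in 3D to place lower-order factors in $L^\infty$ and the $N$-independent $L^1$ norm of $V_N$ --- deliver $\|u(t,\cdot)\|_{H^4}\le C_t$; the bound on $\|\partial_t u(t,\cdot)\|_{H^2}$ is then immediate from the equation. For the convergence \eqref{eq:H-to-NLS} I set $w = u-\varphi$ and split the nonlinearity as $[N_H(u)-N_H(\varphi)] + [N_H(\varphi)-b_0|\varphi|^4\varphi]$: the first bracket is Lipschitz in $w$ in $L^2$ by the uniform $L^\infty$ bounds on $u$ and $\varphi$, while the second is of size $O(N^{-\beta})$ in $L^2$ by the same Taylor argument applied to $\varphi\in H^4$. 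Gronwall then gives $\|w(t)\|_{L^2}^2\le C_t N^{-\beta}$.

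The main obstacle is closing the $L^{10}_{t,x}$ bootstrap: the estimate on $e(u)$ requires $L^\infty$ and $W^{1,\infty}$ control of $u$, which is not a conserved quantity, so the stability step and the persistence-of-regularity step must be run together on sub-intervals whose length depends only on the NLS spacetime norm and on the initial $H^4$ size. This coupling is precisely where the energy-critical nature of the limit equation forces one to invoke \cite{ColKeelStafTao-08} rather than standard subcritical machinery, and it motivates the hypothesis $u_0\in H^4(\R^3)$ rather than $u_0\in H^1(\R^3)$.
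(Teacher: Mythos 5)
Your proposal shares the paper's overall architecture — perturbation of the quintic NLS, CKSTT's global $L^{10}_{t,x}$ spacetime bound, persistence of regularity by iterating Strichartz on subintervals of small $L^{10}_{t,x}$ norm, and a Gronwall argument for the $L^2$ convergence — but it makes a structural choice that introduces a circularity the paper deliberately avoids. You write the Hartree equation as an NLS with forcing, $i\partial_t u + \Delta u = b_0|u|^4 u + e(u)$, and estimate $|e(u)|\lesssim N^{-\beta}\|u\|_{L^\infty}^3\|\nabla u\|_{L^\infty}|u|$, so your error term is evaluated at the \emph{unknown} solution $u$ and its smallness requires a priori $W^{1,\infty}$ control of $u$. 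The paper does the opposite: the known NLS solution $\tilde u=\varphi$ is viewed as an approximate solution of the \emph{Hartree} equation, with error
\begin{equation*}
e_N \;=\; b_0|\varphi|^4\varphi - \tfrac12\iint|\varphi(y)|^2V_N(x-y,x-z)|\varphi(z)|^2\d y\d z\,\varphi,
\end{equation*}
a quantity depending only on $\varphi$. This lets one invoke CKSTT's $\|\varphi\|_{L^{10}_{t,x}}<\infty$ to verify the hypotheses of the Hartree perturbation lemma (Lemma~\ref{Long time Pertubation}) with no bootstrap at all. The smallness of $\|\nabla e_N\|_{L^2_t L^{6/5}_x}$ is then obtained not by a pointwise Taylor expansion requiring $\nabla\varphi\in L^\infty$, but by Minkowski's inequality followed by the continuity of translations in $L^{10}_{t,x}$, $\sup_{|y|\le CN^{-\beta}}\|\varphi(\cdot-y)-\varphi\|_{L^{10}_{t,x}}\to 0$, which uses only $\varphi\in L^{10}_{t,x}$.

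Your acknowledged coupled bootstrap does not straightforwardly close. First, a Lebesgue-exponent issue: the stability lemma needs $\|\nabla e\|_{L^2_t L^{6/5}_x}$, and since $6/5<2$ the bound $|e(u)|\lesssim N^{-\beta}|u|$ does not give $e(u)\in L^{6/5}_x$ from Sobolev control of $u$; one must instead run a H\"older splitting into mixed $L^p_t L^q_x$ norms built from $L^{10}_{t,x}$-type quantities, as the paper does. Second, and more fundamentally, the perturbation lemma requires a priori finite $L^{10}_{t,x}$ and $L^\infty_t\dot H^1_x$ norms of the \emph{reference} solution; with your choice of decomposition the reference effectively becomes $u$ itself, and its $W^{1,\infty}$ norm (needed for $e(u)$), its $L^{10}_{t,x}$ norm (needed for the subinterval splitting), and its $H^4$ norm (needed for persistence) are all a priori unknown and feed into one another across the $K=K(T,\|\varphi\|_{L^{10}_{t,x}})$ subintervals. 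A careful continuity argument with $N$ large depending on $T$ may close this, but your sketch does not control the compounding of constants per subinterval. Reversing the perturbation — taking $\varphi$ as the approximate Hartree solution and estimating $e_N(\varphi)$ — eliminates the circularity at the start and is precisely what the paper does. The remainder of your proposal (persistence of $H^4$ via Strichartz once $\|u\|_{L^{10}_{t,x}}$ is bounded, the $\partial_t u$ bound from the equation, and Gronwall with Taylor for \eqref{eq:H-to-NLS}) matches the paper.
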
 
%\begin{remark} In the context of Theorem~\ref{thm:Hartree}, the phase $\mu_N(t)$ does not play any role. It can be removed by the transformation
%	\begin{align*}
%	u \mapsto e^{-i\int_0^t\mu_N(s)\d s} u.
%	\end{align*}
%	Therefore, in the following proof of Theorem~\ref{thm:Hartree} we will take $\mu_N(t)=0$ for simplicity. 
%\end{remark}
\begin{remark}
	Note that from \eqref{eq:H2bound1} and Sobolev's inequality we obtain 
	$$\|u(t,\cdot)\|_{L^\infty(\R^3)} \le C_t, \quad \|\partial_t u(t,\cdot)\|_{L^\infty} \le C_t.$$
	These bounds will be used repeatedly in the paper. 
	\end{remark}

The well-posedness result of the energy-critical NLS \eqref{eq:Hartree-NLS}, 
\begin{equation*}
\left\{
\begin{aligned}
i\partial_t\varphi(t,x) &= - \Delta \varphi(t,x) + b_0 |\varphi(t,x)|^4\varphi(t,x) \\
\varphi(0,x) &= \varphi_0(x) 
\end{aligned}
\right.
\end{equation*}
was proved in 2008 by Colliander, Keel, Staffilani, Takaoka, and Tao \cite{ColKeelStafTao-08}. The adaptation from the local equation \eqref{eq:Hartree-NLS} to the nonlocal one \eqref{eq:Hartree} is not obvious and we will explain the details below. Note that a similar result to Theorem \ref{thm:Hartree} for the cubic Hartree equation (which involves only a two-body interaction potential) was proved in 2013 by Grillakis and Machedon \cite{GriMac-13}. It turns out that the analysis for the 3-body case is significantly more complicated and we could not simply follow the analysis in  \cite{GriMac-13}. 

Our proof of Theorem  \ref{thm:Hartree} is organized as follows. First, in Section \ref{sec:Har-local} we will prove the existence of uniqueness of a local solution to \eqref{eq:Hartree}. This step is standard; we will follow  \cite{caz-03,Tao-06,LinPo-15} and the references therein. Next, in Section \ref{sec:Har-sho} we extend the local solution to a global one and derive the $N$-independent estimates. This is the crucial step where we need to interpret \eqref{thm:Hartree} as a perturbation of \eqref{eq:Hartree-NLS}. By employing the time-extension technique in  \cite{ColKeelStafTao-08}, we can go from local estimate to global estimate. This will be explained in Sections and \ref{sec:Har-sho} and \ref{sec:Har-lon}. Finally, we conclude the proof of Theorem  \ref{thm:Hartree} in Section \ref{sec:har-con}.

\subsection{Local well-posedness} \label{sec:Har-local}

First we prove the existence and uniqueness  of a solution to equation~\eqref{eq:Hartree} in a short time interval.  

\begin{lemma} [\textbf{Local existence of quintic Hartree equation}]\label{lem:localtheory} 
	For every $u_0 \in H^1(\R^3)$, there exists a constant $\eps>0$ depending on $\|u_0\|_{H^1(\R^3)}$ such that for any time interval $I$ containing $0$ with
	\begin{align}
	\label{linearsmall}
	\|e^{it\Delta}u_0\|_{L^{6}_tW^{1,\frac{18}{7}}_x(I\times\R^3)} \le \eps,
	\end{align}
	there exists a unique solution $u\in C(I,H^1(\R^3))\cap L^6_tW^{1,\frac{18}{7}}_x(I\times\R^3)$ of \eqref{eq:Hartree}.
\end{lemma}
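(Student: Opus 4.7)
The plan is the standard Kato/Cazenave-type fixed-point argument, with the only genuine nonstandard ingredient being the three-body nonlocal term. I would write \eqref{eq:Hartree} in Duhamel form,
\begin{equation*}
\Phi[u](t) = e^{it\Delta} u_0 - i \int_0^t e^{i(t-s)\Delta} F[u](s)\, \d s, \qquad F[u](x) = W[u](x) u(x),
\end{equation*}
where
\begin{equation*}
W[u](x) = \tfrac{1}{2} \iint V_N(x-y,x-z) |u(y)|^2 |u(z)|^2 \, \d y\, \d z,
\end{equation*}
and look for a fixed point of $\Phi$ in the complete metric space
\begin{equation*}
X_I = \Big\{ u \in C(I,H^1) \cap L^6_t W^{1,\tfrac{18}{7}}_x(I\times\R^3) : \|u\|_{L^6_t W^{1,18/7}_x} \le 2\eps,\ \|u\|_{L^\infty_t H^1_x} \le 2C\|u_0\|_{H^1} \Big\},
\end{equation*}
equipped with the $L^6_t L^{18/7}_x$ metric.

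The key analytic inputs are the Strichartz estimates for $e^{it\Delta}$ in $\R^3$: the pair $(q,r)=(6,18/7)$ is admissible since $\tfrac{2}{q}+\tfrac{3}{r}=\tfrac{3}{2}$, and one has both the homogeneous bound $\|e^{it\Delta} u_0\|_{L^\infty_t H^1_x \cap L^6_t W^{1,18/7}_x} \le C\|u_0\|_{H^1}$ and the retarded inhomogeneous estimate in terms of a suitable dual norm $\|\cdot\|_{L^{q'}_t W^{1,r'}_x}$. Under the smallness hypothesis \eqref{linearsmall}, the linear part is already of size $\eps$ in the relevant norm, so everything reduces to a nonlinear estimate of the form
\begin{equation*}
\|F[u]\|_{L^{q'}_t W^{1,r'}_x(I\times\R^3)} \le C(\|u_0\|_{H^1})\, \|u\|_{L^6_t W^{1,18/7}_x}^{\,5}
\end{equation*}
together with the corresponding Lipschitz version for $F[u]-F[v]$. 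The standard quintic-NLS argument then closes as usual: choose $\eps$ small enough (depending only on $\|u_0\|_{H^1}$ and $C$) so that $\Phi$ maps $X_I$ to itself and is a strict contraction, yielding a unique fixed point, which by the Strichartz estimate also lies in $C(I,H^1)$.

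The main obstacle, and the only place the argument deviates from the textbook treatment of $|u|^4 u$, is the nonlinear estimate on $F[u] = W[u]\, u$. Here the gradient has to be distributed either to the outer factor $u$ or inside $W[u]$, and the three-body structure prevents a pointwise reduction to $|u|^4$. I would handle this by viewing $V_N$ as an $L^1(\R^6)$ kernel with $\|V_N\|_{L^1(\R^6)} = \|V\|_{L^1(\R^6)}$ independent of $N$, and applying Young's inequality in $\R^6$ to the map
\begin{equation*}
(f,g) \mapsto \iint V_N(x-y,x-z)\,f(y)\,g(z)\,\d y\,\d z
\end{equation*}
to move the double convolution onto suitable Lebesgue spaces of $|u|^2$ and $u\,\nabla u$; Hölder's inequality in $x$ and in $t$, combined with the 3D Sobolev embeddings $W^{1,18/7}(\R^3) \hookrightarrow L^{9/2}(\R^3)$ and $H^1(\R^3) \hookrightarrow L^6(\R^3)$, then produce the required $\|u\|_{L^6_t W^{1,18/7}_x}^5$-type bound, with a constant that depends on $\|V\|_{L^1}$ and $\|u_0\|_{H^1}$ but not on $N$. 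The Lipschitz estimate is obtained by the identity $F[u]-F[v] = W[u](u-v) + (W[u]-W[v])v$ and applying the same multilinear bounds after writing $W[u]-W[v]$ as a sum over which of the two factors $|u|^2,|u|^2$ is replaced by the difference. Once both estimates are in place, the contraction mapping theorem gives the unique local solution in $X_I$, completing the proof.
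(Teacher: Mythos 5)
Your proposal is essentially the paper's argument: a contraction-mapping scheme in a ball of $C(I,H^1)\cap L^6_t W^{1,18/7}_x$ built on the admissible Strichartz pair $(6,18/7)$, with the three-body kernel handled by Young's inequality with $\|V_N\|_{L^1(\R^6)}=\|V\|_{L^1(\R^6)}$ (independent of $N$), followed by H\"older and Sobolev, so that the nonlinearity lands in the dual Strichartz space $L^{6/5}_t W^{1,18/11}_x$; the smallness \eqref{linearsmall} then closes the fixed point exactly as in the paper. Using the weaker $L^6_t L^{18/7}_x$ metric on a ball with an additional $L^\infty_t H^1_x$ bound is a standard Cazenave-Weissler variant and is interchangeable with the paper's full $E(I,a)$ norm here, since the nonlinearity is polynomial and hence $C^1$. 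The one detail to fix is the Sobolev exponent: the embedding that actually closes the estimate is the sharp 3D embedding $W^{1,18/7}(\R^3)\hookrightarrow L^{18}(\R^3)$, giving $W[u]\in L^{3/2}_t L^{9/2}_x$ after Young, and then H\"older against $\nabla u\in L^6_t L^{18/7}_x$ yields $L^{6/5}_t L^{18/11}_x$; the embeddings into $L^{9/2}$ and $L^6$ that you cite are true but do not produce the required $\|u\|^5_{L^6_t W^{1,18/7}_x}$ bound at these exponents. Once that is corrected, your argument coincides with the paper's.
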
 

Here the smallness condition \eqref{linearsmall} basically requires that $|I|$ is sufficiently small (which corresponds to the locality in time). 

\begin{proof}
	
	We will use a fixed-point argument similarly to the the energy-critical NLS case \cite[Theorem 5.5]{LinPo-15} (see also \cite{caz-03,Tao-06} and the references therein). 	Let
	\begin{align}
	E(I,a) = \Big\{&v \in C(I:H^1(\R^3))\cap L^6_tW^{1,\frac{18}{7}}_x(I\times\R^3):\quad \|v\|_{L^{6}_tW^{1,\frac{18}{7}}_x(I\times\R^3)}\le a\Big\},
	\end{align}
	equipped with the norm
	\begin{align*}
	\|v\|_{E(I,a)} = \sup_{t\in I}\|v(t)\|_{H^1(\R^3)}+\|v\|_{L^{6}_tW^{1,\frac{18}{7}}_x(I\times\R^3)}.
	\end{align*}
	This makes $E(I,a)$ a complete metric space.
	Define for $u\in E(I,a)$
	
	\begin{align}
	\Phi(u)(t) = e^{it\Delta}u_0 - \frac{i}{2}\int_0^t e^{i\Delta(t-s)} \iint |u(s,y)|^2 V_N(x-y,x-z)|u(s,z)|^2\d y \d z \,u(s) ds
	\end{align}
	We want to prove that $\Phi$ is a contraction map on $E(I,a)$ and then apply the contraction mapping principle.
	
	In the following let us ignore the time dependence in the notation of $u$ for simplicity. By the product rule for gradient and H\"older's, Young's, Sobolev's inequalities we have 
	\begin{align} \label{firstlocalestimate}
	&\left\|\nabla\left( \iint|u(y)|^2V_N|u(z)|^2\d y \d z\, \right) u\right\|_{L^\frac{6}{5}_tL^\frac{18}{11}_x(I\times\R^3)}\nn\\
	&\lesssim \left\|\left( \iint (|u| |\nabla u|)V_N|u|^2\d y \d z\, \right) u\right\|_{L^\frac{6}{5}_tL^\frac{18}{11}_x(I\times\R^3)} \nn\\
	&\lesssim \|\nabla u\|_{L^6_tL^\frac{18}{7}_x(I\times\R^3)} \left\|\iint|u|^2V_N|u|^2\d y \d z\right\|_{L^\frac{3}{2}_tL^\frac{9}{2}_x(I\times\R^3)} \nn \\ &\lesssim \|\nabla u\|_{L^6_tL^\frac{18}{7}_x(I\times\R^3)} \|V\|_{L^1(\R^6)}\|u\|^4_{L^6_tL^{18}_x(I\times\R^3)} \nn\\ 
	&\lesssim \|\nabla u\|^5_{L^6_tL^\frac{18}{7}_x(I\times\R^3)}.
	\end{align}
	Similarly
	\begin{align*}
	\left\|\iint|u|^2V_N|u|^2\d y \d z\, u\right\|_{L^\frac{6}{5}_tL^\frac{18}{11}_x(I\times\R^3)} \lesssim \| \nabla u\|^4_{L^6_tL^\frac{18}{7}_x(I\times\R^3)} \|u\|_{L^6_tL^\frac{18}{7}_x(I\times\R^3)}.
	\end{align*}
	Hence using \eqref{linearsmall} and Strichartz estimate \cite{Strich-77} (also consider \cite{Tao-06}) we obtain
	\begin{align*}
	&\|\Phi(u)\|_{L^6_tW^{1,\frac{18}{7}}_x(I\times\R^3)} \\&\le \|e^{it\Delta}u_0\|_{L^{6}_tW^{1,\frac{18}{7}}_x(I\times\R^3)} +  \left\|\frac{i}{2}\int_0^t e^{i\Delta(t-s)} \iint |u(s)|^2 V_N|u(s)|^2\d y \d z \,u(s) ds\right\|_{L^{6}_tW^{1,\frac{18}{7}}_x(I\times\R^3)} \\
	&\lesssim \eps + \left\|\iint|u|^2V_N|u|^2\d y \d z\, u\right\|_{L^\frac{6}{5}_tW^{1,\frac{18}{11}}_x(I\times\R^3)} \lesssim \eps + \|u\|^5_{L^{6}_tW^{1,\frac{18}{7}}_x(I\times\R^3)}.	
	\end{align*}
	Therefore, for $u\in E(I,a)$ we have
	$
	\|\Phi(u)\|_{L^{6}_tW^{1,\frac{18}{7}}_x(I\times\R^3)} \lesssim \eps + a^5
	$
	and hence choosing $a>0$ small enough and then $\eps>0$ small we have
	\begin{align*} \|\Phi(u)\|_{L^{6}_tW^{1,\frac{18}{7}}_x(I\times\R^3)} \le a.
	\end{align*}
	Moreover, again using Strichartz estimate we have
	\begin{align*}
	\sup_{t\in[0,T]}\|\Phi(u)(t)\|_{H^1(\R^3)} &\lesssim \|u_0\|_{H^1(\R^3)}+ \left\|\iint|u|^2V_N|u|^2\d y \d z\, u\right\|_{L^\frac{6}{5}_tW^{1,\frac{18}{11}}_x(I\times\R^3)} \\ &\lesssim \|u_0\|_{H^1(\R^3)}+ \|u\|^5_{L^{6}_tW^{1,\frac{18}{7}}_x(I\times\R^3)}.
	\end{align*} 
	Putting alltogether we obtain $\Phi(E(I,a)) \subseteq E(I,a)$.

	To show that $\Phi$ is a contraction map we see for $u,v \in E(I,a)$
	\begin{align*}
	\|\nabla(\Phi(u)-\Phi(v))\|_{L^{6}_tL^\frac{18}{7}_x} &\lesssim \left\| \iint |u|^2V_N|u|^2\d y \d z\,\nabla u - \iint |v|^2V_N|v|^2\d y \d z\,\nabla v\right\|_{L^\frac{6}{5}_tL^\frac{18}{11}_x} \\
	&\le\left\| \iint |u|^2V_N|u|^2\d y \d z\,\nabla(u - v)\right\|_{L^\frac{6}{5}_tL^\frac{18}{11}_x} \\
	&+ \left\| \left(\iint |u|^2V_N|u|^2\d y \d z - \iint |v|^2V_N|v|^2\d y \d z\right)\nabla v\right\|_{L^\frac{6}{5}_tL^\frac{18}{11}_x}.
	\end{align*}
	The first term can be estimated similarly to \eqref{firstlocalestimate}, which gives
	\begin{align*}
	&\left\| \iint |u|^2V_N|u|^2\d y \d z\,\nabla( u - v)\right\|_{L^\frac{6}{5}_tL^\frac{18}{11}_x} \lesssim \|\nabla u\|^4_{L^6_tL^{\frac{18}{7}}_x(I\times\R^3)}\|\nabla(u-v)\|_{L^{6}_tL^\frac{18}{7}_x} \\
	&\le a^4\|\nabla(u-v)\|_{L^{6}_tL^\frac{18}{7}_x}.
	\end{align*}
	The second one follows by 
	\begin{align*}
	&\left\| \left(\iint |u|^2V_N|u|^2\d y \d z - \iint |v|^2V_N|v|^2\d y \d z\right)\nabla v\right\|_{L^\frac{6}{5}_tL^\frac{18}{11}_x(I\times\R^3)} \\ &\le \left\|\iint |u|^2V_N|u|^2\d y \d z - \iint |v|^2V_N|v|^2\d y \d z\right\|_{L^\frac{3}{2}_tL^\frac{9}{2}_x(I\times\R^3)}\|\nabla v\|_{L^{6}_tL^\frac{18}{7}_x(I\times\R^3)} \\
	&\lesssim \left\|\iint (|u|^2-|v|^2)V_N|u|^2\d y \d z\right\|_{L^\frac{3}{2}_tL^\frac{9}{2}_x(I\times\R^3)}\|\nabla v\|_{L^{6}_tL^\frac{18}{7}_x(I\times\R^3)} \\
	&\lesssim \left\|\iint |u-v|(|u|+|v|)V_N|u|^2\d y \d z\right\|_{L^\frac{3}{2}_tL^\frac{9}{2}_x}\|\nabla v\|_{L^{6}_tL^\frac{18}{7}_x(I\times\R^3)} \\
	&\lesssim\|V_N\|_{L^1(\R^6)}\left\|\|u-v\|_{L^{18}_x(\R^3)}\||u|+|v|\|_{L^{18}_x(\R^3)}\|u\|^2_{L^{18}_x(\R^3)}\right\|_{L^\frac{3}{2}_t(I)}\|\nabla v\|_{L^{6}_tL^\frac{18}{7}_x(I\times\R^3)} \\
	&\lesssim \left((\|\nabla u\|_{L^{6}_tL^\frac{18}{7}_x}+\|\nabla v\|_{L^{6}_tL^\frac{18}{7}_x})\|\nabla u\|^2_{L^{6}_tL^\frac{18}{7}_x}\|\nabla v\|_{L^{6}_tL^\frac{18}{7}_x}\right) \|\nabla(u-v)\|_{L^{6}_tL^\frac{18}{7}_x(I\times\R^3)} \\
	&\lesssim a^4\|\nabla(u-v)\|_{L^{6}_tL^\frac{18}{7}_x(I\times\R^3)}.
	\end{align*}
	Putting all together we obtain
	\begin{align*}
	\|\nabla(\Phi(u)-\Phi(v))\|_{L^{6}_tL^\frac{18}{7}_x} \lesssim a^4\|\nabla(u-v)\|_{L^{6}_tL^\frac{18}{7}_x(I\times\R^3)}.
	\end{align*}
	Similarly we obtain
	\begin{align*}
	\|\Phi(u)-\Phi(v)\|_{L^{6}_tL^\frac{18}{7}_x} \lesssim a^4 \|u-v\|_{L^{6}_tL^\frac{18}{7}_x}
	\end{align*}
	and together
	\begin{align}
	\|\Phi(u)-\Phi(v)\|_{L^{6}_tW^{1,\frac{18}{7}}_x} \lesssim a^4 \|u-v\|_{L^{6}_tW^{1,\frac{18}{7}}_x}.
	\end{align}
	Furthermore, using the above and Strichartz estimate we obtain
	\begin{align}
	\sup_{t \in I}\|\Phi(u)-\Phi(v)\|_{H^1(\R^3)} &\lesssim \left\| \iint |u|^2V_N|u|^2\d y \d z\, u - \iint |v|^2V_N|v|^2\d y \d z\, v\right\|_{L^\frac{6}{5}_tW^{1,\frac{18}{11}}_x} \nn\\
	&\lesssim a^4 \|u-v\|_{L^{6}_tW^{1,\frac{18}{7}}_x}.
	\end{align}
	By this we have
	\begin{align*}
	\|\Phi(u) - \Phi(v)\|_{E(I,a)} \lesssim a^4 \|u-v\|_{E(I,a)}
	\end{align*}
	which gives for $a>0$ small enough that $\Phi$ is a contraction map on $E(I,a)$. Using the contracting mapping principle we have that there exists a unique \mbox{}\\$u \in C(I:H^1(\R^3))\cap L^6_tW^{1,\frac{18}{7}}_x(I\times\R^3)$ which solves \eqref{eq:Hartree} with initial data $u_0\in H^1(\R^3)$.
\end{proof}

In the following we will derive the global theory of equation~\eqref{eq:Hartree}. We will use the well-known global well-posedness theory of the energy-critical NLS
\begin{equation}
i\partial_t \tilde{u} = -\Delta\tilde{u} +b_0|\tilde{u}|^4\tilde{u}
\end{equation}
(consider for example \cite{ColKeelStafTao-08} and \cite{Tao-06}) and consider equation~\eqref{eq:Hartree} as a pertubation. For that we will prove an adapted version of \cite[Lemma 3.9 and Lemma 3.10]{ColKeelStafTao-08} for the quintic Hartree equation, which will give global spacetime bounds for solutions of equation~\eqref{eq:Hartree}.

\subsection{Short time pertubation} \label{sec:Har-sho}

\begin{lemma}[\textbf{Short time Pertubation}] \label{lem:Short time Pertubation}
	Let $I\subset \R$ be a compact interval  and $\tilde{u}$ be a solution of
	\begin{equation}
	\label{perteq}
	i\partial_t\tilde{u} = - \Delta\tilde{u} + \frac{1}{2}\iint |\tilde{u}(y)|^2 V_N(x-y,x-z)|\tilde{u}(z)|^2\d y \d z\,\tilde{u} + e
	\end{equation}
	on $I\times\R^3$
	for some function $e$. Assume that we have
	
	\begin{align}
	\label{eq:tiny u}	
	\|\nabla\tilde{u}\|_{L^{10}_t L^\frac{30}{13}_x (I \times \R^3)}	&\le \eps_0, \\
	\label{tiny error}
	\|\nabla e\|_{L^2_t L^\frac{6}{5}_x (I \times \R^3)}	&\le \eps,
	\end{align}
	for some constants $\eps_0, \eps >0$ small enough.
	For $t_0\in I$ let $u(t_0)$  be close to $\tilde{u}(t_0)$. More precisely:
	\begin{align}
	\label{eq:closesness}
	\|\nabla e^{i(t-t_0)\Delta}(u(t_0)-\tilde{u}(t_0))\|_{L^{10}_t 	L^{\frac{30}{13}}_x (I \times \R^3)} \le \eps.
	\end{align}
	We then conclude that there exists a solution $u$ of \eqref{eq:Hartree} on $I\times \R^3$ with initial state $u(t_0)$ at $t_0$,
	which fulfills the following spacetime bounds:
	\begin{align}
	\label{concludetiny1}
	\|u-\tilde{u}\|_{L^{10}_{t,x}(I\times \R^3)} \lesssim \|\nabla(u-\tilde{u})\|_{L^{10}_t 	L^{\frac{30}{13}}_x (I \times \R^3)} \lesssim \eps, \\ 
	\label{concludetiny2}
	\|\nabla(i\partial_t + \Delta) (u-\tilde{u}) \|_{L^2_t L^{\frac{6}{5}}_x(I\times\R^3)}	\lesssim \eps.
	\end{align} 
\end{lemma}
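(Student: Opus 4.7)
Set $w := u - \tilde u$ and
\[
N(v)(x) := \tfrac{1}{2}\iint V_N(x-y,x-z)|v(y)|^2|v(z)|^2\d y \d z \cdot v(x).
\]
A solution of \eqref{eq:Hartree} with initial data $u(t_0)$ must satisfy the Duhamel formula
\[
w(t) = e^{i(t-t_0)\Delta}\bigl(u(t_0)-\tilde u(t_0)\bigr) - i\int_{t_0}^{t} e^{i(t-s)\Delta}\bigl[N(\tilde u+w) - N(\tilde u) - e\bigr](s)\,\d s.
\]
The plan is to construct $w$ (hence $u = \tilde u + w$) on all of $I$ by a continuity argument in the Strichartz norm $\|\nabla\,\cdot\,\|_{L^{10}_t L^{\frac{30}{13}}_x}$. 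The bound \eqref{concludetiny1} on $\|u-\tilde u\|_{L^{10}_{t,x}}$ will then follow from the Sobolev embedding $W^{1,\frac{30}{13}}(\R^3)\hookrightarrow L^{10}(\R^3)$, and \eqref{concludetiny2} by reading off the right-hand side of the equation for $w$ in $L^2_t L^{\frac{6}{5}}_x$.

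\textbf{Key nonlinear estimate.} The heart of the argument is the bound
\[
\|\nabla\bigl(N(\tilde u + w) - N(\tilde u)\bigr)\|_{L^2_t L^{\frac{6}{5}}_x(J\times\R^3)} \lesssim \bigl(\|\nabla\tilde u\|_{L^{10}_t L^{\frac{30}{13}}_x(J)}^{4} + \|\nabla w\|_{L^{10}_t L^{\frac{30}{13}}_x(J)}^{4}\bigr)\|\nabla w\|_{L^{10}_t L^{\frac{30}{13}}_x(J)}
\]
on every subinterval $J\subset I$. To prove it I would expand $N(\tilde u + w) - N(\tilde u)$ into a finite sum of quintic monomials, each containing at least one factor of $w$ or $\overline{w}$; apply $\nabla$ by the product rule; and then estimate each resulting term exactly as in \eqref{firstlocalestimate}. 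Concretely, Minkowski's inequality together with the change of variables $(y,z)\mapsto(x-y',x-z')$ absorbs the double convolution into $\|V\|_{L^1(\R^6)}$ times a product of four translation-invariant spatial factors in $L^{10}_x$ and one factor in $L^{\frac{30}{13}}_x$; Sobolev $W^{1,\frac{30}{13}}(\R^3)\hookrightarrow L^{10}(\R^3)$ controls the former by $\|\nabla\,\cdot\,\|_{L^{\frac{30}{13}}_x}$; and H\"older in time with $\tfrac{1}{2}=\tfrac{4}{10}+\tfrac{1}{10}$ distributes the five factors into the single norm $L^{10}_t L^{\frac{30}{13}}_x$.

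\textbf{Continuity argument.} Inserting this estimate into the inhomogeneous Strichartz estimate applied to the Duhamel formula for $w$, together with hypotheses \eqref{tiny error} and \eqref{eq:closesness}, yields
\[
\|\nabla w\|_{L^{10}_t L^{\frac{30}{13}}_x(J)} \le C\eps + C\bigl(\eps_0^{4} + \|\nabla w\|_{L^{10}_t L^{\frac{30}{13}}_x(J)}^{4}\bigr)\|\nabla w\|_{L^{10}_t L^{\frac{30}{13}}_x(J)}
\]
for every $J\subset I$ containing $t_0$. Since the left-hand side is continuous in $J$ and vanishes at $J=\{t_0\}$, a standard continuity argument with $\eps_0$ and $\eps$ sufficiently small propagates the bound $\|\nabla w\|_{L^{10}_t L^{\frac{30}{13}}_x(J)}\le 2C\eps$ up to $J = I$. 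Existence of $u$ on all of $I$ runs in parallel: Lemma~\ref{lem:localtheory} produces $u$ on short subintervals, and the uniform Strichartz bound just established prevents blow-up so that these local solutions patch together into a global solution on $I$. Both bounds \eqref{concludetiny1} and \eqref{concludetiny2} follow immediately.

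\textbf{Main obstacle.} The technical crux is the nonlinear estimate. One must simultaneously distribute one derivative across a quintic expression, absorb the threefold nonlocal structure $\iint V_N(x-y,x-z)\cdots\d y\d z$, and match the output against a single admissible Strichartz pair. The $L^1(\R^6)$-integrability of $V$ (inherited from $V\in\cC_c$) together with the Sobolev relation $W^{1,\frac{30}{13}}\hookrightarrow L^{10}$ in three dimensions are precisely the right inputs for these exponents to balance; without them the estimate would not close. Everything else is a fairly mechanical adaptation of the short-time perturbation argument of \cite{ColKeelStafTao-08}.
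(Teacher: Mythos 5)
Your proposal is correct and follows essentially the same route as the paper. The only cosmetic difference is that you close the fixed-point/continuity argument directly in the norm $\|\nabla w\|_{L^{10}_t L^{30/13}_x}$, whereas the paper introduces the auxiliary quantity $S(I)=\|\nabla(i\partial_t+\Delta)(u-\tilde u)\|_{L^2_t L^{6/5}_x}$, closes the bound on $S(I)$, and then reads off $\|\nabla(u-\tilde u)\|_{L^{10}_t L^{30/13}_x}\lesssim S(I)+\eps$ via Strichartz; the two formulations are equivalent because the inhomogeneous Strichartz estimate translates one into the other. Likewise your compressed nonlinear bound $(\|\nabla\tilde u\|^4+\|\nabla w\|^4)\|\nabla w\|$ is a Young-inequality repackaging of the paper's explicit sum $\sum_{j=1}^5(S(I)+\eps)^j\eps_0^{5-j}$, and the exponent bookkeeping ($\tfrac56=\tfrac{4}{10}+\tfrac{13}{30}$ in space, $\tfrac12=\tfrac{5}{10}$ in time, Sobolev $\dot W^{1,30/13}(\R^3)\hookrightarrow L^{10}(\R^3)$) matches. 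One small inaccuracy: you cite \eqref{firstlocalestimate} as the template, but that computation uses the pair $L^6_t W^{1,18/7}_x$ from the local theory, not the pair $L^{10}_t L^{30/13}_x$ used here; the method is the same but the exponents in the reference equation are different.
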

\begin{proof}
	By the local theory given in Lemma \ref{lem:localtheory} we can proove \eqref{concludetiny1}-\eqref{concludetiny2} as a priori estimates, meaning that we assume that $u$ already exists on $I$.
	
	Let $v = u -\tilde{u}$. Define
	\begin{align*}
	S(I) = \|\nabla(i\partial_t + \Delta) v \|_{L^2_tL^\frac{6}{5}_x(I\times\R^3)}.
	\end{align*}
	Using \eqref{eq:closesness}, Strichartz estimate and Duhamel's formula we can estimate the $L^{10}_tL^\frac{30}{13}_x(I\times\R^3)$ Norm of $v$ by $S(I)$:
	
	\begin{align}
	\label{eq:estimate v}
	\|\nabla v\|_{L^{10}_tL^\frac{30}{13}_x(I\times\R^3)} &\le 
	\|\nabla (v - e^{i(t-t_0)\Delta}v(t_0))\|_{L^{10}_tL^\frac{30}{13}_x(I\times\R^3)} + \|\nabla e^{i(t-t_0)\Delta}v(t_0)\|_{L^{10}_tL^\frac{30}{13}_x(I\times\R^3)} \nn\\
	&\lesssim S(I) + \eps.
	\end{align}
	On the other hand we know that $v$ solves the following equation:
	\begin{align}
	\label{eq:dynamicsv}
	i\partial_t v = & - \Delta v + \frac{1}{2}\iint |v(y)+\tilde{u}(y)|^2 V_N(x-y,x-z)|v(z)+\tilde{u}(z)|^2\d y \d z(v+\tilde{u})\nn \\ &- \frac{1}{2}\iint |\tilde{u}(y)|^2 V_N(x-y,x-z)|\tilde{u}(z)|^2\d y \d z\tilde{u} - e
	\end{align} 
	
	In order to estimate $S(I)$ we have to estimate several terms of \eqref{eq:dynamicsv} which contain different powers of $v$ and $\tilde{u}$. Since all of the terms can be estimated in the same strategy, we just give a few examples. We start with the term $(\iint |v(y)|^2 V_N(x-y,x-z)|v(z)|^2\d y \d z )\nabla v$.	By H\"older's inequality we have
	
	\begin{align*}
	&\left\|\iint |v(y)|^2 V_N(x-y,x-z)|v(z)|^2\d y \d z \nabla v \right\|_{L^2_tL^\frac{6}{5}_x(I\times\R^3)} 
	\\ &\le \left\|\iint |v(y)|^2 V_N(x-y,x-z)|v(z)|^2\d y \d z\right\|_{L^\frac{5}{2}_tL^\frac{5}{2}_x(I\times\R^3)} \|\nabla v\|_{L^{10}_tL^\frac{30}{13}_x(I\times\R^3)}.
	\end{align*}
	The first term can be estimated using first  Young's inequality and then Sobolev's inequality and \eqref{eq:estimate v}:
	
	\begin{align*}
	&\left\|\iint |v(y)|^2 V_N(x-y,x-z)|v(z)|^2\d y \d z\right\|_{L^\frac{5}{2}_tL^\frac{5}{2}_x} \\
	&	\lesssim \left\|\iint |v(y)|^4 V_N(x-y,x-z)\d y \d z\right\|_{L^\frac{5}{2}_tL^\frac{5}{2}_x} \\
	&\le \left\| \|V\|_{L^1(\R^6)} \|v\|_{L^{10}_x(\R^3)}^4 \right\|_{L^\frac{5}{2}_t(I)} 
	 \lesssim \,	\|v\|^4_{L^{10}_{t,x}(\R^3)} \\
	& \lesssim \|\nabla v\|_{L^{10}_tL^\frac{30}{13}_x(I\times\R^3)}^4 \lesssim (S(I) + \eps)^4.
	\end{align*}
	Hence, using the above and \eqref{eq:estimate v} agiain we obtain
	\begin{align*}
	\left\|\iint |v(y)|^2 V_N(x-y,x-z)|v(z)|^2\d y \d z \nabla v \right\|_{L^2_tL^\frac{6}{5}_x(I\times\R^3)} \lesssim (S(I)+\eps)^5.
	\end{align*}
	For the readers convenience we also consider the mixed term  $(\iint |\tilde{u}|^2 V_N\nabla|v|^2 \d y \d z) v$ as a second example. First we use again H\"older's inequality and get
	
	\begin{align*}
	\left\|\iint |\tilde{u}|^2 V_N\nabla|v|^2 \d y \d z \, v\right\|_{L^2_tL^\frac{6}{5}_x(I\times\R^3)} \le
	\left\|\iint |\tilde{u}|^2 V_N\nabla|v|^2 \d y \d z\right\|_{L^\frac{5}{2}_tL^\frac{15}{11}_x(I\times\R^3)} \|v\|_{L^{10}_{t,x}(I\times\R^3)}.
	\end{align*}
	The first term can then be estimated by Minkowski's inequality
	\begin{align*}
	&\left\|\iint\ |\tilde{u}|^2 V_N\nabla|v|^2 \d y \d z\right\|_{L^\frac{5}{2}_tL^\frac{15}{11}_x(I\times\R^3)} \lesssim
	\left\|\iint|\tilde{u}|^2 V_N|\nabla v| |v| \d y \d z\right\|_{L^\frac{5}{2}_tL^\frac{15}{11}_x(I\times\R^3)} \\ & \le
	\left\| \iint V_N(y,z) \left(\int |\tilde{u}(x-y)|^{\frac{30}{11}}|v(x-z)|^{\frac{15}{11}}|\nabla v(x-z)|^{\frac{15}{11}} \d x \right)^{\frac{11}{15}} \d y \d z \right\|_{L^\frac{5}{2}_t(I)}\\
	&\le \left\| \|V\|_{L^1(\R^6)} \|\tilde{u}\|_{L^{10}_x(\R^3)}^2 \|v\|_{L^{10}_x(\R^3)} \|\nabla v\|_{L^\frac{30}{13}_x(\R^3)} \right\|_{L^\frac{5}{2}_t(I)} \\
	&\lesssim \|\tilde{u}\|^2_{L^{10}_{t,x}(I\times\R^3)}\|v\|_{L^{10}_{t,x}(I\times\R^3)}\|\nabla v\|_{L^{10}_tL^\frac{30}{13}_x(I\times\R^3)} \\
	&\lesssim \|\nabla\tilde{u}\|^2_{L^{10}_tL^\frac{30}{13}_x(I\times\R^3)}\|\nabla v\|^2_{L^{10}_tL^\frac{30}{13}_x(I\times\R^3)} \lesssim 
	\eps_0^2 \left(S(I)+\eps\right)^2.
	\end{align*}
	Since by Sobolev's inequaltiy and \eqref{eq:estimate v} we have that $\|v\|_{L^{10}_{t,x}(I\times\R^3)} \lesssim \|\nabla v\|_{L^{10}_tL^\frac{30}{13}_x(I\times\R^3)} \lesssim S(I) + \eps$, we conclude
	\begin{align*}
	\left\|\iint |\tilde{u}|^2 V_N\nabla|v|^2 \d y \d z \, v\right\|_{L^2_tL^\frac{6}{5}_x(I\times\R^3)} \lesssim \eps_0^2 \left(S(I)+\eps\right)^3
	\end{align*}

	These and similar estimates yield
	\begin{align} \label{eq:St<=Stpower}
	S(I) \lesssim \eps + \sum_{j=1}^5 \left(S(I) + \eps \right)^j \eps_0^{5-j}.
	\end{align} 
	If we choose $\eps,\eps_0$ small enough, a standard continuity argument gives $S(I) \lesssim \eps$. To be precise, let us take a small, fixed constant $\lambda>0$ independent of $\eps,\eps_0$  and assume that 
	\begin{align} \label{eq:St<=Stpower-a}S(I)\le \lambda. 
	\end{align}
	Then from \eqref{eq:St<=Stpower} we deduce that, for $\eps,\eps_0,\lambda$ sufficiently small, 
	$$
	S(I) \le C_0\eps + C_0\left( \sum_{j=1}^5 \left(\lambda + \eps \right)^{j-1} \eps_0^{5-j} \right) (S(I)+\eps) \le C_0 \eps + \frac{1}{2} (S(I)+\eps),
	$$
	and hence
	\begin{align} \label{eq:St<=Stpower-b}
	S(I)\le (2C_0+1)\eps. 
	\end{align}
	Here $C_0>0$ is a constant independent of $\eps,\eps_0,\lambda$. Finally, note that as a priori the assumption \eqref{eq:St<=Stpower-a} holds only if $|I|$ is sufficiently small. However, using \eqref{eq:St<=Stpower-b} with $\eps$ much smaller than $\lambda$ we can extend the bounds \eqref{eq:St<=Stpower-a} and  \eqref{eq:St<=Stpower-b} to any large interval $I$.

	This proves \eqref{concludetiny2}. Using	\eqref{eq:estimate v} we also conclude \eqref{concludetiny1}. This finishes the proof.
\end{proof}

\subsection{Long time pertubation} \label{sec:Har-lon}
We now prove a version of Lemma~\ref{lem:Short time Pertubation} without the smallness condition \eqref{eq:tiny u} by using Lemma~\ref{lem:Short time Pertubation} iteratively.
\begin{lemma} [\textbf{Long time Pertubation}]
	\label{Long time Pertubation}
	Let $I$ be a compact interval and $\tilde{u}$ be a function on $I\times\R^3$ which solves
	\begin{equation}
	i\partial_t\tilde{u} = - \Delta\tilde{u} + \frac{1}{2}\iint |\tilde{u}(y)|^2 V_N(x-y,x-z)|\tilde{u}(z)|^2\d y \d z\,\tilde{u} + e,
	\end{equation}
	for some function $e$ on $I\times\R^3$. Moreover assume that $\tilde{u}$ fulfills the following spacetime bounds:
	\begin{align}
	\label{10spacetime}
	\|\tilde{u}\|_{L^{10}_{t,x}(I\times \R^3)} &\le M, \\
	\|\tilde{u}\|_{L^{\infty}_t \dot H^1_x (I \times \R^3)} &\le E, \\
	\|\nabla e\|_{L^2_t L^{\frac{6}{5}}_x (I \times \R^3)}	&\le \eps
	\end{align}
	for some constants $M,E \ge 0$ and some small enough $\eps > 0$.
	For $t_0\in I$ let $u(t_0)$ be close to $\tilde{u}(t_0)$ in the sense that
	\begin{align}
	\label{eq:closesness2}
	\|\nabla e^{i(t-t_0)\Delta}(u(t_0)-\tilde{u}(t_0)\|_{L^{10}_t 	L^{\frac{30}{13}}_x (I \times \R^3)} \le \eps.
	\end{align}
	We then conclude that there exists a solution $u$  of \eqref{eq:Hartree} on $I\times \R^3$ with initial state $u(t_0)$ at $t_0$,
	which fulfills the following spacetime bounds:
	\begin{align}
	\label{concludetinylong1}
	\|u-\tilde{u}\|_{L^{10}_{t,x}(I\times \R^3)} \lesssim \|\nabla(u-\tilde{u})\|_{L^{10}_t 	L^{\frac{30}{13}}_x (I \times \R^3)} \lesssim C(M,E)\eps, \\ 
	\label{concludetinylong2}
	\|\nabla(i\partial_t + \Delta) (u-\tilde{u}) \|_{L^2_t L^{\frac{6}{5}}_x(I\times\R^3)}	\lesssim C(M,E)\eps.
	\end{align} 
	
\end{lemma}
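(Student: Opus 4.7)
The plan is to iterate Lemma~\ref{lem:Short time Pertubation} over a finite partition of $I$, in the spirit of the long-time perturbation argument of \cite{ColKeelStafTao-08}. The smallness condition \eqref{eq:tiny u} fails on $I$ as a whole, but it can be restored on short enough subintervals once a global a priori bound on $\|\nabla \tilde{u}\|_{L^{10}_t L^{30/13}_x}$ has been established.

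First I would upgrade the hypothesis $\|\tilde u\|_{L^{10}_{t,x}(I\times\R^3)} \le M$ to such a gradient bound. Partition $I$ into $\sim (M/\eta)^{10}$ subintervals $I_j$ on each of which $\|\tilde u\|_{L^{10}_{t,x}(I_j\times\R^3)} \le \eta$. Applying Strichartz to equation \eqref{perteq} satisfied by $\tilde u$, and controlling the nonlocal nonlinearity exactly as in \eqref{firstlocalestimate} (using Young with $\|V_N\|_{L^1}=\|V\|_{L^1}$), gives
\begin{align*}
\|\nabla \tilde u\|_{L^{10}_t L^{30/13}_x(I_j\times\R^3)}
&\lesssim E + \|\tilde u\|_{L^{10}_{t,x}(I_j\times\R^3)}^{4}\|\nabla \tilde u\|_{L^{10}_t L^{30/13}_x(I_j\times\R^3)} + \eps \\
&\lesssim E + \eta^{4}\|\nabla \tilde u\|_{L^{10}_t L^{30/13}_x(I_j\times\R^3)} + \eps.
\end{align*}
For $\eta$ small the nonlinear term is absorbed on the left, and summing over the $J$ subintervals produces a global bound $\|\nabla \tilde u\|_{L^{10}_t L^{30/13}_x(I\times\R^3)} \le C(M,E)$. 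I would then re-partition $I$ into $J'=J'(M,E)$ subintervals $I_j=[t_j,t_{j+1}]$ on each of which $\|\nabla \tilde u\|_{L^{10}_t L^{30/13}_x(I_j)} \le \eps_0$, with $\eps_0$ the threshold from Lemma~\ref{lem:Short time Pertubation}.

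Next I iterate Lemma~\ref{lem:Short time Pertubation} across these subintervals. On $I_0$ the closeness \eqref{eq:closesness2} holds by assumption, so the lemma yields a solution $u$ on $I_0$ with $S(I_0)\lesssim \eps$ and the spacetime bound on $\nabla v$, $v=u-\tilde u$. To proceed to $I_1$, I control the closeness at $t_1$: by Strichartz applied to the difference equation \eqref{eq:dynamicsv},
\[
\|\nabla e^{i(t-t_1)\Delta}v(t_1)\|_{L^{10}_t L^{30/13}_x(I_1\times\R^3)} \lesssim \|v(t_1)\|_{\dot H^1} \lesssim \|v(t_0)\|_{\dot H^1} + S(I_0) \lesssim \eps,
\]
up to a universal constant. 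Iterating through the $J'$ subintervals, the closeness constant is multiplied by a fixed factor at each step, so after finitely many steps the total closeness remains bounded by $C(M,E)\eps$. Summing the estimates over all $I_j$ yields \eqref{concludetinylong1}--\eqref{concludetinylong2}, and existence of $u$ on the full interval $I$ follows in parallel from the existence part of Lemma~\ref{lem:Short time Pertubation}.

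The main obstacle is ensuring that the nonlocal nonlinearity $\iint |\tilde u(y)|^2 V_N(x-y,x-z)|\tilde u(z)|^2 \d y\d z\,\tilde u$ enjoys the same type of Strichartz-compatible gradient bounds as the pointwise quintic nonlinearity $|\tilde u|^4 \tilde u$ treated in \cite{ColKeelStafTao-08}. The computations in Lemma~\ref{lem:localtheory} and Lemma~\ref{lem:Short time Pertubation} already settle this via Young's inequality with the uniform bound $\|V_N\|_{L^1}=\|V\|_{L^1}$, so no new analytic input is required beyond a careful book-keeping of the subintervals and a tracking of how the closeness constant propagates through the iteration.
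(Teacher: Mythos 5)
Your two-stage partition strategy---first using \eqref{10spacetime} to partition $I$ and obtain the $L^{10}_tL^{30/13}_x$ bound on $\nabla\tilde u$, then re-partitioning so that the threshold $\eps_0$ of Lemma~\ref{lem:Short time Pertubation} is met on each piece, and iterating---is exactly the structure of the paper's proof. The existence/a priori argument and the book-keeping of constants also match.

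However, there is a genuine gap in your propagation step. You write
\[
\|\nabla e^{i(t-t_1)\Delta}v(t_1)\|_{L^{10}_t L^{30/13}_x(I_1\times\R^3)} \lesssim \|v(t_1)\|_{\dot H^1} \lesssim \|v(t_0)\|_{\dot H^1} + S(I_0) \lesssim \eps,
\]
and the last step invokes $\|v(t_0)\|_{\dot H^1}\lesssim\eps$. This is \emph{not} a hypothesis of the lemma: the only closeness assumption is \eqref{eq:closesness2}, a bound on the free-flow spacetime norm $\|\nabla e^{i(t-t_0)\Delta}v(t_0)\|_{L^{10}_tL^{30/13}_x(I\times\R^3)}$. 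The Strichartz estimate goes in the direction $\|\nabla e^{it\Delta}f\|_{L^{10}_tL^{30/13}_x}\lesssim \|f\|_{\dot H^1}$, not the reverse, so the free-flow spacetime norm being small does not imply $\|v(t_0)\|_{\dot H^1}$ is small (it can be much smaller, e.g.\ for short $I$ or rapidly oscillating data). Passing through the $\dot H^1$ norm discards precisely the dispersive information that makes \eqref{eq:closesness2} a sufficient hypothesis. The fix, as in the paper, is to propagate the free-flow spacetime norm directly via Duhamel: from
\[
e^{i\Delta(t-t_1)}v(t_1) \;=\; e^{i\Delta(t-t_0)}v(t_0) \;+\; e^{i\Delta(t-t_1)}\,i\!\int_{I_0} e^{i\Delta(t_1-s)}\,(i\partial_s+\Delta)v(s)\,\d s,
\]
apply $\nabla$ and take the $L^{10}_tL^{30/13}_x$ norm; the first piece is $\le\eps$ by hypothesis, and the second is controlled via Strichartz and the dual Strichartz estimate by $\|\nabla(i\partial_t+\Delta)v\|_{L^2_tL^{6/5}_x(I_0)}=S(I_0)\lesssim\eps$. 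This avoids ever needing $\|v(t_0)\|_{\dot H^1}$.
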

\begin{proof} 
	Let $\delta >0$. Using \eqref{10spacetime} we can split $I$ into finetly many subintervals $I_1 , ... ,  I_{C(M)}$ such that 
	\begin{align*}
	\|\tilde{u}\|_{L^{10}_{t,x}(I_j \times \R^3)} \le \delta
	\end{align*}
	for each $j$. Using Duhamel's formula, Strichartz's estimate and estimates similar to the ones in Lemma~\ref{lem:localtheory} and Lemma~\ref{lem:Short time Pertubation} we obtain
	\begin{align*}
	&\|\nabla \tilde{u}\|_{L^{10}_t 	L^{\frac{30}{13}}_x (I_j \times \R^3)} \\ 
	&\lesssim
	\sup_{t\in I_j}\|\tilde{u}(t)\|_{\dot H^1_x(\R^3)} + \left\|\nabla \left(\frac{1}{2} \iint |\tilde{u}(y)|^2 V_N(x-y,x-z)|\tilde{u}(z)|^2\d y \d z\,\tilde{u} + e\right)\right\|_{L^2_t L^{\frac{6}{5}}_x (I_j \times \R^3)}  \\
	&\lesssim E + \|\tilde{u}\|^4_{L^{10}_{t,x}(I_j \times \R^3)}\|\nabla \tilde{u}\|_{L^{10}_t L^\frac{30}{13}_x(I_j \times\R^3)} + \eps \\
	&\le E + \delta^4\|\nabla \tilde{u}\|_{L^{10}_t L^\frac{30}{13}_x(I_j \times\R^3)} + \eps.
	\end{align*}
	With $\delta$ small enough this gives $\|\nabla \tilde{u}\|_{L^{10}_t L^\frac{30}{13}_x(I_j\times\R^3)} \lesssim E +\eps$. Summing over all subintervals gives
	\begin{align}
	\label{eq:1010normbound}
	\|\nabla\tilde{u}\|_{L^{10}_t 	L^{\frac{30}{13}}_x (I \times \R^3)} \le C(M,E).
	\end{align}
	If we now choose $\eps_0$ as in Lemma~\ref{lem:Short time Pertubation} and use \eqref{eq:1010normbound} we can split $I$ again into finitely many subintervals $I_1 , ... ,  I_{C(M,E,\eps_0)}$ with $I_j = [t_j,t_{j+1}]$ and
	\begin{align}
	\|\nabla\tilde{u}\|_{L^{10}_t 	L^{\frac{30}{13}}_x (I_j \times \R^3)} \le \eps_0.
	\end{align}
	We can now apply Lemma~\ref{lem:Short time Pertubation} inductively. Using this for the first subinterval $I_0$ gives
	\begin{align*}
	\|u-\tilde{u}\|_{L^{10}_{t,x}(I_0\times \R^3)} \lesssim \|\nabla(u-\tilde{u})\|_{L^{10}_t 	L^{\frac{30}{13}}_x (I_0 \times \R^3)} \lesssim \eps, \\ 
	\|\nabla(i\partial_t + \Delta) (u-\tilde{u}) \|_{L^2_t L^{\frac{6}{5}}_x(I_0\times\R^3)}	\lesssim \eps.
	\end{align*}
	Now proceeding iteratively using Duhamel's formula we see that
	\begin{align*}
	&\|\nabla e^{i\Delta(t-t_1)}(u(t_1)-\tilde{u}(t_1))\|_{L^{10}_t 	L^{\frac{30}{13}}_x (I \times \R^3)} \\ &\le \|\nabla e^{i\Delta(t-t_0)}(u(t_0)-\tilde{u}(t_0))\|_{L^{10}_t 	L^{\frac{30}{13}}_x (I \times \R^3)} \\
	&+ \left\| e^{i\Delta(t-t_1)} i\int_{I_0} e^{i\Delta(t_1-s)} \nabla(i\partial_t + \Delta) (u-\tilde{u})(s) \d s\right\|_{L^{10}_t 	L^{\frac{30}{13}}_x(I\times\R^3)} \\
	&\lesssim \eps +
	\left\|\int_{I_0} e^{i\Delta(t_1-s)} \nabla(i\partial_t + \Delta) (u-\tilde{u})(s) \d s\right\|_{L^2(\R^3)} \\
	&\lesssim \eps +
	\|\nabla(i\partial_t + \Delta) (u-\tilde{u}) \|_{L^2_t L^{\frac{6}{5}}_x(I_0\times\R^3)}
	\lesssim \eps.
	\end{align*}
	For $\eps>0$ small enough we can iterate this procedure and obtain
	\begin{align*}
	\|u-\tilde{u}\|_{L^{10}_{t,x}(I_j\times \R^3)} \lesssim \|\nabla(u-\tilde{u})\|_{L^{10}_t 	L^{\frac{30}{13}}_x (I_j \times \R^3)} \lesssim C(j)\eps, \\ 
	\|\nabla(i\partial_t + \Delta) (u-\tilde{u}) \|_{L^2_t L^{\frac{6}{5}}_x(I_j\times\R^3)}	\lesssim C(j)\eps,
	\end{align*}
	for all $j$. Summing over all finite intervals we obtain \eqref{concludetinylong1} and \eqref{concludetinylong2}. 
\end{proof}
\subsection{Conclusion of Theorem \ref{thm:Hartree}} \label{sec:har-con}
We now apply the two previous Lemmas to prove Theorem~ \ref{thm:Hartree}.
\begin{proof}[Proof of Theorem~\ref{thm:Hartree}.] {\bf Global well-posedness.} 	We want to apply Lemma~\ref{Long time Pertubation} with $\tilde u$ being the solution of the perturbed equation given by the quintic NLS
	\begin{align*}
	i\partial_t \tilde{u} &= -\Delta\tilde{u} +b_0|\tilde{u}|^4\tilde{u} \\
	&=  - \Delta\tilde{u} + \frac{1}{2}\iint |\tilde{u}(y)|^2 V_N(x-y,x-z)|\tilde{u}(z)|^2\d y \d z\,\tilde{u} + e_N,
	\end{align*}
	with initial state $\tilde u(0,x) = u_0$.
	Here we have defined the pertubation
	\begin{equation*}
	e_N = b_0|\tilde{u}|^4\,\tilde{u} - \frac{1}{2}\iint |\tilde{u}(y)|^2 V_N(x-y,x-z)|\tilde{u}(z)|^2\d y \d z\,\tilde{u}.
	\end{equation*}
	In order to use Lemma~\ref{Long time Pertubation} we want to show that $\|\nabla e_N\|_{L^2_t L^\frac{6}{5}_x (I \times \R^3)}$ is arbitrarily small for $N$ large. 
	We see that
	\begin{align*}
	\|\nabla e_N\|_{L^2_t L^\frac{6}{5}_x (I \times \R^3)} &\le \left\|\left(\frac{1}{2}\iint |\tilde{u}(y)|^2 V_N(x-y,x-z)|\tilde{u}(z)|^2\d y \d z- b_0|\tilde{u}|^4 \right)\nabla\tilde{u}\right\|_{L^2_t L^\frac{6}{5}_x (I \times \R^3)}  \\
	&\quad\quad+ \left\|\nabla\left(\frac{1}{2}\iint |\tilde{u}(y)|^2 V_N(x-y,x-z)|\tilde{u}(z)|^2\d y \d z- b_0|\tilde{u}|^4 \right)\tilde{u}\right\|_{L^2_t L^\frac{6}{5}_x (I \times \R^3)} \\
	&\le \left\|\frac{1}{2}\iint |\tilde{u}(y)|^2 V_N(x-y,x-z)|\tilde{u}(z)|^2\d y \d z- b_0|\tilde{u}|^4 \right\|_{L^\frac{5}{2}_tL^\frac{5}{2}_x}\|\nabla\tilde{u}\|_{L^{10}_tL^{\frac{30}{13}}_x} \\
	&\quad\quad+\left\|\nabla\left(\frac{1}{2}\iint |\tilde{u}(y)|^2 V_N(x-y,x-z)|\tilde{u}(z)|^2\d y \d z- b_0|\tilde{u}|^4\right)\right\|_{L^\frac{5}{2}_tL^\frac{15}{11}_x}\|\tilde{u}\|_{L^{10}_{t,x}}.
	\end{align*}
	We now show that the first term $$\left\|\frac{1}{2}\iint |\tilde{u}(y)|^2 V_N(x-y,x-z)|\tilde{u}(z)|^2\d y \d z- b_0|\tilde{u}|^4 \right\|_{L^\frac{5}{2}_tL^\frac{5}{2}_x}$$ will be arbitrarily small for $N$ big enough. The second term will follow similarly. 
	We see by Minkowski's inequality  
	\begin{align}
	\label{eq:eNsmalllast}
	&\nn\left\|\frac{1}{2}\iint |\tilde u(y)|^2 V_N(x-y,x-z)|\tilde u(z)|^2\d y \d z - b_0|\tilde u|^4\right\|_{L^\frac{5}{2}_tL^\frac{5}{2}_x} \\ &\nn = \left\|\frac{1}{2}\iint V_N(y,z)\left(|\tilde u(x-y)|^2|\tilde u(x-z)|^2 - |\tilde u(x)|^4\right)\d y \d z\right\|_{L^\frac{5}{2}_tL^\frac{5}{2}_x} \\&\nn\le\frac{1}{2}
	\iint V_N(y,z) \left\||\tilde u(x-y)|^2|\tilde u(x-z)|^2 - |\tilde u(x)|^4\right\|_{L^\frac{5}{2}_tL^\frac{5}{2}_x}\d y \d z  \\
	&\nn\le 
	\iint V_N(y,z) \left\||\tilde u(x-y)|^2- |\tilde u(x)|^2\right\|_{L^5_tL^5_x}\left\|\tilde u\right\|^2_{L^{10}_tL^{10}_x}\d y \d z  \\
	&\nn\le 
	2\iint V_N(y,z) \left\|\tilde u(x-y)- \tilde u(x)\right\|_{L^{10}_tL^{10}_x}\left\|\tilde u\right\|^3_{L^{10}_tL^{10}_x}\d y \d z \\
	&\nn= 2\iint_{|y| \le C N^{-\beta}} V_N(y,z) \left\|\tilde u(x-y)- \tilde u(x)\right\|_{L^{10}_tL^{10}_x}\left\|\tilde u\right\|^3_{L^{10}_tL^{10}_x}\d y \d z \\
	&\le C\left\|\tilde u\right\|^3_{L^{10}_tL^{10}_x}\sup_{|y|\le CN^{-\beta}}\left\|\tilde u(x-y)- \tilde u(x)\right\|_{L^{10}_tL^{10}_x}.
	\end{align}
	Here we have used the fact that $V_N(y,\cdot)$ is zero for $|y| > CN^{-\beta}$ for some $C>0$. Note that $\tilde u$ is independent of $N$ and $\left\|\tilde u\right\|^2_{L^{10}_tL^{10}_x} \le C$ by \cite[Theorem 1.1]{ColKeelStafTao-08}, we have the continuity by translation
	$$
	\lim_{|y|\to 0}\left\|\tilde u(x-y)- \tilde u(x)\right\|_{L^{10}_tL^{10}_x}=0.
	$$
	Therefore, the right side of \eqref{eq:eNsmalllast} is arbitrarily small for large $N$. 
		
	%Now using again Minkowski's inequality, we obtain
	%\begin{align*}
	%&\left\||\tilde u(x-y)|^2- |\tilde u(x)|^2\right\|_{L^5_tL^5_x}  = \left\|\int_{0}^{1} \frac{d}{ds}|\tilde u(x-sy)|^2\d s\right\|_{L^5_tL^5_x}  \\ 
	%& \le C \left\|\int_{0}^{1} |y||\nabla \tilde u(x-sy)||\tilde u(x-sy)|\d s \right\|_{L^5_tL^5_x} 
	%\le C |y|\int_0^1\left\|\nabla \tilde u(x-sy)\tilde u(x-sy)\right\|_{L^5_tL^5_x}\d s \\
	%&\le C |y|\|\nabla \tilde u\|^2_{L^{10}_tL^{10}_x}\|\nabla \tilde u\|^2_{L^{10}_tL^{10}_x} \le C |y|.
	%\end{align*}
	%With that we can conclude
	%\begin{align*}
	%\left\|\frac{1}{2}\iint |\tilde u(y)|^2 V_N(x-y,x-z)|\tilde u(z)|^2\d y \d z - b_0|\tilde u|^4\right\|_{L^\frac{5}{2}_tL^\frac{5}{2}_x} \le C N^{-\beta}.
	%\end{align*}
	We are now able to apply Lemma~\ref{Long time Pertubation}, which gives the existence and uniqueness of a solution $u$ to \eqref{eq:Hartree} (we omit the $N$-dependence of $u$ in the notation). 
	
	Finally we come to the proof of \eqref{eq:H2bound1}.
	By Lemma~\ref{Long time Pertubation} we also know that the solution $u$ to \eqref{eq:Hartree} obeys the following spacetime bound
	\begin{align}
	\int_{0}^{T} \int_{\R^3} |u(t,x)|^{10} \d x \d t < \infty.
	\end{align}
	Using this we can split up $[0,T]$ into finitely many subintervals $I_0, ...,I_K
	$ such that on each $I_j$ 
	\begin{equation}
	\|u\|_{L^{10}_{t,x}(I_j\times\R^3)} \le \delta
	\end{equation}
	for some small $\delta>0$. Now for any multi-index $\alpha$ with $|\alpha| \le 4$ we obtain by using Strichartz estimate on the first interval $I_0$
	\begin{align}
	\label{Laplacnorm}
	\|D^\alpha u\|_{L^{10}_tL^\frac{30}{13}_x(I_0\times\R^3)} &\lesssim \|D^\alpha u_0\|_{L^2(\R^3)} + \left\|\frac{1}{2}\iint|u|^2V_N|u|^2\d y\d zD^\alpha u\right\|_{L^2_tL^\frac{6}{5}_x(I_0\times\R^3)}\nn \\
	&\lesssim \|u_0\|_{H^4(\R^3)} + \|u\|^4_{L^{10}_{t,x}(I_0\times\R^3)}\|D^\alpha u\|_{L^{10}_tL^\frac{30}{13}_x(I_0\times\R^3)} \nn\\
	&\lesssim  \|u_0\|_{H^4(\R^3)} + \delta^4\|D^\alpha u\|_{L^{10}_tL^\frac{30}{13}_x(I_0\times\R^3)}.
	\end{align}
	For $\delta>0$ small enough we obtain
	\begin{align*}
	\|D^\alpha u\|_{L^{10}_tL^\frac{30}{13}_x(I_0\times\R^3)} \lesssim \|u_0\|_{H^4(\R^3)}.
	\end{align*}
	Using this and Strichartz estimate again, we get 
	\begin{align*}
	\|D^\alpha u(t,\cdot)\|_{L^2(\R^3)} \lesssim \|u_0\|_{H^4(\R^3)} + \left\|\frac{1}{2}\iint|u|^2V_N|u|^2\d y\d z D^\alpha u\right\|_{L^2_tL^\frac{6}{5}_x(I_0\times\R^3)} \lesssim \|u_0\|_{H^4(\R^3)}
	\end{align*}
	for any $t\in I_0$ and from this $\|u(t,\cdot)\|_{H^4(\R^3)} \lesssim \|u_0\|_{H^4(\R^3)}$ for any $t\in I_0$. This procedure can now be iterated and we obtain
	\begin{align*}
	\|u(t,\cdot)\|_{H^4(\R^3)} \lesssim \|u_0\|_{H^4(\R^3)}
	\end{align*}
	for all $t\ge0$. 
	
	Thus we have proved the first bound in \eqref{eq:H2bound1}. The second bound follows from the first and the Hartree equation \eqref{eq:Hartree}. Indeed, we have 
	\begin{align*}
	\|\partial_tu(t,\cdot)\|_{L^2(\R^3)} \le \|\Delta u(t,\cdot)\|_{L^2(\R^3)} + \|u(t,\cdot)\|^4_{L^\infty(\R^3)}\|u(t,\cdot)\|_{L^2(\R^3)} \le C_t
	\end{align*}
	and a similar estimate with $-\Delta (\partial_tu(t,\cdot))$. This finishes the proof of  \eqref{eq:H2bound1}.
%	\begin{align*}
%	\|\partial_tu(t,\cdot)\|_{L^\infty(\R^3)} &\le \|\Delta u(t,\cdot)\|_{L^\infty(\R^3)} + \|u(t,\cdot)\|^5_{L^\infty(\R^3)} \\
%	&\le C_t\|u(t,\cdot)\|_{H^4(\R^3)} +\|u(t,\cdot)\|^5_{L^\infty(\R^3)} \le C_t.
%	\end{align*}

	%. 

\bigskip

\noindent
{\bf Convergence to the quintic NLS solution.}  	Now we turn to the proof of \eqref{eq:H-to-NLS}. We compute the derivative of the norm distance using equation \eqref{eq:Hartree} and \eqref{eq:Hartree-NLS} This gives
		\begin{align}
		\frac{\d}{\d t}\|u(t)- \phi(t)\|^2_{L^2(\R^3)} &= 2\Re i\big\langle u(t),\left(\frac{1}{2}\iint|u(t)|^2V_N|u(t)|^2\d y\d z - b_0|\phi(t)|^4\right)\phi(t)\big\rangle\nn \\
		&=2\Re i\big\langle u(t),\left(\frac{1}{2}\iint|u(t)|^2V_N|u(t)|^2\d y\d z - b_0|u(t)|^4\right)\phi(t)\big\rangle +\nn \\
		&\label{derivativeNLSHartree}\quad+2\Re i\big\langle u(t),\left(b_0|u(t)|^4-b_0|\phi(t)|^4\right)\phi(t)\big\rangle.
		\end{align}
		To estimate the first term in \eqref{derivativeNLSHartree} we see that
		\begin{align*}
		&\Big|\frac{1}{2}\iint|u(t,y)|^2V_N(x-y,x-z)|u(t,z)|^2\d y\d z - b_0|u(t,x)|^4\Big|\\
		&=
		\Big|\frac{1}{2}\iint V_N(x-y,x-z)(|u(t,y)|^2|u(t,z)|^2-|u(t,x)|^4)\d y\d z\Big| \\
		&\le \Big|\iint V_N(x-y,x-z)(|u(t,y)|^2-|u(t,x)|^2)|u(t,z)|^2\d y\d z\Big|  \\
		&+\Big|\iint V_N(x-y,x-z)(|u(t,z)|^2-|u(t,x)|^2)|u(t,x)|^2\d y\d z\Big|.
		\end{align*}
		We now proceed with the first term, since both terms can be estimated similarly. Using that $V$ has compact support and hence $V(y,\cdot) =0$ for $|y|>C$ for some $C>0$, this gives
		\begin{align*}
		&\Big|\iint V_N(x-y,x-z)(|u(t,y)|^2-|u(t,x)|^2)|u(t,z)|^2\d y\d z\Big| \\
		&\le\iint_{|y| \le CN^{-\beta}} V_N(y,x-z)\left||u(t,x-y)|^2-|u(t,x)|^2\right||u(t,z)|^2\d y\d z \\
		&=\iint_{|y| \le CN^{-\beta}} V_N(y,x-z)\left|\int_0^1\nabla|u(t,x -sy)|^2\cdot y\d s\right||u(t,z)|^2\d y\d z \\
		&\le CN^{-\beta}\int_0^1\iint_{|y| \le CN^{-\beta}} V_N(y,x-z) |\nabla u(t,x -sy)||u(t,x -sy)||u(t,z)|^2\d y\d z \d s \\
		&\le C_t N^{-\beta}.
		\end{align*}
		In the last inequality we have used Theorem~\ref{thm:Hartree} to bound all factors containing $u$ by $C_t$. From this we obtain
		\begin{align*}
		|\big\langle u(t),\left(\frac{1}{2}\iint|u(t)|^2V_N|u(t)|^2\d y\d z - b_0|u(t)|^4\right)\phi(t)\big\rangle| \le C_tN^{-\beta}.
		\end{align*}
		
		Now for the second term in \eqref{derivativeNLSHartree} we use the elementary inequality\\ $\left||a|^4 - |b|^4\right|\le C|a-b|(|a|^3+|b|^3)$ and obtain
		\begin{align*}
		|\big\langle u(t),\left(|u(t)|^4-|\phi(t)|^4 \right)\phi(t)\big\rangle| &\le \int |u(t,x)||\phi(t,x)|\left||u(t,x)|^4-|\phi(t,x)|^4\right|\d x \\
		&\le C\int |u(t,x)-\phi(t,x)|(|u(t,x)|^5+|\phi(t,x)|^5)\d x \\
		&\le C \|u(t)-\phi(t)\|_{L^2(\R^3)} (\|u(t)\|^5_{L^{10}(\R^3)}+\|\phi(t)\|^5_{L^{10}(\R^3)}) \\ 
		&=C_t  \|u(t)-\phi(t)\|_{L^2(\R^3)}.
		\end{align*}
		Here we have used $\|\phi(t)\|^5_{L^{10}(\R^3)} \le C \|\phi(t)\|^5_{H^2(\R^3)} \le C$, which was proven in \cite[Corollary 1.2]{ColKeelStafTao-08}, and $\|u(t)\|^5_{L^{10}(\R^3)} \le C \|u(t)\|^5_{H^2(\R^3)} \le C_t$, which holds true by Theorem~\ref{thm:Hartree}.
		
		Putting now both estimates together, we obtain
		\begin{align*}
		\frac{\d}{\d t}\|u(t)- \phi(t)\|^2_{L^2(\R^3)} \le C_t\left(N^{-\beta} +  \|u(t)-\phi(t)\|_{L^2(\R^3)}\right),
		\end{align*}
		which completes the proof of \eqref{eq:H-to-NLS}.
	\end{proof}

\section{Bogoliubov equation} \label{sec:Bog-dyn}

In this section we discuss the Bogoliubov dynamics \eqref{eq:Bog},
\begin{equation*}
%\label{eq:Bog}
\left\{
\begin{aligned}
i\partial_t \Phi(t) &= \bH(t) \Phi(t), \\
\Phi(0) &= \Phi_0.
\end{aligned}
\right.
\end{equation*}
Recall that the quadratic generator $\bH(t)$ in \eqref{eq:Bog-Hamiltonian} is built up on the Hartree dynamics $u(t)$ in \eqref{eq:Hartree} with $u_0\in H^4(\R^3)$. All useful properties of  \eqref{eq:Bog} are collected in the following 

\begin{theorem}[Bogoliubov dynamics]\label{thm:Bog-equa}
	Let $\Phi_0$ be a unit vector in $\cF(\{u_0\}^{\bot})$ such that 
$$
\langle\Phi_0,\dGamma(1-\Delta)\Phi_0\rangle \le C. 
$$
Then the Bogoliubov equation \eqref{eq:Bog} has a unique global solution $\Phi(t)$ such that $\Phi(t) \in \cF(\{u(t)\}^\bot)$ for all $t>0$ and 
\begin{align} \label{eq:Bog-dyn-N-bounds-kinetic0}
	\langle\Phi(t),\d\Gamma(1-\Delta)\Phi(t)\rangle \le C_{t,\eps}N^{\beta+\eps}, \quad \forall \eps>0.
	\end{align}
%	Moreover, if $\Phi_0$ is a quasi-free state, then $\Phi(t)$ is a quasi-free state for all $t>0$ and the one-body density matrices $(\gamma_{\Phi(t)}, \alpha_{\Phi(t)})$ of $\Phi(t)$ is the unique solution to \eqref{eq:linear-Bog-dm}. Finally, in this case
%	\begin{align} \label{eq:Bog-dyn-N-bounds}
%	\langle\Phi(t),\cN^k \Phi(t)\rangle &\le C_{t,k}, \quad \forall k\ge 1, \\
%	\langle\Phi(t), (\d\Gamma(1-\Delta))^2\Phi(t)\rangle &\le C_{t,\eps}N^{2\beta+\eps}, \quad \forall \eps>0. \label{eq:Bog-dyn-K-bounds}
%	\end{align}
	
\end{theorem}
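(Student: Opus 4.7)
The plan is to settle the three claims (global existence, invariance of the excited Fock space, and the kinetic estimate) separately, treating the first two by now standard arguments and focusing the main work on the last.

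\textbf{Global well-posedness.} I would appeal to the abstract construction of Bogoliubov dynamics in \cite{LewNamSch-15}. The time-dependent generator $\bH(t)$ is a quadratic Hamiltonian whose coefficients $h(t)$, $K_1(t)$, $K_2(t)$ are built from the Hartree solution $u(t)$; Theorem \ref{thm:Hartree} yields uniform $H^4$ bounds on $u(t)$ and uniform $H^2$ bounds on $\partial_t u(t)$, hence $K_1(t),K_2(t)$ and their time derivatives are Hilbert--Schmidt on $L^2(\R^3)$ with norms continuous in $t$ (though possibly growing in $N$). Since $h(t)\ge -\Delta-C$ as an operator on $L^2(\R^3)$, $\bH(t)$ is bounded below uniformly on sectors $\cN\le M$ and, after a gauge phase, $\bH(t)+C\dGamma(1-\Delta)$ is nonnegative, which furnishes the semigroup of forward evolution by the standard Kato trick applied in \cite{LewNamSch-15}. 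Uniqueness follows from the energy identity for differences of two solutions.

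\textbf{Preservation of the excitation Fock space.} To show $\Phi(t)\in\cF(\{u(t)\}^\perp)$ I would test the Bogoliubov equation against $a^*(u(t))$-type vectors. The structure of $K_1=Q(t)\tilde K_1 Q(t)$ and $K_2=(Q(t)\otimes Q(t))\tilde K_2$ means that $\bH(t)$ annihilates the $u(t)$-mode up to the contribution of $h(t)$; combined with the identity
\[\partial_t Q(t)=-|\partial_t u(t)\rangle\langle u(t)|-|u(t)\rangle\langle \partial_t u(t)|\]
and the fact that $\partial_t u(t)=-i h(t) u(t)$ by the Hartree equation, one checks that $\frac{d}{dt}\langle u(t),a^*(u(t))\Phi(t)\rangle=0$ at each $n$-sector, so the initial orthogonality propagates.

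\textbf{The kinetic bound.} This is where the real work lies and where the gain $N^{\beta+\varepsilon}$ must come from. I would set $G(t)=\langle\Phi(t),\dGamma(1-\Delta)\Phi(t)\rangle$ and differentiate,
\begin{equation*}
\frac{d}{dt}G(t)=-i\big\langle\Phi(t),[\dGamma(1-\Delta),\bH(t)]\Phi(t)\big\rangle.
\end{equation*}
The commutator with $\dGamma(h+K_1)$ is bounded by $C_t\,G(t)$ because $[-\Delta,h+K_1]$ is a bounded multiplication-type operator with $L^\infty$-norm controlled by $\|u(t)\|_{H^4}$. The dangerous piece is the pairing $\tfrac12\iint(K_2 a_x^*a_y^*+\text{h.c.})$, whose commutator with $\dGamma(1-\Delta)$ produces $\iint ((1-\Delta_x+1-\Delta_y)K_2)(x,y)a_x^*a_y^*+\text{h.c.}$ Applying two derivatives to the kernel $K_2(x,y)=(\int V_N(x-y,x-z)|u(z)|^2\,dz)u(x)u(y)$ costs up to $N^{2\beta}$ through $V_N$, while the Hilbert--Schmidt norm of $V_N(\cdot,\cdot-\cdot)\ast|u|^2$ scales like $N^{3\beta/2}$ because of the $N^{-3\beta}$ volume of support in one variable. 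I would then bound the pairing commutator through Cauchy--Schwarz as
\begin{equation*}
\left|\big\langle\Phi,\iint(\Delta K_2)a^*a^*\Phi\big\rangle\right|\le \eta\,G(t)+\eta^{-1}\|(1-\Delta)^{-1/2}(\Delta K_2)(1-\Delta)^{-1/2}\|_{\mathrm{HS}}^2\,(\langle\Phi,\cN\Phi\rangle+1),
\end{equation*}
and estimate the Hilbert--Schmidt norm by a careful rescaling, which yields an effective bound of order $N^\beta$. Combined with the straightforward $\cN$-bound $\langle\Phi(t),\cN\Phi(t)\rangle\le e^{C_t}(\langle\Phi_0,\cN\Phi_0\rangle+1)$ coming from $[\cN,\bH(t)]=2\iint(K_2 a^*a^*-\text{h.c.})$ and a Grönwall, optimising the splitting parameter $\eta$ and tolerating an arbitrarily small logarithmic loss produces
\begin{equation*}
\frac{d}{dt}G(t)\le C_t\,G(t)+C_{t,\varepsilon}N^{\beta+\varepsilon},
\end{equation*}
and Grönwall's lemma delivers the stated estimate \eqref{eq:Bog-dyn-N-bounds-kinetic0}.

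The principal obstacle I anticipate is the last step: obtaining the right $N$-dependence for the commutator $[\dGamma(-\Delta),\iint K_2 a^*a^*]$. The naive estimate, which ignores the smoothing effect of $(1-\Delta)^{-1/2}$ sandwiched around the kernel, costs $N^{7\beta/2}$ and is too weak. The key point is therefore to exploit cancellations coming from the short range of $V_N$ and from the $H^4$ regularity of $u(t)$, effectively trading singularity in $V_N$ for derivatives on $u(t)$ through integration by parts in the convolution defining $K_2$.
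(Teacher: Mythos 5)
Your sketch of global well-posedness and of the preservation of $\cF(\{u(t)\}^{\perp})$ is fine in spirit (the paper also invokes the abstract results of \cite{LewNamSch-15}), but your plan for the kinetic estimate contains a genuine gap, and it is precisely the gap you flag at the end of your proposal.

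You propose to differentiate $G(t)=\langle\Phi(t),\dGamma(1-\Delta)\Phi(t)\rangle$ directly, which forces you to estimate the commutator $[\dGamma(1-\Delta),\tfrac12\iint(K_2a^*a^*+\mathrm{h.c.})]$, whose kernel is $((1-\Delta_x)+(1-\Delta_y))K_2$. You correctly note that this kernel carries two extra derivatives on $V_N$ (costing $N^{2\beta}$) on top of the Hilbert--Schmidt size $N^{3\beta}$ of $K_2$, and that after losing a $(1-\Delta)^{-1/2}$ smoothing you would only get a bound of order $N^{5\beta}$, far worse than the claimed $N^{\beta+\eps}$. The hope you express that integration by parts in the convolution defining $K_2$ will save the day does not work: the interpolation in Lemma \ref{lem:Hilbertschmidtbound} proceeds between the endpoints $\|(1-\Delta)^{-3/4-\eps}\tilde K_2\|_{L^2}^2\le C_{t,\eps}$ and $\|\tilde K_2\|_{L^2}^2\le C_tN^{3\beta}$; to control $\|(1-\Delta)^{+1/2}K_2\|_{L^2}^2$ you would have to extrapolate \emph{beyond} the unweighted endpoint, and with $V\in C_c$ (not $C^\infty$) there is no further smoothness to trade. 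Integrating by parts onto $u(t)$ is also not available, because the singularity sits in $V_N$, not in a convolution kernel that you are free to move around.

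The paper avoids this commutator entirely. Instead of $G(t)$ it runs Gr\"onwall on $\langle\Phi(t),A(t)\Phi(t)\rangle$ with $A(t)=\bH(t)+C_{t,\eps}(\cN+N^{\beta+\eps})$. By Lemma \ref{lem:Bogoliubovbounds} one has $A(t)\ge\tfrac12\dGamma(1-\Delta)$, and differentiating along the flow produces only $\partial_t\bH(t)$ and $i[\bH(t),A(t)]=i[\bH(t),C_{t,\eps}\cN]$. The commutator with $\cN$ is $\mp i$ times the pairing term itself, so no derivative ever lands on $K_2$; both terms are then controlled by $\eta\dGamma(1-\Delta)+C_t\cN+C_{t,\eps}\eta^{-1}N^{\beta+\eps}$ from Lemma \ref{lem:Bogoliubovbounds}, which rests only on $\|(1-\Delta)^{-1/2}K_2\|_{L^2}^2\lesssim N^{\beta+\eps}$ and the pairing-term estimate of Lemma \ref{lem:Pairtermbound}. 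This choice of auxiliary quantity is the missing idea in your argument; without it, the direct differentiation of the kinetic energy gives a rate that is too large by a factor $N^{4\beta}$.
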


The key technical result of this section is the following

\begin{lemma}[\textbf{Bounds on Bogoliubov Hamiltonian}] \label{lem:Bogoliubovbounds}
	For every $\eps > 0$ and $\eta > 0$ we have
	\begin{align}
	\label{eq:Bogbound1}\pm \Big( \bH(t) + \d\Gamma(\Delta) \Big) &\le \eta\d\Gamma(1-\Delta) + C_t \cN + C_{t,\eps} \eta^{-1} N^{\beta +\eps}, \\\label{eq:Bogbound2} 
	\pm \partial_t \bH(t) &\le \eta\d\Gamma(1-\Delta) + C_t \cN + C_{t,\eps} \eta^{-1} N^{\beta +\eps}, \\\label{eq:Bogbound3}
	\pm i[\bH(t),\cN] &\le \eta\d\Gamma(1-\Delta) + C_t \cN + C_{t,\eps} \eta^{-1} N^{\beta +\eps}.
	\end{align}
\end{lemma}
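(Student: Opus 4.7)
The plan is to decompose
\begin{equation*}
\bH(t) + \d\Gamma(\Delta) = \d\Gamma(V_u + K_1) + (T + T^*),
\end{equation*}
where $V_u(t,x) := \frac{1}{2}\iint V_N(x-y, x-z)|u(t,y)|^2|u(t,z)|^2 \,\d y\,\d z$ and $T := \frac{1}{2}\iint K_2(t,x,y) a_x^* a_y^* \,\d x\,\d y$, and to bound the number-preserving and pairing parts separately. For the diagonal quadratic part, Theorem~\ref{thm:Hartree} supplies $\|u(t,\cdot)\|_{L^\infty} \le C_t$ uniformly in $N$, and $\int V_N = \int V$ then gives $\|V_u\|_{L^\infty} \le C_t$. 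For $K_1 = Q\tilde K_1 Q$ with kernel $u(x)\bar u(y) G(x,y)$ where $G(x,y) := \int V_N(x-y,x-z)|u(z)|^2 \d z$, a change of variables yields $G(x,y) \le C_t N^{3\beta} F(N^\beta(x-y))$ for some continuous, compactly supported $F$. Combined with the permutation symmetry of $V$ (which makes $G$ symmetric in $x,y$), Schur's test gives $\|K_1\|_{\mathrm{op}} \le C_t$. Altogether $\pm \d\Gamma(V_u + K_1) \le C_t \cN$.

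The crux is the pairing part. I will use the standard quadratic Fock-space inequality
\begin{equation*}
\pm (T + T^*) \le \eta\,\d\Gamma(1-\Delta) + \eta^{-1}\|(1-\Delta)^{-1/2}K_2(1-\Delta)^{-1/2}\|_{\mathrm{HS}}^2,
\end{equation*}
which follows by representing $T$ in a singular-value basis of the symmetric Hilbert-Schmidt operator $K_2$, applying Cauchy--Schwarz mode by mode to control $(a^*(f))^2 + a(f)^2$ by $a^*((1-\Delta)^{1/2}f)a((1-\Delta)^{1/2}f)$ together with a scalar, and resumming. The task reduces to estimating the weighted HS norm by $C_{t,\epsilon}N^{\beta+\epsilon}$. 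The kernel $K_2(x,y) \lesssim |u(x)||u(y)| N^{3\beta} \widetilde V(N^\beta(x-y))$ is localized on scale $N^{-\beta}$ with amplitude $N^{3\beta}$; after convolving with the Bessel potential $(1-\Delta)^{-1/2}$ in each variable, whose kernel on $\R^3$ behaves like $|x|^{-2}$ near the origin and decays exponentially, pointwise control yields a near-field piece of size $\sim N^{2\beta}$ on a ball of radius $N^{-\beta}$ plus a far-field tail of size $|x-y|^{-2}$. Squaring and integrating in $(x,y)$ produces both contributions of order $N^\beta$, with the small $\epsilon>0$ absorbing technical losses (e.g.\ from inserting $(1-\Delta)^{-(1+\epsilon)/2}$ to tame borderline integrals or to trade $\cN$-dependent remainders).

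Once \eqref{eq:Bogbound1} is in hand, \eqref{eq:Bogbound2} and \eqref{eq:Bogbound3} follow by the same structural argument. For \eqref{eq:Bogbound2}, $\partial_t \bH(t)$ has the identical quadratic form with $u$ replaced by $\partial_t u$ (and $V_u, K_1, K_2$ by their time derivatives), while Theorem~\ref{thm:Hartree} again furnishes the uniform bounds $\|\partial_t u(t,\cdot)\|_{L^\infty}, \|\partial_t u(t,\cdot)\|_{H^2} \le C_t$ needed to repeat every estimate. For \eqref{eq:Bogbound3}, since $\d\Gamma(h + K_1)$ commutes with $\cN$, only the pairing piece contributes and $i[T + T^*, \cN] = 2i(T^* - T)$, which shares the same pairing structure as $T + T^*$ and enjoys the same bound.

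The principal obstacle will be the weighted Hilbert--Schmidt estimate on $K_2$. The naive HS norm $\|K_2\|_{\mathrm{HS}}^2 \lesssim N^{3\beta}$ is too large by a factor $N^{2\beta}$, and the gain must come delicately from the short-distance regularization by the Bessel kernel versus the $N^{-\beta}$ localization of the three-body interaction. Tracking this competition precisely to produce the sharp power $N^{\beta+\epsilon}$---rather than a larger power that would destroy the argument or a smaller one that would require more care---is what ultimately governs the range of $\beta$ allowed in Theorems~\ref{thm:main1} and \ref{thm:main2}.
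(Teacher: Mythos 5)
Your decomposition of $\bH(t)+\d\Gamma(\Delta)$ into a number-preserving one-body part $\d\Gamma(h+\Delta+K_1)$ and a pairing part, and your estimate $\pm\d\Gamma(h+\Delta+K_1)\le C_t\cN$, match the paper and are correct. The genuine gap is in the pairing estimate. The inequality you propose,
\[
\pm\bigl(T+T^*\bigr)\;\le\;\eta\,\d\Gamma(1-\Delta)\;+\;\eta^{-1}\,\bigl\|(1-\Delta)^{-1/2}K_2(1-\Delta)^{-1/2}\bigr\|_{\mathrm{HS}}^2,
\]
is not a standard inequality and is in fact false. The mode-by-mode Cauchy--Schwarz argument you sketch requires $K_2$ and $1-\Delta$ to be simultaneously diagonalizable, which they are not, so the decomposition into singular-value modes of $K_2$ buys you nothing against $\d\Gamma(1-\Delta)$. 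Moreover even in the favourable translation-invariant toy case the Bogoliubov ground-state energy of $\d\Gamma(H)+\tfrac12\iint(K a^*a^*+h.c.)$ scales like $-\tfrac12\sum_k |K(k)|^2/H(k)=-\tfrac12\|H^{-1/2}K\|_{\mathrm{HS}}^2$, i.e.\ the regularization goes one-sided, not two-sided as you wrote. And without an extra term linear in $\cN$ the inequality cannot hold: take $K_2=\lambda|f\rangle\langle\bar f|$ with $\|f\|=\|(1-\Delta)^{1/2}f\|\sim 1$ and test against a superposition $\tfrac{1}{\sqrt2}(f^{\otimes n}+f^{\otimes(n+2)})$; the left side is $\sim\lambda n$ while the right side, optimized over $\eta$, is only $\sim\lambda\sqrt{n}$.

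What is actually needed, and what the paper uses, is the pairing estimate of Lemma~\ref{lem:Pairtermbound} (from \cite{NamNapSol-16}), applied with $H=\eta(1-\Delta)+\|K_2\|_{\rm op}$. The additive $\|K_2\|_{\rm op}$ in $H$ serves two purposes: it makes the hypothesis $K_2H^{-1}K_2^*\le H$ hold for all $\eta>0$, and it produces the crucial extra term $\|K_2\|_{\rm op}\cN$ in the resulting bound. The scalar error is then the one-sided quantity $\|H^{-1/2}_xK_2\|_{L^2(\R^6)}^2\le\eta^{-1}\|(1-\Delta_x)^{-1/2}K_2\|^2_{L^2(\R^6)}$, which is the object estimated in Lemma~\ref{lem:Hilbertschmidtbound}. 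This means you also need the operator bound $\|K_2\|_{\rm op}\le C_t$, which your proposal does not mention; it follows as in Lemma~\ref{lem:Hilbertschmidtbound}. Your heuristic near-field/far-field power-counting for the kernel estimate lands on the right exponent $N^{\beta}$, but the paper makes it rigorous by interpolating between $\|(1-\Delta)^{-3/4-\eps}\tilde K_2\|_{L^2}^2\le C_{t,\eps}$ and $\|\tilde K_2\|_{L^2}^2\le C_tN^{3\beta}$, and separately treats the replacement of $K_2$ by $\tilde K_2$ (the $Q(t)$ projections). Once the pairing estimate is corrected as above, your reduction of \eqref{eq:Bogbound2} and \eqref{eq:Bogbound3} to the same machinery (using $i[\bH(t),\cN]$ being again a pairing term, and $\partial_t u$ controlled by Theorem~\ref{thm:Hartree}) is sound and matches the paper.
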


To prove Lemma~\ref{lem:Bogoliubovbounds}, we will use a well-known property on the ground state energy of quadratic Hamiltonians, see e.g. \cite{NamNapSol-16,Derezinski-17}. The following result is taken from \cite[Lemma 9]{NamNapSol-16}.

\begin{lemma}[Pairing term estimate] \label{lem:Pairtermbound} Let $H>0$ be a self-adjoint operator on $L^2(\R^3)$ and let $K$ be a Hilbert-Schmidt operator on $L^2(\R^3)$ with symmetric kernel $K(x,y)=K(y,x)$ and satisfying $KH^{-1}K^* \le H$. Then  
	$$ \pm \frac{1}{2} \iint \Big( K(x,y) a_x^* a_y^* + \overline{K(x,y)}a_x a_y \Big) \d x \d y \le \dGamma (H) + \frac{1}{2} \| H_x^{-1/2}K\|_{L^2(\R^6)}^2$$
	as quadratic forms on Fock space.
\end{lemma}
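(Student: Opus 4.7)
My plan is a Bogoliubov-style completing-the-square argument on Fock space: fiber the kernel $K$ over an auxiliary variable $z\in\R^3$ so that the pairing operator can be reconstructed from a product of fibered operators, and then exploit the manifest nonnegativity of a squared ``quasi-particle'' operator.

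First I would introduce, for each $z\in\R^3$, the fiber operators
\[
A_z \;:=\; a\!\bigl(H^{1/2}(z,\cdot)\bigr), \qquad B_z^* \;:=\; a^*\!\bigl((H^{-1/2}K)(z,\cdot)\bigr),
\]
whose arguments are the $z$-th rows of the kernels of $H^{1/2}$ and $H^{-1/2}K$. By self-adjointness of $H^{1/2}$ and the composition identity $\int H^{1/2}(x,z)H^{1/2}(z,y)\d z = H(x,y)$, together with the trivial factorization $K = H^{1/2}\cdot(H^{-1/2}K)$ acting on the first variable, one checks
\[
\iint K(x,y)\,a_x^* a_y^*\d x\d y \;=\; \int A_z^* B_z^*\d z, \qquad \int A_z^* A_z\d z \;=\; \dGamma(H).
\]

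To extract both signs of the bound, I consider for each $\sigma\in\{\pm 1\}$ the manifestly nonnegative operator
\[
0 \;\le\; \int (A_z + \sigma B_z^*)^*(A_z + \sigma B_z^*)\d z \;=\; \dGamma(H) \;+\; 2\sigma\,P \;+\; \int B_z B_z^*\d z,
\]
where $P = \tfrac{1}{2}\iint(K a_x^* a_y^* + \bar K a_x a_y)\d x\d y$ is the pairing operator in the statement; the middle term arises because $\int A_z^* B_z^*\d z$ and its adjoint $\int B_z A_z\d z$ are precisely the two halves of $2P$. Writing $B_z = a(f_z)$ with $f_z := (H^{-1/2}K)(z,\cdot)$ and normal-ordering via the CCR $[a(f),a^*(f)]=\|f\|^2$ gives
\[
\int B_z B_z^*\d z \;=\; \dGamma(M) \;+\; \|H^{-1/2}K\|_{L^2(\R^6)}^2,
\]
where $M = \int |f_z\rangle\langle f_z|\d z$ is a positive operator on $L^2(\R^3)$.

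The final step is to bound $\dGamma(M)$ by $\dGamma(H)$. A direct kernel computation using the symmetry $K(x,y)=K(y,x)$ shows that $M$ is controlled by the operator appearing in the hypothesis: the symmetry implies the equivalence of $KH^{-1}K^*\le H$ with $K^*H^{-1}K\le H$ (both reduce to $\|H^{-1/2}KH^{-1/2}\|_{\rm op}^{2}\le 1$), and a careful tracking of the complex conjugations arising from the self-adjointness of $H^{-1/2}$ identifies $M\le H$. Therefore $\dGamma(M)\le \dGamma(H)$ on Fock space, and combining with the previous display yields
\[
\pm\,2\,P \;\le\; 2\,\dGamma(H) \;+\; \|H^{-1/2}K\|_{L^2(\R^6)}^{2},
\]
which is the claim after dividing by $2$.

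The main technical obstacle is precisely this final identification: while the equivalence of the two forms of the hypothesis is an elementary manipulation, proving $M\le H$ itself requires careful bookkeeping of the complex conjugates introduced by the self-adjointness of $H^{-1/2}$ together with the kernel symmetry of $K$, so that the operator which naturally emerges from $\int|f_z\rangle\langle f_z|\d z$ is indeed the one controlled by the hypothesis (rather than an adjoint or transpose in disguise). Once $M\le H$ is in place, the rest of the proof is just a completing-the-square plus the CCR for normal ordering.
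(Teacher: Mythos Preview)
The paper does not prove this lemma; it is quoted from \cite{NamNapSol-16} (Lemma~9 there). Your completing-the-square strategy is exactly the standard argument behind that result, so the overall approach is correct and matches what the cited reference does.

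Two technical points deserve comment. First, the fibering $A_z=a\bigl(H^{1/2}(z,\cdot)\bigr)$ is only formal: for unbounded $H$ (in the paper always $H=\eta(1-\Delta)+c$) the ``row'' $H^{1/2}(z,\cdot)$ is not in $L^2$, so $A_z$ is not a bona fide annihilation operator. The clean repair is either to spectrally truncate $H\mapsto H\,\1(H\le\Lambda)$, run your argument, and let $\Lambda\to\infty$, or to fiber over an eigenbasis of $H$ rather than over position space. Second, the conjugate bookkeeping you flag is genuine and in fact enters earlier than you indicate: already the cross-term identity $\int A_z^*B_z^*\,\d z=\iint K\,a_x^*a_y^*\,\d x\d y$ uses $(H^{1/2})^T=H^{1/2}$, i.e.\ that $H$ commutes with complex conjugation, and the same reality is what turns your $M$ into $KH^{-1}K^*$ rather than $K\,\overline{H^{-1}}\,K^*$. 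In every application in this paper $H$ is real in this sense, so your argument goes through; for a general self-adjoint $H$ one should either add the reality hypothesis or carry out the square-completion in a basis diagonalizing $H$, where the issue does not arise.
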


In application of Lemma \ref{lem:Pairtermbound}, the following kernel estimates will be useful. 
\begin{lemma}[Kernel estimate] \label{lem:Hilbertschmidtbound} Let $K_2$ be the operator on $L^2(\R^3)$ with kernel $K_2(t,x,y)$ as in \eqref{eq:Bog-Hamiltonian}. Then we have $\|K_2\|_{op}\le C_t$ and for all $\eps >0$,
	\begin{align}
	\label{eq:Kernelbound}
	\|(1-\Delta)^{-\frac{1}{2}} K_2(t,\cdot,\cdot)\|^2_{L^2(\R^6)} & \le C_{t,\eps} N^{\beta +\eps}, \\ \label{eq:derivativekernelbound}
	\|(1-\Delta)^{-\frac{1}{2}} \partial_tK_2(t,\cdot,\cdot)\|^2_{L^2(\R^6)} &\le C_{t,\eps} N^{\beta +\eps}.
	\end{align}
	
\end{lemma}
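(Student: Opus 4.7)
The plan is to reduce both bounds to estimates on the unprojected kernel
$$
\tilde K_2(t,x,y) = u(t,x)\,u(t,y)\,W_N(t,x,y), \quad W_N(t,x,y) := \int V_N(x-y,x-z)|u(t,z)|^2\,\d z,
$$
and then absorb the projections $Q(t)=1-|u(t)\rangle\langle u(t)|$. Throughout, Theorem~\ref{thm:Hartree} provides $\|u(t,\cdot)\|_{L^\infty}, \|\partial_t u(t,\cdot)\|_{L^\infty}\le C_t$ independently of $N$, while the rescaling of $V_N$ together with the compact support of $V$ yields the pointwise bound $|W_N(t,x,y)|\le C_t\, N^{3\beta}\,\tilde V(N^\beta(x-y))$ with $\tilde V(X):=\int V(X,W)\,\d W \in L^1\cap L^\infty(\R^3)$ supported in a fixed ball.

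For the operator-norm bound, I would apply Young's convolution inequality to the kernel $|\tilde K_2(t,x,y)|\le C_t\, N^{3\beta}\tilde V(N^\beta(x-y))$; since $\|N^{3\beta}\tilde V(N^\beta\cdot)\|_{L^1}=\|\tilde V\|_{L^1}$ is $N$-independent, this yields $\|\tilde K_2\|_{\rm op}\le C_t$, and sandwiching by the projection $Q(t)$ cannot increase the operator norm, so $\|K_2\|_{\rm op}\le C_t$.

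For \eqref{eq:Kernelbound}, I would use the Sobolev dual embedding $L^{6/5}(\R^3)\hookrightarrow H^{-1}(\R^3)$, valid in dimension three, to write
$$
\|(1-\Delta)^{-1/2}\tilde K_2\|_{L^2(\R^6)}^2 = \int \|\tilde K_2(t,\cdot,y)\|_{H^{-1}(\R^3)}^2\,\d y \le C \int \|\tilde K_2(t,\cdot,y)\|_{L^{6/5}(\R^3)}^2\,\d y.
$$
For fixed $y$, the pointwise bound $|\tilde K_2(t,x,y)|\le C_t\, |u(t,y)|\,N^{3\beta}\tilde V(N^\beta(x-y))$ together with the elementary scaling identity $\|N^{3\beta}\tilde V(N^\beta\cdot)\|_{L^{6/5}(\R^3)}=N^{\beta/2}\|\tilde V\|_{L^{6/5}}$ gives $\|\tilde K_2(t,\cdot,y)\|_{L^{6/5}}^2\le C_t |u(t,y)|^2 N^\beta$. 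Integrating in $y$ and using $\|u(t)\|_{L^2}=1$ produces $\|(1-\Delta)^{-1/2}\tilde K_2\|_{L^2(\R^6)}^2\le C_t N^\beta$, in fact slightly stronger than $N^{\beta+\eps}$. To transfer to $K_2=Q\tilde K_2 Q$ I would expand $(1-\Delta)^{-1/2}Q = (1-\Delta)^{-1/2} - |(1-\Delta)^{-1/2}u(t)\rangle\langle u(t)|$; the extra rank-one piece has HS norm controlled by $\|(1-\Delta)^{-1/2}u(t)\|_{L^2}\|\tilde K_2\|_{\rm op}\le C_t$, and the right-hand projection can only decrease the HS norm.

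Finally, \eqref{eq:derivativekernelbound} follows from the product rule: $\partial_t\tilde K_2$ consists of finitely many terms obtained by replacing exactly one of the factors $u(t,x)$, $u(t,y)$, or $|u(t,z)|^2$ by its time derivative, while $\partial_t Q(t)$ produces rank-one corrections involving $\partial_t u(t)$. Since $\|\partial_t u(t)\|_{L^\infty}\le C_t$ by Theorem~\ref{thm:Hartree}, every such term has the same scaling structure as $\tilde K_2$, and the argument of the previous paragraph reproduces the $C_t N^\beta$ bound term by term. I expect the main technical obstacle to be simply the bookkeeping of the terms coming from $\partial_t Q(t)$ and of the contributions of the left projection in $(1-\Delta)^{-1/2}Q$, as the scaling analysis is essentially identical across all of them; no new analytic input is needed beyond the $L^\infty$-in-time bounds supplied by Theorem~\ref{thm:Hartree} and the compact support of $\tilde V$.
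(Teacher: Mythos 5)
Your proof is correct, and the core step for \eqref{eq:Kernelbound} is genuinely different from—and arguably cleaner than—the paper's. The paper proves two endpoint bounds on $\tilde K_2$, namely $\|(1-\Delta)^{-3/4-\eps}\tilde K_2\|_{L^2}^2\le C_{t,\eps}$ (via a Fourier-side computation using Plancherel in the $q$-variable) and $\|\tilde K_2\|_{L^2}^2\le C_t N^{3\beta}$, and then interpolates to obtain $(1-\Delta)^{-1/2}$, which costs the $N^\eps$ factor. You instead bound $(1-\Delta_x)^{-1/2}$ directly through the dual Sobolev embedding $L^{6/5}(\R^3)\hookrightarrow H^{-1}(\R^3)$, applied slice-by-slice in $y$: the rescaled kernel $N^{3\beta}\tilde V(N^\beta\cdot)$ has $L^{6/5}$-norm exactly of order $N^{\beta/2}$, which squares to the desired $N^\beta$. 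Your route avoids the $\eps$-loss entirely (yielding the slightly stronger $C_tN^\beta$) and bypasses the Fourier computation; the paper's interpolation scheme, on the other hand, is somewhat more robust because it only uses an $L^2$-type pointwise control of $\hat{\tilde K_2}$, which generalizes more readily to kernels without a simple pointwise bound. Your treatment of the projections $Q(t)$ and of $\partial_t K_2$ coincides in substance with the paper's (both exploit that $Q-1$ and $\partial_t Q$ are finite rank with $L^2$-bounded range), so no further comment is needed there.
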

\begin{proof}[Proof of Lemma~\ref{lem:Hilbertschmidtbound}] First, we consider the operator bound. For every $f\in L^2(\R^3)$, we denote $\tilde f=Q(t) f$ and use the Cauchy-Schwarz inequality to estimate 
 \begin{align*}
		|\langle f, K_2 f \rangle &= \left| \iiint \overline{\tilde f(x)} {\tilde f(y)} |u(t,z)|^2 u(t,x)u(t,y) V_N(x-y,x-z) \d x \d y \d z\right|\\
		&\le  \|u(t,\cdot)\|_{L^\infty(\R^3)}^4 \iiint \frac{|\tilde f(x)|^2+|\tilde f(y)|^2}{2} V_N(x-y,x-z) \d x \d y \d z \\
		& = \|u(t,\cdot)\|_{L^\infty(\R^3)}^4 \|\tilde f\|_{L^2(\R^3)}^2 \|V_N\|_{L^1(\R^6)} \le C_t \|f\|_{L^2(\R^3)}^2.
		\end{align*}
		Therefore, 
		\begin{align*}
		\|K_2\|_{op} \le C_t.
		\end{align*}
		
		Next, to prove \eqref{eq:Kernelbound} we use an interpolation argument as in \cite{GriMac-13,NamNap-15,NamNap-17}.  By definition we know that 
	$$K_2= \Big(Q(t)\otimes Q(t)\Big)\tilde K_2$$
	 and hence 
	 $$K_2 - \tilde K_2 = \Big((Q(t)-1)\otimes 1\Big)\tilde K_2 + \Big(Q(t) \otimes (Q(t)-1)\Big)\tilde K_2.$$
	We now want to prove a $L^2$ bound on the first term, the second term will follow similarly. We see that
	\begin{align*}
	&\left\|(Q(t)-1)\otimes 1\tilde K_2(t,\cdot,\cdot)\right\|^2_{L^2(\R^6)} = \left\||u(t)\rangle\langle u(t)|\otimes 1\tilde K_2(t,\cdot,\cdot)\right\|^2_{L^2(\R^6)} \\
	&=\iint \left|\int \overline{u(t,\tilde{x})}\tilde{K_2}(t,\tilde{x},y)\d\tilde{x}\right|^2|u(t,x)|^2\d x \d y \\
	&= \iint \left|\iint\overline{u(t,\tilde{x})}|u(t,z)|^2u(t,\tilde{x})	V_N(\tilde{x}-y,\tilde{x}-z)u(t,y)\d z\d\tilde{x}\right|^2|u(t,x)|^2 \d x \d y \\
	&\le \|u(t,\cdot)\|^8_{L^\infty(\R^3)}\|V\|^2_{L^1(\R^6)}\|u(t,\cdot)\|^4_{L^2(\R^3)} \le C_t,
	\end{align*}
	where we have used Theorem~\ref{thm:Hartree} in the last inequality. This gives
	\begin{equation*}
	\|K_2(t,\cdot,\cdot) - \tilde K_2(t,\cdot,\cdot)\|^2_{L^2(\R^6)} \le C_t.
	\end{equation*}
	Since $(1-\Delta_x)^{-\frac{1}{2}} \le 1 $ on $L^2$ we see that
	\begin{equation}
	\label{eq:projectionbound}
	\|(1-\Delta_x)^{-\frac{1}{2}}K_2(t,\cdot,\cdot) - (1-\Delta_x)^{-\frac{1}{2}}\tilde K_2(t,\cdot,\cdot)\|^2_{L^2(\R^6)} \le C_t.
	\end{equation}
	Hence, we only need to prove \eqref{eq:Kernelbound} with $K_2$ replace by $\tilde K_2$ to get the desired result. We will prove 
	\begin{align}
	&\|(1-\Delta)^{-\frac{3}{4}-\eps}\tilde K_2(t,\cdot,\cdot)\|^2_{L^2(\R^6)}\le C_{t,\eps}\label{eq:K2bound1}\\
	&\|\tilde K_2(t,\cdot,\cdot)\|_{L^2(\R^6)}^2 \le C_t N^{3\beta}\label{eq:K2bound2}
	\end{align}
	for any $\eps >0$ and then use interpolation.
	To prove \eqref{eq:K2bound1} we first calculate the Fourier transform of $\tilde K_2$:
	\begin{align*}
	\widehat{\tilde K_2}(t,p,q) &= \iint e^{-i2\pi(p\cdot x+q\cdot y)} \int |u(t,z)|^2 V_N(x-y,x-z)\d z \, u(t,x) u(t,y)\d x \d y \\
	&= \iiint e^{-i2\pi(p\cdot x+q\cdot y)} |u(t,x-z)|^2 V_N(x-y,z)u(t,x) u(t,y) \d x \d y \d z \\
	&= \iiint e^{-i2\pi(p\cdot \tilde{x}+(p+q)\cdot y)} |u(t,\tilde{x}+y-z)|^2 V_N(\tilde{x},z)u(t,\tilde{x}+y) u(t,y) \d\tilde{x}\d y \d z \\
	&=\iint e^{-i2\pi p\cdot \tilde{x}} V_N(\tilde{x},z) \widehat{(|u_{\tilde{x}-z}|^2u_{\tilde{x}}u)}(p+q) \d\tilde{x}\d z.
	\end{align*}
	Here we have defined the short-hand notation for the translation $u_{\tilde x}(\cdot) = u(\tilde x +\cdot)$.
	Therefore, by the Cauchy-Schwarz inequality
	\begin{align*}
	&|\widehat{\tilde K_2}(t,p,q)|^2 \le \left(\iint |V_N(\tilde{x},z)| |\widehat{(|u_{\tilde{x}-z}|^2u_{\tilde{x}}u)}(p+q)| \d\tilde{x}\d z\right)^2 \\ &\le \|V\|_{L^1(\R^6)}\iint |V_N(\tilde{x},z)|\, |\widehat{(|u_{\tilde{x}-z}|^2u_{\tilde{x}}u)}(p+q)|^2 \d \tilde{x}\d z.
	\end{align*}
	Moreover, using Theorem~\ref{thm:Hartree} and Plancherel's we have that
	\begin{align*}
	\int |\widehat{(|u_{\tilde{x}-z}|^2u_{\tilde{x}}u)}(p+q)|^2 \d q \le
	\|u(t,\cdot)\|^6_{L^\infty(\R^3)} \int |u(t,x)|^2\d x  = C_t.
	\end{align*}
	Hence, we see that for all $\eps>0$
	\begin{align*}
	&\|(1-\Delta)^{-\frac{3}{4}-\eps}\tilde K_2\|^2_{L^2(\R^6)} = \iint (1+|p|^2)^{-\frac{3}{2} - 2\eps} |\widehat{\tilde K_2}(t,p,q)|^2\d p\d q  \\
	&\le
	\|V\|_{L^1(\R^6)} \iiiint (1+|p|^2)^{-\frac{3}{2} - 2\eps} V_N(\tilde{x},z) |\widehat{(|u_{\tilde{x}-z}|^2u_{\tilde{x}}u)}(p+q)|^2 \d p\d q \d\tilde{x}\d z \\
	&\le C_t\|V\|_{L^1(\R^6)}^2 \int (1+|p|^2)^{-\frac{3}{2} - 2\eps} \d p = C_{t,\eps},
	\end{align*}
	where we have used $\int (1+|p|^2)^{-\frac{3}{2} - 2\eps} \d p\le C_\eps$ in three dimensions.
	To prove \eqref{eq:K2bound2} we calculate 
	\begin{align*}
	&\|\tilde K_2\|_{L^2(\R^6)}^2 = \iint |u(t,x)|^2|u(t,y)|^2 \left|\int |u(t,z)|^2 V_N(x-y,x-z) \d z\right|^2\d x \d y \\ 
	&\le \|u(t,\cdot)\|^6_{L^\infty(\R^3)} \iint |u(t,x)|^2 \left|\int N^{6\beta}V(N^\beta(x-y),N^\beta (x-z)\d z\right|^2 \d x \d y  \\
	&= \|u(t,\cdot)\|_{L^\infty(\R^3)}^6 \|u(t,\cdot)\|_{L^2(\R^3)} \int\left|\int N^{3\beta} V(N^\beta y,z)\d z\right|^2\d y \\
	&= C_t N^{3\beta} \int\left|\int V(y,z)\d z\right|^2\d y = C_t N^{3\beta},
	\end{align*}
	where we have used that $V$ has compact support.
	Therefore, by interpolation
	\begin{align*}
	\|(1-\Delta)^{-\frac{1}{2}} \tilde K_2(t,\cdot,\cdot)\|^2_{L^2(\R^6)} \le  C_{t,\eps} N^{\beta+\eps},
	\end{align*}
	which proves \eqref{eq:Kernelbound}. 
	
	For the proof of \eqref{eq:derivativekernelbound} we see that
	\begin{align*}
	\partial_tK_2(t) = \partial_tQ(t) \otimes Q(t)\tilde K_2(t) + Q(t) \otimes \partial_tQ(t)\tilde K_2(t) + Q(t) \otimes Q(t)\partial_t \tilde K_2(t).
	\end{align*}
	Similarly to the derivation of \eqref{eq:projectionbound} one can prove bounds for each term and obtain
		\begin{equation*}
	\|(1-\Delta_x)^{-\frac{1}{2}}\partial_tK_2(t,\cdot,\cdot) - (1-\Delta_x)^{-\frac{1}{2}} \partial_t\tilde K_2(t,\cdot,\cdot)\|^2_{L^2(\R^6)} \le C_t.
	\end{equation*}
	Therefore, we again only need to prove \eqref{eq:derivativekernelbound} with $\partial_tK_2$ replaced by $\partial_t\tilde K_2$. This works similar to the derivation of \eqref{eq:Kernelbound} and we omit the details. This finishes the proof.
\end{proof}

Now we are able give 
	\begin{proof}[Proof of Lemma ~\ref{lem:Bogoliubovbounds}] Consider 
	$$ \bH(t) + \d\Gamma(\Delta) = \dGamma(h(t)+\Delta+K_1)+\frac{1}{2}\iint \Big(K_2(x,y) a_x^* a_y^* + \overline{K_2(x,y)} \Big) a_x a_y \d x \d y.$$
	
	    By definition of $h(t)$ we have that
		\begin{align*}
		h(t) + \Delta = \iint |u(t,y)|^2 V_N(x-y,x-z)|u(t,z)|^2 \d y \d z %- \mu_N(t)  
		\end{align*}
		which is a multiplication operator. By Theorem~\ref{thm:Hartree} the corresponding function can be bounded by
		\begin{align} \label{convbound}
		\left|\iint |u(t,y)|^2 V_N(x-y,x-z)|u(t,z)|^2 \d y \d z\right| \le \|u(t,\cdot)\|^4_{L^\infty(\R^3)} \|V_N\|_{L^1(\R^6)} \le C_t.
		\end{align}
%		The second term $\mu_N(t)$ can be bounded as
%		\begin{align*}
%		\mu_N(t) &= \frac{1}{3}\iiint V_N(x-y,x-z)|u(t,x)|^2|u(t,y)|^2|u(t,z)|^2 \d x\d y\d z \\
%		&\le \frac{1}{3}\|u(t,\cdot)\|^4_{L^\infty(\R^3)}\iiint V_N(x-y,x-z)|u(t,x)|^2 \d x\d y\d z\\
%		&= C_t \|u(t,\cdot)\|_{L^2(\R^3)}\|V\|_{L^1(\R^6)}.
%		\end{align*}
		
		Moreover,  the operator $K_1$ in \eqref{eq:Bog-Hamiltonian} satisfies 
		\begin{equation} \label{eq:K1 bound}
		\| K_1\|_{op} \le C_t
		\end{equation} 
		(this can be proved similarly to the bound $\| K_2\|_{op} \le C_t$ in Lemma \ref{lem:Hilbertschmidtbound}). 
%		More precisely, for every $f\in L^2(\R^3)$, with $\tilde f = Q(t)f$ we can write 
%		\begin{align*}
%		&|\langle f,K_1 f\rangle| = |\langle \tilde{f},\tilde{K}_1 \tilde{f} \rangle| \\
%		&= \left|\iiint \overline{\tilde{f}(x)} u(t,x)|u(t,z)|^2 V_N(x-y,x-z) \overline{u(t,y)} \tilde{f}(y) \d x \d y \d z \right| \\ 
%		&\le \|u(t,\cdot)\|_{L^\infty(\R^3)}^4 \iiint \frac{|\tilde{f}(x)|^2+ |\tilde{f}(y)|^2}{2}V_N(x-y,x-z) \d x \d y \d z \d x \\
%		& = \|u(t,\cdot)\|_{L^\infty(\R^3)}^4 \|\tilde f\|_{L^2}^2 \|V_N\|_{L^1(\R^6)}\
%		\le C_t \|f\|_{L^2(\R^3)}^2.
%		\end{align*}
		Thus we have proved that $\pm (h(t)+\Delta +K_1) \le C_t$, and hence
		\begin{align} \label{eq:oneparticlebound}
		\pm\d\Gamma(h+\Delta +K_1) \le C_t \cN.
		\end{align}
		
		Next, by applying the paring term estimate in Lemma \ref{lem:Pairtermbound} with 
		$$H= \eta (1-\Delta) + \|K_2\|_{op}, \quad \eta>0,$$ 
		and the kernel estimate in Lemma \ref{lem:Hilbertschmidtbound} we find that 
		\begin{align}
		&\pm\frac{1}{2}\iint \Big( K_2(t,x,y) a_x^* a_y^* + \overline{K_2(t,x,y)}a_x a_y \Big)  \nn\\
		&\le   \eta \d\Gamma(1-\Delta) + \|K_2\|_{op}\, \cN + \eta^{-1} \|(1-\Delta_x)^{-1/2} K_2\|^2_{L^2(\R^6)}\nn \\
		&\le \eta \d\Gamma(1-\Delta) + C_t \cN + C_{t,\eps} \eta^{-1} N^{\beta +\eps}. \label{eq:Pairbound}
		\end{align}
		Combining this with \eqref{eq:oneparticlebound} we conclude
		\begin{align*}
		\pm \Big( \bH(t) + \d\Gamma(\Delta) \Big) \le \eta\d\Gamma(1-\Delta) + C_t \cN + C_{t,\eps} \eta^{-1} N^{\beta +\eps} 
		\end{align*}
		which finishes the proof of \eqref{eq:Bogbound1}. The bound on $\partial_t\bH(t)$ can be proven similarly. Moreover, we see that \begin{align*}
		i[\bH(t),\cN] = - \iint \Big( iK_2(t,x,y) a_x^* a_y^* + \overline{iK_2(t,x,y)}a_x a_y \Big) \d x \d y
		\end{align*}
		and \eqref{eq:Bogbound3} also follows from  the same argument. This ends the proof.
	\end{proof}

Finally we conclude

\begin{proof}[Proof of Theorem~\ref{thm:Bog-equa}] Using the bound in Lemma \ref{lem:Bogoliubovbounds}, the existence and uniqueness of the solution $\Phi(t)\in \cF(\{u(t)\}^{\bot})$ to the Bogoliubov equation \eqref{eq:Bog} follow from the abstract results in \cite[Theorems 7, 8]{LewNamSch-15}. 

It remains to prove the kinetic bound \eqref{eq:Bog-dyn-N-bounds-kinetic0}. By Lemma \ref{lem:Bogoliubovbounds} we have 
		\begin{align*}
		A(t) = \bH(t) + C_{t,\eps} (\cN + N^{\beta+\eps})  \ge  \frac{1}{2}\dGamma(1-\Delta),
		\end{align*}
		if we choose $C_{t,\eps}$ large enough. Using the equation for $\Phi(t)$, we see that
		\begin{align*}
		\frac{d}{dt} \big\langle \Phi(t), A(t) \Phi(t)  \big\rangle &= \big\langle \Phi(t), \partial_t A(t)  \Phi(t)  \big\rangle + \big\langle \Phi(t), i[\bH(t),A(t)]  \Phi(t)  \big\rangle \\
		&= \big\langle \Phi(t), \partial_t (\bH(t)+\partial_t C_{t,\eps}(\cN + N^{\beta+\eps}) ) \Phi(t)  \big\rangle + \big\langle \Phi(t), i[\bH(t),\cN]  \Phi(t)  \big\rangle\\
		&\le C_{t,\eps} \big\langle \Phi(t), A(t)  \Phi(t) \rangle.
		\end{align*}
		Thus, using Gronwall's inequality we get
		\begin{equation*}
		\big\langle \Phi(t), A(t) \Phi(t)  \big\rangle \le e^{C{t,\eps}}\big\langle \Phi(0), A(0) \Phi(0)  \big\rangle.
		\end{equation*}
		Since 
		\begin{equation*}
		A(0)\le C_\eps( \dGamma(1-\Delta) + N^{\beta+\eps}),
		\end{equation*}
		we obtain that
		\begin{align*}
		\big\langle \Phi(t), \dGamma(1-\Delta) \Phi(t)  \big\rangle &\le2\big\langle \Phi(t), A(t) \Phi(t)  \big\rangle \le  C_\eps e^{C{t,\eps}} \Big( \big\langle \Phi(0), \dGamma(1-\Delta) \Phi(0)  \big\rangle \Big) + N^{\beta+\eps} \Big).
		\end{align*}
		This finishes the proof of the kinetic estimate \eqref{eq:Bog-dyn-N-bounds-kinetic0}.		
%		When $\Phi_0$ is a quasi-free state, the proof of the uniform bound 
%		$$\langle\Phi(t),\cN\Phi(t)\rangle \le C_t(1+ \langle\Phi(0),\cN\Phi(0)\rangle)^2 \le C_t $$
%		is similar to \cite[Proposition 4]{NamNap-15} ({\bf more explanation needed...}). Since $\Phi_0$ is a quasi-free state, we have
%	\begin{equation} \label{eq:quasi-number}
%	\langle \Phi_0,\cN^\ell \Phi_0\rangle \le C_\ell\left(1+\langle\Phi_0,\cN\Phi_0\rangle\right)^\ell
%	\end{equation}	
%	(see e.g. \cite[Lemma 5]{NamNap-15}) and \eqref{eq:Bog-dyn-N-bounds} follows immediately. 
	\end{proof}

\section{Transformation of the many-body dynamics}\label{sec:gen-stra}

Our general strategy to derive effective equations from the many-body Schr\"odinger equation \eqref{eq:MBSch} is similar to that in the pair-interaction case \cite{LewNamSch-15,NamNap-15,NamNap-17,BreNamNapSch-17}. Let $\{u(t)\}$ be the Hartree dynamics and recall from \cite[Section 2.3]{LewNamSerSol-15} the following operator
\begin{equation}
U_N(t)= \bigoplus_{k=0}^{N}Q(t)^{\otimes k} \frac{(a(u(t)))^{N-k}}{\sqrt{(N-k)!}}, \quad Q(t) = 1- |u(t)\rangle\langle u(t)|.
\end{equation}
It is a unitary operator from $L^2_s((\R^3)^N)$ to the truncated Fock space
$$\cF^{\le N}_+(t) = \1^{\le N} \cF_+(t) = \bigoplus_{n=0}^{N}  \Big( \{u(t)\}^\bot \Big)^{\otimes_s n},\quad \1^{\le N}= \1(\cN\le N)$$
with the inverse
\begin{equation}
U_N(t)^*= \bigoplus_{k=0}^{N} \frac{(a^*(u(t)))^{N-k}}{\sqrt{(N-k)!}}.
\end{equation}
Of course we can extend $U_N(t)^*$ to the whole Fock space $\cF_+(t)$ by setting value $0$ outside the truncated space $\cF_+^{\le N}(t)$ (in this way $U_N(t)$ is a partial unitary operator from $L^2_s((\R^3)^N)$ to $\cF_+(t)$).

As explained in \cite{LewNamSerSol-15}, $U_N(t)$ provides a rigorous implementation of the c-number substitution in Bogoliubov's heuristic argument \cite{Bogoliubov-47}, via the actions 
\begin{align} \label{eq:UN-action}
&U_N(t)a^*(u(t))a(u(t))U_N(t)^* = N -\cN, \nn\\
&U_N(t)a^*(f)a(u(t))U_N(t)^* = a^*(f)\sqrt{N-\cN}, \nn\\
&U_N(t)a^*(u(t))a(f)U_N(t)^* = \sqrt{N-\cN}a(f), \nn\\
&U_N(t)a^*(f)a(g)U_N(t)^* = a^*(f)a(g)
\end{align}
where $f, g \in \{u(t)\}^\bot$. When $\cN\ll N$ with $\cN$ the particle number operator, the quantity $\sqrt{N-\cN}$ is close to the scalar value $\sqrt{N}$, leading to a quantitative justification of Bogoliubov's approximation in the sector of few particles. 

The unitary operator $U_N(t)$ allows us to transform the Sch\"odinger equation \eqref{eq:MBSch} to an equation in Fock space. Recall the phase factor in \eqref{eq:chi}:
$$
\chi(t)=  \frac{2N+3}{6} \iiint V_N(x-y,x-z)|u(t,x)|^2|u(t,y)|^2|u(t,z)|^2 \d x\d y\d z.
$$

\begin{lemma}[Transformed many-body dynamics]\label{lem:eq-PhiN} Let $\Psi_N(t)$ be the solution of \eqref{eq:MBSch}. Then
\begin{equation} \label{eq:def-PhiN}
\Phi_N(t) := e^{-i\int_0^t \chi(s) \d s}\,U_N(t)\Psi_N(t)  \in \cF_+^{\le N}(t)
\end{equation}
solves 
\begin{align} \label{eq:eq-PhiN}
\boxed{ i\partial_t\Phi_N(t) =  \widetilde H_N(t) \Phi_N(t) = \left( \bH(t) +  \frac{1}{2}\sum_{j=0}^{6}(R_j+R_j^*)  \right) \Phi_N(t)} 
\end{align}
where
\begin{align*}
R_0 &= \frac{1}{6}\Big \langle u(t)^{\otimes 3}, V_N u(t)^{\otimes 3} \Big\rangle\left( \frac{3\cN^2+6\cN+2}{N}-\frac{\cN(\cN+1)(\cN+2)}{N^2}\right) \\
&\quad+\d\Gamma\left(Q(t)\left(\frac{1}{2}\iint|u(t,y)|^2V_N|u(t,z)|^2\d y \d z + K_1\right) Q(t)\right)\left(\frac{(N-\cN)(N-\cN-1)}{N^2} -1\right) \\
R_1 &=  \left( \frac{(N-\cN)(N-\cN -1)}{N^2}-1\right) \sqrt{N- \cN} \, a\left(Q(t)\iint|u(t,y)|^2V_N(\cdot, y,z))|u(t,z)|^2\d y \d z\,u(t,\cdot)\right), \\ 
R_2 &= \iint K_2(t,x,y) a_x^* a_y^*  \left( \frac{\sqrt{N-\cN -1} \sqrt{N - \cN}(N -\cN -2)}{N^2} - 1\right), \\ 
R_3 &= \frac{1}{3N^2} \idotsint(Q(t)\otimes Q(t)\otimes Q(t)V_N 1\otimes 1\otimes 1)(x,y,z;x',y',z')u(t,x')u(t,y')u(t,z') \\ 
& \quad\quad\quad\times a^*_x a^*_y a^*_z\d x \d y \d z\d x '\d y'\d z' \sqrt{N-\cN-2}\sqrt{N-\cN-1}\sqrt{N-\cN}  \\
&\quad+\frac{2}{N^2} \iiiint (Q(t)\otimes Q(t)\int|u(t,z)|^2V_N\d zQ(t)\otimes 1)(x,y;x',y') u(t,y') \times \\
&\quad\quad \quad \times a^*_xa^*_ya_{x'}\d x dy\d x '\d y'\sqrt{N-\cN}(N-\cN-1) + \\
&\quad+\frac{1}{N^2} \idotsint(Q(t)\otimes Q(t)\otimes 1 V_N 1\otimes 1 \otimes Q(t))(x,y,z;x',y',z') \overline{u(t,z)}u(t,x')u(t,z') \times \\
&\quad\quad\quad\times a^*_xa^*_ya_{z'}\d x \d y \d z\d x '\d y'\d z'\sqrt{N -\cN}(N-\cN-1)  \\
R_4&= \frac{1}{N^2}\idotsint(Q(t)\otimes Q(t)\otimes Q(t)V_NQ(t)\otimes 1\otimes 1)(x,y,z;x',y',z')u(t,y')u(t,z')\times \\
&\quad\quad\quad \times a^*_xa^*_ya^*_za_{x'}\sqrt{N-\cN -1}\sqrt{N-\cN}\d x \d y \d z\d x '\d y'\d z'  + \\
&\quad+ \frac{1}{2N^2} \iiiint (Q(t)\otimes Q(t) \int |u(t,z)|^2 V_N\d zQ(t)\otimes Q(t))(x,y;x',y') \times\\
&\quad \quad \quad a^*_xa^*_ya_{x'}a_{y'}\d x dy\d x '\d y'(N-\cN)\\
&\quad+\frac{1}{N^2}\idotsint (Q(t)\otimes Q(t)\otimes 1 V_N Q(t)\otimes 1\otimes Q(t))(x,y,z;x'y',z')\overline{u(t,z)}u(t,y')\times\\
&\quad\quad\quad\quad\quad\quad\quad\quad \times a^*_xa^*_ya_{x'}a_{z'}(N-\cN)\d x \d y \d z\d x '\d y'\d z'\\
R_5& = \frac{1}{N^2}\idotsint(Q(t)\otimes Q(t)\otimes Q(t)V_NQ(t)\otimes Q(t)\otimes 1)(x,y,z;x',y',z')u(t,z')\times \\
&\quad\quad\quad\quad\quad\quad\quad\quad \times a^*_xa^*_ya^*_za_{x'}a_{y'}\sqrt{N-\cN -2}\d x \d y \d z\d x '\d y'\d z'  \\
R_6 &= \frac{1}{6N^2}\idotsint(Q(t)\otimes Q(t)\otimes Q(t)V_NQ(t)\otimes Q(t)\otimes Q(t)(x,y,z;x',y',z')\times \\
&\quad\quad\quad\quad\quad\quad\quad\quad \times a^*_xa^*_ya^*_za_{x'}a_{y'}a_{z'}\d x \d y \d z\d x '\d y'\d z'.
\end{align*}
\end{lemma}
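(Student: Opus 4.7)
The plan is a direct computation: differentiate the defining identity $\Phi_N(t) = e^{-i\int_0^t \chi(s)\,\d s}\,U_N(t)\Psi_N(t)$ in time, use $i\partial_t\Psi_N = H_N\Psi_N$, and obtain
\[
i\partial_t \Phi_N(t) = \Big(\chi(t) + (i\partial_t U_N(t))\,U_N(t)^* + U_N(t)\,H_N\,U_N(t)^*\Big)\Phi_N(t).
\]
The entire content of the lemma is then to identify the operator in parentheses with $\bH(t)+\tfrac{1}{2}\sum_{j=0}^6 (R_j+R_j^*)$. Before starting, I rewrite every $a_x^\sharp$ using the condensate/excited decomposition
\[
a_x^* = \overline{u(t,x)}\,a^*(u(t)) + b_x^*, \qquad b_x^* := a^*(Q(t)\delta_x),
\]
and likewise for $a_x$. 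On $\cF_+^{\le N}(t)$ the operators $b^\sharp$ act within the orthogonal complement $\{u(t)\}^\bot$, so after conjugation by $U_N(t)$ only factors of $a^\sharp(u(t))$ get replaced by scalar functions of $\cN$ according to the rules \eqref{eq:UN-action}.

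First I compute $U_N\,\dGamma(-\Delta)\,U_N^*$. Expanding $a_x^*a_y$ via the decomposition yields four terms: a diagonal scalar $\langle u,-\Delta u\rangle\,a^*(u)a(u)$, two linear-in-$b$ terms $a^*(u)a(-Q\Delta u) + \text{h.c.}$, and the quadratic $\dGamma(-Q\Delta Q)$. Conjugation by $U_N$ turns these into $\langle u,-\Delta u\rangle(N-\cN)$, $\sqrt{N-\cN}\,a(-Q\Delta u)+\text{h.c.}$, and $\dGamma(-Q\Delta Q)$. Next I expand the sextic interaction
\[
\frac{1}{6N^2}\iiint V_N\, a_x^*a_y^*a_z^*a_xa_ya_z\,\d x\d y\d z
\]
by writing each of the six operators as condensate-plus-excited. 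This produces $2^6$ monomials, but the symmetry of $V_N$ in its three positional arguments collapses them into a manageable list indexed by the pair $(k,\ell)$, where $k$ (resp.\ $\ell$) counts how many of the three creation (resp.\ annihilation) slots land in $b^\sharp$. After conjugation by $U_N$ the surviving $a^\sharp(u)$-factors become specific polynomials in $\sqrt{N-\cN}$ and $N-\cN$ via \eqref{eq:UN-action}.

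The assembly is then organized by $(k,\ell)$. The $(0,0)$ block is a scalar times $(N-\cN)(N-\cN-1)(N-\cN-2)/(6N^2)$; expanding this polynomial around $(N-\cN)\approx N$ splits off the constant $(2N+3)/6\cdot\langle u^{\otimes 3},V_N u^{\otimes 3}\rangle$, which is precisely $\chi(t)$, with the corrections $\tfrac{3\cN^2+6\cN+2}{N}$ and $\tfrac{\cN(\cN+1)(\cN+2)}{N^2}$ feeding into $R_0$. The diagonal quadratic block $(1,1)$ produces $\dGamma(Q \tilde K_1 Q)\cdot (N-\cN)(N-\cN-1)/N^2$, whose leading part is $\dGamma(K_1)$ and whose $N^{-1}$-correction is absorbed into $R_0$. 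The off-diagonal quadratic blocks $(2,0)$ and $(0,2)$ yield the pairing $\iint K_2 a^*a^* + \text{h.c.}$ with prefactor $\sqrt{(N-\cN)(N-\cN-1)}(N-\cN-2)/N^2$; the piece equal to $1$ is the Bogoliubov pairing in \eqref{eq:Bog-Hamiltonian}, the rest is $R_2$. The $(1,0)$ and $(0,1)$ blocks give linear-in-$b$ terms with prefactor $(N-\cN)(N-\cN-1)/N^2\cdot \sqrt{N-\cN}$; their leading ($=1$) contribution combines with the linear term $\sqrt{N-\cN}\,a(-Q\Delta u)+\text{h.c.}$ from the kinetic part and the linear term coming from $(i\partial_t U_N)U_N^*$ (see next step) to produce $0$ on the nose by virtue of $u$ solving the quintic Hartree equation \eqref{eq:Hartree}; what remains is $R_1$. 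The blocks with $k+\ell\ge 3$ do not contribute to $\bH(t)$ and are labeled, as written, by their total $b$-content: $(3,0)+(2,1)\to R_3$, $(3,1)+(2,2)\to R_4$, $(3,2)\to R_5$, $(3,3)\to R_6$ (and their adjoints).

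Finally I handle $(i\partial_t U_N(t))U_N(t)^*$. Differentiating the explicit formula for $U_N$ produces a sum of operators of the form $\dGamma(\partial_t Q(t))$ applied through $U_N$, together with linear-in-$a^\sharp(\dot u)$ pieces. Using $\dot Q = -|\dot u\rangle\langle u|-|u\rangle\langle \dot u|$ and the Hartree equation $i\dot u = (-\Delta + \tfrac12\iint V_N|u|^2|u|^2)\,u$, the $\dGamma$ piece reshuffles $\dGamma(Q(-\Delta)Q)$ into $\dGamma(QhQ)$ (completing the one-body Hartree operator $h$ in the definition of $\bH$), while the linear-in-$b$ piece exactly cancels the linear-in-$b$ leading contributions from the kinetic and sextic parts (as noted above). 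Collecting all leading pieces gives $\bH(t) = \dGamma(h+K_1)+\tfrac{1}{2}\iint (K_2 a^*a^*+\text{h.c.})$ and the sub-leading pieces give $\tfrac{1}{2}\sum(R_j+R_j^*)$, proving \eqref{eq:eq-PhiN}. The main obstacle is purely bookkeeping: correctly pairing up sixty-four expansion monomials, matching the combinatorial factor $1/6$ with the Bose symmetry and with the symmetry of $V_N$, tracking the precise polynomial in $\cN$ generated by each conjugation rule in \eqref{eq:UN-action}, and verifying that the linear-in-$b$ cancellation uses exactly the form of the Hartree equation satisfied by $u(t)$. No conceptual obstacle is hidden beyond this accounting.
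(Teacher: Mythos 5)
Your proposal is correct and takes essentially the same route as the paper: differentiate the definition, plug in the Schr\"odinger equation, express $i\dot U_N U_N^*$ and $U_N H_N U_N^*$ via the condensate/excited decomposition together with the conjugation rules \eqref{eq:UN-action}, and organize the resulting monomials so that the leading pieces form $\chi(t)+\bH(t)$ while the subleading pieces form $\tfrac12\sum_j(R_j+R_j^*)$. The paper invokes \cite[Lemma 6]{LewNamSch-15} for the explicit formula of $i\dot U_N U_N^*$ rather than re-deriving it, but this is the same computation, and your explicit $(k,\ell)$-indexing of the sextic expansion is just a more systematic presentation of the same bookkeeping.
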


Note that the operator 
$$ \widetilde H_N(t)=\bH(t) +  \frac{1}{2}\sum_{j=0}^{6}(R_j+R_j^*)$$
 in \eqref{eq:eq-PhiN} really leaves invariant the truncated Fock space $ \cF_+^{\le N}(t)$.  For example, the sum of $R_2$ and the corresponding (creation) pairing term in the Bogoliubov Hamiltonian $\bH(t)$ is
$$
\iint K_2(t,x,y) a_x^* a_y^* \frac{\sqrt{N-\cN-1}(N-\cN)(N-\cN-2)}{N^2} 
$$
which never creates $N+2$ particles since $\sqrt{N-\cN-1}(N-\cN)(N-\cN-2)$ is zero on sectors $\cN \in \{N,N-1,N-2\}$.

\begin{proof} From the definition \eqref{eq:def-PhiN} and the Schr\"odinger equation \eqref{eq:MBSch} we find that 
\begin{align} \label{eq:eq-PhiN-00}
 i\partial_t\Phi_N(t) =  \Big( \chi(t) +  i\dot{U}_N(t) U^*_N(t)+ U_N(t)H_NU^*_N(t) \Big)  \Phi_N(t) .
\end{align}
The time-derivative of $U_N(t)$ has been computed in \cite[Lemma 6]{LewNamSch-15}
	\begin{align} 
	\label{eq:derivativunitary}
	i\dot{U}_N(t)U_N^*(t)&= a^*(u(t))a(Q(t)i \dot u(t))-\sqrt{N-\cN}\,a(Q(t)i\dot u(t))\nn\\
	&\quad-a^*(Q(t)i\dot u(t))\sqrt{N-\cN}-\langle i\dot u(t),u(t)\rangle(N-\cN))\nn\\
	&= a^*(u(t))a(Q(t)h(t) u(t))-\sqrt{N-\cN}\,a(Q(t) h(t)  u(t))\nn\\
	&\quad-a^*(Q(t)h(t) u(t))\sqrt{N-\cN}-\langle u(t), h(t) u(t)\rangle(N-\cN).
	\end{align}
	Here in the last identity we have used the Hartree equation $i\dot u(t) =h(t) u(t)$. 
	
	Then we compute the conjugation $U_N(t)H_NU^*_N(t)$ using \eqref{eq:HN-2nd-quan} and \eqref{eq:UN-action}:
	\begin{align}  \label{eq:UNHNUN}
	&U_N(t) H_N U_N^*(t) = \frac{1}{6N^2} \langle u(t)^{\otimes 3}, V_N u(t)^{\otimes 3}\rangle (N-\cN-1)(N-\cN-2)(N-\cN-3)\nn\\
	&\quad + \|\nabla u(t,\cdot)\|^2_{L^2} (N-\cN) +  \sqrt{N-\cN}a(Q(t)h(t) u(t)) + a^*(Q(t)h(t) u(t)) \sqrt{N-\cN} \nn\\
	&\quad +  \dGamma(Q(t) h(t) Q(t) + K_1) +  \frac{1}{2} \iint \Big( K_2(t,x,y) a_x^* a_y^* + \overline{K_2(t,x,y)}a_x a_y  \d x \d y \Big) \nn\\
	&\quad +  \frac{1}{2}\sum_{j=1}^6 (R_j+R_j^*).
	\end{align}
	It remains to sum \eqref{eq:derivativunitary} and \eqref{eq:UNHNUN}, then use the identity 
$$
a^*(u(t))a(Q(t)h(t)u(t)) + \dGamma(Q(t)h(t)Q(t)) = \dGamma(h(t)) - \dGamma(h(t) P(t))  = \dGamma(h(t))
$$
on the excited Fock space $\cF_+(t)$. All this leads to 
$$
i\dot{U}_N(t) U^*_N(t)+ U_N(t)H_NU^*_N(t) = \bH(t) + \frac{1}{2}\sum_{j=0}^6 (R_j+R_j^*) -\frac{2N+3}{6} \langle u(t)^{\otimes 3}, V_N u(t)^{\otimes 3}\rangle.
$$
The choice of the phase factor $\chi(t)$ in \eqref{eq:chi},
$$
\chi(t)=  \frac{2N+3}{6} \langle u(t)^{\otimes 3}, V_N u(t)^{\otimes 3}\rangle,
$$
exactly implies the equation \eqref{eq:eq-PhiN} from \eqref{eq:eq-PhiN-00}. 
\end{proof}

Formally, in the limit $N\to \infty$ all the error terms $R_j's$ are negligible and the transformed equation \eqref{eq:eq-PhiN} can be approximated by the Bogliubov equation \eqref{eq:Bog}. The task of estimating all $R_j's$ will be carried out in the next section.

\section{Operator bounds on Fock space} \label{sec:op-Fock} Now we collect various useful bounds on the the error terms $R_j$'s in \eqref{eq:eq-PhiN}. We will prove

\begin{lemma}[\textbf{Error bounds on truncated Fock spaces}] \label{lem:errorbounds} For $0\le m \le N$, we have the following quadratic form bounds on the truncated Fock space $\cF_+^{\le m}(t)$:
\begin{align}
\label{eq:errorR0-5}
\pm(R_j+R_j^*) &\le  \eta  \dGamma(1-\Delta) + \eta^{-1} C_t mN^{4\beta-1}, \\
%for all $j=0,1,2,3,4,5$,  
%\begin{align} \label{eq:errorR6}
%0\le R_6 \le C m^2 N^{4\beta-2}\dGamma(1-\Delta),
%\end{align}
%and 
	\pm i[R_j + R_j^*,\cN] & \le  \eta  \dGamma(1-\Delta)+ \eta^{-1} C_t  m N^{4\beta-1}   ,  \label{eq:errordRcom} \\
	\pm\partial_t( R_j + R_j^*) &\le  \eta  \dGamma(1-\Delta) + \eta^{-1} C_t m N^{4\beta-1} ,   \label{eq:errorRdt}
\end{align}
for all $j=0,1,2,3,4,5,6$ and for all 
$$\eta \ge C_t \max\{ \sqrt{m N^{2\beta-1}},  \sqrt{m^3 N^{5\beta-3}}, m^2 N^{4\beta-2} \}.$$
\end{lemma}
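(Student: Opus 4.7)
The plan is to estimate each of the seven operators $R_j$ separately on $\cF_+^{\le m}(t)$, using three ingredients: the uniform bounds $\|u(t,\cdot)\|_{L^\infty}, \|\partial_t u(t,\cdot)\|_{L^\infty}\le C_t$ from Theorem~\ref{thm:Hartree}; the elementary potential estimates $\|V_N\|_{L^1}\le C$ and $\|V_N\|_{L^\infty}\le C N^{6\beta}$; and, for terms quadratic or higher in creation/annihilation operators, the pairing bound of Lemma~\ref{lem:Pairtermbound} together with the interpolated kernel estimate \eqref{eq:Kernelbound}. The general strategy is to split each $R_j$ into its monomial pieces in creation/annihilation operators, extract the scalar prefactors (which typically contain an extra smallness of order $\cN/N$ coming from expansions such as $\sqrt{(N-\cN)(N-\cN-1)}/N - 1 = O(\cN/N)$, controllable on $\cN\le m$), and reduce to a quadratic form bound via Cauchy--Schwarz.

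The terms $R_0,R_1,R_2$ are the easy ones. For $R_0$, every summand carries an explicit factor of order $m/N$, together with $\|K_1\|_{\mathrm{op}}\le C_t$ and $\|\iint|u|^2 V_N|u|^2\,\d y\d z\|_{L^\infty}\le C_t$, which is well absorbed. For $R_1$ (linear in $a$), one uses $\|a(f)\Phi\|\le \|f\|_{L^2}\sqrt{\langle\Phi,\cN\Phi\rangle}$ applied to the one-particle function $Q(t)(\iint|u|^2 V_N|u|^2\,\d y\d z)\,u(t)$ whose $L^2$-norm is bounded by $C_t$, combined with the $O(m/N)\sqrt{N}$ prefactor. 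For $R_2$, which is a pair term with an $O(\cN/N)$ corrector, a direct application of Lemma~\ref{lem:Pairtermbound} with $H=\eta(1-\Delta)+\|K_2\|_{\mathrm{op}}$ and \eqref{eq:Kernelbound} yields the claim and produces the first threshold $\eta\ge C_t\sqrt{mN^{2\beta-1}}$.

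The main obstacle is the genuinely cubic creation term inside $R_3$, identified in the introduction as the key new difficulty. On $\cF_+^{\le m}(t)$ its scalar prefactor is bounded by $N^{-1/2}$, and its three-body kernel $\widetilde K_3(x,y,z)$, obtained by composing $Q(t)^{\otimes 3}$ with the function $\int V_N(x-y',x-z')\,u(t,x')u(t,y')u(t,z')$ after relabelling, satisfies the analogues of \eqref{eq:K2bound1}--\eqref{eq:K2bound2}, proved by the same Fourier-plus-Cauchy--Schwarz argument as in Lemma~\ref{lem:Hilbertschmidtbound}; by interpolation, $\|(1-\Delta)^{-1/2}\widetilde K_3\|^2_{L^2}\le C_{t,\eps}N^{\beta+\eps}$. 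One then applies the diagonalization argument of \cite{NamNapSol-16}: a Bogoliubov transformation absorbs the cubic operator into the quadratic part plus a remainder proportional to $\cN$, yielding $\pm(R_3^{\mathrm{cub}}+\mathrm{h.c.})\le \eta\,\dGamma(1-\Delta)+C_t\eta^{-1} m N^{4\beta-1}$ for $\eta\ge C_t\sqrt{mN^{2\beta-1}}$. The remaining pieces of $R_3$ (with one annihilation) and the higher-order $R_4,R_5,R_6$ are then dealt with by repeated Cauchy--Schwarz: each unmatched pair of creation/annihilation operators contributes a factor of $\cN\le m$, and the worst pointwise bound $\|V_N\|_{L^\infty}\le N^{6\beta}$, combined with the $N^{-2}$ prefactor, produces the remaining thresholds $\eta\ge C_t\sqrt{m^3 N^{5\beta-3}}$ (from $R_4,R_5$) and $\eta\ge C_t m^2 N^{4\beta-2}$ (from the fully sextic $R_6$).

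Finally, the statements \eqref{eq:errordRcom}--\eqref{eq:errorRdt} on $i[R_j,\cN]$ and $\partial_t R_j$ follow at once from the same arguments: the commutator with $\cN$ multiplies each creation/annihilation monomial by the net difference between creations and annihilations, an integer of absolute value at most six, so yields the same bound; the time-derivative acts only on $Q(t)$, $u(t)$, and the scalar prefactors, and by Theorem~\ref{thm:Hartree} the function $\partial_t u$ enjoys the same uniform $L^\infty$ bound as $u$, so every structural estimate above carries over verbatim. The decisive step in the whole argument is the cubic estimate for $R_3$, essentially a three-body version of \cite[Lemma~9]{NamNapSol-16}; once it is in hand, collecting all contributions yields the single hypothesis $\eta\ge C_t\max\{\sqrt{mN^{2\beta-1}},\sqrt{m^3 N^{5\beta-3}},m^2 N^{4\beta-2}\}$ stated in the lemma.
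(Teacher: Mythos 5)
Your overall architecture — estimate each $R_j$ separately, extract $O(\cN/N)$ scalar prefactors, reduce via Cauchy--Schwarz — matches the paper's, and your treatment of $R_0$, $R_1$, $R_2$ and of the commutator/derivative bounds in the last paragraph is in the right spirit. But there are two genuine gaps, and they sit precisely at the two technical points that make the three-body problem harder than the pair-interaction case.

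First, the cubic creation term $R_{3,1}$. You propose to estimate a three-body kernel $\widetilde K_3(x,y,z)$ by an interpolation analogous to \eqref{eq:K2bound1}--\eqref{eq:K2bound2} and then to ``apply the diagonalization argument of \cite{NamNapSol-16}: a Bogoliubov transformation absorbs the cubic operator into the quadratic part.'' This step cannot work as stated: Bogoliubov transformations are linear in $a$ and $a^*$, hence preserve the degree of a monomial, so they cannot convert a pure cubic creation operator $\iiint K_3\,a_x^*a_y^*a_z^*$ into a quadratic expression. Lemma~\ref{lem:Pairtermbound} is a statement about \emph{pairing} (degree-two) terms only, and there is no off-the-shelf ``three-body version'' of it that you can cite. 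The paper's route is different and essential: it freezes one variable $z$ and views $R_{3,1}$ as the $z$-integral of a $z$-parametrized family of quadratic pairing operators with kernel $k_z(x,y)=u(t,x)u(t,y)V_N(x-y,x-z)$. For each fixed $z$ it checks the hypotheses of Lemma~\ref{lem:Pairtermbound} (the operator bound $\pm(k_z+k_z^*)\le C_t N^\beta(-\Delta)$, and the Hardy-inequality kernel bound $\|(1-\Delta_x)^{-1/2}k_z\|^2_{L^2(\R^6)}\le C_t N^{4\beta}$ from Lemma~\ref{lem:R3kernelbound} — not an interpolated $N^{\beta+\eps}$), and then closes the estimate with Cauchy--Schwarz in $z$ using the identities $\int a_z^*a_z\,\d z=\cN$ and $\int a_z^*\dGamma(1-\Delta)a_z\,\d z=\dGamma(1-\Delta)(\cN-1)$. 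Without this ``freeze one variable'' reduction there is no valid way to invoke the quadratic diagonalization result.

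Second, your bound $\|V_N\|_{L^\infty}\le CN^{6\beta}$ for the higher-order terms is too lossy. The paper relies on the Sobolev-type operator bound in Lemma~\ref{lem:3body ope}, $0\le W(x-y)\le\|W\|_{L^{3/2}}(-\Delta_x)$, which yields $\int V_N(x-y,x-z)\,\d z\le CN^\beta(-\Delta_x)$ and $V_N(x-y,x-z)\le CN^{4\beta}(-\Delta_x)$. These give the bounds $R_{4,2}\le C_tmN^{\beta-1}\dGamma(-\Delta)$ and $R_6\le Cm^2N^{4\beta-2}\dGamma(-\Delta)$ quoted as \eqref{eq:bound-R42} and \eqref{eq:bound-R6}. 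If you instead estimate $V_N$ by its $L^\infty$ norm you get, e.g., $R_6\lesssim m^2N^{6\beta-2}\cN$, which is larger by a factor $N^{2\beta}$; feeding this into the truncation argument of Section~7 would shrink the admissible range to roughly $\beta<1/7$ rather than $\beta<1/6$. So the $L^{3/2}\to(-\Delta)$ bound is not an optional shortcut — it is what produces the stated thresholds $\sqrt{m^3N^{5\beta-3}}$ and $m^2N^{4\beta-2}$.
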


We will use the following well-known Sobolev type estimate (see \cite[Lemma 3.2]{NamRouSei-16}). 
\begin{lemma}[Kinetic bound for translation-invariant potentials] \label{lem:3body ope}
	For $0\le W\in L^{3/2}(\R^3)$, the multiplication operator $W(x-y)$ satisfies 
	\begin{align}
	0\le W(x-y) \le \|W\|_{L^{3/2}(\R^3)} (-\Delta_x)
	\end{align}
	as quadratic forms on $L^2(\R^3\times \R^3)$.
\end{lemma}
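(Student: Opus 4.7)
The statement is a textbook Sobolev-type bound and the plan is straightforward; the only subtlety is that the $y$-variable plays a trivial parametric role and that $W$ depends on the difference $x-y$.

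First I would reduce, by density, to $\psi\in C_c^\infty(\R^3\times\R^3)$ and write the quadratic form as an iterated integral:
\begin{equation*}
\langle\psi,W(x-y)\psi\rangle=\int_{\R^3}\d y\int_{\R^3}W(x-y)|\psi(x,y)|^2\d x .
\end{equation*}
For each fixed $y$, I would change variables $x\mapsto z=x-y$ inside the inner integral; denoting $\phi_y(z):=\psi(z+y,y)$ this is translation in the first argument, which preserves both the $L^6$-norm and $\|\nabla_x\cdot\|_{L^2}$. Thus the inner integral equals $\int_{\R^3}W(z)|\phi_y(z)|^2\d z$.

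Next I would apply H\"older's inequality with exponents $3/2$ and $3$ to $W\cdot|\phi_y|^2$, obtaining
\begin{equation*}
\int_{\R^3}W(z)|\phi_y(z)|^2\d z\le\|W\|_{L^{3/2}(\R^3)}\,\||\phi_y|^2\|_{L^3(\R^3)}=\|W\|_{L^{3/2}(\R^3)}\,\|\phi_y\|_{L^6(\R^3)}^2,
\end{equation*}
and then invoke the three-dimensional Sobolev embedding $\|\phi_y\|_{L^6(\R^3)}^2\le\|\nabla\phi_y\|_{L^2(\R^3)}^2$ (with the sharp Sobolev constant absorbed into the bound, following the convention of the cited reference). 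Undoing the translation gives $\|\nabla\phi_y\|_{L^2}^2=\int_{\R^3}|\nabla_x\psi(x,y)|^2\d x$.

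Finally I would integrate in $y$ and use Fubini once more:
\begin{equation*}
\langle\psi,W(x-y)\psi\rangle\le\|W\|_{L^{3/2}(\R^3)}\iint|\nabla_x\psi(x,y)|^2\d x\d y=\|W\|_{L^{3/2}(\R^3)}\,\langle\psi,(-\Delta_x)\psi\rangle,
\end{equation*}
which is the claim on the form domain $H^1_x L^2_y$; the positivity $W(x-y)\ge 0$ is immediate from $W\ge 0$. No step is a real obstacle: the only thing one has to be careful about is the order of integration (legitimate by Tonelli since all integrands are nonnegative) and that the translation $x\mapsto x-y$ commutes with $\nabla_x$, so the $L^6$-Sobolev inequality applied at fixed $y$ produces exactly the kinetic energy density in $x$ that we want to integrate out.
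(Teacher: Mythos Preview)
Your proof is correct and is exactly the standard argument: H\"older with exponents $3/2$ and $3$ followed by the Sobolev embedding $\dot H^1(\R^3)\hookrightarrow L^6(\R^3)$, applied for each fixed $y$ after the translation $x\mapsto x-y$. The paper does not give its own proof of this lemma but simply cites \cite[Lemma~3.2]{NamRouSei-16}; your write-up is precisely the argument one finds there (and your remark that the sharp Sobolev constant in three dimensions is strictly less than $1$, so the stated inequality with constant $1$ is valid, correctly addresses the only point worth flagging).
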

%
%\begin{proof} 
%\end{proof}

%In order to prove Lemma~\ref{lem:errorbounds} we will first prove a Sobolev type of estimate for the three-body potential.
%
%
%
%	For the proof of \eqref{Sobolev1} we use H\"older's and Sobolev's inequalities in six dimensions, which gives 
%	\begin{align*}
%	\langle f,Vf\rangle &= \iiint V(x-y,x-z)|f(x,y,z)|^2\d x\d y\d z \\ 
%	&\le \int \left(\iint |V(x-y,x-z)|^3\d x\d y\right)^\frac{1}{3}\left(\iint |f(x,y,z)|^3 \d x \d y\right)^\frac{2}{3}\d z \\
%	&\le C\|V\|_{L^3(\R^6)} \iiint |\nabla_x f(x,y,z)|^2 +|\nabla_yf(x,y,z)|^2\d x\d y\d z \\
%	&= C\|V\|_{L^3(\R^6)} \left\langle f,(-\Delta_x -\Delta_y) f \right\rangle
%	\end{align*}
%	for all $f \in H^1(\R^9)$.
%	For \eqref{Sobolev2} we proceed similarly:
%	\begin{align*}
%	\left\langle g, \int V(\cdot,z)\d zg \right\rangle &= \iint \int V(x-y,z)\d z |g(x,y)|^2 \d x\d y \\
%	&\le \int \left(\int\left|\int V(x-y,z) \d z\right|^\frac{3}{2}\d x\right)^\frac{2}{3}\left(\int|g(x,y)|^6\d x\right)^\frac{1}{3}\d y \\
%	&\le C \left\|\int V(\cdot,z) \d z\right\|_{L^{3/2}(\R^3)} \iint |\nabla_x g(x,y)|^2\d x\d y \\
%	&= C \left\|\int V(\cdot,z) \d z\right\|_{L^{3/2}(\R^3)}\, \langle g,-\Delta_x g\rangle.
%	\end{align*}	

We will also need the following kernel estimate (c.f. Lemma \ref{lem:Hilbertschmidtbound}). 

\begin{lemma}[Kernel estimate]	\label{lem:R3kernelbound}
	For every $z\in\R^3$ fixed, let $k_z$ be the operator on $L^2(\R^3)$ with kernel 
	$$k_z(x,y) = u(t,x)u(t,y)V_N(x-y,x-z).$$
	Then we have
	\begin{equation}
	%\|k_z\|_{op} \le C_t, \quad 
	\|(1-\Delta_x)^{-1/2}k_z\|_{L^2(\R^6)}^2 \le C_t N^{4\beta}.
	\end{equation}	
\end{lemma}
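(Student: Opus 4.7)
\emph{Proof plan.} Following the interpolation strategy used in the proof of Lemma~\ref{lem:Hilbertschmidtbound}, I will establish two bounds on $k_z$ at different regularity exponents and then interpolate. Specifically, the goal is
\[
\|k_z\|_{L^2(\R^6)}^2 \le C_t N^{6\beta}, \qquad \|(1-\Delta_x)^{-s}k_z\|_{L^2(\R^6)}^2 \le C_t N^{3\beta}\ \text{for every}\ s>3/4,
\]
after which the elementary spectral interpolation
\[
\|(1-\Delta_x)^{-1/2}k_z\|^2 \le \|(1-\Delta_x)^{-s}k_z\|^{2\theta}\,\|k_z\|^{2(1-\theta)},\qquad \theta = \tfrac{1}{2s},
\]
with $s = \tfrac34 + \epsilon$ (so $\theta \to 2/3$) yields $C_t N^{(6-3\theta)\beta} = C_t N^{4\beta+O(\epsilon)}$, and the residual $\epsilon$ can be absorbed into $C_t$.

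The first, zero-regularity bound is immediate from $\|u(t,\cdot)\|_{L^\infty}\le C_t$ (Theorem~\ref{thm:Hartree}) and the change of variables $a = N^\beta(x-y)$, $b = N^\beta(x-z)$, which turns $V_N(x-y,x-z)$ into $N^{6\beta}V(a,b)$ against a Jacobian $N^{-6\beta}$ and leaves $\|k_z\|^2 \le C_t N^{6\beta}\|V\|_{L^2(\R^6)}^2$.

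For the regularity bound, fix $y,z$ and write $k_z(x,y) = u(t,y)\,g_{y,z}(x)$ with $g_{y,z}(x) := u(t,x)V_N(x-y,x-z)$. Plancherel in $x$, combined with the trivial Fourier estimate $|\hat g_{y,z}(p)|\le \|g_{y,z}\|_{L^1_x}$ and $\int_{\R^3}(1+|p|^2)^{-2s}dp < \infty$ when $s>3/4$, gives $\|(1-\Delta)^{-s}g_{y,z}\|_{L^2_x}^2\lesssim \|g_{y,z}\|_{L^1_x}^2$. The change of variable $a=N^\beta(x-y)$ yields $\|V_N(\cdot-y,\cdot-z)\|_{L^1_x} = N^{3\beta}\int|V(a,a+N^\beta(y-z))|\,da$; since $V$ has compact support this is bounded by $CN^{3\beta}\chi_{|y-z|\le CN^{-\beta}}$. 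Combined with $\|u\|_{L^\infty}\le C_t$, this gives $\|g_{y,z}\|_{L^1_x}^2\le C_t N^{6\beta}\chi_{|y-z|\le CN^{-\beta}}$, and since the support in $y$ has volume $O(N^{-3\beta})$,
\[
\|(1-\Delta_x)^{-s}k_z\|_{L^2(\R^6)}^2 = \int|u(t,y)|^2\|(1-\Delta)^{-s}g_{y,z}\|_{L^2_x}^2\,dy \le C_t N^{6\beta}\cdot N^{-3\beta} = C_t N^{3\beta}.
\]

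The key technical point, and the main obstacle to getting the sharp exponent, is recognising that the compact support of $V$ confines $y$ to a ball of radius $\sim N^{-\beta}$ around $z$, contributing a volume factor $N^{-3\beta}$ that improves the crude bound $N^{6\beta}$ to $N^{3\beta}$ at higher regularity; this gain propagates through the interpolation to produce the advertised power $N^{4\beta}$. The $\epsilon$-loss can be avoided entirely by replacing the Cauchy--Schwarz step above with the sharp Sobolev endpoint $\|(1-\Delta)^{-1/2}g\|_{L^2(\R^3)}\lesssim \|g\|_{L^{6/5}(\R^3)}$ and running the same scaling argument directly.
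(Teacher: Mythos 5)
Your proposal is essentially correct but takes a genuinely different route from the paper. The paper's proof computes the two-variable Fourier transform of $k_z$, uses the scaling $\widehat{k_z}(p,q)=e^{-2\pi i p\cdot z}\widehat f(N^{-\beta}p,N^{-\beta}q)$ to factor out $N^{4\beta}$ directly, and then bounds the remaining integral via Hardy's inequality $\int |h(p)|^2|p|^{-2}\d p\le 4\int|\nabla h|^2\d p$ combined with Plancherel and the compact support of $V$. Your proof instead treats $x$ and $y$ asymmetrically: you apply a one-variable regularity gain in $x$ (either via the trivial $\widehat{g}\in L^\infty$ bound plus interpolation, or the sharp Sobolev endpoint $(1-\Delta)^{-1/2}:L^{6/5}\to L^2$), and you harvest the crucial extra factor $N^{-3\beta}$ from the observation that the compact support of $V$ confines $y$ to a ball of radius $\sim N^{-\beta}$ around $z$. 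Both proofs produce the correct answer; yours makes the geometric mechanism behind the exponent more transparent (each argument of $V_N$ localizes a spatial variable), while the paper's is a more compact Fourier-side computation.

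One small point should be corrected: the $\epsilon$-loss from taking $s=3/4+\epsilon$ cannot be "absorbed into $C_t$". The resulting bound is $C_{t,\epsilon}N^{4\beta+c\epsilon}$, and since the loss sits in the \emph{exponent} of $N$, taking $\epsilon$ fixed and small does not recover $C_tN^{4\beta}$. The interpolation version therefore proves a strictly weaker statement than the lemma claims. Your Sobolev-endpoint variant, however, is clean: $\|V_N(\cdot-y,\cdot-z)\|_{L^{6/5}_x}\lesssim N^{7\beta/2}\mathbf 1_{|y-z|\lesssim N^{-\beta}}$, so $\|(1-\Delta)^{-1/2}g_{y,z}\|_{L^2_x}^2\lesssim C_t N^{7\beta}\mathbf 1_{|y-z|\lesssim N^{-\beta}}$, and integrating against $|u(t,y)|^2$ over the ball of volume $\sim N^{-3\beta}$ gives exactly $C_tN^{4\beta}$. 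You should present that as the actual proof and drop the interpolation step (or relegate it to motivation), since only the endpoint argument establishes the sharp exponent.
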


\begin{proof}[Proof of Lemma~\ref{lem:R3kernelbound}.] 
%First, we consider the operator  norm. For every $f\in L^2(\R^3)$, by the Cauchy-Schwarz inequality we can estimate 
% \begin{align*}
%		|\langle f, k_z f \rangle &= \left| \iint \overline{ f(x)} {f(y)} u(t,x)u(t,y) V_N(x-y,x-z) \d x \d y\right|\\
%		&\le  \|u(t,\cdot)\|_{L^\infty(\R^3)}^2 \iint \frac{|  f(x)|^2+| f(y)|^2}{2} V_N(x-y,x-z) \d x \d y \\
%		& = \|u(t,\cdot)\|_{L^\infty(\R^3)}^4 \|\tilde f\|_{L^2(\R^3)}^2 \|V_N\|_{L^1(\R^6)} \le C_t \|f\|_{L^2(\R^3)}^2.
%		\end{align*}
%		Therefore, 
%		\begin{align*}
%		\|K_2\|_{op} \le C_t.
%		\end{align*}

	First we calculate the Fourier transform
	\begin{align*}
	\widehat k_z (p,q) &= \iint e^{-i2\pi(p\cdot x + q\cdot y)}u(t,x)u(t,y)V_N(x-y,x-z)\d x\d y \\
	&=e^{-i2\pi p\cdot z}\iint e^{-i2\pi N^{-\beta}(p\cdot x+q\cdot y)} u(t,N^{-\beta}x+z)u(t,N^{-\beta}y)V(x-y,x)\d x\d y \\
	&= e^{-i2\pi p\cdot z}\hat f(N^{-\beta}p,N^{-\beta}q),
	\end{align*}
	where we have defined $f(x,y) = u(t,N^{-\beta}x+z)u(t,N^{-\beta}y)V(x-y,x)$. From this and Hardy's inequality we get
	\begin{align*}
	\|(1-\Delta_x)^{-1/2}k_z\|_{L^2}^2 &= \iint (1+|p|^2)^{-1}|\hat f(N^{-\beta}p,N^{-\beta}q)|^2\d p\d q\\
	&= N^{4\beta}\iint(N^{-2\beta}+|p|^2)^{-1} |\hat f(p,q)|^2 \d p\d q \\
	&\le N^{4\beta} \iint \frac{|\hat f(p,q)|^2}{|p|^2}\d p\d q \le 4N^{4\beta}\iint |\nabla_p\hat f(p,q)|^2\d p\d q \\
	&=4 N^{4\beta} \iint |x|^2 |u(t,N^{-\beta}x)|^2|u(t,N^{-\beta}y)|^2|V(x-y,x)|\d x\d y  \\
	&\le C_t N^{4\beta}\iint |x|^2 V(x-y,x)\d x\d y \le C_t N^{4\beta}.
	\end{align*}
	In the last step we have used that $V$ has compact support. This finishes the proof.	
\end{proof}

\begin{proof}[Proof of Lemma \ref{lem:errorbounds}] {\bf Proof of \eqref{eq:errorR0-5}.} We proceed term by term. 

\bigskip

\noindent	$\boxed{j=0}$ Let us consider
\begin{align*}
R_0 &= \frac{1}{6}\Big \langle u(t)^{\otimes 3}, V_N u(t)^{\otimes 3} \Big\rangle\left( \frac{3\cN^2+6\cN+2}{N}-\frac{\cN(\cN+1)(\cN+2)}{N^2}\right), \\
&\quad+\d\Gamma\left(Q(t)\left(\frac{1}{2}\iint|u(t,y)|^2V_N|u(t,z)|^2\d y \d z + K_1\right) Q(t)\right)\left(\frac{(N-\cN)(N-\cN-1)}{N^2} -1\right) \\
&=: R_{0,1}+ R_{0,2}.
\end{align*}

For the first term $R_{0,1}$, using Theorem~\ref{thm:Hartree} we know that
	\begin{equation*}
	\iiint |u(t,x)|^2|u(t,y)|^2|u(t,z)|^2V_N(x-y,x-z)\d x\d y\d z \le C_t.
	\end{equation*}
	Moreover, 
	$$
	\left| \frac{3\cN^2+6\cN+2}{N}-\frac{\cN(\cN+1)(\cN+2)}{N^2}  \right| \le C\frac{(\cN+1)^2}{N}
	$$
	With that we obtain
	\begin{align*}
	\pm R_{0,1} \le C_t\frac{(\cN+1)^2}{N}.
	\end{align*}
	
	For the second term $R_{0,2}$, from the one particle operator bounds \eqref{convbound} and \eqref{eq:K1 bound} we get
	\begin{align*}
	\pm\d\Gamma\left(Q(t)\left(\frac{1}{2}\iint|u(t)|^2V_N|u(t)|^2\d y \d z + K_1\right) Q(t)\right) \le C_t \cN. 
	\end{align*}
	Moreover, we have the simple bound on $\cF^{\le N}_+(t)$ 
	\begin{align*}
	\left| \frac{(N-\cN)^2 - (N-\cN)}{N^2} - 1\right| \le C\frac{(\cN+1)}{N}.
	\end{align*}
	Putting both together (the relevant operators commute) we obtain
	\begin{align*}
	\pm R_{0,2}  \le C_t \frac{(\cN+1)^2}{N}.
	\end{align*}
	Thus in summary, as quadratic forms on $\cF^{\le m}_+(t)$, 
	$$
	\pm R_0 \le C_t \frac{(\cN+1)^2}{N} \le C_t \frac{m}{N} (\cN+1).$$
	Note that $\cN\le \dGamma(1-\Delta)$.
	
	\bigskip
	
	\noindent 
	$\boxed{j=1}$ We consider
	$$
	R_1=\left( \frac{(N-\cN)(N-\cN -1)}{N^2}-1\right) \sqrt{N- \cN} \, a\left(Q(t)\iint|u(t)|^2V_N|u(t)|^2\d y \d z\,u(t)\right).
	$$
	For any $\Phi \in \cF^{\le m}_+(t)$, by the Cauchy-Schwarz inequality we see that
	\begin{align*}
	|\langle\Phi,R_1\Phi\rangle| &\le 2\left\| \left( \frac{(N-\cN)(N-\cN -1)}{N^2}-1\right) \sqrt{N- \cN} \Phi \right\| \times \\
	&\quad \times \left\|a\left(Q(t)\frac{1}{2}\iint|u(t)|^2V_N|u(t)|^2\d y \d z\,u(t)\right)\Phi \right\|.
	\end{align*}
	For the second term we use the obvious inequality $a^*(v)a(v) \le \|v\|^2 \cN$ combined with
	\begin{align*}
	&\left\|Q(t)\frac{1}{2}\iint|u(t)|^2V_N|u(t)|^2\d y \d z\,u(t)\right\|^2_{L^2(\R^3)} \\
	&\le \left\|\frac{1}{2}\iint|u(t)|^2V_N|u(t)|^2\d y \d z\,u(t)\right\|^2_{L^2(\R^3)} \\
	&\le \|u(t,\cdot)\|^8_{L^\infty(\R^3)}\|V\|_{L^1(\R^3)} \|u(t,\cdot)\|^2_{L^2(\R^3)} = C_t.
	\end{align*}
	For the first term, we use the simple bound
 	\begin{align*}
 	\left| \left( \frac{(N-\cN)(N-\cN -1)}{N^2}-1\right) \sqrt{N- \cN} \right| \le C\frac{\cN + 1}{\sqrt{N}}.
 	\end{align*}
 	Putting together, and using $\Phi\in \cF^{\le m}_+(t)$ we obtain
	\begin{align*}
		|\langle\Phi,R_1\Phi\rangle| \le C_t \Big\langle \Phi, \cN\Phi \Big\rangle^{1/2} \Big\langle \Phi, \frac{(\cN+1)^2}{N}\Phi \Big\rangle^{1/2} \le C_t \sqrt{\frac{m}{N}} \langle \Phi, (\cN+1)\Phi\rangle. 
	\end{align*}
	Thus we have the quadratic form estimate on $\cF^{\le m}_+(t)$:
	$$
	\pm (R_1+R_1)\le C_t \sqrt{\frac{m}{N}}  (\cN+1) .
	$$
\bigskip

\noindent	
	$\boxed{j=2}$ We consider 
	$$ R_2 = \iint K_2(t,x,y) a_x^* a_y^*  \left( \frac{\sqrt{N-\cN -1} \sqrt{N - \cN}(N -\cN -2)}{N^2} - 1\right).$$
	For any $\Phi \in \cF^{\le m}_+(t)$, we have
	\begin{align} \label{eq:R2-000}
	|\langle\Phi,R_2\Phi\rangle| &=	\left|\iint K_2(t,x,y) \Big\langle\Phi,  a^*_xa^*_y\left( \frac{\sqrt{N-\cN -1} \sqrt{N - \cN}(N -\cN -2)}{N^2} - 1\right) \Phi \Big\rangle \d x dy\right| \nn\\ 
	&= \Big|\iiint V_N(x-y,x-z) |u(t,z)|^2 u(t,x) u(t,y) \times \nn\\
	&\quad \times \Big\langle a_x a_y \Phi,  \left( \frac{\sqrt{N-\cN -1} \sqrt{N - \cN}(N -\cN -2)}{N^2} - 1\right) \Phi \Big\rangle \d x dy\Big|  \nn\\
	  &\le \|u(t,.)\|_{L^\infty}^3  \iiint V_N(x-y,x-z) |u(t,z)| \times \nn\\
	  &\quad \times \|a_xa_y\Phi\|\left\|\left(\frac{\sqrt{N-\cN -1} \sqrt{N - \cN}(N -\cN -2)}{N^2} - 1\right)\Phi\right\| \d x \d y \d z.	
	 \end{align}
	 In the above we could replace $K_2(t,x,y) = Q(t)\otimes Q(t) \widetilde{K}_2(t,x,y)$ by $\widetilde{K}_2(t,x,y)$, namely could ignore the projection $Q(t)$, since $\Phi$ belongs to the excited Fock space $\cF_+(t)$. 
	
	In \eqref{eq:R2-000} we can use  again $\|u(t,.)\|_{L^\infty} \le C_t$. 	Moreover, using $\sqrt{1-s}=1+O(s)$ with $s>0$ small, it is straightforward to see that
	$$
	\left| \frac{\sqrt{N-\cN -1} \sqrt{N - \cN}(N -\cN -2)}{N^2} - 1 \right| \le C \frac{\cN+1}{N},
	$$
	and hence
	\begin{align*}
	&\left\|\left(\frac{\sqrt{N-\cN -1} \sqrt{N - \cN}(N -\cN -2)}{N^2} - 1\right)\Phi\right\| \\
	&\le C \left\langle\Phi,\frac{(\cN +1)^2}{N^2}\Phi \right\rangle^{\frac{1}{2}}\le \frac{C\sqrt{m}}{N} \left\langle\Phi, (\cN +1)\Phi \right\rangle^{1/2}
	\end{align*}
	because $\Phi\in \cF_+^{\le m}(t)$. 
	
	Next, by the Cauchy-Schwarz inequality we can estimate
\begin{align*}
 & \iiint V_N(x-y,x-z) |u(t,z)| \|a_xa_y\Phi\| \d x \d y \d z \\
 &\le \left( \iiint V_N(x-y,x-z) |u(t,z)|^2 \d x \d y \d z \right)^{1/2} \times\\
 &\quad \times  \left( \iiint V_N(x-y,x-z)  \|a_xa_y \Phi\|^2 \d x \d y \d z \right)^{1/2} \\
 &\le C \left( \iiint V_N(x-y,x-z)  \|a_xa_y \Phi\|^2 \d x \d y \d z \right)^{1/2}. 
\end{align*}

By Lemma \ref{lem:3body ope}, we can estimate 
	\begin{align*}
	\int V_N(x-y,x-z) \d z =  \int N^{3\beta}V(N^{\beta}(x-y),z) \d z \le C N^{\beta} (-\Delta_x).
	\end{align*}
	Therefore,
	\begin{equation} \label{eq:R4aaa}
	\iiint V_N(x-y,x-z)  a_x^* a_y^* a_x a_y \d z \le C N^{\beta} \dGamma(-\Delta) \cN,
	\end{equation}
	and hence,  since $\Phi\in \cF_+^{\le m}(t)$,
	\begin{equation} \label{eq:R4PPP}
	\iiint V_N(x-y,x-z) \|a_xa_y\Phi\|^2 \d x \d y \d z  \le C N^{\beta}m \langle \Phi, \dGamma(-\Delta) \Phi\rangle. 
	\end{equation}
	
	Thus we can conclude from \eqref{eq:R2-000} that
	\begin{align*} 
	|\langle\Phi,R_2\Phi\rangle| \le  C_t mN^{\beta/2-1} \langle\Phi,\dGamma(1-\Delta)\Phi\rangle.
	\end{align*}
	Therefore, we have the quadratic form bound on $\cF_+^{\le m}(t)$:
	\begin{align} 
	\label{A2bound}
	\pm (R_2+R_2^*) \le C_t  mN^{\beta/2-1}  \dGamma(1-\Delta). 
	\end{align}
	Note that $mN^{\beta/2-1} \le \sqrt{mN^{2\beta-1}}$ when $m\le N$. 
	\medskip
	
	\noindent
	$\boxed{j=3}$ Now we consider
	\begin{align*}
	R_3 &= \frac{1}{3N^2} \idotsint(Q(t)\otimes Q(t)\otimes Q(t)V_N 1\otimes 1\otimes 1)(x,y,z;x',y',z')u(t,x')u(t,y')u(t,z') \\ 
& \quad\quad\quad\times a^*_x a^*_y a^*_z\d x \d y \d z\d x '\d y'\d z' \sqrt{N-\cN-2}\sqrt{N-\cN-1}\sqrt{N-\cN}  \\
&\quad+\frac{2}{N^2} \iiiint (Q(t)\otimes Q(t)\int|u(t,z)|^2V_N\d zQ(t)\otimes 1)(x,y;x',y') u(t,y') \times \\
&\quad\quad \quad \times a^*_xa^*_ya_{x'}\d x dy\d x '\d y'\sqrt{N-\cN}(N-\cN-1)  \\
&\quad+\frac{1}{N^2} \idotsint(Q(t)\otimes Q(t)\otimes 1 V_N 1\otimes 1 \otimes Q(t))(x,y,z;x',y',z') \overline{u(t,z)}u(t,x')u(t,z') \times \\
&\quad\quad\quad\times a^*_xa^*_ya_{z'}\d x \d y \d z\d x '\d y'\d z'\sqrt{N -\cN}(N-\cN-1)  \\
&=: R_{3,1} +R_{3,2} + R_{3,3}.
\end{align*}
As before, we will estimate the expectation of $R_3$ against an arbitrary vector $\Phi \in \cF_+^{\le m}(t)$. Again, we can ignore the projection $Q(t)$, since $\Phi$ belongs to the excited Fock space $\cF_+(t)$.

\bigskip

\noindent
	$\boxed{R_{3,1}}$ This is the most complicated term (which does not appear in the pair-interaction case). We start with 
	\begin{align*}
	|\langle\Phi,R_{3,1}\Phi\rangle| &= \frac{1}{3N^2} \Big|\iiint V_N(x-y,x-z)u(t,x)u(t,y)u(t,z)\langle\Phi,a^*_xa^*_ya^*_z\Phi_1\rangle \d x \d y \d z\Big| \\
	&\le \frac{1}{3N^2} \int |u(t,z)| \left| \left\langle a_z \Phi, \Big( \iint k_z(x,y) a_x^*a_y^*  \d x \d y \Big) \Phi_1 \right\rangle \right| \d z
	\end{align*}
	with 
	\begin{align}
	\Phi_1&= \sqrt{N-\cN-2}\sqrt{N-\cN-1}\sqrt{N-\cN}\Phi, \nn \\
	k_z(x,y)&=V_N(x-y,x-z) u(t,x) u(t,y). \label{eq:def-kz}
	\end{align}
	
		For every $z\in \R^3$ fixed, we can think of $k_z(x,y)$ is the kernel of an operator $k_z$ on $L^2(\R^3)$. Let us prove that
		\begin{align} \label{eq:kz-op-bound}
		\pm (k_z+k_z^*) \le C_t N^\beta (-\Delta).
		\end{align}
		Indeed, for every $f \in L^2(\R^3)$ by the Cauchy-Schwarz inequality and Lemma \ref{lem:3body ope} we have
		\begin{align*}
		|\langle f, k_z f\rangle| &\le \iint |f(x)| |f(y)| V_N(x-y,x-z) |u(t,x)| |u(t,y)| \d x \d y \\
		&\le \|u(t,.)\|_{L^\infty}^2 \frac{|f(x)|^2+|f(y)|^2}{2} V_N(x-y,x-z) \d x \d y \\
		&\le C_t \int |f(x)|^2 N^{3\beta} \Big( \int V(y, N^\beta(x-z)) \d y\Big) \d x\\
		&\le C_t \Big\|  N^{3\beta} \int V(y, N^\beta x) \d y  \Big\|_{L^{3/2}(\R^3, \d x)} \int |\nabla f(x)|^2 \d x\\
		&\le C_t N^{\beta} \int |\nabla f(x)|^2 \d x.
		\end{align*}
		Thus for every $z\in \R^3$ fixed, we can use the pairing term estimate in Lemma \ref{lem:Pairtermbound} with
		$$
		K(x,y)=k_z(x,y),\quad H= \eta C_t N^\beta (1-\Delta), \quad \eta\ge 1.
		$$
		Combining with the kernel estimate in Lemma \ref{lem:R3kernelbound} we find that
	\begin{align} \label{eq:pairing-kz}
	 \pm \frac{1}{2} \left(  \iint k_z (x,y)a^*_xa^*_y \d x\d y + h.c.\right) \le  C_t \Big( \eta N^\beta  \d\Gamma(1-\Delta) + \eta^{-1} N^{3\beta} \Big), \quad \forall \eta\ge 1.
	\end{align}
		
	Using $\eqref{eq:pairing-kz}$ and  the Cauchy-Schwarz inequality we also have
	\begin{align*}
	&\left| \left\langle a_z \Phi, \Big( \iint k_z(x,y) a_x^*a_y^*  \d x \d y \Big) \Phi_1 \right\rangle \right| \\
	&\le C_t  \Big\langle \Phi_1, \left( \eta N^\beta  \d\Gamma(1-\Delta) + \eta^{-1} N^{3\beta}  \right)  \Phi_1\Big\rangle ^{1/2} 
	\Big\langle a_z \Phi, \left( \eta N^\beta  \d\Gamma(1-\Delta) + \eta^{-1} N^{3\beta}  \right) a_z \Phi \Big\rangle ^{1/2} .
	\end{align*}
	
	Integrating the above estimate against $|u(t,z)| \d z$ and using the Cauchy-Schwarz inequality 	 we get
\begin{align*}	
|\langle\Phi,R_{3,1}\Phi\rangle|  &\le \frac{1}{3N^2} \int |u(t,z)| \left| \iint k_z(x,y) a^*_xa^*_y \langle a_z \Phi, \Phi_1\rangle \d x \d y \right| \d z\\
&\le \frac{C_t}{N^2}  \Big\langle \Phi_1, \left( \eta N^\beta  \d\Gamma(1-\Delta) + \eta^{-1} N^{3\beta}  \right)  \Phi_1\Big\rangle ^{1/2}  \times \\
&\quad \times  \int |u(t,z)|  \Big\langle a_z \Phi, \left( \eta N^\beta  \d\Gamma(1-\Delta) + \eta^{-1} N^{3\beta}  \right) a_z \Phi \Big\rangle ^{1/2} \d z \\
&\le  \frac{C_t}{N^2}  \Big\langle \Phi_1, \left( \eta N^\beta  \d\Gamma(1-\Delta) + \eta^{-1} N^{3\beta}  \right)  \Phi_1 \Big\rangle ^{1/2} \\
&\quad \times \|u(t,.)\|_{L^2} \left( \int \Big\langle a_z \Phi, \left( \eta N^\beta  \d\Gamma(1-\Delta) + \eta^{-1} N^{3\beta}  \right)  a_z \Phi \Big\rangle \d z \right)^{1/2}.
\end{align*}
The term involving $\Phi_1$ can be estimated   using
	$$0\le \sqrt{N-\cN-2}\sqrt{N-\cN-1}\sqrt{N-\cN} \le N^{3/2}$$
	on $\cF^{\le m}_+(t)$ (and that $\cN$ commutes with $\dGamma(1-\Delta)$).  For the other term, we use
$$
\int a_z^* a_z \d z = \cN
$$
and
\begin{align*}
\int a_z^* \dGamma(1-\Delta) a_z \d z &= \iint a_z^* a_x^* (1-\Delta_x) a_x a_z \d x \d z\\
&= \iint a_x^* (1-\Delta_x) (a_z^* a_x) a_z \d x \d z\\
&=  \iint a_x^* (1-\Delta_x) (a_x a_z^*-\delta_{x=z}) a_z \d x \d z\\
&= \dGamma(1-\Delta) (\cN-1).
\end{align*}
%Previous argument: by writing $\Phi = (\phi_n)_{n=0}^m \in \cF_+^{\le m}(t)$, we can estimate
%	\begin{align*}
%	\int\d z\langle a_z\Phi,\d\Gamma(-\Delta) a_z\Phi\rangle
%	&=\sum_{n=0}^{m}n\int\d z\int\d x_1\ldots\d x_n|\nabla_{x_1}(a_z\Phi)_n(x_1,\ldots,x_n)|^2 \\
%	&= \sum_{n= 0}^{m}n(n+1)\int\d z\int\d x_1\ldots\d x_n|\nabla_{x_1}\phi_{n+1}(z,x_1,\ldots,x_n)|^2 \\
%	&\le m \langle\Phi,\d\Gamma(-\Delta)\Phi\rangle.
%	\end{align*}
%(note: there might be a mistake with factor $n(n+1)$)
%	Now choosing $\eps = \eta^{-1}\sqrt{\frac{m}{N}}$ for some $\eta>0$ and using Lemma~\ref{lem:R3kernelbound} finishes the proof of \eqref{eq:R3bound}.
%\\ \\
Therefore, when $\Phi\in \cF_+^{\le m}(t)$, we have
\begin{align*}	
|\langle\Phi,R_{3,1}\Phi\rangle| \le C_t\sqrt{\frac{m}{N}} \Big\langle \Phi, \left( \eta N^\beta  \d\Gamma(1-\Delta) + \eta^{-1} N^{3\beta}  \right)   \Phi\Big\rangle, 
\quad \forall \eta\ge 1.
\end{align*}
Thus
\begin{align*}	
\pm (R_{3,1}+R_{3,1}^*) \le C_t \Big( \eta \sqrt{m N^{2\beta-1}}  \dGamma(1-\Delta) + \eta^{-1} \sqrt{m N^{6\beta-1}}),  \quad \forall \eta \ge 1.
\end{align*}
This is equivalent to 
\begin{align*}	
\pm (R_{3,1}+R_{3,1}^*) \le \eta  \dGamma(1-\Delta) + \eta^{-1} C_t mN^{4\beta-1},  \quad \forall \eta \ge C_t \sqrt{m N^{2\beta-1}}.
\end{align*}

	\bigskip
	
	\noindent
	$\boxed{R_{3,2}}$ By the Cauchy-Schwarz inequality we have that  
		\begin{align*}
	 &|\langle \Phi,R_{3,2}\Phi\rangle| = \frac{2}{N^2}\Big|\iiint V_N(x-y,x-z)|u(t,z)|^2u(t,y) \times \nn \\
	&\qquad\qquad\qquad\qquad  \times \Big\langle\Phi,a^*_xa^*_y a_x \sqrt{N-\cN}(N-\cN -1) \Phi \Big\rangle \d x \d y \d z\Big| \nn \\ 
	&\le \frac{2 \|u(t,.)\|_{L^\infty}^3}{N^2} \iiint V_N(x-y,x-z) \| a_x a_y \Phi \| \Big\| a_x \sqrt{N-\cN}(N-\cN -1) \Phi \Big\| \d x \d y \d z \nn\\
	&\le \frac{C_t}{N^2} \left(\iiint V_N(x-y,x-z)  \|a_x a_y \Phi\|^2\d x \d y \d z\right)^{\frac{1}{2}} \times \nn \\ 
	&\quad\quad\quad\times \left(\iiint V_N(x-y,x-z) \Big\| a_x \sqrt{N-\cN}(N-\cN -1) \Phi \Big\| ^2 \d x \d y \d z\right)^{\frac{1}{2}}.
	%	&\le \frac{C_t}{N^2}\left(\int\|a_x(N-\cN-1)\Phi\|^2\d x \right)^\frac{1}{2}\times. \nn\\ 
%	&\quad\quad\quad\times \left(\iiint |V_N(x-y,x-z)| |u(t,z)|^2 \|a_xa_y \sqrt{N-\cN+1}\Phi\|^2 \d x \d y \d z\right)^{\frac{1}{2}} \nn\\
%	&\le \eta\langle\Phi,R_{4,2}\Phi\rangle + C_t\eta^{-1}\langle\Phi,\cN\Phi\rangle.\label{R3A2 bound}
	\end{align*}
	The first term has been estimated as in \eqref{eq:R4PPP},
	$$
	\iiint V_N(x-y,x-z) \|a_xa_y\Phi\|^2 \d x \d y \d z  \le C N^{\beta}m \langle \Phi, \dGamma(-\Delta) \Phi\rangle. 
	$$
	The second term can be computed explicitly (by doing the integration over $y,z$ first)
	\begin{align*}
	&\iiint V_N(x-y,x-z) \Big\| a_x \sqrt{N-\cN}(N-\cN -1) \Phi \Big\| ^2 \d x \d y \d z\\
	&= \left( \iint V(x,y) \d x \d y \right) \int \Big\| a_x \sqrt{N-\cN}(N-\cN -1) \Phi \Big\| ^2 \d x\\
	&= \left( \iint V(x,y) \d x \d y \right) \Big\langle \Phi, C \sqrt{N-\cN}(N-\cN -1) \cN \sqrt{N-\cN}(N-\cN -1) \Phi \Big\rangle \\
	&\le C N^{3}  \Big\langle \Phi, (\cN+1) \Phi \Big\rangle. 
	\end{align*}
	Putting all together, we obtain
	$$
	|\langle \Phi,R_{3,2}\Phi\rangle|  \le \frac{C_t}{N^2} \sqrt{N^{\beta}m \langle \Phi, \dGamma(-\Delta) \Phi\rangle} \sqrt{N^{3}  \Big\langle \Phi, (1+\cN) \Phi \Big\rangle} \le C_t \sqrt{mN^{\beta-1}}  \langle \Phi, \dGamma(1-\Delta) \Phi\rangle.
	$$
	Thus
	$$
	\pm (R_{3,2}+R_{3,2}^*) \le C_t \sqrt{mN^{\beta-1}}  \langle \Phi, \dGamma(1-\Delta) \Phi\rangle.
	$$

	\bigskip
	
	\noindent
	$\boxed{R_{3,3}}$ Using the Cauchy-Schwarz inequality we get for $\Phi \in \cF^{\le m}_+(t)$
	\begin{align*} 
	 &|\langle\Phi, R_{3,3}\Phi\rangle| =\frac{1}{N^2}\Big|\iiint V_N(x-y,x-z)  \overline{u(t,z)}u(t,x)u(t,y)\times\nn \\
	 &\qquad\qquad\qquad\qquad\times\Big\langle\Phi,a^*_xa^*_ya_z\sqrt{N-\cN}(N-\cN-1)\Phi\Big\rangle \d x \d y \d z\Big| \nn \\
	 &\le \frac{\|u(t,.)\|_{L^\infty}^3}{N^2}\Big|\iiint V_N(x-y,x-z)   \|a_x a_y \Phi \| \Big\|a_z\sqrt{N-\cN}(N-\cN-1)\Phi\Big\|  \d x \d y \d z\\
	 &\le \frac{C_t}{N^2} \left(\iiint V_N(x-y,x-z) \|a_x a_y \Phi\|^2 \d x \d y \d z\right)^{\frac{1}{2}} \times \nn\\
	 &\quad\quad\quad\quad\times \left(\iiint V_N(x-y,x-z) \Big\| a_z\sqrt{N-\cN}(N-\cN-1) \Phi\Big \|^2 \d x \d y \d z\right)^{\frac{1}{2}}.
	\end{align*}
	Then we can proceed exactly as for the previous term $R_{3,2}$ and obtain
	$$
	|\langle \Phi,R_{3,3}\Phi\rangle|  \le C_t \sqrt{mN^{\beta-1}}  \langle \Phi, \dGamma(1-\Delta) \Phi\rangle.
	$$
	Thus
	$$
	\pm (R_{3,3}+R_{3,3}) \le C_t \sqrt{mN^{\beta-1}} \dGamma(1-\Delta).
	$$

	\bigskip
	\noindent
	$\boxed{j=4}$ We consider
	\begin{align*}
	R_4&= \frac{1}{N^2}\idotsint(Q(t)\otimes Q(t)\otimes Q(t)V_NQ(t)\otimes 1\otimes 1)(x,y,z;x',y',z')u(t,y')u(t,z')\times \\
&\quad\quad\quad \times a^*_xa^*_ya^*_za_{x'}\sqrt{N-\cN -1}\sqrt{N-\cN}\d x \d y \d z\d x '\d y'\d z'  + \\
&\quad+ \frac{1}{2N^2} \iiiint (Q(t)\otimes Q(t) \int |u(t,z)|^2 V_N\d zQ(t)\otimes Q(t))(x,y;x',y') \times\\
&\quad \quad \quad a^*_xa^*_ya_{x'}a_{y'}\d x dy\d x '\d y'(N-\cN)\\
&\quad+\frac{1}{N^2}\idotsint (Q(t)\otimes Q(t)\otimes 1 V_N Q(t)\otimes 1\otimes Q(t))(x,y,z;x'y',z')\overline{u(t,z)}u(t,y')\times\\
&\quad\quad\quad\quad\quad\quad\quad\quad \times a^*_xa^*_ya_{x'}a_{z'}(N-\cN)\d x \d y \d z\d x '\d y'\d z'\\
&=:R_{4,1} + R_{4,2} + R_{4,3}.
\end{align*}	
Again, we will estimate the expectation of $R_4$ against $\Phi \in \cF_+^{\le m}(t)$, and for this purpose we ignore the projection $Q(t)$ in the computation because $\Phi$ belongs to the excited Fock space.

\bigskip

\noindent
	$\boxed{R_{4,1}}$ By the Cauchy-Schwarz inequality we can estimate
	
	\begin{align*}
	 |\langle \Phi,R_{4,1}\Phi\rangle|  & = \frac{1}{N^2} \Big|\iiint V_N(x-y,x-z)u(t,y)u(t,z) \times\\
	&\qquad \qquad \qquad \times \Big\langle\Phi, a^*_xa^*_ya^*_z a_x \sqrt{N -\cN -1}\sqrt{N-\cN}\Phi \Big \rangle \d x \d y \d z\Big|\\
	&\le \frac{1}{N^2} \iiint V_N(x-y,x-z) |u(t,y)u(t,z)| \times \\
	& \qquad \qquad \qquad \times \|a_x a_y a_z \Phi\| \| a_x \sqrt{N -\cN -1}\sqrt{N-\cN}\Phi \|  \d x \d y \d z \\
	&\le \frac{C_t}{N^2} \Big( \iiint V_N(x-y,x-z) \|a_x a_y a_z \Phi\|^2 \d x \d y \d z \Big)^{1/2} \times\\
	&\qquad \qquad \qquad \times \Big( \iiint  |u(t,y)|^2|u(t,z)|^2 \| a_x \sqrt{N -\cN -1}\sqrt{N-\cN}\Phi \|^2 \d x \d y \d z\Big)^{1/2}\\
	&\le C_t \langle \Phi, R_6 \Phi\rangle^{1/2} \langle \Phi, \cN \Phi\rangle^{1/2}. 
	\end{align*}
Then using the bound $R_6\le C m^2 N^{4\beta-2} \dGamma(-\Delta)$ (see \eqref{eq:bound-R6} below) we find that 
$$
|\langle \Phi,R_{4,1}\Phi\rangle| \le C_t mN^{2\beta-1} \dGamma(1-\Delta).
$$
Thus
$$
\pm (R_{4,1} +R_{4,1}^*) \le C_t mN^{2\beta-1} \dGamma(1-\Delta).
$$

	\bigskip 
	
	\noindent
	$\boxed{R_{4,2}}$
	By Lemma \ref{lem:3body ope} and the facts that $\|u(t,.)\|_{L^\infty}\le C_t$ and	\begin{align*}
	\left\|\int V_N(\cdot,z) \d z\right\|_{L^{3/2}(\R^3)} = \left(\int \left|N^{6\beta}\int V(N^\beta x,N^\beta z)\d z\right|^\frac{3}{2}\d x\right)^\frac{2}{3} \\
	= N^\beta \left(\int \left|\int V( x,z)\d z\right|^\frac{3}{2}\d x\right)^\frac{2}{3} \le C N^\beta
	\end{align*}
	we get 
	$$ 0\le R_{4,2} \le C_tN^{\beta-1}\d\Gamma(-\Delta)\cN.$$
	On the truncated Fock space $\cF_+^{\le m}(t)$, we obtain
	\begin{align} 0\le R_{4,2} \le C_t m N^{\beta-1}\d\Gamma(-\Delta). \label{eq:bound-R42}
	\end{align}

	\bigskip
	
	\noindent
	$\boxed{R_{4,3}}$ Using the Cauchy-Schwarz inequality we have for $\Phi \in \cF^{\le N}_+(t)$
	\begin{align*}
	|\langle\Phi, R_{4,1}\Phi\rangle| &=\frac{1}{N^2}\left|\iiint V_N(x-y,x-z)\overline{u(t,z)}u(t,y) \langle\Phi,a^*_xa^*_ya_xa_z(N-\cN)\Phi\rangle \d x \d y \d z\right| \\
	&\le \frac{1}{N^2}\left(\iiint V_N(x-y,x-z)|u(t,z)|^2\|a_xa_y\sqrt{N-\cN}\Phi\|^2\d x \d y \d z\right)^\frac{1}{2}\times \\
	&\quad\quad\quad\times\left(\iiint V_N(x-y,x-z)|u(t,y)|^2\|a_xa_z\sqrt{N-\cN}\Phi\|^2\d x \d y \d z\right)^\frac{1}{2}\\
	&= \langle\Phi,R_{4,2}\Phi\rangle.	
	\end{align*}
	Therefore, from the above bound of $R_{4,2}$ we have
	$$
	\pm (R_{4,3}+R_{4,3}) \le 2R_{4,2} \le C_t m N^{\beta-1}\d\Gamma(-\Delta)
	$$
	as quadratic forms on $\cF_+^{\le m}(t)$.

	\bigskip
	
	\noindent
$\boxed{j=5}$  We consider
\begin{align*}
R_5& = \frac{1}{N^2}\idotsint(Q(t)\otimes Q(t)\otimes Q(t)V_NQ(t)\otimes Q(t)\otimes 1)(x,y,z;x',y',z')u(t,z')\times \\
&\quad\quad\quad\quad\quad\quad\quad\quad \times a^*_xa^*_ya^*_za_{x'}a_{y'}\sqrt{N-\cN -2}\d x \d y \d z\d x '\d y'\d z' .
\end{align*}

For $\Phi\in \cF_+^{\le m}(t)$, by the Cauchy-Schwarz inequality and the previous bound on $R_{4,2}$, we have
\begin{align*}
|\langle \Phi, R_5\Phi\rangle| &= \frac{1}{N^2}\left|\iiint V_n(x-y,x-z) u(t,z)\langle \Phi,a^*_x a^*_y a^*_z a_x a_y\sqrt{N -\cN+2}\Phi\rangle \d x \d y \d z\right| \\
&\le \frac{1}{N^2} \iiint |V_N(x-y,x-z)| |u(t,z)|\|a_xa_ya_z \Phi\| \|a_xa_y\sqrt{N -\cN+2} \Phi\| dy\d y \d z \\
&\le \frac{1}{N^2}\left(\iiint |V_N(x-y,x-z)||u(t,z)|^2 \|a_xa_y\sqrt{N -\cN+2} \Phi\|^2 \d x \d y \d z\right)^{\frac{1}{2}}\times \\ &\quad\quad\times\left(\iiint |V_N(x-y,x-z)| \|a_xa_ya_z \Phi\|^2 \d x \d y \d z\right)^{\frac{1}{2}} \\
&\le \langle\Phi,R_{4,2}\Phi\rangle^{1/2} \langle\Phi,R_6\Phi\rangle^{1/2} \\
&\le C_t \sqrt{m^3 N^{5\beta-3}}\langle\Phi, \dGamma(1-\Delta) \Phi\rangle.
\end{align*}
Here in the last estimate we have used the bound on $R_{4,2}$ in \eqref{eq:bound-R42} above and the bound $0\le R_6\le Cm^2 N^{4\beta-2} \dGamma(-\Delta)$ in \eqref{eq:bound-R6} below. Thus
$$
\pm (R_{5}+ R_5^*) \le  C_t  \sqrt{m^3 N^{5\beta-3}} \dGamma(1-\Delta) .
$$

\bigskip

\noindent
$\boxed{j=6}$ We consider
\begin{align*}
R_6 &= \frac{1}{6N^2}\idotsint(Q(t)\otimes Q(t)\otimes Q(t)V_NQ(t)\otimes Q(t)\otimes Q(t)(x,y,z;x',y',z')\times \\
&\quad\quad\quad\quad\quad\quad\quad\quad \times a^*_xa^*_ya^*_za_{x'}a_{y'}a_{z'}\d x \d y \d z\d x '\d y'\d z'.
\end{align*}	
By Lemma \ref{lem:3body ope} we have
\begin{align*}
0 &\le V_N(x-y,x-z) \le \sup_{z} N^{6\beta} V(N^{\beta}(x-y),z) \\
&\le C N^{3\beta} \| \sup_{z} N^{3\beta} V (N^{\beta} \cdot ,z)\|_{L^{3/2}} (-\Delta_x) \le CN^{4\beta} (-\Delta_x).
\end{align*}
Therefore,  
$$
0\le R_6\le C N^{4\beta-2} \dGamma(-\Delta) \cN^2.
$$
In particular, on the truncated Fock space $\cF_+^{\le m}(t)$, 
\begin{align} \label{eq:bound-R6}
0\le R_6\le C m^2 N^{4\beta-2} \dGamma(-\Delta).
\end{align}
This finishes the proof of \eqref{eq:errorR0-5}.

\bigskip

%\noindent
%{\bf Part II: Proof of \eqref{eq:errorR0-6}.} The round bound \eqref{eq:errorR0-6} can be obtained by using the simple bound $|V_N|_{L^\infty}\le C N^{6\beta}$ everywhere. Indeed, the bound $|V_N|_{L^\infty}\le N^{6\beta}$ immediately implies 
%$$
%0\le R_6 \le C  N^{6\beta}\cN^3.
%$$
%Since the proof is rather straightforward, let us explain for example the bounds for $R_6, R_{4,2}$ and $R_{3,1}$ in detail. 
%
%Since $0\le V_N \le CN^{6\beta}$ the estimate $R_6 \le CN^{6\beta - 2}\cN^3$ follows immediately. 
%
%\begin{align*}
%	|\langle\Phi,R_{3,1}\Phi\rangle| 
%	&= \frac{1}{3N^2} \Big|\iiint V_N(x-y,x-z)u(t,x)u(t,y)u(t,z)\times\\
%	&\quad\quad\quad\quad\times\langle\Phi,a^*_xa^*_ya^*_z\sqrt{N-\cN-2}\sqrt{N-\cN-1}\sqrt{N-\cN}\Phi\rangle \d x \d y \d z\Big| \\
%	&\le \frac{C_t}{\sqrt{N}} \left(\iiint |V_N(x-y,x-z)| \|a_xa_ya_z \Phi\|^2 \d x \d y \d z\right)^{\frac{1}{2}} \\
%	&\le \eta \langle\Phi,N^{6\beta-1}R_6\Phi\rangle + \frac{C_t}{\eta}.
%	\end{align*}
%	In the last step we have used the $L^{\infty}$ bound $\|V_N\|_{L^{\infty}} \le CN^{6\beta}$. \\
%	
%
%\begin{align*}
%	\int |u(t,z)|^2V_N(x-y,x-z) \d z &\le C_t \int N^{6\beta} V(N^\beta(x-y),N^\beta(x-z)) \d z \\
%	&=C_t N^{3\beta} \int V(N^\beta(x-y),z)\d z = C_t N^{3\beta},
%	\end{align*}
%	where in the last inequality we have used that $V \in \cC_c(\R^6)$.
%	From this we immediately get $0\le R_{4,2} \le  C_tN^{3\beta-1}\cN^2$. 
	
\bigskip

\noindent	
{\bf Proof of the commutator bounds in \eqref{eq:errordRcom}.} These bounds follows from \eqref{eq:errorR0-5} with $\eta =1$ and the fact that
\begin{align*}
&[R_0,\cN]=[R_{4,2},\cN] = [R_{4,3},\cN] = [R_6,\cN] = 0, \\
&[R_1,\cN] = R_1,\quad [R_2,\cN] = -R_2 \\
&[R_{3,1},\cN] = -3R_{3,1}, \quad[R_{3,2},\cN] = R_{3,2}, \quad [R_{3,3},\cN] = R_{3,3}, \\
& [R_{4,1},\cN] = -2R_{4,1},
 \quad [R_5,\cN] = -R_5.
 \end{align*}

\bigskip

\noindent
{\bf Proof of the derivative bounds \eqref{eq:errorRdt}.}  Heuristically these bounds are similar to \eqref{eq:errorR0-5}. For the reader's convenience we will again go term by term.
 
 \bigskip

\noindent
	$\boxed{j=0} $ 
	Since 
	\begin{align*}
	&\Big|\partial_t\iiint |u(t,x)|^2|u(t,y)|^2|u(t,z)|^2V_N(x-y,x-z)\d x\d y\d z\Big| \\
	&\le 3\iiint |\partial_tu(t,x)||u(t,x)|u(t,y)|^2|u(t,z)|^2V_N(x-y,x-z)\d x\d y\d z \le C_t,
	\end{align*}
	we obtain
	\begin{align*}
	\pm\partial_tR_{0,1}= &\pm\frac{1}{6}\partial_t\iiint |u(x)|^2|u(y)|^2|u(z)|^2V_N(x-y,x-z)\d x\d y\d z\times\\
	&\times\left(\frac{3\cN^2+6\cN+2}{N}-\frac{\cN^3 +3\cN^2 +2\cN}{N^2}\right) \le C_t\frac{(\cN +1)^2}{N}.
	\end{align*}\\
	
	Moreover, using  $\|Q(t)\|_{op},\,\|\partial_tQ(t)\|_{op}\le C_t$, we have
 	\begin{align*}
 	\left\|\partial_t\left(Q(t)\left(\frac{1}{2}\iint|u(t)|^2V_N|u(t)|^2\d y \d z + K_1\right) Q(t)\right)\right\|_{op}\le C_t.
 	\end{align*}
 	Hence, as above
	\begin{align*}
	\pm\partial_t R_{0,2} &= \pm\d\Gamma\left(\partial_t\left(Q(t)\left(\frac{1}{2}\iint|u(t)|^2V_N|u(t)|^2\d y \d z + K_1\right) Q(t)\right)\right) \\
	&\quad \quad \times \left(\frac{(N-\cN)^2 - (N-\cN)}{N^2} -1\right) \le C_t \frac{(\cN+1)^2}{N}.
	\end{align*}
	In summary, 
	$$
	\pm \partial_t R_0 \le C_t\frac{(\cN+1)^2}{N}\le C_t \frac{m}{N} (\cN+1).
	$$
	
	\noindent 
	$\boxed{j=1}$ The term $\partial_t R_1$ can be estimated similarly to $R_1$ and we get
	$$
	\pm \partial_t (R_1+R_1^*) \le C_t\sqrt{\frac{m}{N}}(\cN+1).
	$$
	
	\bigskip
	
	\noindent
	$\boxed{j=2}$ Let $\Phi \in \cF^{\le m}_+(t)$. Then we have
	\begin{align*}
	\langle\Phi,\partial_tR_2\Phi\rangle & =
	\iint \left((\partial_tQ(t)\otimes 1  + 1\otimes\partial_tQ(t))\widetilde{K}_2(t,x,y) +\partial_t\widetilde{K}_2(t,x,y)\right)\times\\
	&\quad\quad\quad\times
	\Big\langle\Phi,	a^*_xa^*_y\left( \frac{\sqrt{N-\cN -1} \sqrt{N - \cN}(N -\cN -2)}{N^2} - 1\right)\Phi\Big\rangle \d x \d y,
	\end{align*}
	where we have used 
	\begin{align*}\partial_tK_2(t,x,y) &=\partial_tQ(t)\otimes Q(t)\widetilde{K}_2(t,x,y) + Q(t)\otimes\partial_tQ(t)\widetilde{K}_2(t,x,y) \\
	&\qquad +Q(t)\otimes Q(t)\partial_t\widetilde{K}_2(t,x,y)
	\end{align*}
	and then left out all $Q(t)$ since $\Phi \in \cF^{\le m}_+(t)$. 
	
	As in \eqref{A2bound} we have
	\begin{align*}
	&\left|\iint \partial_t\widetilde{K}_2(t,x,y)\langle\Phi,
	a^*_xa^*_y\left( \frac{\sqrt{N-\cN -1} \sqrt{N - \cN}(N -\cN -2)}{N^2} - 1\right)\Phi\rangle \d x \d y\right| \\
	&\le\iiint \left|\partial_t|u(t,z)|^2u(t,x)u(t,y) + 2|u(t,z)|^2\partial_tu(t,x)u(t,y)\right|V_N(x-y,x-z)\times\\
	&\quad\quad\quad\times\left|\langle\Phi,
	a^*_xa^*_y\left( \frac{\sqrt{N-\cN -1} \sqrt{N - \cN}(N -\cN -2)}{N^2} - 1\right)\Phi\rangle\right| \d x \d y\d z \\
	&\le C_t \left(\iiint |u(t,z)|^2|V_N(x-y,x-z)| \|a_xa_y\Phi\|^2 \d x \d y \d z\right)^\frac{1}{2} \times\\ &\quad\quad\quad \times\left\|\left(\frac{\sqrt{N-\cN -1} \sqrt{N - \cN}(N -\cN -2)}{N^2} - 1\right)\Phi\right\| \\
	&\le C_t \langle\Phi,R_{4,2}\Phi\rangle^{1/2} \langle\Phi,(\cN+1)\Phi\rangle^{1/2} \\
	& \le C_t \sqrt{mN^{\beta-1}} \langle\Phi,\d\Gamma(1-\Delta)\Phi\rangle .
	\end{align*}
	The other terms follow as
	\begin{align*}
	&\|(|\partial_tu\rangle\langle u|\otimes 1) \tilde{K_2}(t,\cdot,\cdot)\|^2_{L^2(\R^6)} = 
	\iint \left|\int \overline{u(t,\tilde{x})}\tilde{K_2}(t,\tilde{x},y)\d\tilde{x}\right|^2|\partial_tu(t,x)|^2\d x \d y \\
	&= \iint \left|\iint\overline{u(t,\tilde{x})}|u(t,z)|^2u(t,\tilde{x})	V_N(\tilde{x}-y,\tilde{x}-z)u(t,y)\d z\d\tilde{x}\right|^2|\partial_tu(t,x)|^2 \d x \d y \\
	&\le \|u(t,\cdot)\|^8_{L^\infty(\R^3)}\|V\|^2_{L^1(\R^6)}\|u(t,\cdot)\|^2_{L^2(\R^3)}\|\partial_t u(t,\cdot)\|^2_{L^2(\R^3)} \le C_t.
	\end{align*}
	Using this and similar estimates we obtain
	\begin{align*}
	&\iint \left((\partial_tQ(t)\otimes 1  + 1\otimes\partial_tQ(t))\tilde{K}_2(t,x,y)\right)\times\\
	&\quad\quad\quad\times\langle\Phi,
	a^*_xa^*_y\left( \frac{\sqrt{N-\cN -1} \sqrt{N - \cN}(N -\cN -2)}{N^2} - 1\right)\Phi\rangle \d x \d y \\ 
	&\quad\le \Big\|(\partial_tQ(t)\otimes 1  + 1\otimes\partial_tQ(t))\tilde{K}_2(t,\cdot,\cdot)\Big\|_{L^2(\R^6)} \left(\iint \|a_xa_y\Phi\|^2\d x \d y\right)^\frac{1}{2} \times\\
	&\quad\quad\quad\quad\times 	\left\|\left(\frac{\sqrt{N-\cN -1} 
	\sqrt{N - \cN}(N -\cN -2)}{N^2} - 1 \right)\Phi \right\| \\
	&\le \frac{C_t}{\sqrt{N}} \Big\langle\Phi,\cN^2\Phi\rangle^\frac{1}{2}\langle\Phi(\cN+1)\Phi \Big\rangle^\frac{1}{2} \le C_t \sqrt{mN^{-1}}\langle\Phi,(\cN+1)\Phi\rangle.
	\end{align*}
	Putting everything together we obtain
	\begin{align*}
	|\langle\Phi,\partial_tR_2\Phi \rangle| \le C_t  \sqrt{mN^{\beta-1}}  \langle \Phi,\d\Gamma(1-\Delta)\Phi\rangle.
	\end{align*}
	Thus
	$$
	\pm \partial_t(R_2+R_2^*) \le C_t  \sqrt{mN^{\beta-1}} \d\Gamma(1-\Delta).
	$$

	\bigskip
	
	\noindent
	\boxed{j=3} 
	$\boxed{R_{3,1}}$  Let $\Phi \in \cF^{\le m}_+(t)$. Then we have
	\begin{align*}
	&\langle\Phi,\partial_tR_{3,1}\Phi\rangle\\ &=  \frac{1}{3N^2}\idotsint\Big[(Q(t)\otimes Q(t)\otimes Q(t)V_N 1\otimes 1\otimes 1)(x,y,z;x',y',z')\partial_t\left(u(t,x')u(t,y')u(t,z')\right) + \\
	&\quad\quad\quad+ \partial_t(Q(t)\otimes Q(t)\otimes Q(t)V_N 1\otimes 1\otimes 1)(x,y,z;x',y',z')u(t,x')u(t,y')u(t,z') \Big] \\ 
	& \quad\quad\quad\quad\quad\times \langle\Phi,a^*_x a^*_y a^*_z \sqrt{N-\cN-2}\sqrt{N-\cN-1}\sqrt{N-\cN}\Phi\rangle \d x \d y \d z\d x '\d y'\d z'.
	\end{align*}
	The first term containing $\partial_t\left(u(t,x')u(t,y')u(t,z')\right)$ can be estimated exactly as for $R_{3,1}$. The second term containing $\partial_tQ(t)$ can be evaluated as
	\begin{align*}
	&\frac{1}{3N^2}\Big|\idotsint  (\partial_tQ(t)\otimes Q(t)\otimes Q(t)V_N 1\otimes 1\otimes 1)(x,y,z;x',y',z')u(t,x')u(t,y')u(t,z') \\ 
	& \quad\quad\quad\quad\quad\times \langle\Phi, a^*_x a^*_y a^*_z \sqrt{N-\cN-2}\sqrt{N-\cN-1}\sqrt{N-\cN}\Phi\rangle \d x \d y \d z\d x '\d y'\d z'\Big| \\
	&=\frac{1}{3N^2}\Big|\idotsint (\partial_tQ(t))(x;x') V_N(x'-y,x'-z)\delta(y-y')\delta(z-z')u(t,x')u(t,y')u(t,z') \\ 
	& \quad\quad\quad\quad\quad\times \langle\Phi, a^*_x a^*_y a^*_z \sqrt{N-\cN-2}\sqrt{N-\cN-1}\sqrt{N-\cN}\Phi\rangle \d x \d y \d z\d x '\d y'\d z'\Big| \\
	&=\frac{1}{3N^2}\Big|\iiiint (\partial_tQ(t))(x;x') V_N(x'-y,x'-z)u(t,x')u(t,y)u(t,z) \\ 
	& \quad\quad\quad\quad\quad\times \langle\Phi, a^*_x a^*_y a^*_z \sqrt{N-\cN-2}\sqrt{N-\cN-1}\sqrt{N-\cN}\Phi\rangle \d x \d y \d z\d x'\Big| \\
	&\le C_t \Big\langle \Phi, \Big( \eta  \d\Gamma(1-\Delta) + \eta^{-1} C_t mN^{4\beta-1} \Big)\Phi \Big\rangle, \quad \forall \eta\ge C_t \sqrt{mN^{2\beta-1}}
	\end{align*}
	where the last estimate follows similarly as for $R_{3,1}$ again. Hence, we have
	\begin{align*}
	\pm \partial_t (R_{3,1}+ R_{3,1}^*)\le C_t \left(\eta  \d\Gamma(1-\Delta) + \eta^{-1}C_t mN^{4\beta-1}\right), \quad \forall \eta\ge C_t \sqrt{mN^{2\beta-1}}
.
	\end{align*}
	
	\bigskip
	\noindent 
	
	$\boxed{R_{3,2}}$  Let $\Phi \in \cF^{\le m}_+(t)$. Then we have
	\begin{align*}
	\langle\Phi,\partial_tR_{3,2}\Phi\rangle &=
	\frac{2}{N^2} \idotsint \Big[(Q(t)\otimes Q(t)V_NQ(t)\otimes 1)(x,y;x',y') \partial_t\left(|u(t,z)|^2u(t,y')\right)+\\
	&\quad\quad\quad+\partial_t(Q(t)\otimes Q(t)V_NQ(t)\otimes 1)(x,y;x',y')|u(t,z)|^2u(t,y')\Big] \times \\
	&\quad\quad\quad\quad\times \langle\Phi,a^*_xa^*_ya_{x'}\sqrt{N-\cN}(N-\cN-1)\Phi\rangle \d x dy\d x '\d y'\d z.
	\end{align*}
	The first term containing $\partial_t\left(|u(t,z)|^2u(t,y')\right)$ can be estimated similarly to $R_{3,2}$, which is
	\begin{align*}
	\frac{2}{N^2}&\left|\iiint V_N(x-y,x-z)\partial_t\left(|u(t,z)|^2u(t,y)\right)\langle\Phi,a^*_xa^*_y a_x \sqrt{N-\cN}(N-\cN -1) \Phi\rangle \d x \d y \d z\right| \\
	&\quad\le \frac{4}{N^2} \left(\iiint |V_N(x-y,x-z)| |\partial_tu(t,y)|^2 |u(t,z)|^2 \|a_x(N-\cN -1)\Phi\|^2\d x \d y \d z\right)^{\frac{1}{2}} \times\\
	&\quad\quad\quad\quad\times \left(\iiint |V_N(x-y,x-z)| |u(t,z)|^2 \|a_xa_y \sqrt{N-\cN+1}\Phi\|^2 \d x \d y \d z\right)^{\frac{1}{2}}		\\
	&\quad\le \frac{C_t}{N^2}\left(\int\|a_x(N-\cN-1)\Phi\|^2\d x \right)^\frac{1}{2}\times\\
	&\quad\quad\quad\quad\times \left(\iiint |V_N(x-y,x-z)| |u(t,z)|^2 \|a_xa_y \sqrt{N-\cN+1}\Phi\|^2 \d x \d y \d z\right)^{\frac{1}{2}}\\ 
	&\le
	C_t \sqrt{mN^{\beta-1}} \Phi,\d\Gamma(-\Delta)\Phi\rangle.
	\end{align*}
	Here we have omitted all the $Q(t)$ since $\Phi \in \cF^{\le m}_+(t)$.
	For the other terms containing $\partial_tQ(t)$ we use the following kernel estimate
	\begin{align}
	|(\partial_tQ(t))(x;x')| = |\partial_tu(t,x)\overline{u(t,x')} + u(t,x)\partial_t\overline{u(t,x')}| \le q(t,x)q(t,x')
	\end{align}
	with $q(t,x) = |u(t,x)| + |\partial_tu(t,x)|$. Using Theorem~\eqref{thm:Hartree} we get
	\begin{align*}
	\|q(t,\cdot)\|_{L^2(\R^3)} \le C_t, \quad\quad \|q(t,\cdot)\|_{L^\infty(\R^3)} \le C_t
	\end{align*}
	We now decompose $\partial_t(Q(t)\otimes Q(t) V_NQ(t)\otimes 1)$ into three terms. The first one containing $(\partial_tQ(t)\otimes Q(t) V_NQ(t)\otimes 1)$ can be estimated as
	\begin{align*}
	&\frac{2}{N^2}\Big|\iiiint (\partial_tQ(t)\otimes Q(t)\int|u(t,z)|^2V_N\d zQ(t)\otimes 1)(x,y;x',y')u(t,y') \times
	\\
	 &\quad\quad\quad\quad\quad\times\langle\Phi,a^*_xa^*_ya_{x'}\sqrt{N-\cN}(N-\cN-1)\Phi\rangle \d x dy\d x '\d y'\Big| \\
	&=\frac{2}{N^2}\Big|\iiiint (\partial_tQ(t))(x;x')\int|u(t,z)|^2V_N(x'-y,x'-z)\d z\delta(y-y')u(t,y')\times
	\\
	&\quad\quad\quad\quad\quad\times\langle\Phi,a^*_xa^*_ya_{x'}\sqrt{N-\cN}(N-\cN-1)\Phi\rangle \d x dy\d x '\d y'\Big| \\
	&\le \frac{1}{N^2} \iiint q(t,x)q(t,x')\int|u(t,z)|^2V_N(x'-y,x'-z)\d z|u(t,y)|\times
	\\
	&\quad\quad\quad\quad\quad\times\|a_xa_y\sqrt{N-\cN+1}\Phi\|\|a_{x'}(N-\cN-1)\Phi\| \d x dy\d x ' \\
	&\le \frac{2}{N^2}\left(\iiiint |q(t,x)|^2|u(t,z)|^2|V_N(x'-y,x'-z)||u(t,y)|^2\|a_{x'}(N-\cN-1)\Phi\|^2\d x dy\d x '\d z\right)^\frac{1}{2}\times\\
	&\quad\quad\quad\times\left(\iiiint|q(t,x')|^2|u(t,z)|^2|V_N(x'-y,x'-z)|\|a_xa_y\sqrt{N-\cN+1}\Phi\|^2\d x dy\d x '\d z\right)^\frac{1}{2} \\
	&\le \frac{C_t}{N^2} \langle\Phi,\cN(N-\cN-1)^2\Phi\rangle^\frac{1}{2}\langle\Phi,\cN^2(N-\cN+1)\Phi\rangle^\frac{1}{2} \le C_t \sqrt{mN^{-1}}\langle\Phi,\cN\Phi\rangle.
	\end{align*}
	The second one $(Q(t)\otimes \partial_tQ(t) V_NQ(t)\otimes 1)$ follows as
	\begin{align*}
	&\frac{2}{N^2}\Big|\iiiint (Q(t)\otimes \partial_tQ(t)\int|u(t,z)|^2V_N\d zQ(t)\otimes 1)(x,y;x',y')u(t,y') \times
	\\
	&\quad\quad\quad\quad\quad\times\langle\Phi,a^*_xa^*_ya_{x'}\sqrt{N-\cN}(N-\cN-1)\Phi\rangle \d x dy\d x '\d y'\Big| \\
	&=\frac{2}{N^2}\Big|\iiiint (\partial_tQ(t))(y;y') \int|u(t,z)|^2V_N(x-y',x-z)\d z\delta(x-x')u(t,y')\times
	\\
	&\quad\quad\quad\quad\quad\times\langle\Phi,a^*_xa^*_ya_{x'}\sqrt{N-\cN}(N-\cN-1)\Phi\rangle \d x dy\d x '\d y'\Big| \\
	&\le\frac{2}{N^2} \iiint q(t,y)q(t,y')\int|u(t,z)|^2V_N(x-y',x-z)\d z|u(t,y')|\times
	\\
	&\quad\quad\quad\quad\quad\times\|a_xa_y\sqrt{N-\cN+1}\Phi\|\|a_{x}(N-\cN-1)\Phi\| \d x dy\d y'\Big| \\
	&\le \frac{2}{N^2} \|q(t,\cdot)\|_{L^\infty(\R^3)}\|u(t,\cdot)\|^3_{L^\infty(\R3)}\|V\|_{L^1(\R^6)}\times\\
	&\quad\times \left(\iint\|a_xa_y\sqrt{N-\cN+1}\Phi\|^2 \d x dy\right)^\frac{1}{2}\left(\iint |q(t,y)|^2\|a_{x}(N-\cN-1)\Phi\|^2\d x dy\right)^\frac{1}{2} \\
	&\le \frac{C_t}{N}\langle\Phi,\cN^2(N-\cN+1)\Phi\rangle^\frac{1}{2}\langle\Phi,\cN\Phi\rangle^\frac{1}{2} \le C_t \sqrt{mN^{-1}}\langle\Phi,\cN\Phi\rangle.
	\end{align*}
	For the third one $(Q(t)\otimes V_N\partial_tQ(t)Q(t)\otimes 1)$ we can estimate
	\begin{align*}
		&\frac{2}{N^2}\Big|\iiiint (Q(t)\otimes Q(t)\int|u(t,z)|^2V_N\d z\partial_tQ(t)\otimes 1)(x,y;x',y')u(t,y') \times
	\\
	&\quad\quad\quad\quad\quad\times\langle\Phi,a^*_xa^*_ya_{x'}\sqrt{N-\cN}(N-\cN-1)\Phi\rangle \d x dy\d x '\d y'\Big| \\
	&=\frac{2}{N^2}\Big|\iiiint \int|u(t,z)|^2V_N(x-y,x-z)\d z(\partial_tQ(t))(x;x')\delta(y-y')u(t,y')\times
	\\
	&\quad\quad\quad\quad\quad\times\langle\Phi,a^*_xa^*_ya_{x'}\sqrt{N-\cN}(N-\cN-1)\Phi\rangle \d x dy\d x '\d y'\Big| \\
	&\le \frac{2}{N^2} \iiint \int|u(t,z)|^2V_N(x-y,x-z)\d zq(t,x)q(t,x')|u(t,y)|\times
	\\
	&\quad\quad\quad\quad\quad\times\|a_xa_y\sqrt{N-\cN+1}\Phi\|\|a_{x'}(N-\cN-1)\Phi\| \d x dy\d x ' \\
	&\le \frac{2}{N^2}\left(\iiiint |q(t,x)|^2|u(t,z)|^2|V_N(x-y,x-z)||u(t,y)|^2\|a_{x'}(N-\cN-1)\Phi\|^2\d x dy\d x '\d z\right)^\frac{1}{2}\times\\
	&\quad\quad\quad\times\left(\iiiint|q(t,x')|^2|u(t,z)|^2|V_N(x-y,x-z)|\|a_xa_y\sqrt{N-\cN+1}\Phi\|^2\d x dy\d x '\d z\right)^\frac{1}{2} \\
	&\le C_t\langle\Phi,\cN\Phi\rangle^\frac{1}{2}\langle\Phi, R_{4,2}\Phi\rangle^\frac{1}{2}	 \le C_t \sqrt{mN^{\beta-1}} \langle\Phi,\d\Gamma(1-\Delta)\Phi\rangle .
	\end{align*}
	Here we have used the bound on $R_{4,2}$ in \eqref{eq:bound-R42}. Collecting all above estimates, we obtain
	\begin{align*}
	|\langle\Phi,\partial_tR_{3,2}\Phi\rangle| \le C_t \sqrt{mN^{\beta-1}} \langle\Phi,\d\Gamma(1-\Delta)\Phi\rangle.
	\end{align*}
	Thus
	\begin{align*}
	\pm \partial_t (R_{3,2} +R_{3,2}^*) \le C_t \sqrt{mN^{\beta-1}}  \d\Gamma(1-\Delta).
	\end{align*}

	\bigskip
	
	\noindent
	$\boxed{R_{3,3}}$ Let $\Phi \in \cF^{\le m}_+(t)$. We have that
	\begin{align*}
	&\langle\Phi,\partial_tR_{3,3}\Phi\rangle  \\
	&=\frac{1}{N^2} \idotsint\Big[(Q(t)\otimes Q(t)\otimes 1 V_N 1\otimes 1 \otimes Q(t))(x,y,z;x',y',z') \partial_t\left(\overline{u(t,z)}u(t,x')u(t,y')\right) + \\
	&\quad\quad\quad\quad+\Big(\partial_t(Q(t)\otimes Q(t)\otimes 1 V_N 1\otimes 1 \otimes Q(t))\Big)(x,y,z;x',y',z') \overline{u(t,z)}u(t,x')u(t,y')\Big] \times \\
	&\quad\quad\quad\quad\quad\quad\quad\times \langle\Phi,a^*_xa^*_ya_{z'}\sqrt{N -\cN}(N-\cN-1)\Phi\rangle \d x \d y \d z\d x '\d y'\d z'.
	\end{align*}
	The first term involving $\partial_t\left(\overline{u(t,z)}u(t,x')u(t,y')\right)$ can be estimated as in $R_{3,3}$:
	\begin{align*}
	&\frac{1}{N^2} \Big|\idotsint(Q(t)\otimes Q(t)\otimes 1 V_N 1\otimes 1 \otimes Q(t))(x,y,z;x',y',z') \partial_t\left(\overline{u(t,z)}u(t,x')u(t,y')\right)  \times \\
	&\quad\quad\quad\quad\quad\quad\quad\times \langle\Phi,a^*_xa^*_ya_{z'}\sqrt{N -\cN}(N-\cN-1)\Phi\rangle \d x \d y \d z\d x '\d y'\d z'\Big| \\
	&=\frac{1}{N^2} \Big|\iiint V_N(x-y,x-z)\Big[\overline{\partial_tu(t,z)}u(t,x)u(t,y) +2\overline{u(t,z)}u(t,x)\partial_tu(t,y)\Big] \times \\
	&\quad\quad\quad\quad\quad\quad\quad\times \langle\Phi,a^*_xa^*_ya_{z}\sqrt{N -\cN}(N-\cN-1)\Phi\rangle \d x \d y \d z\Big| \\
	&\le \frac{C_t}{N^2}\left(\iiint V_N(x-y,x-z)\|a_{z}(N-\cN-1)\Phi\|^2\d x\d y\d z\right)^\frac{1}{2}\times
	\\
	&\quad\quad\quad\quad\quad\times\left(\iiint V_N(x-y,x-z)\|a_xa_y\sqrt{N -\cN+1}\Phi\|^2\d x\d y\d z\right)^\frac{1}{2} \\
	&\le C_t \sqrt{mN^{\beta-1}} \langle\Phi,\d\Gamma(1-\Delta)\Phi\rangle .
	\end{align*}
	Let us now decompose $\partial_t(Q(t)\otimes Q(t)\otimes 1 V_N 1\otimes 1 \otimes Q(t))$ into three terms. For the first one containing $(\partial_tQ(t)\otimes Q(t)\otimes 1 V_N 1\otimes 1 \otimes Q(t))$ we get
	\begin{align*}
	&\frac{1}{N^2} \Big|\idotsint(\partial_tQ(t)\otimes Q(t)\otimes 1 V_N 1\otimes 1 \otimes Q(t))(x,y,z;x',y',z') \overline{u(t,z)}u(t,x')u(t,y')  \times \\
	&\quad\quad\quad\quad\quad\quad\quad\times \langle\Phi,a^*_xa^*_ya_{z'}\sqrt{N -\cN}(N-\cN-1)\Phi\rangle \d x \d y \d z\d x '\d y'\d z'\Big| \\
	&=\frac{1}{N^2}\Big|\idotsint(\partial_tQ(t))(x;x')V_N(x'-y,x'-z)\delta(y-y')\delta(z-z')\overline{u(t,z)}u(t,x')u(t,y')\times \\
	&\quad\quad\quad\quad\quad\quad\quad\times \langle\Phi,a^*_xa^*_ya_{z'}\sqrt{N -\cN}(N-\cN-1)\Phi\rangle \d x \d y \d z\d x '\d y'\d z'\Big| \\
	&\le \frac{1}{N^2} \iiiint q(t,x)q(t,x')|V_N(x'-y,x'-z)||u(t,z)||u(t,x')||u(t,y)|\times\\
	&\quad\quad\quad\quad\quad\quad\quad\times\|a_xa_y\sqrt{N-\cN+1}\Phi\|\|a_z(N-\cN-1)\Phi\| \d x \d y \d z\d x ' \\
	&\le \frac{C_t}{N^2}\left(\iiiint |V_N(x'-y,x'-z)| \|a_xa_y\sqrt{N-\cN+1}\Phi\|^2 \d x \d y \d z\d x '\right)^\frac{1}{2}\times\\ &\quad\quad\quad\quad\quad\times\left(\iiiint|q(t,x)|^2 |V_N(x'-y,x'-z)| \|a_z(N-\cN-1)\Phi\|^2\d x \d y \d z\d x '\right)^\frac{1}{2} \\
	&\le \frac{C_t}{N^2}\left(\iint\|a_xa_y\sqrt{N-\cN+1}\Phi\|^2\d x dy\right)^\frac{1}{2}\left(\iint \|a_z(N-\cN-1)\Phi\|^2\d z\right)^\frac{1}{2} \\
	&\le C_t \sqrt{mN^{-1}}\langle\Phi,(\cN+1)\Phi\rangle.
	\end{align*}
	The term involving $(Q(t)\otimes \partial_tQ(t)\otimes 1 V_N 1\otimes 1 \otimes Q(t))$ can be treated similarly.
	The third term $(Q(t)\otimes Q(t)\otimes 1 V_N 1\otimes 1 \otimes\partial_t Q(t))$ goes as follows
	\begin{align*}
	&\frac{1}{N^2} \Big|\idotsint(Q(t)\otimes Q(t)\otimes 1 V_N 1\otimes 1 \otimes \partial_tQ(t))(x,y,z;x',y',z') \overline{u(t,z)}u(t,x')u(t,y')  \times \\
	&\quad\quad\quad\quad\quad\quad\quad\times \langle\Phi,a^*_xa^*_ya_{z'}\sqrt{N -\cN}(N-\cN-1)\Phi\rangle \d x \d y \d z\d x '\d y'\d z'\Big| \\
	&=\frac{1}{N^2}\Big|\idotsint V_N(x-y,x-z)(\partial_tQ(t))(z;z')\delta(x-x')\delta(y-y')\overline{u(t,z)}u(t,x')u(t,y')\times \\
	&\quad\quad\quad\quad\quad\quad\quad\times \langle\Phi,a^*_xa^*_ya_{z'}\sqrt{N -\cN}(N-\cN-1)\Phi\rangle \d x \d y \d z\d x '\d y'\d z'\Big| \\
	&\le \frac{1}{N^2}\iiiint |V_N(x-y,x-z)|q(t,z)q(t,z')|u(t,z)||u(t,x)||u(t,y)|\times\\
	&\quad\quad\quad\quad\quad\quad\quad\times\|a_xa_y\sqrt{N-\cN+1}\|\|a_{z'}(N-\cN-1)\Phi\|\d x \d y \d z\d z' \\
	&\le \frac{C_t}{N^2}\left(\iiiint|q(t,z')|^2|V_N(x-y,x-z)||u(t,z)|^2\|a_xa_y\sqrt{N-\cN+1}\|^2\d x \d y \d z\d z'\right)^\frac{1}{2} \times \\
	&\quad\quad\quad\times\left(\iiiint |V_N(x-y,x-z)||u(t,x)|^2\|a_{z'}(N-\cN-1)\Phi\|^2\d x \d y \d z\d z'\right)^\frac{1}{2} \\
	&\le\frac{C_t}{N^2}\left(\iiint|V_N(x-y,x-z)||u(t,z)|^2\|a_xa_y\sqrt{N-\cN+1}\|^2\d x \d y \d z\right)^\frac{1}{2}\times\\
	&\quad\quad\quad\quad\times\langle\Phi,\cN(N-\cN-1)^2\Phi\rangle^\frac{1}{2} \\
	&\le C_t \sqrt{mN^{\beta-1}} \langle \Phi,\d\Gamma(1-\Delta)\Phi\rangle.
	\end{align*}
	Putting everything together we obtain
	\begin{align*}
	|\langle\Phi,\partial_tR_{3,3}\Phi\rangle| \le C_t \sqrt{mN^{\beta-1}} \langle \Phi,\d\Gamma(1-\Delta)\Phi\rangle.
	\end{align*}
	Thus
	$$
	\pm \partial_t (R_{3,3}+R_{3,3}^*) \le C_t \sqrt{mN^{\beta-1}}  \d\Gamma(1-\Delta). 
	$$
	
	\bigskip
	
	\noindent
	$\boxed{j=4}$ $\boxed{R_{4,1}}$ Let $\Phi \in \cF^{\le m}_+(t)$. Then we have
	\begin{align*}
	&\langle\Phi,\partial_tR_{4,1} \Phi\rangle \\
	 &= \frac{1}{N^2}\idotsint\Big[(Q(t)\otimes Q(t)\otimes Q(t)V_NQ(t)\otimes 1\otimes 1)(x,y,z;x',y',z')\partial_t\left(u(t,y')u(t,z')\right) \\
	&\quad\quad\quad\quad+\Big(\partial_t(Q(t)\otimes Q(t)\otimes Q(t)V_NQ(t)\otimes 1\otimes 1)\Big)(x,y,z;x',y',z')u(t,y')u(t,z')\Big]\times \\
	&\quad\quad\quad\quad\quad\quad\quad\quad \times \langle\Phi, a^*_xa^*_ya^*_za_{x'}\sqrt{N-\cN -1}\sqrt{N-\cN}\Phi\rangle \d x \d y \d z\d x '\d y'\d z'.
	\end{align*}
	The  term containing $\partial_t\left(u(t,y')u(t,z')\right)$ can be estimated similarly to $R_{4,1}$. 
%	We can also use the following easier argument (since we only aim at a round bound)
%	\begin{align*}
%	&\frac{1}{N^2}\Big|\idotsint(Q(t)\otimes Q(t)\otimes Q(t)V_NQ(t)\otimes 1\otimes 1)(x,y,z;x',y',z')\partial_t\left(u(t,y')u(t,z')\right)\times \\
%	&\quad\quad\quad\quad\quad\quad\quad\quad \times \langle\Phi, a^*_xa^*_ya^*_za_{x'}\sqrt{N-\cN -1}\sqrt{N-\cN}\Phi\rangle \d x \d y \d z\d x '\d y'\d z'\Big| \\ 
%	&\le\frac{C_t}{N^2}\langle\Phi,\cN(N-\cN -1)(N-\cN)\Phi\rangle^\frac{1}{2}\left(\iiint |V_N(x-y,x-z)| \|a_xa_ya_z \Phi\|^2 \d x \d y \d z\right)^{\frac{1}{2}}\\
%	&\le C_t \langle\Phi,\cN\Phi\rangle^{1/2} \langle\Phi,R_6\Phi\rangle^{1/2} \le C_t mN^{2\beta-1}  \langle\Phi,\dGamma(1-\Delta)\Phi\rangle.
%	\end{align*}
For the other terms we decompose $\partial_t(Q(t)\otimes Q(t)\otimes Q(t)V_NQ(t)\otimes 1\otimes 1)$ into four terms. For the first term $(\partial_tQ(t)\otimes Q(t)\otimes Q(t)V_NQ(t)\otimes 1\otimes 1)$ we have
	\begin{align*}
	&\frac{1}{N^2}\Big|\idotsint(\partial_tQ(t)\otimes Q(t)\otimes Q(t)V_NQ(t)\otimes 1\otimes 1)(x,y,z;x',y',z')u(t,y')u(t,z')\times \\
	&\quad\quad\quad\quad\quad\quad\quad\quad \times \langle\Phi, a^*_xa^*_ya^*_za_{x'}\sqrt{N-\cN -1}\sqrt{N-\cN}\Phi\rangle \d x \d y \d z\d x '\d y'\d z'\Big| \\
	&=\frac{1}{N^2} \Big|\idotsint (\partial_tQ(t))(x;x')V_N(x'-y,x'-z)\delta(y-y')\delta(z-z')u(t,y')u(t,z')\times \\
	&\quad\quad\quad\quad\quad\quad\quad\quad \times \langle\Phi, a^*_xa^*_ya^*_za_{x'}\sqrt{N-\cN -1}\sqrt{N-\cN}\Phi\rangle \d x \d y \d z\d x '\d y'\d z'\Big| \\
	&\le\frac{1}{N^2}\iiiint q(t,x)q(t,x')|V_N(x'-y,x'-z)||u(t,y)||u(t,z)|\times \\
	&\quad\quad\quad\quad\quad\quad\quad\quad \times\|a_xa_ya_z\Phi\|\|a_{x'}\sqrt{N-\cN-1}\sqrt{N-\cN}\Phi\|\d x \d y \d z\d x ' \\ 
	&\le \frac{1}{N^2} \|u(t,\cdot)\|^2_{L^\infty}\left(\iiiint |q(t,x')|^2V_N(x'-y,x'-z)\|a_xa_ya_z\Phi\|^2\d x \d y \d z\d x '\right)^\frac{1}{2}\times \\
	&\quad\quad\quad\quad \times\left(\iiiint |q(t,x)|^2|V_N(x'-y,x'-z)|\|a_{x'}\sqrt{N-\cN-1}\sqrt{N-\cN}\Phi\|^2\d x \d y \d z\d x '\right)^\frac{1}{2} \\
	&\le C_tmN^{\beta-1} \langle\Phi,\d\Gamma(-\Delta)\Phi\rangle^\frac{1}{2}\langle\Phi,\cN\Phi\rangle^\frac{1}{2} \\
	&\le C_t mN^{\beta-1} \langle\Phi,\d\Gamma(1-\Delta)\Phi\rangle.
	\end{align*}
	The second one containing $(Q(t)\otimes \partial_tQ(t)\otimes Q(t)V_NQ(t)\otimes 1\otimes 1)$ can be bounded as
	\begin{align*}
	&\frac{1}{N^2}\Big|\idotsint(Q(t)\otimes \partial_tQ(t)\otimes Q(t)V_NQ(t)\otimes 1\otimes 1)(x,y,z;x',y',z')u(t,y')u(t,z')\times \\
	&\quad\quad\quad\quad\quad\quad\quad\quad \times \langle\Phi, a^*_xa^*_ya^*_za_{x'}\sqrt{N-\cN -1}\sqrt{N-\cN}\Phi\rangle \d x \d y \d z\d x '\d y'\d z'\Big| \\
	&=\frac{1}{N^2} \Big|\idotsint (\partial_tQ(t))(y;y')V_N(x-y',x-z)\delta(x-x')\delta(z-z')u(t,y')u(t,z')\times \\
	&\quad\quad\quad\quad\quad\quad\quad\quad \times \langle\Phi, a^*_xa^*_ya^*_za_{x'}\sqrt{N-\cN -1}\sqrt{N-\cN}\Phi\rangle \d x \d y \d z\d x '\d y'\d z'\Big| \\
	&\le\frac{1}{N^2}\iiiint q(t,y)q(t,y')|V_N(x-y',x-z)||u(t,y')||u(t,z)|\times \\
	&\quad\quad\quad\quad\quad\quad\quad\quad \times\|a_xa_ya_z\Phi\|\|a_x\sqrt{N-\cN-1}\sqrt{N-\cN}\Phi\|\d x \d y \d z\d y' \\ 
	&\le \frac{C_t}{N^2}\left(\iiiint |u(t,y')|^2V_N(x-y',x-z)\|a_xa_ya_z\Phi\|^2\d x \d y \d z\d y '\right)^\frac{1}{2}\times \\
	&\quad\quad\quad\quad \times\left(\iiiint |q(t,y)|^2|V_N(x-y',x-z)|\|a_x\sqrt{N-\cN-1}\sqrt{N-\cN}\Phi\|^2\d x \d y \d z\d y '\right)^\frac{1}{2} \\
	&\le C_t mN^{\beta-1} \langle\Phi,\d\Gamma(-\Delta)\Phi\rangle^\frac{1}{2}\langle\Phi,\cN\Phi\rangle^\frac{1}{2} \\
	&\le C_tmN^{\beta-1} \langle\Phi,\d\Gamma(1-\Delta)\Phi\rangle.
	\end{align*}
	The third term $(Q(t)\otimes Q(t)\otimes \partial_tQ(t)V_NQ(t)\otimes 1\otimes 1)$ can be treated similarly. For the fourth term $(Q(t)\otimes Q(t)\otimes Q(t)V_N\partial_tQ(t)\otimes 1\otimes 1)$ we estimate
	\begin{align*}
	&\frac{1}{N^2}\Big|\idotsint(Q(t)\otimes Q(t)\otimes Q(t)V_N\partial_tQ(t)\otimes 1\otimes 1)(x,y,z;x',y',z')u(t,y')u(t,z')\times \\
	&\quad\quad\quad\quad\quad\quad\quad\quad \times \langle\Phi, a^*_xa^*_ya^*_za_{x'}\sqrt{N-\cN -1}\sqrt{N-\cN}\Phi\rangle \d x \d y \d z\d x '\d y'\d z'\Big| \\
	&=\frac{1}{N^2} \Big|\idotsint V_N(x-y,x-z)(\partial_tQ(t))(x;x')\delta(y-y')\delta(z-z')u(t,y')u(t,z')\times \\
	&\quad\quad\quad\quad\quad\quad\quad\quad \times \langle\Phi, a^*_xa^*_ya^*_za_{x'}\sqrt{N-\cN -1}\sqrt{N-\cN}\Phi\rangle \d x \d y \d z\d x '\d y'\d z'\Big| \\
	&\le\frac{1}{N^2}\iiiint V_N(x-y,x-z)|q(t,x)q(t,x')u(t,y)u(t,z)|\times \\
	&\quad\quad\quad\quad\quad\quad\quad\quad \times\|a_xa_ya_z\Phi\|\|a_{x'}\sqrt{N-\cN-1}\sqrt{N-\cN}\Phi\|\d x \d y \d z\d x ' \\ 
	&\le \frac{1}{N^2} \|u(t,\cdot)\|^2_{L^\infty} \|q(t,\cdot)\|_{L^\infty}^2 \left(\iiiint |q(t,x')|^2V_N(x-y,x-z)\|a_xa_ya_z\Phi\|^2\d x \d y \d z\d x '\right)^\frac{1}{2}\times \\
	&\quad\quad\quad\quad \times\left(\iiiint |q(t,x)|^2|V_N(x-y,x-z)|\|a_{x'}\sqrt{N-\cN-1}\sqrt{N-\cN}\Phi\|\d x \d y \d z\d x '\right)^\frac{1}{2} \\
	&\le \frac{C_t}{N^2} \left(\iiint V_N(x-y,x-z)\|a_xa_ya_z\Phi\|^2 \right)^\frac{1}{2}\langle\Phi,\cN(N-\cN-1)(N-\cN)\Phi\rangle^\frac{1}{2}\\
	&\le C_t \langle\Phi,R_6\Phi\rangle^{1/2} \langle\Phi,\cN\Phi\rangle^{1/2} \le C_t mN^{2\beta-1}  \langle\Phi,\dGamma(1-\Delta) \Phi\rangle .
	\end{align*}
	In the last estimate, we have used the bound on $R_6$ in \eqref{eq:bound-R6}. Hence, we conclude that
	\begin{align*}
	|\langle\Phi,\partial_tR_{4,1}\Phi\rangle| \le C_t mN^{ 2\beta-1}\langle \Phi,  \d\Gamma(1-\Delta) \Phi\rangle.	\end{align*}
	This means
	$$
	\pm \partial_t (R_{4,1}+R_{4,1}^*) \le C_t mN^{2\beta-1} \d\Gamma(1-\Delta).
	$$
	
	\bigskip
	\noindent
	$\boxed{R_{4,2}}$ Let $\Phi \in \cF^{\le m}_+(t)$. Then we have
	\begin{align*}
	\langle\Phi,\partial_tR_{4,2}\Phi\rangle &= \frac{1}{2N^2} \idotsint \Big[ (Q(t)\otimes Q(t)  V_NQ(t)\otimes Q(t))(x,y;x',y')\partial_t|u(t,z)|^2 + \\
	&\quad\quad\quad\quad\quad\quad\quad+ \partial_t(Q(t)\otimes Q(t) V_NQ(t)\otimes Q(t))(x,y,;x',y')|u(t,z)|^2\Big]\times \\
	&\quad\quad\quad\quad\quad\quad\quad\quad\times \langle\Phi, a^*_xa^*_ya_{x'}a_{y'}(N-\cN)\Phi\rangle \d x dy\d x '\d y'\d z.
	\end{align*}
	The first term containing $\partial_t|u(t,z)|^2$ can be estimated as 
	\begin{align*}
	&\frac{1}{2N^2} \Big|\iiint \partial_t|u(t,z)|^2 V_N(x-y,x-z) \langle\Phi, a^*_xa^*_ya_{x}a_{y}(N-\cN)\Phi\rangle \d x \d y \d z\Big| \\ 
	&\le C_tN^{\beta-1}{m}\d\Gamma(-\Delta),
	\end{align*}
	where we have used the Sobolev bound. 	For the other terms we decompose $\partial_t(Q(t)\otimes Q(t)V_NQ(t)\otimes Q(t))$ into four terms. The first one containing $(\partial_tQ(t)\otimes Q(t)V_NQ(t)\otimes Q(t))$ can be handled as
	\begin{align*}
	&\frac{1}{2N^2}\Big|\iiiint (\partial_tQ(t)\otimes Q(t) \int |u(t)|^2 V_N \d z Q(t)\otimes Q(t))(x,y;x',y')\times\\
	&\quad\quad\quad\quad\quad\times\langle\Phi, a^*_xa^*_ya_{x'}a_{y'}(N-\cN)\Phi\rangle \d x dy\d x '\d y'\Big| \\
	&=\frac{1}{2N^2} \iiiint (\partial_tQ(t))(x;x') \int |u(t,z)|^2 V_N(x'-y,x'-z)\d z \delta(y-y')\times\\
	&\quad\quad\quad\quad\quad\times \langle\Phi, a^*_xa^*_ya_{x'}a_{y'}(N-\cN)\Phi\rangle \d x dy\d x '\d y'\Big| \\
	&\le \frac{1}{2N^2} \iiiint q(t,x)q(t,x') |u(t,z)|^2 V_N(x'-y,x'-z)\times\\
	&\quad\quad\quad\quad\quad\times \|a_xa_y \sqrt{N-\cN}\|\|a_{x'}a_y\sqrt{N-\cN}\| \d x dy\d z \d x '\\
	&\le \frac{C_t}{N^2} \left(\iiiint  V_N(x'-y,x'-z) \|a_xa_y\sqrt{N-\cN}\|^2\d x \d y \d z\d x '\right)^\frac{1}{2}\times \\
	&\quad\quad\quad\quad\quad \times\left(\iiiint |q(t,x)|^2|u(t,z)|^2 V_N(x'-y,x'-z) \|a_{x'}a_y\sqrt{N-\cN}\|^2\d x\d y \d z\d x '\right)^\frac{1}{2}\\
	&\le \frac{C_t}{N}\langle\Phi \cN^2(N-\cN)\Phi\rangle^\frac{1}{2}\langle\Phi,R_{4,2}\Phi\rangle \\
	&\le C_t \sqrt{mN^{\beta-1}} \langle\Phi,\d\Gamma(1-\Delta)\Phi\rangle.
	\end{align*}
	Here we have used the previous bound on $R_{4,2}$. 	The other three terms can be estimated in the same way. With this we can conclude
	\begin{align*}
	|\langle\Phi,\partial_tR_{4,2}\Phi\rangle| \le C_t \sqrt{mN^{\beta-1}} \langle\Phi,\d\Gamma(1-\Delta)\Phi\rangle.
	\end{align*}
	Thus
	$$
	\pm \partial_t (R_{4,2}+R_{4,2}^*) \le  C_t \sqrt{mN^{\beta-1}} \d\Gamma(1-\Delta).
	$$
		
	\bigskip
	
	\noindent
	$\boxed{R_{4,3}}$ Let $\Phi \in \cF^{\le m}_+(t)$. Then we have
	\begin{align*}
	&\langle\Phi,\partial_tR_{4,3}\Phi\rangle \\
	&= \frac{1}{N^2}\idotsint\Big[ (Q(t)\otimes Q(t)\otimes 1 V_N Q(t)\otimes 1\otimes Q(t))(x,y,z;x'y',z')\partial_t\left(\overline{u(t,z)}u(t,y')\right)+ \\
	&\quad\quad\quad\quad\quad\quad\quad+\partial_t(Q(t)\otimes Q(t)\otimes 1 V_N Q(t)\otimes 1\otimes Q(t))(x,y,z;x'y',z')\overline{u(t,z)}u(t,y')\Big]\times\\
	&\quad\quad\quad\quad\quad\quad\quad\quad \times \langle\Phi,a^*_xa^*_ya_{x'}a_{z'}(N-\cN)\Phi\rangle \d x \d y \d z\d x '\d y'\d z'.
	\end{align*}
	For the first term, which contains $\partial_t\left(\overline{u(t,z)}u(t,y')\right)$, we get
	\begin{align*}
	&\frac{1}{N^2}\Big|\idotsint(Q(t)\otimes Q(t)\otimes 1V_NQ(t)\otimes 1\otimes Q(t)(x,y,z;x',y',z')\partial_t\left(\overline{u(t,z)}u(t,y')\right)\times \\
	&\quad\quad\quad\quad\quad\quad\quad\quad \times \langle\Phi,a^*_xa^*_ya_{x'}a_{z'}(N-\cN)\Phi\rangle \d x \d y \d z\d x '\d y'\d z'\Big|\\
	&\le \frac{2}{N^2} \left(\iiint|\partial_t u(t,z)|^2V_N(x-y,x-z)\|a_xa_y\sqrt{N-\cN}\Phi\|^2\d x \d y \d z\right)^\frac{1}{2}\times \\
	&\times \left(\iiint| u(t,y)|^2V_N(x-y,x-z)\|a_xa_z\sqrt{N-\cN}\Phi\|^2\d x \d y \d z\right)^\frac{1}{2} \\ 
	&\le C_t mN^{\beta-1}\langle\Phi,\d\Gamma(-\Delta)\Phi\rangle.
	\end{align*}
	Let us now decompose $\partial_t(Q(t)\otimes Q(t)\otimes 1 V_N Q(t)\otimes 1\otimes Q(t))$ into four terms. For the first term $\left(\partial_tQ(t)\otimes  Q(t) \otimes 1 V_NQ(t) \otimes 1 \otimes Q(t)\right)$ we get
	\begin{align*}
	&\frac{1}{N^2}\Big|\idotsint \left(\partial_tQ(t)\otimes  Q(t) \otimes 1 V_NQ(t) \otimes 1 \otimes Q(t)\right)(x,y,z;x',y',z')\overline{u(t,z)}u(t,y') \times \\
	&\quad\quad\quad \times \langle\Phi,a^*_xa^*_ya_{x'}a_{z'}(N-\cN)\Phi\rangle \d x \d y \d z\d x '\d y'\d z'\Big| \\
	&=\frac{1}{N^2}\Big|\idotsint (\partial_tQ(t))(x;x')V_N(x'-y,x'-z)\delta(y-y')\delta(z-z')\overline{u(t,z)}u(t,y')\times \\
	&\quad\quad\quad \times \langle\Phi,a^*_xa^*_ya_{x'}a_{z'}(N-\cN)\Phi\rangle \d x \d y \d z\d x '\d y'\d z'\Big| \\
	&\le \frac{1}{N^2} \iiiint q(t,x)q(t,x')V_N(x'-y,x'-z)|u(t,z)||u(t,y)|\times\\
	&\quad\quad\quad\quad\quad\quad\times\|a_xa_y\sqrt{N-\cN}\Phi\|\|a_{x'}a_z\sqrt{N-\cN}\Phi\| \d x \d y \d z\d x '\\
	& \le 	\frac{C_t}{N^2}\langle\Phi,\cN^2(N-\cN)\Phi\rangle^\frac{1}{2}\left(\iiint V_N(x'-y,x'-z)|u(t,y)|^2\|a_{x'}a_z\sqrt{N-\cN}\Phi\|^2 \d x '\d y \d z\right)^\frac{1}{2} \\
	&\le C_t \sqrt{mN^{\beta-1}} \langle\Phi,\cN\Phi\rangle^{1/2} \langle\Phi,\d\Gamma(-\Delta)\Phi\rangle^{1/2}\le C_t \sqrt{mN^{\beta-1}} \langle\Phi,\d\Gamma(1-\Delta)\Phi\rangle.
	\end{align*}
	The second term $\left(Q(t)\otimes  \partial_tQ(t) \otimes 1 V_NQ(t) \otimes 1 \otimes Q(t)\right)$ goes as 
	\begin{align*}
		&\frac{1}{N^2}\Big|\idotsint \left(Q(t)\otimes  \partial_tQ(t) \otimes 1 V_NQ(t) \otimes 1 \otimes Q(t)\right)(x,y,z;x',y',z')\overline{u(t,z)}u(t,y') \times \\
	&\quad\quad\quad \times \langle\Phi,a^*_xa^*_ya_{x'}a_{z'}(N-\cN)\Phi\rangle \d x \d y \d z\d x '\d y'\d z'\Big| \\
	&=\frac{1}{N^2}\Big|\idotsint (\partial_tQ(t))(y;y')V_N(x-y',x-z)\delta(x-x')\delta(z-z')\overline{u(t,z)}u(t,y')\times \\
	&\quad\quad\quad \times \langle\Phi,a^*_xa^*_ya_{x'}a_{z'}(N-\cN)\Phi\rangle \d x \d y \d z\d x '\d y'\d z'\Big| \\
	&\le \frac{1}{N^2} \iiiint q(t,y)q(t,y')V_N(x-y',x-z)|u(t,z)||u(t,y')|\times\\
	&\quad\quad\quad\quad\quad\quad\times\|a_xa_y\sqrt{N-\cN}\Phi\|\|a_xa_z\sqrt{N-\cN}\Phi\| \d x \d y \d z\d y'\\
	& \le
	\frac{C_t}{N^2}\langle\Phi,\cN^2(N-\cN)\Phi\rangle^\frac{1}{2}\left(\iiint V_N(x-y',x-z)|u(t,y')|^2\|a_xa_z\sqrt{N-\cN}\Phi\|^2 \d x \d y'\d z\right)^\frac{1}{2}
	\\
	&\le C_t \sqrt{mN^{\beta-1}} \langle\Phi,\d\Gamma(1-\Delta)\Phi\rangle.
	\end{align*}
	For the third term $\left(Q(t)\otimes  Q(t) \otimes 1 V_N\partial_tQ(t) \otimes 1 \otimes Q(t)\right)$ we see that
	\begin{align*}
	&\frac{1}{N^2}\Big|\idotsint \left(Q(t)\otimes  Q(t) \otimes 1 V_N\partial_tQ(t) \otimes 1 \otimes Q(t)\right)(x,y,z;x',y',z')\overline{u(t,z)}u(t,y') \times \\
	&\quad\quad\quad \times \langle\Phi,a^*_xa^*_ya_{x'}a_{z'}(N-\cN)\Phi\rangle \d x \d y \d z\d x '\d y'\d z'\Big| \\
	&=\frac{1}{N^2}\Big|\idotsint V_N(x-y,x-z)(\partial_tQ(t))(x;x')\delta(y-y')\delta(z-z')\overline{u(t,z)}u(t,y')\times \\
	&\quad\quad\quad \times \langle\Phi,a^*_xa^*_ya_{x'}a_{z'}(N-\cN)\Phi\rangle \d x \d y \d z\d x '\d y'\d z'\Big| \\
	&\le \frac{1}{N^2} \iiiint V_N(x-y,x-z)q(t,x)q(t,x')|u(t,z)||u(t,y)|\times\\
	&\quad\quad\quad\quad\quad\quad\times\|a_xa_y\sqrt{N-\cN}\Phi\|\|a_{x'}a_z\sqrt{N-\cN}\Phi\| \d x \d y \d z\d x '\\
	& \le
	\frac{C_t}{N^2}\left(\iiint V_N(x-y,x-z)|u(t,z)|^2\|a_xa_y\sqrt{N-\cN}\Phi\|^2 \d x \d y \d z\right)^\frac{1}{2}\langle\Phi,\cN^2(N-\cN)\Phi\rangle^\frac{1}{2} 	\\
	&\le C_t \sqrt{mN^{\beta-1}} \langle\Phi,\d\Gamma(1-\Delta)\Phi\rangle.	
	\end{align*}
	The fourth term containing $\left(Q(t)\otimes  Q(t) \otimes 1 V_N\partial_tQ(t) \otimes 1 \otimes Q(t)\right)$ can be estimated similarly.
	Hence, we conclude
	\begin{align*}
	|\langle\Phi,\partial_tR_{4,3}\Phi\rangle| \le C_t \sqrt{mN^{\beta-1}} \langle\Phi,\d\Gamma(1-\Delta)\Phi\rangle.
	\end{align*}
	Thus
	\begin{align*}
	\pm \partial_t (R_{4,3} +R_{4,3}^*)  \le C_t \sqrt{mN^{\beta-1}} \d\Gamma(1-\Delta).
	\end{align*}

	\bigskip
	
	\noindent
	$\boxed{j=5}$ For $\Phi \in \cF^{\le m}_+(t)$ we have that
	\begin{align*}
	\langle\Phi,\partial_tR_5\Phi\rangle &=
	\frac{1}{N^2}\idotsint\Big[(Q(t)\otimes Q(t)\otimes Q(t)V_NQ(t)\otimes Q(t)\otimes 1)(x,y,z;x',y',z')\partial_tu(t,z')+\\
	&\quad\quad\quad+\Big(\partial_t(Q(t)\otimes Q(t)\otimes Q(t)V_NQ(t)\otimes Q(t)\otimes 1)\Big)(x,y,z;x',y',z')u(t,z')\Big]\times \\
	&\quad\quad\quad\quad\quad\quad\quad\quad \times \langle\Phi,a^*_xa^*_ya^*_za_{x'}a_{y'}\sqrt{N-\cN -2}\Phi\rangle\d x \d y \d z\d x '\d y'\d z'.
	\end{align*}
	The term containing $\partial_tu(t,z')$ can be estimated as
		\begin{align*}
	&\frac{1}{N^2}\Big|\idotsint(Q(t)\otimes Q(t)\otimes Q(t)V_NQ(t)\otimes Q(t)\otimes 1)(x,y,z;x',y',z')\partial_tu(t,z')\times \\
	&\quad\quad\quad\quad\quad\quad\quad\quad \times \langle\Phi,a^*_xa^*_ya^*_za_{x'}a_{y'}\sqrt{N-\cN-2}\Phi\rangle \d x \d y \d z\d x '\d y'\d z'\Big| \\
	&\le\frac{1}{N^2}\left(\iiint V_N(x-y,x-z)\|a_xa_ya_z\Phi\|^2\d x\d y\d z\right)^\frac{1}{2}\times\\
	&\quad\quad\times\left(\iiint V_N(x-y,x-z)|\partial_tu(t,z)|^2\|a_xa_y\sqrt{N-\cN-2}\Phi\|^2\d x\d y\d z\right)^\frac{1}{2} \\
	&\le C_t \sqrt{mN^{\beta-1}} \langle\Phi,R_6\Phi\rangle^{1/2} \langle\Phi,\d\Gamma(1-\Delta)\Phi\rangle^{1/2}\\
	&\le C_t \sqrt{m^3 N^{5\beta-3}} \langle\Phi,\d\Gamma(1-\Delta)\Phi\rangle. 
	\end{align*}
	Here we have used the bound on $R_6$ in \eqref{eq:bound-R6} in the last estimate.
	
	For the other terms we decompose  $\partial_t(Q(t)\otimes Q(t)\otimes Q(t)V_NQ(t)\otimes Q(t)\otimes 1)$ into five terms. The first one containing $(\partial_tQ(t)\otimes Q(t)\otimes Q(t)V_NQ(t)\otimes Q(t)\otimes 1)$ can be bounded as
	\begin{align*}
	&\frac{1}{N^2}\Big|\idotsint(\partial_tQ(t)\otimes Q(t)\otimes Q(t)V_NQ(t)\otimes Q(t)\otimes 1)(x,y,z;x',y',z')u(t,z')\times \\
	&\quad\quad\quad\quad\quad\quad\quad\quad \times \langle\Phi, a^*_xa^*_ya^*_za_{x'}a_{y'}\sqrt{N-\cN -2}\Phi\rangle \d x \d y \d z\d x '\d y'\d z'\Big| \\
	&=\frac{1}{N^2} \Big|\idotsint(\partial_tQ(t))(x;x') V_N(x'-y,x'-z)\delta(y-y')\delta(z-z')u(t,z')\times \\
	&\quad\quad\quad\quad\quad\quad\quad\quad \times \langle\Phi, a^*_xa^*_ya^*_za_{x'}a_{y'}\sqrt{N-\cN -2}\Phi\rangle \d x \d y \d z\d x '\d y'\d z'\Big| \\
	&\le\frac{1}{N^2}\iiiint q(t,x)q(t,x')|V_N(x'-y,x'-z)||u(t,z)|\times \\
	&\quad\quad\quad\quad\quad\quad\quad\quad \times\|a_xa_ya_z\Phi\|\|a_{x'}a_y\sqrt{N-\cN-2}\Phi\|\d x \d y \d z\d x ' \\
	&\le \frac{C_t}{N^2}\left(\iiiint V_N(x'-y,x'-z)\|a_xa_ya_z\Phi\|^2\d x \d y \d z\d x '\right)^\frac{1}{2}\times\\
	&\quad\quad\quad\times\left(\iiiint |q(t,x)|^2V_N(x'-y,x'-z)\|a_{x'}a_y\sqrt{N-\cN-2}\Phi\|\d x \d y \d z\d x '\right)^\frac{1}{2} \\
	&\le C_tN^{\beta-1}m\langle\Phi,\d\Gamma(1-\Delta)\Phi\rangle.	\end{align*}
	The term with $(Q(t)\otimes \partial_tQ(t)\otimes Q(t)V_NQ(t)\otimes Q(t)\otimes 1)$ and $(Q(t)\otimes Q(t)\otimes \partial_tQ(t)V_NQ(t)\otimes Q(t)\otimes 1)$ can be treated similarly. The fourth term $(Q(t)\otimes Q(t)\otimes Q(t)V_N\partial_tQ(t)\otimes Q(t)\otimes 1)$ can be bounded as 
	\begin{align*}
	&\frac{1}{N^2}\Big|\idotsint(Q(t)\otimes Q(t)\otimes Q(t)V_N\partial_tQ(t)\otimes Q(t)\otimes 1)(x,y,z;x',y',z')u(t,z')\times \\
	&\quad\quad\quad\quad\quad\quad\quad\quad \times \langle\Phi, a^*_xa^*_ya^*_za_{x'}a_{y'}\sqrt{N-\cN -2}\Phi\rangle \d x \d y \d z\d x '\d y'\d z'\Big| \\
	&\le\frac{1}{N^2}\iiiint V_N(x-y,x-z)|q(t,x)q(t,x')|u(t,z)|\times \\
	&\quad\quad\quad\quad\quad\quad\quad\quad \times\|a_xa_ya_z\Phi\|\|a_{x'}a_y\sqrt{N-\cN-2}\Phi\|\d x \d y \d z\d x ' \\
	&\le \frac{C_t}{N^2}\left(\int|q(t,x')|^2\d x '\right)^\frac{1}{2}\left(\iiint V_N(x-y,x-z)\|a_xa_ya_z\Phi\|^2\d x \d y \d z\right)^\frac{1}{2} \times \\
	&\quad\quad\quad\quad\times\langle\Phi,\cN^2(N-\cN-2)\Phi\rangle^\frac{1}{2} \\
	&\le C_t \sqrt{mN^{-1}} \langle\Phi,R_6\Phi\rangle^{1/2} \langle\Phi,\cN\Phi\rangle^\frac{1}{2}  \le  C_t \sqrt{m^3 N^{4\beta-3}} \langle\Phi, \dGamma(1-\Delta)\Phi\rangle.
	\end{align*}
	The other term containing $Q(t)\otimes Q(t)\otimes Q(t)V_NQ(t)\otimes \partial_tQ(t)\otimes 1$ can be bounded in the same way. Putting everything together we get
	\begin{align*}
	|\langle\Phi,\partial_tR_5\Phi\rangle| \le C_t (mN^{\beta-1}+\sqrt{m^3 N^{5\beta-3}}) \langle\Phi,\d\Gamma(1-\Delta)\Phi\rangle.
	\end{align*}
	This means
	$$
	\partial_t (R_5+R_5^*) \le C_t (mN^{\beta-1}+\sqrt{m^3 N^{5\beta-3}})  \d\Gamma(1-\Delta).
	$$
	
	\bigskip
	
	\noindent
$\boxed{j=6}$ Let $\Phi \in \cF^{\le m}_+(t)$. Then we have
\begin{align*}
\langle\Phi,\partial_tR_6\Phi\rangle &= \frac{1}{6N^2}\idotsint\left(\partial_t(Q(t)\otimes Q(t)\otimes Q(t)V_NQ(t)\otimes Q(t)\otimes Q(t))\right)(x,y,z;x',y',z')\times \\
&\quad\quad\quad\quad\quad\quad\quad\quad \times \langle\Phi,a^*_xa^*_ya^*_za_{x'}a_{y'}a_{z'}\Phi\rangle \d x \d y \d z\d x '\d y'\d z'
\end{align*}
We can decompose $\partial_t(Q(t)\otimes Q(t)\otimes Q(t)V_NQ(t)\otimes Q(t)\otimes Q(t))$ into six terms which can be estimated all the same. Therefore, we only need to prove the bound for $\partial_tQ(t)\otimes Q(t)\otimes Q(t)V_NQ(t)\otimes Q(t)\otimes Q(t)$. We have
\begin{align*}
&\frac{1}{6N^2}\Big|\idotsint\left(\partial_tQ(t)\otimes Q(t)\otimes Q(t)V_NQ(t)\otimes Q(t)\otimes Q(t))\right)(x,y,z;x',y',z')\times \\
&\quad\quad\quad\quad\quad\quad\quad\quad \times \langle\Phi,a^*_xa^*_ya^*_za_{x'}a_{y'}a_{z'}\Phi\rangle \d x \d y \d z\d x '\d y'\d z'\Big| \\
&= \frac{1}{6N^2}\Big|\idotsint(\partial_tQ(t))(x;x')V_N(x'-y,x'-z)\delta(y-y')\delta(z-z') \times\\
&\quad\quad\quad\quad\quad\quad\quad\quad \times \langle\Phi,a^*_xa^*_ya^*_za_{x'}a_{y'}a_{z'}\Phi\rangle \d x \d y \d z\d x '\d y'\d z'\Big| \\
&\le\frac{1}{6N^2}\iiiint q(t,x)q(t,x'))V_N(x'-y,x'-z)\|a_xa_ya_z\Phi\|\|a_{x'}a_ya_z\Phi\|\d x \d y \d z\d x ' \\
&\le \frac{1}{6N^2}\|q(t,\cdot)\|_{L^\infty}\left(\iiiint V_N(x'-y,x'-z)\|a_xa_ya_z\Phi\|^2\d x \d y \d z\d x '\right)^\frac{1}{2} \times \\
&\quad\quad\quad\times\left(\iiiint |q(t,x)|^2V_N(x'-y,x'-z)\|a_{x'}a_ya_z\Phi\|^2\d x \d y \d z\d x '\right)^\frac{1}{2} \\
&\le  C_t \sqrt{mN^{\beta-1}} \langle\Phi,\d\Gamma(1-\Delta)\Phi\rangle^{1/2} \langle\Phi, R_6\Phi\rangle^{1/2} \le C_t \sqrt{m^3 N^{5\beta-3}} \langle\Phi,\d\Gamma(1-\Delta)\Phi\rangle.
\end{align*}
Here again we used the bound on $R_6$ in \eqref{eq:bound-R6}. Thus
$$
\pm \partial_t R_6 \le  C_t \sqrt{m^3 N^{5\beta-3}} \langle\Phi,\d\Gamma(1-\Delta)\Phi\rangle.
$$
This finishes the proof of \eqref{eq:errorRdt} and by that the proof of Lemma~\ref{lem:errorbounds}.
\end{proof}

\section{Proof of Theorem~\ref{thm:main1}} \label{sec:thm1}
In this section we will prove the leading order convergence stated in Theorem~\ref{thm:main1}. As in Section \ref{sec:gen-stra}, we consider  
$$\Phi_N(t)=e^{-i\int_0^t \chi(s) \d s}\,U_N(t)\Psi_N(t)=e^{-i\int_0^t \chi(s) \d s}\,U_N(t) e^{-itH_N} U_N^*(0)\Phi_{N}(0)$$
which solves \eqref{eq:eq-PhiN},
$$
i\partial_t \Phi_N(t) = \widetilde H_N(t) \Phi_N(t)= \left( \bH(t) + \frac{1}{2}\sum_{j=0}^6 (R_j + R_j^*) \right) \Phi_N(t).
$$
Note that thanks to  \eqref{eq:UN-action},   the condition  \eqref{eq:leading-order} is equivalent to 
\begin{align} \label{eq:Phi0-assume-extra}
\langle \Phi_N(0), \d\Gamma(1-\Delta)  \Phi_N(0)\rangle \le C. 
\end{align}

Let us denote by $\1^{\le m}=\1(\cN\le m)$ the projection on the truncated Fock space. From Lemmas \ref{lem:Bogoliubovbounds} and  \ref{lem:errorbounds},  for all $t>0$, $\eps>0$, $1\le m\le N$, and 
$$\eta \ge C_t \max\{ \sqrt{m N^{2\beta-1}}, \sqrt{m^3 N^{5\beta-3}}, m^2 N^{4\beta-2} \}, $$
we have
\begin{align}
\pm \1^{\le m}(\widetilde H_N(t) - \dGamma(-\Delta)) \1^{\le m} &\le  \eta  \dGamma(1-\Delta) +  \eta^{-1} C_t mN^{4\beta-1}+C_{t,\eps} (\cN+N^{\beta+\eps}), \label{eq:wHN-1}\\
\pm \1^{\le m}  [\widetilde H_N(t),\cN ] \1^{\le m} & \le \eta  \dGamma(1-\Delta) +  \eta^{-1} C_t mN^{4\beta-1}+C_{t,\eps} (\cN+N^{\beta+\eps}),  \label{eq:wHN-2}\\
\pm \1^{\le m} \partial_t \widetilde H_N(t) &\le \eta  \dGamma(1-\Delta) +  \eta^{-1} C_t mN^{4\beta-1}+C_{t,\eps} (\cN+N^{\beta+\eps}). \label{eq:wHN-3}
\end{align}
We will use the estimates \eqref{eq:wHN-1}-\eqref{eq:wHN-3} to show that $\Phi_N(t)$ essentially localizes in the low-particle sectors, which will imply the Bose-Einstein condensation on $u(t)$ for the original wave function $\Psi_N=U_N^*(t) \Phi_N(t)$.

\subsection{Step 1: Round kinetic bound for full dynamics} 
\begin{lemma} [Round kinetic bound for full dynamics] 
	\label{roughbounds} Let $\beta<1/4$ and let $\Phi_N(0)$ satisfy \eqref{eq:Phi0-assume-extra}. Then 
	\begin{align}
	\label{roughkineticbound}\langle\Phi_N(t),\d\Gamma(1-\Delta)\Phi_N(t)\rangle \le C_tN. 
	\end{align}
%	and 
%	\begin{align}
%\label{roughR6bound}\langle\Phi_N(t),\1^{\le m}R_6\1^{\le m}\Phi_N(t)\rangle \le C_t N, \quad \forall 1 \le m \le N^{1 - 2\beta}. 
%	\end{align}
\end{lemma}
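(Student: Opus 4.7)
The strategy is to exploit conservation of the many-body energy $\langle\Psi_N(t), H_N\Psi_N(t)\rangle$ together with the lower bound $H_N \ge \dGamma(-\Delta)$, which follows from $V_N \ge 0$. These two facts yield $\langle \Psi_N(t), \dGamma(-\Delta)\Psi_N(t)\rangle \le \langle\Psi_N(0), H_N\Psi_N(0)\rangle$ for every $t > 0$. Since $\cN \le N$ on $\cF_+^{\le N}(t)$, proving \eqref{roughkineticbound} reduces to two tasks: (a) transfer a kinetic bound from $\Psi_N(t)$ back to $\Phi_N(t)$, and (b) show that $\langle\Psi_N(0), H_N\Psi_N(0)\rangle \le C N$ using only the hypothesis \eqref{eq:Phi0-assume-extra}.

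For (a), the action rules \eqref{eq:UN-action} applied to the standard splitting of $\dGamma(-\Delta)$ with respect to $|u(t)\rangle\langle u(t)|$ and $Q(t)$ give
\[
 U_N(t)\dGamma(-\Delta)U_N^*(t) = \|\nabla u(t)\|^2 (N-\cN) + \sqrt{N-\cN}\,a(-Q\Delta u) + a^*(-Q\Delta u)\sqrt{N-\cN} + \dGamma(-Q\Delta Q).
\]
Since $\Phi_N(t) \in \cF_+(t)$ implies $\langle \Phi_N, \dGamma(-\Delta)\Phi_N\rangle = \langle \Phi_N, \dGamma(-Q\Delta Q)\Phi_N\rangle$, using $\|u(t)\|_{H^2} \le C_t$ from Theorem~\ref{thm:Hartree} and Cauchy-Schwarz on the linear terms yields
\[
 \langle \Phi_N(t),\dGamma(-\Delta)\Phi_N(t)\rangle \le \langle \Psi_N(t),\dGamma(-\Delta)\Psi_N(t)\rangle + C_t N.
\]

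For (b), at time zero $\Phi_N(0) = U_N(0)\Psi_N(0)$ and the identity used to derive \eqref{eq:eq-PhiN} reads
\[
 U_N(0) H_N U_N^*(0) = -i\dot U_N(0)U_N^*(0) + \bH(0) + \tfrac{1}{2}\sum_{j=0}^6(R_j+R_j^*) - \chi(0).
\]
I would bound each term against $\Phi_N(0)$ using \eqref{eq:Phi0-assume-extra}: $|\chi(0)|$ and the diagonal piece $\langle u, hu\rangle(N-\cN)$ of $-i\dot U_N U_N^*$ are both $O(N)$; the linear $\sqrt{N-\cN}\,a(Qhu)$ cross pieces are $O(\sqrt{N})$ by Cauchy-Schwarz since $\langle \cN\rangle_{\Phi_N(0)} \le C$; Lemma~\ref{lem:Bogoliubovbounds} with $\eta = 1$ gives $|\langle \bH(0)\rangle_{\Phi_N(0)}| \le C + CN^{\beta+\eps}$; and Lemma~\ref{lem:errorbounds} applied on the full truncated space (i.e.\ $m=N$) with the forced choice $\eta = C_t N^{4\beta}$ gives $|\langle R_j+R_j^*\rangle_{\Phi_N(0)}| \le C_t N^{4\beta}$. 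Summing, $\langle \Psi_N(0), H_N\Psi_N(0)\rangle \le C_t(N + N^{\beta+\eps} + N^{4\beta}) \le C_t N$ precisely when $4\beta < 1$. Combining with (a) and $\langle \cN\rangle_{\Phi_N(t)} \le N$ concludes the proof.

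The main obstacle is the $R_j$ estimate in step (b): on $\cF_+^{\le N}$, Lemma~\ref{lem:errorbounds} forces $\eta \ge C_t N^{4\beta}$, so even against a state well-localized in low-particle sectors the cumulative contribution is $O(N^{4\beta})$, and this is exactly what dictates the threshold $\beta < 1/4$ in the hypothesis. A sharper bound than $N$ on $\langle \Phi_N(t),\dGamma(1-\Delta)\Phi_N(t)\rangle$ will later be obtained for the \emph{truncated} dynamics, where the parameter $m$ can be taken much smaller than $N$ and the smallness conditions in Lemma~\ref{lem:errorbounds} become much more favorable.
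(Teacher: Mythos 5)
Your proof is correct and follows essentially the same route as the paper: energy conservation, $H_N \ge \dGamma(-\Delta)$ from $V_N\ge 0$, an $O(N)$ comparison between the many-body kinetic energy and the excitation kinetic energy via the $U_N$ conjugation, and the initial-time bound $\langle\Psi_N(0),H_N\Psi_N(0)\rangle\le C_t N$ obtained from \eqref{eq:Phi0-assume-extra} and the operator estimates on $\widetilde H_N$. The only cosmetic differences are that the paper expresses the transfer in step (a) through the one-body inequality $Q(t)(1-\Delta)Q(t)\le C_t(1-\Delta)$ rather than expanding $U_N\dGamma(-\Delta)U_N^*$, and packages your piecewise estimate of $U_N(0)H_N U_N^*(0)$ in step (b) directly through \eqref{eq:wHN-1} with $m=N$, $\eta=C_tN^{4\beta}$.
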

\begin{proof}
	First of all by the energy conservation of the Schr\"odinger dynamics, we have
	\begin{align}\label{eq:rba-0}
	\langle \Psi_N(t),  H_N(t)  \Psi_N(t)\rangle = \langle \Psi_N(0), H_N(t)  \Psi_N(0)\rangle.
	\end{align}
	Moreover, from the explicit computation \eqref{eq:UNHNUN} we have the simple estimate
	\begin{align}\label{eq:rba-1}
	\langle \Psi_N(t), H_N \Psi_N(t)\rangle = \langle \Phi_N(t), U_N H_N U_N^* \Phi_N(t)\rangle =  \langle \Phi_N(t), \widetilde H_N(t)  \Phi_N(t)\rangle + O(C_t  N).
	\end{align}
	
	Next, using \eqref{eq:wHN-1} with $m=N$ and $\eta=C_t N^{4\beta}$ we find that 
	$$
        \pm \widetilde H_N(t) \le C_t N^{4\beta} \dGamma(1-\Delta) +  C_t N.  
	$$
	Therefore, the assumption \eqref{eq:Phi0-assume-extra} ensures that
	$$
	\pm \langle \Phi_N(0), \widetilde H_N(0)  \Phi_N(0)\rangle \le CN.
	$$ 
	
	Combining the latter estimate with \eqref{eq:rba-0} and \eqref{eq:rba-1} we deduce that 
\begin{align}\label{eq:rba-2}
	 \langle \Psi_N(t),  H_N  \Psi_N(t)\rangle \le C_t N
\end{align}
	Since $V_N\ge 0$, we then obtain
	\begin{equation}
	\label{roughstart}
	\langle\Psi_N(t), \dGamma(-\Delta) \Psi_N(t)\rangle  \le C_t N.
	\end{equation}
	
	Next, by decomposing $1=P(t)+Q(t)$, with $P(t)=|u(t)\rangle \langle u(t)|$ we have the Cauchy-Schwarz inequality
	\begin{align*}
	-\Delta &= P(t) (-\Delta)P(t) + Q(t) (-\Delta)Q(t) + P(t) (-\Delta)Q(t) + Q(t) (-\Delta) P(t) \\
	&\ge (1-\eta^{-1}) P(t) (-\Delta)P(t) + (1-\eta) Q(t) (-\Delta)Q(t)  , \quad \forall \eta>0.
		\end{align*}
Taking $\eta=1/2$ and using $P(t) (-\Delta) P(t)=\|\nabla u(t,\cdot)\|_{L^2}^2 P(t)$ we obtain
$$
Q(t) (1-\Delta)Q(t)  \le C_t (1-\Delta).
$$
Consequently,
$$
\dGamma(Q(t) (1-\Delta)Q(t) ) \le 2 \dGamma (-\Delta) + C_t \cN.
$$
Thus from \eqref{roughstart} we deduce \eqref{roughkineticbound}:
$$
\langle \Phi_N, \dGamma(1-\Delta) \Phi_N\rangle = \langle \Psi_N,\dGamma(Q(t) (1-\Delta)Q(t) ) \Psi_N\rangle \le   C_t \langle \Psi_N,  \dGamma (1-\Delta) \Psi_N\rangle \le C_t N. 
$$

\end{proof}

\subsection{Step 2: Improved kinetic bound for truncated dynamics} \label{sec:trunc-dyn} For every $M\ge 1$, we introduce an intermediate dynamics on truncated Fock space $\cF^{\le M}_+(t)$. 
\begin{align} \label{eq:eq-PhiNM}
i\partial_t\Phi_{N,M}(t) = \1^{\le M}\widetilde H_N(t)\1^{\le M}\Phi_{N,M}(t), \quad\quad \Phi_{N,M}(0) = \1^{\le M}\Phi_N(0).
\end{align} 

Our idea is that if $M$ is significantly smaller than $N$, then we will have a better control on the kinetic energy of the truncated dynamics $\Phi_{N,M}$. 

\begin{lemma}[Refined kinetic bound for truncated dynamics] 	\label{lem:truncated-dynamics} Let $\Phi_N(0)$ satisfy \eqref{eq:Phi0-assume-extra}. Then for all $1 \ll M \ll N^{1-2\beta}$, and for all $t>0$, $\eps>0$ we have 
	\begin{equation} \label{eq:truncated-dynamics-kinetic}
	\left\langle\Phi_{N,M}(t),\d\Gamma(1-\Delta) \Phi_{N,M}(t)\right\rangle \le C_{t,\eps} \left( MN^{4\beta-1} + N^{\beta+\eps}\right).
	\end{equation}
%	and
%	\begin{equation} \label{eq:truncated-dynamics-norm}
%	\|\Phi_N(t) -\Phi_{N,M}(t)\|^2 \le C_{t,\eps} \left( N^{4\beta-1} +N^{(4\beta -1)/2}+ \frac{N^{2\beta}}{\sqrt{M}} + \frac{N^{(1+\beta+\eps)/2}}{M} \right).
%	\end{equation}
\end{lemma}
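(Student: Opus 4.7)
My plan is a Gronwall-type argument applied to a shifted version of the truncated Hamiltonian. The decisive point is that, because of the assumption $M\ll N^{1-2\beta}$, the three threshold requirements on the parameter $\eta$ in the estimates \eqref{eq:wHN-1}--\eqref{eq:wHN-3}, namely $\eta \ge C_t\max\{\sqrt{MN^{2\beta-1}},\sqrt{M^3N^{5\beta-3}}, M^2N^{4\beta-2}\}$, are all $o(1)$ as $N\to\infty$ (the first is the binding one, and the others follow from it together with $\beta<1/6$). Hence one may take $\eta = 1/2$ for all sufficiently large $N$, so that the three operator bounds are simultaneously usable on $\cF_+^{\le M}(t)$. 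I would then introduce the shifted observable
$$\mathcal{B}(t) := \1^{\le M}\widetilde H_N(t)\1^{\le M} + C_{t,\eps}\bigl(\cN + MN^{4\beta-1} + N^{\beta+\eps}\bigr),$$
with $C_{t,\eps}>0$ large enough that \eqref{eq:wHN-1} gives the coercivity lower bound $\mathcal{B}(t)\ge \tfrac12\,\d\Gamma(1-\Delta)$ on $\cF_+^{\le M}(t)$. The virtue of this choice is that the commutator of $\mathcal{B}(t)$ with the truncated generator collapses: $\1^{\le M}\widetilde H_N(t)\1^{\le M}$ commutes with itself, and $\1^{\le M}$ commutes with $\cN$, so the commutator reduces to $C_{t,\eps}\1^{\le M} i[\widetilde H_N(t),\cN]\1^{\le M}$, which is precisely the object controlled by \eqref{eq:wHN-2}.

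Next I would differentiate $\langle\Phi_{N,M}(t),\mathcal{B}(t)\Phi_{N,M}(t)\rangle$ along \eqref{eq:eq-PhiNM}. The commutator term is bounded by \eqref{eq:wHN-2} with $\eta=1/2$ and $m=M$; the partial derivative $\partial_t\mathcal{B}(t)$ involves $\1^{\le M}\partial_t\widetilde H_N(t)\1^{\le M}$, bounded analogously by \eqref{eq:wHN-3}, plus a harmless scalar contribution from $\partial_t C_{t,\eps}$. Every error produced is of the form $\tfrac12\,\d\Gamma(1-\Delta) + C_{t,\eps}\bigl(\cN + MN^{4\beta-1} + N^{\beta+\eps}\bigr)$, which by the coercivity lower bound is itself $\le C_{t,\eps}\mathcal{B}(t)$. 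This yields the Gronwall-type inequality
$$\frac{d}{dt}\langle\Phi_{N,M}(t),\mathcal{B}(t)\Phi_{N,M}(t)\rangle \le C_{t,\eps}\langle\Phi_{N,M}(t),\mathcal{B}(t)\Phi_{N,M}(t)\rangle.$$
The matching upper bound $\mathcal{B}(0) \le C_\eps\bigl(\d\Gamma(1-\Delta) + MN^{4\beta-1} + N^{\beta+\eps}\bigr)$ (also a direct consequence of \eqref{eq:wHN-1}) combined with the initial assumption \eqref{eq:Phi0-assume-extra} shows that $\langle\Phi_{N,M}(0),\mathcal{B}(0)\Phi_{N,M}(0)\rangle \le C_\eps\bigl(MN^{4\beta-1} + N^{\beta+\eps}\bigr)$, so integration and the coercivity bound deliver \eqref{eq:truncated-dynamics-kinetic}.

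I expect the main technical obstacle to be the simultaneous absorption of the three independent $\eta\,\d\Gamma(1-\Delta)$ errors coming from \eqref{eq:wHN-1}, \eqref{eq:wHN-2}, and \eqref{eq:wHN-3} into the $\tfrac12\,\d\Gamma(1-\Delta)$ that coercivity reserves for the Gronwall step. This forces $\eta$ to be a small but genuinely bounded constant rather than merely $o(1)$; however, the hypothesis $M\ll N^{1-2\beta}$ makes each threshold $o(1)$, leaving ample room. In effect, the truncation parameter $M$ plays a dual role: it makes the operator bounds of Lemma~\ref{lem:errorbounds} usable with an $O(1)$ choice of $\eta$, and it ensures that all resulting constant errors are small enough to close the Gronwall loop on the functional $\mathcal{B}(t)$.
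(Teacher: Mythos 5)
Your proof is essentially identical to the paper's: you define the same shifted observable (the paper calls it $A(t)$), exploit $M\ll N^{1-2\beta}$ to take $\eta=1/2$ in \eqref{eq:wHN-1}--\eqref{eq:wHN-3}, establish the two-sided quadratic form bounds and the commutator/derivative bounds needed for Gronwall, and conclude using the initial assumption \eqref{eq:Phi0-assume-extra}. The only cosmetic difference is that you attribute the dominance of $\sqrt{MN^{2\beta-1}}$ among the three $\eta$-thresholds partly to $\beta<1/6$, whereas $M\ll N^{1-2\beta}$ alone suffices; this does not affect the argument.
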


\begin{proof} When $M\ll N^{1-2\beta}$, we have
$$
\max\{ \sqrt{MN^{2\beta-1}}, \sqrt{M^3 N^{5\beta-3}}, M^2 N^{4\beta-2}\} = \sqrt{MN^{2\beta-1}} \ll 1. 
$$
Therefore, we can apply \eqref{eq:wHN-1}-\eqref{eq:wHN-3} with $\eta=1/2$ and obtain 
	\begin{align*}
	\pm\1^{\le M}\left(\widetilde H_N(t) - \d\Gamma(-\Delta) \right)\1^{\le M} &\le \frac{1}{2} \d\Gamma(1-\Delta) + C_{t,\eps}\left( \cN + MN^{4\beta-1} + N^{\beta + \eps}\right), \\
	\pm\1^{\le M}\partial_t\widetilde H_N(t)\1^{\le M} &\le \frac{1}{2} \d\Gamma(1-\Delta) + C_{t,\eps}\left( \cN + MN^{4\beta-1} + N^{\beta + \eps}\right),\\
		\pm\1^{\le M} i[\widetilde H_N(t),\cN]\1^{\le M} &\le \frac{1}{2} \d\Gamma(1-\Delta) + C_{t,\eps}\left( \cN + MN^{4\beta-1} + N^{\beta + \eps}\right).
	\end{align*}

	Consequently, if we denote
	\begin{align*}
	A(t) := \1^{\le M}\widetilde H_N(t)\1^{\le M} + C_{t,\eps}\left(\cN + MN^{4\beta-1}+ N^{\beta + \eps}\right), 
	\end{align*}
	then, with $C_{t,\eps}$ large enough, we have the quadratic form estimates
\begin{align} 
	\frac{1}{2} \d\Gamma(1-\Delta) &\le A(t)  \le C_{t,\eps} \left(\d\Gamma(1-\Delta) +  MN^{4\beta-1} + N^{\beta + \eps}\right) ,\label{eq:up-lo-At} \\
	\pm \partial_t A(t) &\le C_{t,\eps }A(t), \quad  \pm i[A(t), \1^{\le M}\widetilde H_N(t)\1^{\le M} ] \le A(t). \label{eq:up-lo-dtAt} 
	\end{align}
	
 Using now the equation \eqref{eq:eq-PhiNM} for $\Phi_{N,M}(t)$ and   \eqref{eq:up-lo-dtAt}  we obtain
	\begin{align*}
	&\frac{\d}{\d t}\big\langle\Phi_{N,M}(t),A(t)\Phi_{N,M}(t)\big\rangle \\&= 
	\big\langle\Phi_{N,M}(t),i[A(t), \1^{\le M}\widetilde H_N(t)\1^{\le M} ]\Phi_{N,M}(t)\big\rangle + \big\langle\Phi_{N,M}(t),\partial_tA(t)\Phi_{N,M}(t)\big\rangle \\
	&\le C_{t,\eps} \big\langle\Phi_{N,M}(t),A(t)\Phi_{N,M}(t)\big\rangle.
	\end{align*}
	Therefore, Gronwall's inequality implies that
	\begin{equation}
	\nn\big\langle\Phi_{N,M}(t),A(t)\Phi_{N,M}(t)\big\rangle \le C_{t,\eps} \big\langle\Phi_{N,M}(0),A(0)\Phi_{N,M}(0)\big\rangle.
	\end{equation}
	Here recall that $\Phi_{N,M}(0) = \1^{\le M}\Phi(0)$. 	Combining with \eqref{eq:up-lo-At} we find that
	\begin{align} \label{eq:PhiNM<=Phi0}
 &\left\langle \Phi_{N,M}(t), \d\Gamma(1-\Delta)  \Phi_{N,M}(t)\right\rangle\nn\\
 &\le C_{t,\eps} \left\langle \1^{\le M} \Phi_N(0),\left(\d\Gamma(1-\Delta) + MN^{4\beta-1} + N^{\beta + \eps}\right) \1^{\le M} \Phi_N(0) \right\rangle.
	\end{align}
	The kinetic bound \eqref{eq:truncated-dynamics-kinetic} then follows from the technical assumption \eqref{eq:Phi0-assume-extra}.	
	\bigskip
\end{proof}

\subsection{Step 3: Truncated vs. full dynamics} \label{sec:trunc-full-dyn} Now we show that if $M$ is sufficiently large, then the truncated dynamics $\Phi_{N,M}$ in \eqref{eq:eq-PhiNM} is close to the full dynamics $\Phi_N$. 

\begin{lemma}[Norm approximation to full dynamics] 	\label{lem:truncated-vs-full-dynamics} Let $\beta<1/4$ and let $\Phi_N(0)$ satisfy \eqref{eq:Phi0-assume-extra}. Then for all $1 \ll M \ll N^{1-2\beta}$, and for all $t>0$, $\eps>0$ we have 
	\begin{equation} \label{eq:truncated-dynamics-norm}
	\|\Phi_N(t) -\Phi_{N,M}(t)\|^2 \le C_{t,\eps} \left( N^{(4\beta -1)/2}+ \frac{N^{2\beta}}{\sqrt{M}} + \frac{N^{(1+\beta+\eps)/2}}{M} \right).
	\end{equation}
\end{lemma}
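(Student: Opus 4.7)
Set $\xi(t) := \Phi_N(t) - \Phi_{N,M}(t)$. Since $\Phi_{N,M}(t)$ is supported on $\cN\le M$, the identity $\1^{\le M}\widetilde H_N(t) \1^{\le M}\Phi_{N,M}(t) = \1^{\le M}\widetilde H_N(t) \Phi_{N,M}(t)$ gives
\begin{equation*}
i\partial_t \xi(t) = \widetilde H_N(t)\xi(t) + \1^{>M}\widetilde H_N(t)\Phi_{N,M}(t).
\end{equation*}
Self-adjointness of $\widetilde H_N(t)$ and $\1^{>M}\xi(t) = \1^{>M}\Phi_N(t)$ then yield
\begin{equation*}
\partial_t \|\xi(t)\|^2 = 2\,\mathrm{Im}\,\langle \1^{>M}\Phi_N(t),\widetilde H_N(t)\Phi_{N,M}(t)\rangle,
\end{equation*}
so that $\|\xi(t)\|^2 \le \|\xi(0)\|^2 + 2\int_0^t |\langle \1^{>M}\Phi_N(s),\widetilde H_N(s)\Phi_{N,M}(s)\rangle|\,\d s$. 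Markov applied to \eqref{eq:Phi0-assume-extra} yields $\|\xi(0)\|^2 = \|\1^{>M}\Phi_N(0)\|^2 \le C/M$.

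The core task is to bound the matrix element on the right. Splitting $\widetilde H_N(t) = \bH(t) + \frac{1}{2}\sum_{j=0}^6(R_j+R_j^*)$, I would handle each atomic piece $T$ via Cauchy--Schwarz in the weighted form
\begin{equation*}
|\langle \1^{>M}\Phi_N,T\Phi_{N,M}\rangle| \le \big\|(1+\dGamma(1-\Delta))^{a}\1^{>M}\Phi_N\big\|\cdot \big\|(1+\dGamma(1-\Delta))^{-a}T\Phi_{N,M}\big\|,
\end{equation*}
with exponent $a\in\{0,1/2\}$ chosen according to $T$. The first factor is controlled through Markov combined with the round kinetic bound \eqref{roughkineticbound}: $\|\1^{>M}\Phi_N\|^2 \le M^{-1}\langle \Phi_N,\cN\Phi_N\rangle \le C_tN/M$ and, with a weight $\dGamma(1-\Delta)$ inserted, $\le C_tN^2/M$. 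The second factor is controlled by the operator estimates of Lemmas~\ref{lem:Bogoliubovbounds} and~\ref{lem:errorbounds} together with the refined kinetic bound \eqref{eq:truncated-dynamics-kinetic} for $\Phi_{N,M}$. The three terms in \eqref{eq:truncated-dynamics-norm} track three distinct sources: the contribution $N^{(4\beta-1)/2}$ arises from the cubic piece $R_{3,1}$ after optimizing the pairing parameter $\eta$ against Lemma~\ref{lem:R3kernelbound}; the contribution $N^{2\beta}/\sqrt M$ comes from the $K_2$-pairing term in $\bH(t)$ and the quadratic $R_j$'s, where the kernel estimate \eqref{eq:Kernelbound}, the refined kinetic bound on $\Phi_{N,M}$, and Markov on $\Phi_N$ jointly produce this combination; and the contribution $N^{(1+\beta+\eps)/2}/M$ is the pure Markov--round-kinetic contribution, invoked for pieces for which the refined bound on $\Phi_{N,M}$ brings no improvement and one must pay the full $\sqrt N$ from Lemma~\ref{roughbounds} at the cost of an extra $1/\sqrt M$ from Markov.

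Integrating in time and applying Gronwall to the resulting inequality of the form $\partial_t\|\xi\|^2 \lesssim A(t) + B(t)\|\xi\|$ then produces \eqref{eq:truncated-dynamics-norm}. The main obstacle is, as usual, the cubic term $R_{3,1}$: the naive Cauchy--Schwarz bound on $\|\1^{>M}R_{3,1}\Phi_{N,M}\|$ is far too crude. Following the proof of Lemma~\ref{lem:errorbounds}, one must couple the operator inequality $\pm(k_z+k_z^*)\le C_tN^\beta(-\Delta)$ with the Hilbert--Schmidt bound $\|(1-\Delta)^{-1/2}k_z\|_{L^2}^2 \le C_tN^{4\beta}$ through the pairing-term estimate of Lemma~\ref{lem:Pairtermbound}, trade kinetic energy against $\eta$, and finally optimize. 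All other $R_j$ and the $K_2$-pairing in $\bH(t)$ admit the same Cauchy--Schwarz scheme but without this delicate three-body kernel manipulation.
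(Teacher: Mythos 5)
Your derivative identity
\begin{equation*}
\partial_t\|\xi(t)\|^2 = 2\,\mathrm{Im}\,\langle \1^{>M}\Phi_N(t),\widetilde H_N(t)\Phi_{N,M}(t)\rangle
\end{equation*}
is correct and is an attractive shortcut, but the way you then propose to estimate it cannot reproduce \eqref{eq:truncated-dynamics-norm}. The paper's proof does \emph{not} work with the sharp projector $\1^{>M}$ at all: it decomposes $1-\Re\langle\Phi_N,\Phi_{N,M}\rangle$ through $\1^{\le m}$, differentiates to get the commutator $[\widetilde H_N(t),\1^{\le m}]$, and then \emph{averages over} $m\in[M/2,M-3]$. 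The averaging is the crucial step (Lemma~\ref{averagebound}): since the commutator $[\widetilde H_N(t),\1^{\le m}]$ is supported in a thin number band near $\cN\approx m$, the $\sim M/2$ summands have essentially disjoint supports, so their average is bounded by $\frac{C_{t,\eps}}{M}\bigl(\dGamma(1-\Delta)+MN^{4\beta-1}+N^{\beta+\eps}\bigr)$ — the factor $1/M$ coming precisely from the smearing. The subsequent Cauchy–Schwarz between $\Phi_N$ and $\Phi_{N,M}$ then yields $\frac{1}{M}\sqrt{N}\sqrt{MN^{4\beta-1}+N^{\beta+\eps}}$, which after integration produces the stated terms.

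Your scheme irretrievably loses this $1/M$. Consider the dominant cubic term. With weight $a=1/2$, the first factor $\|(1+\dGamma(1-\Delta))^{1/2}\1^{>M}\Phi_N\|$ gains nothing from $\1^{>M}$ (your own display says it is $\le C_tN^2/M$, but the clean bound is simply $\le C_tN$ from \eqref{roughkineticbound}, since the Markov gain is wiped out by the factor $\cN\le N$ when you try to insert $\cN/M$). The second factor is $\lesssim\sqrt{MN^{4\beta-1}+N^{\beta+\eps}}$ by Lemma~\ref{lem:truncated-dynamics}, so the product is $\sqrt{N(MN^{4\beta-1}+N^{\beta+\eps})}$ — a factor $M$ worse than the paper. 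With weight $a=0$ the first factor does gain $\sqrt{N/M}$, but the second factor $\|T\Phi_{N,M}\|$ requires a bound on $T^*T$, and the estimates in Lemmas~\ref{lem:Bogoliubovbounds} and~\ref{lem:errorbounds} are purely quadratic-form bounds on $\pm(T+T^*)$; a crude calculation of $\|\cE_3\Phi_{N,M}\|$ gives roughly $M^{3/2}N^{3\beta-1/2}$, leading after integration to $\|\xi\|^2\lesssim MN^{3\beta}$, which diverges for $M\sim N^{1-2\beta}$. In short, a sharp cutoff loses what the averaged cutoff gains, and without the averaging trick the derivative bound is not summable. (Two smaller points: since the right side of your $\partial_t\|\xi\|^2$ identity does not depend on $\xi$, the argument is a direct time-integration, not a Gronwall argument; and the $N^{(4\beta-1)/2}$ term in \eqref{eq:truncated-dynamics-norm} arises in the paper from the tail estimate $\|\1^{>m}\Phi_{N,M}\|$ in \eqref{manyparticlesector}, not from an $\eta$-optimization applied directly to $R_{3,1}$.)
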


\begin{proof} Using the mass conservation 
$$\|\Phi_N(t)\| = \|\Phi_N(0)\| = \|\1^{\le N}\Phi(0)\|\le 1, \quad \|\Phi_{N,M}(t)\| = \|\Phi_{N,M}(0)\| = \|\1^{\le M}\Phi(0)\|\le1$$
we have
\begin{align} \label{eq:norm-innner}
\|\Phi_N(t) -\Phi_{N,M}(t)\|^2 \le 2(1-\Re\langle\Phi_N(t),\Phi_{N,M}(t)\rangle).
\end{align}
To bound this expression we will use a method from \cite{NamNap-17/2,BreNamNapSch-17} where we will separate the sectors with large and small particle numbers. Take $M/2 \le m \le M-3$ and write
\begin{align} \label{eq:decom-M-m}
\langle\Phi_N(t),\Phi_{N,M}(t)\rangle = \langle\Phi_N(t),\1^{\le m}\Phi_{N,M}(t)\rangle + \langle\Phi_N(t),\1^{> m}\Phi_{N,M}(t)\rangle.
\end{align}
The term containing large particle numbers can be bounded using the Cauchy-Schwarz inequality and the kinetic bound \eqref{eq:truncated-dynamics-kinetic}, which gives
\begin{align}
|\langle \Phi_N(t),\1^{>m}\Phi_{N,M}(t)\rangle| &\le 
\|\Phi_N(t)\| \| \1^{>m} \Phi_{N,M}\| \nn\\
&\le \left\langle\Phi_{N,M}(t),\frac{\cN}{m}\Phi_{N,M}(t)\right\rangle^\frac{1}{2} \nn\\
&\le C_{t,\eps}  \sqrt{\frac{1}{m} \left( MN^{4\beta-1} + N^{\beta+\eps}\right)} \nn\\ \label{manyparticlesector}
&\le C_{t,\eps} \Big( N^{(4\beta -1)/2} + \frac{N^{(\beta+\eps)/2}}{\sqrt{M}}  \Big).
\end{align}
For the few particle sector we see that
\begin{align} \label{eq:fewparticlesector}
\frac{d}{dt}\langle\Phi_N(t),\1^{\le m}\Phi_{N,M}(t)\rangle &= i\Big\langle\Phi_N(t),\left(\widetilde H_N(t)\1^{\le m}- \1^{\le m}\1^{\le M}\widetilde H_N(t)\1^{\le M}\right)\Phi_{N,M}(t)\Big\rangle \nn\\
&=\Big\langle\Phi_{N}(t),i[\widetilde H_N(t),\1^{\le m}]\Phi_{N,M}(t)\Big\rangle.
\end{align}
Here we have used the fact that
\begin{equation}
\label{truncatedHamiltonianidentity}
\1^{\le m} \1^{\le M}\widetilde H_N(t)\1^{\le M} = \1^{\le m}\widetilde H_N(t)
\end{equation}
which is true because $m\le M - 3 $ and $\widetilde H_N(t)$ contains at most 3 creation and at most 3 annihilation operators.
To bound this expression we will average over $m\in[M/2, M-3]$. The key point of this argument is summarized in the following lemma.

\begin{lemma} \label{averagebound}
	For $1\le M\ll N^{1-2\beta}$ we have the operator bound
\begin{equation}
\pm \frac{1}{M/2 - 2}\sum_{m=M/2}^{M-3} i[\widetilde H_N(t),\1^{\le m}] \le \frac{C_{t,\eps}}{M}\left(\d\Gamma(1-\Delta) + MN^{4\beta-1} + N^{\beta+\eps}\right).
\end{equation}
\end{lemma} 

Let us postpone the proof of Lemma \ref{averagebound} and proceed to conclude Lemma \ref{lem:truncated-dynamics}. Using \eqref{eq:fewparticlesector}, Lemma \ref{averagebound} and the kinetic estimates in Lemmas \ref{roughbounds}, \eqref{lem:truncated-dynamics}, we obtain
	\begin{align*}
	&\Big|\frac{1}{M/2 -1}\sum_{m=M/2}^{M-3}\frac{d}{dt}\langle\Phi_N(t),\1^{\le m}\Phi_{N,M}(t)\rangle\Big| \\
	& = \Big|\frac{1}{M/2 -1}\sum_{m=M/2}^{M-3}\Big\langle\Phi_{N}(t),i[\widetilde H_N(t),\1^{\le m}]\Phi_{N,M}(t)\Big\rangle\Big| \\
	&\le \frac{C_{t,\eps}}{M}\Big\langle\Phi_N(t),\left(\d\Gamma(1-\Delta) +  MN^{4\beta-1} + N^{\beta+\eps}\right)\Phi_N(t)\Big\rangle^\frac{1}{2}\times\\
	&\quad \times\Big\langle\Phi_{N,M}(t),\left(\d\Gamma(1-\Delta) +   MN^{4\beta-1} + N^{\beta+\eps}\right)\Phi_{N,M}(t)\Big\rangle^\frac{1}{2} \\
	&\le  \frac{C_{t,\eps}}{M} \sqrt{ N +  MN^{4\beta-1} + N^{\beta+\eps}} \sqrt{ MN^{4\beta-1} + N^{\beta+\eps}}\\
	&\le C_{t,\eps} \left( N^{4\beta-1} + \frac{N^{2\beta}}{\sqrt{M}} + \frac{N^{(1+\beta+\eps)/2}}{M} \right). 
	\end{align*}
	Furthermore, the assumption \eqref{eq:Phi0-assume-extra} ensures that 
	\begin{align*}
	\langle\Phi_N(0),\1^{\le m}\Phi_{N,M}(0)\rangle &= \langle\Phi_N(0),\1^{\le m} \Phi_N (0)\rangle =  1 - \langle\Phi(0),\1^{> m}\Phi(0)\rangle  \\
	&\ge 1 - \left\langle\Phi(0),\frac{\cN}{m}\Phi(0)\right\rangle \ge 1 -\frac{C}{M}.
	\end{align*}
	Hence, we get
	\begin{align}
	&\Re\frac{1}{M/2 -1}\sum_{m=M/2}^{M-3}\langle\Phi_N(t),\1^{\le m}\Phi_{N,M}(t)\rangle  \nn\\
	&= \langle\Phi_N(0),\1^{\le m}\Phi_{N,M}(0)\rangle + \int_0^t \d s \Re \frac{1}{M/2 -1}\sum_{m=M/2}^{M-3}\frac{d}{ds}\langle\Phi_N(s),\1^{\le m}\Phi_{N,M}(s)\rangle\nn\\
	&\ge 1 - \frac{C}{M} - C_{t,\eps} \left( N^{4\beta-1} + \frac{N^{2\beta}}{\sqrt{M}} + \frac{N^{(1+\beta+\eps)/2}}{M} \right).   \label{lowparticlebound}
	\end{align}
	Moreover, from \eqref{manyparticlesector} we immidiately see that 
	\begin{align*}
	\Re\frac{1}{M/2 -1}\sum_{m=M/2}^{M-3}\langle\Phi_N(t),\1^{> m}\Phi_{N,M}(t)\rangle \ge -C_{t,\eps} \Big( N^{(4\beta -1)/2} + \frac{N^{(\beta+\eps)/2}}{\sqrt{M}}  \Big).
	\end{align*}
	Summing these two estimates and using the decomposition \eqref{eq:decom-M-m}, we get 
	\begin{align*}
	\Re\langle\Phi_N(t),\Phi_{N,M}(t)\rangle \ge 1 - C_{t,\eps} \left( N^{(4\beta -1)/2}+ \frac{N^{2\beta}}{\sqrt{M}} + \frac{N^{(1+\beta+\eps)/2}}{M} \right). 
	\end{align*}
	Thus in conclusion, \eqref{eq:norm-innner} implies that 
	\begin{align*}
	\|\Phi_N(t) -\Phi_{N,M}(t)\|^2 \le C_{t,\eps} \left( N^{(4\beta -1)/2}+ \frac{N^{2\beta}}{\sqrt{M}} + \frac{N^{(1+\beta+\eps)/2}}{M} \right).
	\end{align*}
	This finishes the proof of Lemma \ref{lem:truncated-dynamics}. 
\end{proof}

It remains to give 

\begin{proof}[Proof of Lemma \ref{averagebound}]
	We write
	\begin{align*}
	[\widetilde H_N(t),\1^{\le m}] = \1^{>m}\widetilde H_N(t)\1^{\le m} - \1^{\le m}\widetilde H_N(t)^{> m}.
	\end{align*}
	Since both terms can be treated similarly we will only handle the first one. Let
	\begin{align*}
	&\cE_1 = a^*\left(Q(t)\frac{1}{2}\iint|u(t)|^2V_N|u(t)|^2\d y \d z\,u(t)\right) \sqrt{N- \cN} \frac{(N-\cN -1)^2 - (N-\cN -1) -N^2}{N^2} \\
	&\quad\quad\quad +\frac{1}{N^2} \iiiint (Q(t)\otimes Q(t)\int|u(t,z)|^2V_N\d zQ(t)\otimes 1)(x,y;x',y') u(t,y')\times\\ &\quad\quad\quad\quad\quad\quad\quad\quad\times a^*_xa^*_ya_{x'}\d x dy\d x '\d y'\sqrt{N-\cN}(N-\cN-1) \\
	&\quad\quad\quad+\frac{1}{2N^2} \idotsint(Q(t)\otimes Q(t)\otimes 1 V_N 1\otimes 1 \otimes Q(t))(x,y,z;x',y',z') \overline{u(t,z)}u(t,x')u(t,z') \times \\
	&\quad\quad\quad\quad\quad\quad\quad\quad\times a^*_xa^*_ya_{z'}\d x \d y \d z\d x '\d y'\d z'\sqrt{N -\cN}(N-\cN-1) + h.c. \\
	&\quad\quad\quad+\frac{1}{2N^2}\idotsint(Q(t)\otimes Q(t)\otimes Q(t)V_NQ(t)\otimes 1\otimes 1)(x,y,z;x',y',z')u(t,y')u(t,z')\times \\
	&\quad\quad\quad\quad\quad\quad\quad\quad \times a^*_xa^*_ya^*_za_{x'}\sqrt{N-\cN -1}\sqrt{N-\cN}\d x \d y \d z\d x '\d y'\d z' + h.c. + \\
	&\cE_2 = \frac{1}{2} \iint K(x,y) a_x^* a_y^* \d x \d y\frac{\sqrt{N-\cN -1} \sqrt{N - \cN}(N -\cN -2)}{N^2} \\
	&\quad\quad\quad+\frac{1}{2N^2}\idotsint(Q(t)\otimes Q(t)\otimes Q(t)V_NQ(t)\otimes Q(t)\otimes 1)(x,y,z;x',y',z')u(t,z')\times \\
	&\quad\quad\quad\quad\quad\quad\quad\quad \times a^*_xa^*_ya^*_za_{x'}a_{y'}\sqrt{N-\cN -2}\d x \d y \d z\d x '\d y'\d z' + h.c. \\
    &\cE_3 = \frac{1}{6N^2}\idotsint(Q(t)\otimes Q(t)\otimes Q(t)V_N 1\otimes 1\otimes 1)(x,y,z;x',y',z')u(t,x')u(t,y')u(t,z') \\ 
    & \quad\quad\quad\quad\quad\times a^*_x a^*_y a^*_z\d x \d y \d z\d x '\d y'\d z' \sqrt{N-\cN-2}\sqrt{N-\cN-1}\sqrt{N-\cN}
	\end{align*} 
	
	With this we can write 
	\begin{align*}
	\1^{> m}\widetilde H_N(t)\1^{\le m} = \cE_1\1(\cN = m) + \cE_2\1(m-1\le\cN\le m) + \cE_3\1(m-2\le\cN\le m).
	\end{align*}
	From this we get
	\begin{align*}
	&\sum_{m=M/2}^{M-3} \1^{> m}\widetilde H_N(t)\1^{\le m} = \cE_1 \1(M/2\le\cN\le M-3) \\
	&+ \cE_2\Big[\1(M/2 - 1\le\cN\le M-4) + \1(M/2\le\cN\le M-3)\Big] \\& + \cE_3\Big[\1(M/2 - 2\le\cN\le M-5) +\1(M/2 - 1\le\cN\le M-4) +\1(M/2\le\cN\le M-3)\Big]
	\end{align*}
	
	Let us concentrate on the most complicated term $\cE_3$. We can bound $\cE_3\1(M/2\le\cN\le M-3)$ using \eqref{eq:errorR0-5}.  For every $X \in \cF_+(t)$, we have
	\begin{align*}
	&\left|\left\langle X, \cE_3\1(M/2\le\cN\le M-3)X\right\rangle\right| \le \left\langle X, \d\Gamma(1-\Delta) + MN^{4\beta-1}X\right\rangle^{\frac{1}{2}}\times\\
	&\quad\times\left\langle \1(M/2\le\cN\le M-3)X, \d\Gamma(1-\Delta) + MN^{4\beta-1}\1(M/2\le\cN\le M-3)X\right\rangle^{\frac{1}{2}} \\
	&\quad\le \left\langle X, \d\Gamma(1-\Delta) + MN^{4\beta-1}X\right\rangle.
	\end{align*}
	Thus we have the quadratic form estimate
	$$
	\pm \Big( \cE_3\1(M/2\le\cN\le M-3) + h.c. \Big)  \le \d\Gamma(1-\Delta) + MN^{4\beta-1}.
	$$
	All the other terms can be bound similarly using the Lemmas \ref{lem:Bogoliubovbounds} and \ref{lem:errorbounds}. For example, we have
	$$
	\pm \Big( \cE_2\1(M/2\le\cN\le M-3) + h.c. \Big)  \le \d\Gamma(1-\Delta) + N^{\beta+\eps},
	$$
	and similarly,
	\begin{equation} \label{eq:Bog-com-aver}
	 \pm \frac{1}{M/2 - 2}\sum_{m=M/2}^{M-3} i[ \bH(t),\1^{\le m}] \le \frac{C_{t,\eps}}{M}\left(\d\Gamma(1-\Delta) + N^{\beta+\eps}\right).
	\end{equation}
	This ends the proof of Lemma \ref{averagebound}.
\end{proof}

	\subsection{Step 3: Conclusion of Theorem \ref{thm:main1}}  
\begin{proof}[Proof of Theorem \ref{thm:main1}] Let $0<\beta<1/6$. Let $u(t)$ and $\varphi(t)$ be the solution to the quintic Hartree equation \eqref{eq:Hartree}  and the quintic NLS \eqref{eq:Hartree-NLS} with the same initial condition $\varphi_0=u_0 \in H^4(\R^3)$. We define 
$$
\Phi_N(t)=e^{-i\int_0^t \chi(s) \d s}\,U_N(t) \Psi_N(t) = e^{-i\int_0^t \chi(s) \d s}\,U_N(t) e^{-itH_N} U_N^*(0)\Phi_{N}(0).
$$ 
The actions \eqref{eq:UN-action} lead to the key identity
\begin{align} \label{eq:key-gamma1-PsiN-PhiN}
Q(t) \gamma_{\Psi_N(t)}^{(1)} Q(t) = \gamma_{\Phi_N(t)}^{(1)}. 
\end{align}
In particular, recall that the asusmption \eqref{eq:leading-order} in Theorem  \ref{thm:main1}  is equivalent to \eqref{eq:Phi0-assume-extra},
$$
\langle \Phi_N(0), \d\Gamma(1-\Delta)  \Phi_N(0)\rangle =  \Tr \Big((1-\Delta) \gamma_{\Phi_N(0)}^{(1)}\Big) = \Tr \Big((1-\Delta) Q(0) \gamma_{\Psi_N(0)}^{(1)}Q(0)\Big) \le C. 
$$

By applying Lemma \ref{lem:truncated-vs-full-dynamics}  we find that for all 
$$1 \ll M \ll N^{1-2\beta}$$
the truncated dynamics $\Phi_{N,M}(t)\in \cF_+^{\le M}(t)$ defined by
\begin{align*} %\label{eq:eq-PhiNM}
i\partial_t\Phi_{N,M}(t) = \1^{\le M}\widetilde H_N(t)\1^{\le M}\Phi_{N,M}(t), \quad\quad \Phi_{N,M}(0) = \1^{\le M} \Phi_N(0) 
\end{align*} 
satisfies the norm approximation
%	\begin{align} \label{eq:truncated-dynamics-norm-app}
%	&\|U_N(t) e^{-itH_N} U_N^*(0) \1^{> m} \Phi_N(0) -\Phi_{N,M}(t)\|^2 \nn\\
%	&\le C_{t,\eps} \left( N^{(4\beta -1)/2}+ \frac{N^{2\beta}}{\sqrt{M}} + \frac{N^{(1+\beta+\eps)/2}}{M} \right).
%	\end{align}	
%Combining \eqref{eq:truncated-dynamics-norm-app} and \eqref{eq:truncated-dynamics-norm-app-0}, we obtain
\begin{align} \label{eq:com-PhiNM-PhiN-inter}
\|\Phi_N - \Phi_{N,M}\|^2 
%&\le 2\| U_N(t) e^{-itH_N} U_N^*(0) \1^{> m} \Phi(0) \|^2 \nn\\
%&\quad +2\|U_N(t) e^{-itH_N} U_N^*(0) \1^{> m} \Phi(0) -\Phi_{N,M}(t)\|^2 \nn\\
\le C_{t,\eps} \left( N^{(4\beta -1)/2}+ \frac{N^{2\beta}}{\sqrt{M}} + \frac{N^{(1+\beta+\eps)/2}}{M} \right).
\end{align}	

Now we transfer \eqref{eq:com-PhiNM-PhiN-inter} to an estimate on the one-body density matrix. Since $\Phi_{N,M}(t)\in \cF_+^{\le M}(t)$, we have the obvious estimate
$$
\|\Phi_N(t) -\Phi_{N,M}(t)\|^2 = \|\1^{>M} \Phi_N(t)\|^2 + \| \1^{\le M} \Phi_N(t) - \Phi_{N,M}(t)\|^2 \ge \|\1^{>M} \Phi_N(t)\|^2.
$$
Thus \eqref{eq:com-PhiNM-PhiN-inter} gives 
\begin{align} \label{eq:conclusion-leading}
\|\1^{>M} \Phi_N(t)\|^2 \le C_{t,\eps} \left( N^{(4\beta -1)/2}+ \frac{N^{2\beta}}{\sqrt{M}} + \frac{N^{(1+\beta+\eps)/2}}{M} \right).
\end{align}

In viewing of \eqref{eq:key-gamma1-PsiN-PhiN}, we have
\begin{align*} N^{-1} \Tr\Big(Q(t)\gamma_{\Psi_N(t)}^{(1)} Q(t) \Big) &= N^{-1}\Tr\Big(\gamma_{\Phi_N(t)}^{(1)}\Big) = N^{-1}\langle \Phi_N(t), \cN \Phi_N(t)\rangle \nn\\
&=  N^{-1}\langle \Phi_N(t), \1^{\le M} \cN \Phi_N(t)\rangle + N^{-1}\langle \Phi_N(t), \1^{> M} \cN \Phi_N(t)\rangle\nn\\
&\le N^{-1}M \| \1^{\le M}\Phi_N(t) \|^{2} + \|\1^{> M} \Phi_N(t)\|^2 \nn\\
&\le M N^{-1} +  C_{t,\eps} \left(  N^{(4\beta -1)/2}+ \frac{N^{2\beta}}{\sqrt{M}} + \frac{N^{(1+\beta+\eps)/2}}{M} \right). 
\end{align*}
Then by the Cauchy-Schwarz inequality and the fact that $1-Q(t)=|u(t)\rangle \langle u(t)|$ is a rank-one projection, we obtain
\begin{align} \label{eq:QgQ}
&\Tr \left|  N^{-1}\gamma_{\Psi_N(t)}^{(1)} - |u(t)\rangle \langle u(t)| \right| \le  \sqrt{N^{-1} \Tr\Big(Q(t)\gamma_{\Psi_N(t)}^{(1)} Q(t) \Big)} \nn\\
&\le \sqrt{M N^{-1} +  C_{t,\eps} \left(    \frac{N^{2\beta}}{\sqrt{M}} + \frac{N^{(1+\beta+\eps)/2}}{M} \right)}. 
\end{align} 

On the other hand, from \eqref{eq:H-to-NLS} in Theorem \ref{thm:Hartree} we find that
\begin{equation} \label{eq:QgQ-phi}
		\tr\Big||u(t)\rangle\langle u(t)| - |\phi(t)\rangle\langle\phi(t)|\Big| \le 2 \|u(t) - \phi(t)\|_{L^2(\R^3)} \le C_t N^{-\beta/2}.
		\end{equation}
Thus in summary, from \eqref{eq:QgQ} and \eqref{eq:QgQ-phi} we conclude by the triangle inequality that
\begin{align} \label{eq:QgQ-final}
&\Tr \left| N^{-1} \gamma_{\Psi_N(t)}^{(1)} - |\varphi(t)\rangle \langle \varphi(t)| \right| \nn\\
&\le C_t N^{-\beta/2} +  \sqrt{M N^{-1} +  C_{t,\eps} \left( N^{(4\beta -1)/2}+ \frac{N^{2\beta}}{\sqrt{M}} + \frac{N^{(1+\beta+\eps)/2}}{M} \right)}. 
\end{align}

It remains to optimize the right side of \eqref{eq:QgQ-final} over $1 \ll M \ll N^{1-2\beta}$. Choosing
$$
M= N^{1-2\beta-\eps}
$$
for $\eps>0$ arbitrarily small (but independent of $N$), we deduce from \eqref{eq:QgQ-final} that
\begin{align}  
\Tr \left|  N^{-1}\gamma_{\Psi_N(t)}^{(1)} - |\phi(t)\rangle \langle \phi(t)| \right| \le C_{t,\eps} N^{-\alpha}
\end{align}
for any constant
$$
\alpha < \min\left\{\frac{\beta}{2}, \frac{1-6\beta}{4} \right\}.
$$
This completes the proof of Theorem \ref{thm:main1}.
\end{proof}

\section{Proof of Theorem~\ref{thm:main2}} \label{sec:thm2}
In this section we prove the norm approximation in Theorem \ref{thm:main2}. As explained in Section \ref{sec:gen-stra}, it boils down to prove that the transformed dynamics $\Phi_N(t)= e^{-i\int_0^t \chi(s) \d s}\,U_N(t)\Psi_N(t)$ in \eqref{eq:eq-PhiN} converges to the Bogoliubov dynamics $\Phi(t)$ in \eqref{eq:Bog}. 

We will follow the strategy of the previous section to consider the truncated dynamics $\Phi_{N,M}$ in \eqref{eq:eq-PhiNM},
\begin{align*} 
i\partial_t\Phi_{N,M}(t) = \1^{\le M}\widetilde H_N(t)\1^{\le M}\Phi_{N,M}(t), \quad\quad \Phi_{N,M}(0) = \1^{\le M}\Phi_N(0).
\end{align*} 
The norm approximation between $\Phi_{N,M}$ and $\Phi_N$ was already provided in Lemma \ref{lem:truncated-vs-full-dynamics}. Therefore, it is natural to compare $\Phi_{N,M}$ with the Bogoliubov dynamics $\Phi(t)$.

%We will assume that $0<\beta<1/5$, $\langle \Phi_0, \dGamma(1-\Delta) \Phi_0\rangle\le C$, and assume further that $\Phi(0)=\Phi_0$ is a quasi-free state.

%Note that $\Phi_0$ fulfils the technical condition \eqref{eq:Phi0-assume-extra}. More precisely, using the simple estimate $|V_N|_{L^\infty}\le CN^{6\beta}$ we find that 
%$$
%0\le R_6\le CN^{6\beta-2} \cN^3.  
%$$
%Combining with the quasi-free state property \eqref{eq:quasi-number}, we deduce from the condition 
%$$\langle \Phi_0, \cN \Phi_0 \rangle \le \langle \Phi_0, \dGamma(1-\Delta) \Phi_0 \rangle\le C$$
% that
%\begin{equation} \label{eq:Phi0-qf-R6}
%\langle \Phi_0, (\dGamma(1-\Delta)+R_6) \Phi_0\rangle\le C. 
%\end{equation}

\subsection{Step 1: Truncated vs. Bogoliubov dynamics}

\begin{lemma}[Truncated vs. Bogoliubov dynamics] \label{lem:normtrunc-to-Bog}
Let $1 \ll M \ll N^{1-2\beta}$. Then
\begin{equation} \label{eq:norm-PhiNM-Phi}
\|\Phi_{N,M}(t)	- \Phi(t)\|^2  \le C_{t,\eps} \left( \frac{ N^{\beta+\eps}}{M} + \sqrt{M N^{5\beta-1+2\eps}}  \right).
\end{equation}
\end{lemma}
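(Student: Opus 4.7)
The plan is to estimate $\|\Phi_{N,M}(t)-\Phi(t)\|^2$ by bounding its time derivative and integrating. Using mass conservation of both dynamics, the self-adjointness of $\1^{\le M}\widetilde H_N(t)\1^{\le M}$, and $\1^{>M}\Phi_{N,M}(t)=0$, a direct computation yields
\begin{align*}
\tfrac{d}{dt}\|\Phi_{N,M}-\Phi\|^2 = 2\,\mathrm{Im}\,\langle\Phi_{N,M},\1^{\le M}(\widetilde H_N-\bH)\1^{\le M}\Phi\rangle - 2\,\mathrm{Im}\,\langle\Phi_{N,M},\1^{\le M}\bH\1^{>M}\Phi\rangle.
\end{align*}
At $t=0$, $\|\Phi_{N,M}(0)-\Phi(0)\|^2=\|\1^{>M}\Phi_0\|^2\le C/M$ by the assumption $\langle\Phi_0,\dGamma(1-\Delta)\Phi_0\rangle\le C$, so the task reduces to controlling and integrating the two terms above.

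The commutator-type term $\langle\Phi_{N,M},\1^{\le M}\bH\1^{>M}\Phi\rangle$ will be handled via the averaging trick from Lemma~\ref{averagebound}: replace $\1^{\le M}$ by $\frac{1}{M/2-2}\sum_{m=M/2}^{M-3}\1^{\le m}$ (which still acts as the identity on $\Phi_{N,M}$, since $\Phi_{N,M}\in\cF^{\le M}_+$ and $m\ge M/2$) and invoke the averaged commutator bound~\eqref{eq:Bog-com-aver},
\[
\pm\frac{1}{M/2-2}\sum_{m=M/2}^{M-3}i[\bH(t),\1^{\le m}]\le\frac{C_{t,\eps}}{M}\bigl(\dGamma(1-\Delta)+N^{\beta+\eps}\bigr),
\]
combined with the kinetic bounds $\langle\Phi_{N,M}(t),\dGamma(1-\Delta)\Phi_{N,M}(t)\rangle\le C_{t,\eps}(MN^{4\beta-1}+N^{\beta+\eps})$ from Lemma~\ref{lem:truncated-dynamics} and $\langle\Phi(t),\dGamma(1-\Delta)\Phi(t)\rangle\le C_{t,\eps}N^{\beta+\eps}$ from Theorem~\ref{thm:Bog-equa}. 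Time integration then produces a contribution of order $N^{\beta+\eps}/M$, giving the first summand in~\eqref{eq:norm-PhiNM-Phi}.

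For each error term $\langle\Phi_{N,M},\1^{\le M}(R_j+R_j^*)\1^{\le M}\Phi\rangle$ I will apply a bilinear Cauchy-Schwarz using the intermediate quadratic estimates from the proof of Lemma~\ref{lem:errorbounds}, crucially exploiting the asymmetry that $\1^{\le M}\Phi$ has the much smaller kinetic bound $N^{\beta+\eps}$. The critical case is the cubic term $R_{3,1}$: I will return to the pairing estimate
\[
\pm\Bigl(\iint k_z(x,y)\,a_x^*a_y^*\,\d x\,\d y+\mathrm{h.c.}\Bigr)\le C_t\bigl(\eta N^\beta\dGamma(1-\Delta)+\eta^{-1}N^{3\beta}\bigr),\quad\eta\ge 1,
\]
appearing in the part~$R_{3,1}$ of the proof of Lemma~\ref{lem:errorbounds} (with $k_z$ as in~\eqref{eq:def-kz}), split it via Cauchy-Schwarz into the factor $a_z\Phi_{N,M}$ and the factor $(N-\cN-2)^{1/2}(N-\cN-1)^{1/2}(N-\cN)^{1/2}\1^{\le M}\Phi$, use the two kinetic bounds above (after the $a_z$-integration they pick up the factors $M$ and $N^3$ respectively), and then re-optimize $\eta$. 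Choosing $\eta\sim N^{(\beta-\eps)/2}$ so that $\eta N^{\beta}\cdot N^{\beta+\eps}$ balances $\eta^{-1}N^{3\beta}$ on the $\Phi$-side yields
\[
\left|\langle\Phi_{N,M},R_{3,1}\1^{\le M}\Phi\rangle\right|\le C_{t,\eps}\sqrt{MN^{5\beta-1+2\eps}},
\]
and the remaining $R_j$'s contribute smaller corrections for $M\ll N^{1-2\beta}$.

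The main obstacle is precisely this asymmetric optimization for $R_{3,1}$: a naive application of the final form bound $\pm R_{3,1}\le\eta'\dGamma(1-\Delta)+(\eta')^{-1}C_tMN^{4\beta-1}$ with $\eta'\ge C_t\sqrt{MN^{2\beta-1}}$ from Lemma~\ref{lem:errorbounds} would only yield $\sqrt{MN^{6\beta-1}}$, which is too weak by a factor $N^{\beta/2}$. Going back to the pre-optimized pairing estimate and separating the kinetic bounds on the two sides before choosing $\eta$ is what saves this factor. Integrating in time then produces the claimed bound~\eqref{eq:norm-PhiNM-Phi}.
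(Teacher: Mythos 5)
You have correctly identified the two key ideas (asymmetric kinetic bounds on the two sides of the Cauchy--Schwarz, and averaging over particle-number cutoffs to handle the off-diagonal part of $\bH$), but the way you set up the averaging is incorrect, and this is a genuine gap.

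Differentiating $\|\Phi_{N,M}(t)-\Phi(t)\|^2$ directly produces the off-diagonal term $\langle\Phi_{N,M},\1^{\le M}\bH\,\1^{>M}\Phi\rangle$, whose cutoff is frozen at $M$. You then propose to replace $\1^{\le M}$ by $\frac{1}{M/2-2}\sum_{m=M/2}^{M-3}\1^{\le m}$, claiming each $\1^{\le m}$ acts as the identity on $\Phi_{N,M}$ because $\Phi_{N,M}\in\cF^{\le M}_+$ and $m\ge M/2$. This is false: for $m\le M-3$, $\1^{\le m}$ annihilates the sectors $m<\cN\le M$ of $\Phi_{N,M}$, which are nonzero in general, and the condition $m\ge M/2$ gives no help. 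Without a valid $m$-dependence there is nothing to average, and bounding $\langle\Phi_{N,M},\1^{\le M}\bH\,\1^{>M}\Phi\rangle$ directly by Chebyshev on both sides only controls norms, not the kinetic quadratic form needed in the pairing-term estimate for $\overline{K_2}\,a_xa_y$; a direct attempt produces a contribution of order $\sqrt{MN^{\beta+\eps}}$, which is far too large. The paper avoids this by inserting the cutoff \emph{before} differentiating: it splits $\langle\Phi_{N,M},\Phi\rangle=\langle\Phi_{N,M},\1^{>m}\Phi\rangle+\langle\Phi_{N,M},\1^{\le m}\Phi\rangle$ for each $m\in[M/2,M-3]$, controls the first piece by Chebyshev, and differentiates only the second, which naturally yields the commutator $[\bH(t),\1^{\le m}]$ carrying the variable $m$. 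Averaging over $m$ then legitimately invokes \eqref{eq:Bog-com-aver}. Your route produces no such $m$-dependence, so the averaging trick has nowhere to attach.

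A secondary remark: for the error term you return to the pre-optimized $k_z$-pairing estimate for $R_{3,1}$ on the grounds that the final-form bound ``would only yield $\sqrt{MN^{6\beta-1}}$.'' That under-estimate only arises if you freeze $\eta'$ at its minimum admissible value $\sim\sqrt{MN^{2\beta-1}}$. The final bound $\pm\,\1^{\le m}(\widetilde H_N-\bH)\1^{\le m}\le\eta'\dGamma(1-\Delta)+(\eta')^{-1}C_tMN^{4\beta-1}$ holds for \emph{all} $\eta'\ge C_t\sqrt{MN^{2\beta-1}}$; the paper simply chooses $\eta'=C_t\sqrt{MN^{3\beta-1}}$, pairs the two different kinetic bounds via Cauchy--Schwarz, and reads off $\sqrt{MN^{5\beta-1+2\eps}}$ without reopening the pairing estimate. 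Your re-optimized $\eta\sim N^{(\beta-\eps)/2}$ is essentially the same choice after the change of variables $\eta'\propto\eta\sqrt{MN^{2\beta-1}}$, so your asymmetric optimization is correct in substance, merely routed through an unnecessary extra layer.
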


\begin{proof} Similarly to \eqref{eq:norm-innner}, we have 
	\begin{align*}
	\|\Phi_{N,M}(t)	- \Phi(t)\|^2 \le 2\left(1-\Re\left\langle\Phi_{N,M}(t),\Phi(t)\right\rangle\right).
	\end{align*}
Now picking $M/2 \le m \le M -3$ we write 
	\begin{align} \label{eq:decom-PhiNM-Phi}
	\langle\Phi_{N,M}(t),\Phi(t)\rangle =  \langle\Phi_{N,M}(t),\1^{> m}\Phi(t)\rangle + \langle\Phi_{N,M}(t),\1^{\le m}\Phi(t)\rangle.
	\end{align}
	The first term containing the many particle sector can be bounded using \eqref{eq:Bog-dyn-N-bounds-kinetic0}:
	\begin{align}
	\label{lastnormmanyparticle}
	|\langle\Phi_{N,M}(t),\1^{> m}\Phi(t)\rangle| \le \|\1^{> m}\Phi(t) \| \le \langle \Phi(t), (\cN/m) \Phi(t) \rangle^{1/2}\le C_{t ,\eps}\frac{N^{(\beta+\eps)/2}}{\sqrt{M}}.
	\end{align}
	For the second term in \eqref{eq:decom-PhiNM-Phi} we calculate the derivative
	\begin{align}
	\nn\frac{\d}{\d t}\langle\Phi_{N,M}(t),\1^{\le m}\Phi(t)\rangle &= i \Big\langle\Phi_{N,M}(t),\left(\1^{\le M}\widetilde H_N(t)\1^{\le M}\1^{\le m} - \1^{\le m}\bH(t)\right)\Phi(t)\Big\rangle \\ 
	&\label{derivativetermlastnorm}=i \Big\langle\Phi_{N,M}(t), [ \bH(t),\1^{\le m}] + \left((\widetilde H_N(t)- \bH(t))\1^{\le m} \right)\Phi(t)\Big\rangle,
	\end{align} 
	where we have used, as in \eqref{truncatedHamiltonianidentity}, 
$$
\1^{\le M}\widetilde H_N(t)\1^{\le M}\1^{\le m} = \widetilde H_N(t)\1^{\le m}.
$$
	
	To bound the first term in \eqref{derivativetermlastnorm} containing the commutator $[ \bH(t),\1^{\le m}]$, we average over $m\in[M/2,M -3]$, then use \eqref{eq:Bog-com-aver} and the kinetic bounds in Theorem \ref{thm:Bog-equa} and Lemma \ref{lem:truncated-dynamics}. We obtain
	\begin{align} \label{eq:first-term-derivativetermlastnorm}
	&\Big|\frac{1}{M/2 -1}\sum_{m=M/2}^{M-3}\Big\langle\Phi_{N,M}(t),i[\bH(t),\1^{\le m}]\Phi(t)\Big\rangle\Big| \nn\\
	&\le \frac{C_{t,\eps}}{M}\Big\langle\Phi_{N,M}(t),\left(\d\Gamma(1-\Delta) + N^{\beta+\eps}\right)\Phi_{N, M}(t)\Big\rangle^\frac{1}{2}
	\Big\langle\Phi(t),\left(\d\Gamma(1-\Delta) +  N^{\beta+\eps}\right)\Phi(t)\Big\rangle^\frac{1}{2} \nn\\
	&\le \frac{C_{t,\eps}}{M} \sqrt{MN^{4\beta-1} + N^{\beta+\eps}} \cdot \sqrt{N^{\beta+\eps}}	\le  \frac{C_{t,\eps}}{M} \left( \sqrt{M N^{5\beta-1+\eps}} +N^{\beta+\eps}\right).
	\end{align}
	%(In the last estimate we have used $\sqrt{(a+b)b}\le 2(a+b)$ for $a,b\ge 0$.)

	In order to estimate the second term in \eqref{derivativetermlastnorm}, we use \eqref{eq:errorR0-5} in Lemma \ref{lem:errorbounds}:  when $m\ll N^{1-2\beta}$,
	$$
	\1^{\le m} (\widetilde H_N(t) -\bH(t)) \1^{\le m} \le  \eta \dGamma(1-\Delta) +  \eta^{-1}C_t m N^{4\beta-1}, \quad \forall \eta \ge C_t \sqrt{mN^{2\beta-1}}.
	$$
	Then for all $M/2 \le m \le M -3$, by the Cauchy-Schwarz inequality and the kinetic bounds in Theorem \ref{thm:Bog-equa} and Lemma \ref{lem:truncated-dynamics},  we can estimate
	\begin{align*}
	&\Big| \Big\langle\Phi_{N,M}(t),(\widetilde H_N(t)- \bH(t))\1^{\le m}\Phi(t)\Big\rangle\Big| \nn\\
	&=\Big | \Big\langle  \Phi_{N,M}(t), \1^{\le m+3}(\widetilde H_N(t)- \bH(t))\1^{\le m+3} \1^{\le m}\Phi(t)\Big\rangle\Big| \nn\\
	&\le C_{t} \left\langle \Phi_{N,M}(t), \left(  \eta \dGamma(1-\Delta) + \eta^{-1} MN^{4\beta-1}\right) \Phi_{N,M}(t)\right\rangle^\frac{1}{2}\times\\
	&\quad\times\left\langle\Phi(t), \left(   \eta \dGamma(1-\Delta) + \eta^{-1} MN^{4\beta-1}\right)  \Phi(t)\right\rangle^\frac{1}{2} \nn\\
	&\le C_{t,\eps} \sqrt{\eta (MN^{4\beta-1}+N^{\beta+\eps})+ \eta^{-1} MN^{4\beta-1} } \cdot \sqrt{\eta N^{\beta+\eps} + \eta^{-1} MN^{4\beta-1}}\nn\\
	&\le C_{t,\eps} \sqrt{ \eta^2 N^{2(\beta+\eps)} + (1+\eta^2) MN^{5\beta-1}+(1+\eta^{-2})(MN^{4\beta-1})^2  }
	\end{align*}
for all $\eta \ge C_t \sqrt{MN^{2\beta-1}}$. By choosing
$$
\eta= C_t \sqrt{MN^{3\beta-1}} 
$$
we obtain
\begin{align} \label{eq:PhiNM-HN-H-cs-final-2}
\Big| \Big\langle\Phi_{N,M}(t),(\widetilde H_N(t)- \bH(t))\1^{\le m}\Phi(t)\Big\rangle\Big|  \le C_{t,\eps} \sqrt{M N^{5\beta-1+2\eps} }.
\end{align}
	Averaging the latter bound over $M/2\le m\le M-3$, and combining with \eqref{eq:first-term-derivativetermlastnorm}, we deduce from \eqref{derivativetermlastnorm} that
$$\left| \frac{1}{M/2 -1}\sum_{m=M/2}^{M-3} \frac{d}{dt} \langle\Phi_{N, M}(t),\1^{\le m}\Phi(t)\rangle \right| \le C_{t,\eps} \left( \frac{ N^{\beta+\eps}}{M} + \sqrt{M N^{5\beta-1+2\eps}}  \right).$$
	Then using the the bound
\begin{align*}
	\langle\Phi_{N,M}(0),\1^{\le m} \Phi(0) \rangle =  1 - \langle\Phi_N(0),\1^{> m}\Phi_N(0)\rangle \ge 1 - \frac{C}{M},
	\end{align*}
	which follows from assumption $\langle\Phi_N(0),\cN\Phi_N(0)\rangle \le C$, we obtain as in \eqref{lowparticlebound}, 
$$\Re \frac{1}{M/2 -1}\sum_{m=M/2}^{M-3} \langle\Phi_{N, M}(t),\1^{\le m}\Phi(t)\rangle  \ge 1 - C_{t,\eps} \left( \frac{ N^{\beta+\eps}}{M} + \sqrt{M N^{5\beta-1+2\eps}}  \right).$$

Putting the latter estimate together with  \eqref{lastnormmanyparticle}  (after averaging over $M/2\le m\le M-3$), we conclude that from  \eqref{eq:decom-PhiNM-Phi}  that
	\begin{equation*}
	\|\Phi_{N,M}(t)	- \Phi(t)\|^2  \le C_{t,\eps} \left( \frac{ N^{\beta+\eps}}{M} + \sqrt{M N^{5\beta-1+2\eps}}  \right).
	\end{equation*}
	This is the desired estimate. 
\end{proof}

\subsection{Step 2: A further truncated dynamics} Recall that from Lemma \ref{lem:truncated-vs-full-dynamics} and Lemma \ref{lem:normtrunc-to-Bog}, we have proved that for all $1 \ll M \ll N^{1-2\beta}$,
	\begin{align*} 
	\|\Phi_N(t) -\Phi_{N,M}(t)\|^2 &\le C_{t,\eps} \left( N^{(4\beta -1)/2}+ \frac{N^{2\beta}}{\sqrt{M}} + \frac{N^{(1+\beta+\eps)/2}}{M} \right),\\
\|\Phi_{N,M}(t)	- \Phi(t)\|^2  &\le C_{t,\eps} \left( \frac{ N^{\beta+\eps}}{M} + \sqrt{ M N^{5\beta-1+2\eps}} \right).
\end{align*}

Obviously, we can deduce $\|\Phi_N(t)-\Phi(t)\|\to 0$ if we can control all the error terms in  the above estimates. However, to make both error terms 
$$
\frac{N^{2\beta}}{\sqrt{M}} , \quad \sqrt{M N^{5\beta-1+2\eps} }
$$
small simultaneously, we would need 
$$
\frac{N^{2\beta}}{\sqrt{M}} \cdot  \sqrt{M N^{5\beta-1+2\eps} } =  \sqrt{N^{9\beta-1+2\eps}} \to 0
$$
which requires $\beta<1/9$. 

To extend the norm approximation to all $\beta<1/6$, we need a further step. We will introduce another truncated dynamics $\Phi_{N,\tilde M}$ with $\widetilde M \ll M$ and apply Lemma \ref{lem:normtrunc-to-Bog} to the new one. Of course, to make this strategy work we need to show that $\Phi_{N,\tilde M}$  is sufficiently close to $\Phi_{N,M}$. This is the content of the following 
\begin{lemma} [Intermediate norm approximation] \label{lem:normtrunc-to-trunc}
	Let $1\ll \tilde M \le M \ll N^{1-2\beta}$. Then
\begin{equation*}
\|\Phi_{N,M}(t)	- \Phi_{N,\tilde M}(t)\|^2 \le  C_{t,\eps} \left( N^{(4\beta -1)/2} + \frac{N^{(\beta+\eps)/2}}{\sqrt{\tilde M}} + \sqrt{\frac{M}{\tilde M}}N^{4\beta-1} +\frac{\sqrt{MN^{5\beta+\eps-1}}}{\tilde M}  \right).
\end{equation*}
\end{lemma}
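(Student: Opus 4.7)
The plan is to repeat verbatim the structure of the proof of Lemma~\ref{lem:truncated-vs-full-dynamics}, but now exploiting the improved kinetic bound from Lemma~\ref{lem:truncated-dynamics} on \emph{both} truncated dynamics rather than using the crude $O(N)$ estimate available only for the full dynamics $\Phi_N$. Concretely, I would start from
$$\|\Phi_{N,M}(t) - \Phi_{N,\tilde M}(t)\|^2 \le 2\bigl(1 - \Re\langle\Phi_{N,M}(t),\Phi_{N,\tilde M}(t)\rangle\bigr)$$
and for any $\tilde M/2 \le m \le \tilde M - 3$ decompose
$$\langle\Phi_{N,M}(t),\Phi_{N,\tilde M}(t)\rangle = \langle\Phi_{N,M}(t),\1^{\le m}\Phi_{N,\tilde M}(t)\rangle + \langle\Phi_{N,M}(t),\1^{> m}\Phi_{N,\tilde M}(t)\rangle.$$

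The high-particle piece is immediate from Cauchy--Schwarz, the normalization $\|\Phi_{N,M}(t)\|\le 1$, and the kinetic estimate \eqref{eq:truncated-dynamics-kinetic} applied to $\Phi_{N,\tilde M}$, yielding the bound $\langle\Phi_{N,\tilde M},(\cN/m)\Phi_{N,\tilde M}\rangle^{1/2}\lesssim N^{(4\beta-1)/2} + N^{(\beta+\eps)/2}/\sqrt{\tilde M}$, which accounts for the first two terms in the claim. For the low-particle piece, I would compute the time derivative; the key algebraic observation is that, since $m\le \tilde M-3\le M-3$ and the operator $\widetilde H_N(t)$ changes $\cN$ by at most $3$, we have
$$\1^{\le m}\1^{\le \tilde M}\widetilde H_N(t)\1^{\le \tilde M} = \1^{\le m}\widetilde H_N(t)\quad\text{and}\quad \1^{\le M}\widetilde H_N(t)\1^{\le M}\1^{\le m} = \widetilde H_N(t)\1^{\le m},$$
so that the two Hamiltonian contributions collapse into a single commutator, giving
$$\frac{d}{dt}\langle\Phi_{N,M}(t),\1^{\le m}\Phi_{N,\tilde M}(t)\rangle = i\bigl\langle\Phi_{N,M}(t),[\widetilde H_N(t),\1^{\le m}]\Phi_{N,\tilde M}(t)\bigr\rangle.$$

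Next I would average the resulting identity over $m\in[\tilde M/2,\tilde M-3]$ and apply Lemma~\ref{averagebound} (with the truncation parameter set to $\tilde M$ instead of $M$) together with the Cauchy--Schwarz inequality, now using the refined kinetic bound \eqref{eq:truncated-dynamics-kinetic} both for $\Phi_{N,M}$ (giving $\lesssim MN^{4\beta-1}+N^{\beta+\eps}$) and for $\Phi_{N,\tilde M}$ (giving $\lesssim \tilde M N^{4\beta-1}+N^{\beta+\eps}$). Expanding
$$\frac{C_{t,\eps}}{\tilde M}\sqrt{MN^{4\beta-1}+N^{\beta+\eps}}\,\sqrt{\tilde M N^{4\beta-1}+N^{\beta+\eps}}$$
produces exactly the remaining two terms $\sqrt{M/\tilde M}\,N^{4\beta-1}$ and $\sqrt{MN^{5\beta-1+\eps}}/\tilde M$ appearing in the statement (with a negligible $N^{\beta+\eps}/\tilde M$ contribution absorbed into the second large-sector term). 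For the initial datum, the bound $\langle\Phi_{N,M}(0),\1^{\le m}\Phi_{N,\tilde M}(0)\rangle = \langle\Phi_N(0),\1^{\le m}\Phi_N(0)\rangle\ge 1-C/\tilde M$ follows from the assumption $\langle\Phi_N(0),\cN\Phi_N(0)\rangle\le C$ exactly as in the previous lemma, and integrating the derivative estimate from $0$ to $t$ closes the argument.

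The only real subtlety, and essentially the whole point of the lemma, is to confirm that the $O(N)$ factor that appeared in Lemma~\ref{lem:truncated-vs-full-dynamics} (because of the round bound for $\Phi_N$) is now replaced by $O(MN^{4\beta-1}+N^{\beta+\eps})$ thanks to the availability of \eqref{eq:truncated-dynamics-kinetic} on the $M$-side as well; once this is in place, every remaining estimate is a direct reuse of Lemmas~\ref{averagebound} and~\ref{lem:truncated-dynamics}, and no new operator bound on Fock space needs to be established. Accordingly I do not anticipate any serious obstacle beyond carefully tracking the exponents in the Cauchy--Schwarz product.
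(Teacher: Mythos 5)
Your proposal reproduces the paper's own proof essentially line by line: same splitting of the inner product at a random cutoff $m \in [\tilde M/2, \tilde M-3]$, same observation that both projections collapse onto a single commutator $[\widetilde H_N(t),\1^{\le m}]$ because $\widetilde H_N(t)$ moves the particle number by at most $3$ and $m \le \tilde M - 3 \le M - 3$, same averaging over $m$ together with Lemma~\ref{averagebound} at parameter $\tilde M$, and the same asymmetric use of the refined kinetic bound \eqref{eq:truncated-dynamics-kinetic} ($M N^{4\beta-1}+N^{\beta+\eps}$ for $\Phi_{N,M}$, $\tilde M N^{4\beta-1}+N^{\beta+\eps}$ for $\Phi_{N,\tilde M}$). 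The Cauchy--Schwarz product and the tracking of exponents agree with the paper, including the silent absorption of the $N^{\beta+\eps}/\tilde M$ remainder (which is harmless since when $\tilde M \le N^{\beta+\eps}$ the claimed bound already exceeds the trivial bound $\|\Phi_{N,M}-\Phi_{N,\tilde M}\|^2 \le 4$).
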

\begin{proof}
	The proof strategy follows similarly as in Lemma~\ref{lem:truncated-vs-full-dynamics}. We use again
	\begin{align} \label{eq:nor-PhiNM-PhintM-norm-inner}
	\|\Phi_{N,M}(t)	- \Phi_{N,\tilde M}(t)\|^2 \le 2\left(1-\Re\left\langle\Phi_{N,M}(t),\Phi_{N,\tilde M}(t)\right\rangle\right). 
	\end{align}
	For any $\tilde M/2 \le m \le \tilde M -3$ we split the right  side as 
	\begin{align*}
	\left\langle\Phi_{N,M}(t),\Phi_{N,\tilde M}(t)\right\rangle = \left\langle\Phi_{N,M}(t),\1^{\le m}\Phi_{N,\tilde M}(t)\right\rangle + \left\langle\Phi_{N,M}(t),\1^{> m}\Phi_{N,\tilde M}(t)\right\rangle.
	\end{align*}
	The second term can be bounded similarly to \eqref{manyparticlesector}:
	\begin{align} 	\label{manyparticletrunc} 
	|\langle\Phi_{N,M}(t),\1^{> m}\Phi_{N,\tilde M}(t)\rangle|
	&\le  \| \Phi_{N,M}(t)\| \cdot \| \1^{>m}\Phi_{N,\tilde M}(t)\| \\ &\le C_{t,\eps}\Big( N^{(4\beta -1)/2} + \frac{N^{(\beta+\eps)/2}}{\sqrt{\tilde M}}  \Big).\nn
	\end{align}
	The bound for the few particle sectors follows again by averaging over $m\in[\tilde M/2, \tilde M -3]$, then using Lemma~\ref{averagebound} and  the kinetic bound in Lemma~\ref{lem:truncated-dynamics} (for both $\Phi_{N,M}(t)$ and $\Phi_{N,\tilde M}(t)$). This gives
	\begin{align*}
&\Big|\frac{1}{\tilde M/2 -1}\sum_{m=\tilde M/2}^{\tilde M-3}\frac{d}{dt}\langle\Phi_{N,M}(t),\1^{\le m}\Phi_{N,\tilde M}(t)\rangle\Big| \\
&= \Big|\frac{1}{\tilde M/2 -1}\sum_{m=\tilde M/2}^{\tilde M-3}\Big\langle\Phi_{N,M}(t),i[\widetilde H_N(t),\1^{\le m}]\Phi_{N,\tilde M}(t)\Big\rangle\Big| \\
&\le \frac{C_t}{\tilde M}\Big\langle\Phi_{N,M}(t),\left(\d\Gamma(1-\Delta) + \tilde M N^{4\beta-1} + N^{\beta+\eps}\right)\Phi_{N, M}(t)\Big\rangle^\frac{1}{2}\times\\
&\times\Big\langle\Phi_{N,\tilde M}(t),\left(\d\Gamma(1-\Delta) + \tilde MN^{4\beta-1} + N^{\beta+\eps}\right)\Phi_{N,\tilde M}(t)\Big\rangle^\frac{1}{2} \\
&\le  \frac{C_{t,\eps}}{\tilde M} \sqrt{  MN^{4\beta-1} + N^{\beta+\eps}} \sqrt{ \tilde M N^{4\beta-1} + N^{\beta+\eps}}\\
&\le C_{t,\eps} \left( \sqrt{\frac{M}{\tilde M}}N^{4\beta-1} +\frac{\sqrt{MN^{5\beta+\eps-1}}}{\tilde M} + \frac{N^{\beta+\eps}}{\tilde M}\right).
\end{align*}

Combining with the bound
\begin{align*}
	\langle\Phi_{N,M}(0),\1^{\le m}\Phi_{N,\tilde M}(0)\rangle =  1 - \langle\Phi(0),\1^{> m}\Phi(0)\rangle \ge 1 - \frac{C}{\tilde M},
	\end{align*}
	which follows from assumption $\langle\Phi_N(0),\cN\Phi_N(0)\rangle \le C$, we obtain as in \eqref{lowparticlebound}, 
	\begin{align*}
&\Re\frac{1}{\tilde M/2 -1}\sum_{m=\tilde M/2}^{\tilde M-3}\langle\Phi_{N, M}(t),\1^{\le m}\Phi_{N,\tilde M}(t)\rangle  \\
&\ge 1- C_{t,\eps} \left( \sqrt{\frac{M}{\tilde M}}N^{4\beta-1} +\frac{\sqrt{MN^{5\beta+\eps-1}}}{\tilde M} + \frac{N^{\beta+\eps}}{\tilde M}\right).
\end{align*}
Putting this together with \eqref{manyparticletrunc} we obtain
\begin{align*}
\Re\langle\Phi_{N,M}(t),\Phi_{N,\tilde M}(t)\rangle \ge 1 - C_{t,\eps} \left( N^{(4\beta -1)/2} + \frac{N^{(\beta+\eps)/2}}{\sqrt{\tilde M}} + \sqrt{\frac{M}{\tilde M}}N^{4\beta-1} +\frac{\sqrt{MN^{5\beta+\eps-1}}}{\tilde M}  \right).
\end{align*}
Therefore, \eqref{eq:nor-PhiNM-PhintM-norm-inner} gives the desired inequality. 
\end{proof}

\subsection{Step 3: Conclusion of Theorem~\ref{thm:main2}}

\begin{proof}[Proof of Theorem~\ref{thm:main2}] Let $\Phi_N(t)=U_N(t)\Psi_N(t)$ as in \eqref{eq:eq-PhiN}. For every $1\le M\le N$, let $\Phi_{N,M}(t)$ be the truncated dynamics defined as in \eqref{eq:eq-PhiNM}. 

For every $0<\beta<1/6$ and $0<\alpha<(1-6\beta)/4$, we can find a finite number $K>0$ and a decreasing sequence $\{M_k\}_{k=1}^K$ such that   
\begin{align} 
&N^{1-2\beta}\gg M_1 \ge \max \{ N^{4\beta + 2\alpha}, N^{(1+\beta)/2 +\alpha}\},\\
& M_k \ge M_{k+1} \ge \max \{ \sqrt{M_k}, M_k N^{-\beta}, N^{\beta+2\alpha}\}, \quad \forall k=1,2,...,K-1,\\
& M_K = N^{\beta+2\alpha}.
\end{align}
From Lemmas \ref{lem:truncated-vs-full-dynamics}, \ref{lem:normtrunc-to-trunc} and \ref{lem:normtrunc-to-Bog} and the above choice of $\{M_k\}_{k=1}^K$, we have the norm approximations
\begin{align*}
\|\Phi_N(t) -\Phi_{N,M_1}(t)\|^2 &\le C_{t,\eps} \left( N^{(4\beta -1)/2}+ \frac{N^{2\beta}}{\sqrt{M_1}} + \frac{N^{(1+\beta+\eps)/2}}{M_1} \right) \le C_{t,\eps} N^{-\alpha+\eps},\\
\|\Phi_{N,M_k}(t)	- \Phi_{N, M_{k+1}}(t)\|^2 &\le  C_{t,\eps} \left( N^{(4\beta -1)/2} + \sqrt{\frac{N^{\beta+\eps}}{M_{k+1}}} + \sqrt{\frac{M_k}{M_{k+1}}}N^{4\beta-1} +\frac{\sqrt{M_k N^{5\beta+\eps-1}}}{M_{k+1}}  \right),\\
&\le C_{t,\eps} N^{-\alpha+\eps}, \quad \forall  k=1,2,...,K-1, \\
\|\Phi_{N,M_K}(t)	- \Phi(t)\|^2 &\le C_{t,\eps} \left( \frac{ N^{\beta+\eps}}{M_K} + \sqrt{  M_K N^{5\beta-1+2\eps}} \right) \le C_{t,\eps} N^{-\alpha+\eps}.
\end{align*}

By the triangle inequality we conclude that 
\begin{align*}
 \|\Phi_N(t) - \Phi(t)\| &\le \|\Phi_N(t) - \Phi_{N,M_1}(t)\| + \sum_{k=1}^{K-1} \|\Phi_{N,M_k}(t) - \Phi_{N,M_{k+1}}(t)\|\nn\\
 &\qquad +\|\Phi_{N, M_K}(t) - \Phi(t)\| \le (K+1)  C_{t,\eps} N^{(-\alpha+\eps)/2}. 
 \end{align*}
Finally, since $U_N(t)$ is a (partial) unitary operator we have that
\begin{align*}
\|\Psi_N(t) - U_N^*(t)\1^{\le N}\Phi(t)\| = \|\Phi_N(t) - \1^{\le N}\Phi(t)\| \le \|\Phi_N(t) - \Phi(t)\|. 
\end{align*}
and this finishes the proof of Theorem \ref{thm:main2}.
\end{proof}

\appendix

\section{Extension to systems on torus}

Our main results in Theorem \ref{thm:main1} and \ref{thm:main2} remain valid when the configuration space $\R^3$ is replaced by the torus $\T^3=(\R/(2\pi\Z))^3$. In our proofs, the domain issue only emerges in the level of the effective theory. Therefore, the main concern is the well-posedness of the quintic Hartree equation on $\T^3$
\begin{equation}
\label{eq:Hartree-Torus}
\left\{
\begin{aligned}
i\partial_tu(t,x) &=   \Big( - \Delta +  \frac{1}{2}\iint |u(t,y)|^2 V_N(x-y,x-z)|u(t,z)|^2\d y \d z\Big) u(t,x) , \quad x\in \T^3, \, t>0 \\
u(0,x) &= u_0(x). 
\end{aligned}
\right.
\end{equation}

We will prove the following analogue of Theorem \ref{thm:Hartree} for the torus case. 
\begin{theorem}
	\label{thm:Torusmain}
	Let $u_0 \in H^4(\T^3)$ be an initial state. Then for every time $T>0$,  there exists a solution to equation~\eqref{eq:Hartree-Torus} on $[0,T]$ which satisfies
	\begin{align}
	\label{eq:H4bound1Torus}
	\|u(t,\cdot)\|_{H^4(\T^3)} \le C_t, \quad \|\partial_t u(t,\cdot)\|_{H^2(\T^3)} \le C_t.
	\end{align}
	Here the constant $C_t$ is dependent on $\|u_0\|_{H^4}$ and $t$, but independent of $N$. 
\end{theorem}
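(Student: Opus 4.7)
The plan is to follow verbatim the strategy of Theorem~\ref{thm:Hartree}, viewing the quintic Hartree equation on $\T^3$ as a perturbation of the defocusing energy-critical quintic NLS $i\partial_t\tilde u = -\Delta\tilde u + b_0|\tilde u|^4\tilde u$ on $\T^3$. The crucial external input will be the global well-posedness of this NLS together with an a priori global $L^{10}_{t,x}$-type spacetime bound, which is due to Ionescu--Pausader, built on the torus Strichartz theory of Bourgain, Herr--Tataru--Tzvetkov and Bourgain--Demeter. With this substitute for \cite{ColKeelStafTao-08} in hand, the whole machinery of Section~\ref{sec:Hartree} has a direct torus counterpart.

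First I would re-prove the local well-posedness analogue of Lemma~\ref{lem:localtheory} on $\T^3$ by the same Banach fixed-point argument in a Strichartz-type norm such as $L^6_tW^{1,18/7}_x(I\times\T^3)$; the only change is that the classical Strichartz estimate is replaced by its torus version, and any $N^\varepsilon$-loss on dyadic frequency pieces is harmless since we start from $H^4$ data. All nonlinear estimates use only Minkowski, Young, H\"older and Sobolev inequalities, each of which has an identical torus form, and the Hartree nonlinearity is still controlled by $\|V\|_{L^1(\R^6)}$. The short-time and long-time perturbation lemmas (Lemmas~\ref{lem:Short time Pertubation} and~\ref{Long time Pertubation}) then transfer word for word, every spacetime norm now taken over $I\times\T^3$. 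Taking $\tilde u$ to be the NLS solution on $\T^3$ with datum $u_0$, which by Ionescu--Pausader satisfies $\|\tilde u\|_{L^{10}_{t,x}([0,T]\times\T^3)}\le C$, the perturbation
\[
e_N = b_0|\tilde u|^4\tilde u - \tfrac{1}{2}\iint V_N(x-y,x-z)|\tilde u(y)|^2|\tilde u(z)|^2\,\mathrm{d}y\,\mathrm{d}z\,\tilde u
\]
is shown to satisfy $\|\nabla e_N\|_{L^2_tL^{6/5}_x}\to 0$ as $N\to\infty$ by the same continuity-of-translation argument as in~\eqref{eq:eNsmalllast}, which uses only that $V_N$ is supported in a ball of radius $CN^{-\beta}$ and that $\tilde u\in L^{10}_{t,x}$.

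Applying the long-time perturbation lemma yields a global $L^{10}_{t,x}$ bound for the Hartree solution $u$, and the $H^4$ bound~\eqref{eq:H4bound1Torus} is propagated exactly as in Section~\ref{sec:har-con}: partition $[0,T]$ into finitely many subintervals on which $\|u\|_{L^{10}_{t,x}}$ is small, then iterate Strichartz for each multi-index $|\alpha|\le 4$. The bound on $\partial_tu$ then follows by reading it off from the equation as in the remark after Theorem~\ref{thm:Hartree}. The main obstacle is simply the availability of the torus analogue of the $\R^3$ NLS theorem of Colliander--Keel--Staffilani--Takaoka--Tao, namely the Ionescu--Pausader global well-posedness together with the global $L^{10}_{t,x}$ spacetime bound on $\T^3$; the derivative losses in torus Strichartz estimates could in principle complicate the linear estimates, but our $H^4$ data sits three Sobolev levels above the energy-critical threshold $\dot H^1$ and easily absorbs them, so no new $N$-dependence is introduced by the torus geometry.
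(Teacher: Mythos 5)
Your high-level plan — treat the torus Hartree equation as a perturbation of the Ionescu--Pausader quintic NLS, transfer the short-time/long-time perturbation lemmas, then bootstrap $H^4$ regularity — correctly identifies the structure of the proof. However, the proposed execution has a genuine gap: the assertion that the short-time and long-time perturbation lemmas "transfer word for word" with Lebesgue--Strichartz norms $L^6_tW^{1,18/7}_x$ and $L^{10}_tL^{30/13}_x$ is false on $\T^3$. On the torus, the sharp Schr\"odinger Strichartz estimates come with an unavoidable derivative loss (Bourgain, Bourgain--Demeter), so there is no inequality $\|e^{it\Delta}u_0\|_{L^6_tW^{1,18/7}_x}\lesssim\|u_0\|_{H^1}$; the estimates only hold on dyadic frequency blocks with a positive power of $N$. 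For an energy-\emph{critical} problem this loss cannot be summed away, and the Lebesgue-space contraction argument of Lemma~\ref{lem:localtheory} and the perturbation Lemmas~\ref{lem:Short time Pertubation}--\ref{Long time Pertubation} simply do not go through.

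The hedge that "our $H^4$ data sits three Sobolev levels above critical and easily absorbs" the losses does not repair this, because the extra regularity is only used at the very last step of the argument (the iteration in Section~\ref{sec:har-con}). The preceding perturbation/stability theory, which is what produces the global $L^{10}_{t,x}$-type spacetime bound for the Hartree solution, is intrinsically an $\dot H^1$-critical comparison with the NLS solution, and $H^4$ data gives no extra help there — you cannot obtain the global critical spacetime bound from the $H^4$ local theory, you obtain the $H^4$ bound \emph{from} the critical spacetime bound. This is precisely why the paper does not reuse the $\R^3$ Strichartz framework but instead rebuilds the local theory, perturbation/stability, and nonlinear estimates in the $U^p/V^p$-based spaces $X^s$, $Y^s$, $Z$, $Z'$ of Herr--Tataru--Tzvetkov and Ionescu--Pausader. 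The essential new ingredient is the multilinear estimate in Lemma~\ref{lem:nonlinear-estimate} adapted to the nonlocal Hartree nonlinearity (proved via dyadic decomposition, bilinear torus Strichartz, and Schur's lemma), which replaces the H\"older/Young/Sobolev chain of \eqref{firstlocalestimate}; everything else (local well-posedness, stability, iteration to $H^4$) then runs through the $X^1/Z'$ machinery rather than Lebesgue Strichartz norms. Your plan omits this indispensable change of function-space framework.
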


Again the result is proven by considering the Hartree equation as a perturbation of the quintic NLS. Even though the basic idea is very similar to the  $\R^3$ case, the details in the torus case is more involved because some Strichartz's estimates are no longer available. In the following we will follow the analysis for quintic NLS on torus by Ionescu and Pausader \cite{IonPau-12,IonPau-12b} and will mainly focus on the places when nontrivial modifications from the $\R^3$ case in Section \ref{sec:Hartree} are needed.  

Let us start by briefly recalling the definition of the Littlewood-Paley projections. Take $\psi:\mathbb{R}\to[0,1]$ a smooth even function with $\psi(y)=1$ if $|y|\leq 1$ and $\psi(y)=0$ if $|y|\geq 2$. We will use a decomposition into dyadic integers $$M=2^j,\quad j\in\Z.$$ The Littlewood-Paley projectors $P_{\le M}$ and $P_M$ are then defined by
\begin{equation*}
\begin{split}
&\widehat{P_{\le M}f}\left(\xi\right):=\psi(|\xi|/M)\hat{f}(\xi),\qquad\xi\in \Z^3,\\
&P_1f:=P_{\le 1}f,\qquad P_Mf:=P_{\le M}f-P_{\le M/2}f\quad\hbox{if}\quad M\ge 2.
\end{split}
\end{equation*}
%Moreover, we will use the function spaces $U^p(I,L^2),V^p(I,L^2),U^p_\Delta(I,L^2),V^p_\Delta(I,L^2),X^s(I),$ and $Y^s(I)$, see \cite{HadHerrKoch-09,HerrTatTzv-11,IonPau-12b,IonPau-12} for definitions and basic properties.
The following strong functional spaces are introduced by Herr-Tataru-Tzvetkov \cite{HerrTatTzv-11}
$$\|u\|_{X^s(\R)}^2 = \sum_{\xi\in \Z^3} (1+|\xi|^2)^s \|e^{it |\xi|^2} \widehat{u}(t,\xi) \|^2_{U_t^2}, $$
$$\|u\|_{Y^s(\R)}^2 = \sum_{\xi\in \Z^3} (1+|\xi|^2)^s \|e^{it |\xi|^2} \widehat{u}(t,\xi) \|^2_{V_t^2},$$
with $U^p$ and $V^p$ are defined by Hadac-Herr-Koch \cite{HadHerrKoch-09}. For any bounded time  interval $I \subset [0,\infty)$, we denote $X^s(I)$ and $Y^s(I)$ in the usual way as restriction norms. As suggested by Ionescu-Pausder \cite{IonPau-12}, we will also use the following spacetime norm
\begin{equation}
\begin{split}
&\| u\|_{Z(I)}:=\sum_{p\in\{p_0,p_1\}}\sup_{J\subseteq I,\,|J|\le 1}\Big(\sum_MM^{5-p/2}\| P_Mu(t)\|_{L^{p}_{t,x}(J\times\T^3)}^{p}\Big)^{1/p},\\
&p_0=4+1/10,\qquad p_1=100,
\end{split}
\end{equation}
which satisfies $\| u\|_{Z(I)}\lesssim \| u\|_{X^1(I)}$. 
%From Strichartz's estimates \cite[Corollary 2.2]{IonPau-12} we know that this norm is controlled by the $X^1(I)$ norm
%\begin{equation*}
%\| u\|_{Z(I)}\lesssim \| u\|_{X^1(I)}.
%\end{equation*}
Finally we will also use a norm interpolating between the $Z(I)$ and $X^1(I)$ norm
\begin{align*}
\|u\|_{Z'(I)} = \|u\|^{1/2}_{Z(I)}\|u\|^{1/2}_{X^1(I)}.
\end{align*} 
\subsection{Estimate of the nonlinear term}
In order to prove Theorem~\ref{thm:Torusmain} it is important to have good control on the nonlinear term in \eqref{eq:Hartree-Torus}. For that we will use the following Lemma. This will be proved similarly to \cite[Lemma 3.2]{IonPau-12} (see also \cite[Proposition 4.1]{HerrTatTzv-11}, \cite[Lemma 3.2]{IonPau-12b} and \cite[Lemma 5.5]{CheHol-18} for related results).
\begin{lemma}	\label{lem:nonlinear-estimate}
	Let $s\ge 1$, and $u_j\in X^s(I)$, $j=1\dots 5$. Then we have on a time interval $I =(a,b)$, $|I|\le 1$ the following estimate

	\begin{align} \label{eq:nonlinear-estimate-torus}
	&\nn\left\|\int_a^t e^{i(t-s)\Delta}\iint \tilde u_1(s,y)\tilde u_2(s,y)V_N(x-y,x-z)\tilde u_3(s,z)\tilde u_4(s,z)\d y\d z \,\,\tilde u_5(s)\d s\right\|_{X^s(I)} \\ &\lesssim \|u_{\sigma(1)}\|_{X^s(I)} \prod^5_{j=2}\|u_{\sigma(j)}\|_{Z'(I)},
	\end{align}	
	for any permutation $\sigma$ on $\{1,\cdots,5\}$. Here $\tilde u_j$ can either be $u_j$ or $\overline u_j$.	
\end{lemma}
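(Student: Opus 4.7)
\textbf{Plan for the proof of Lemma~\ref{lem:nonlinear-estimate}.} The plan is to reduce the nonlocal quintic Hartree estimate to the corresponding local quintic estimate of Herr, Tataru, and Tzvetkov \cite[Proposition~4.1]{HerrTatTzv-11} (see also \cite[Lemma~3.2]{IonPau-12} and \cite[Lemma~3.2]{IonPau-12b}). The bridge between the two is the uniform bound $\|V_N\|_{L^1(\R^6)}=\|V\|_{L^1(\R^6)} \le C$, which holds because $V\in\cC_c(\R^6)$ and is independent of $N$.

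First I would perform the change of variables $y'=x-y$, $z'=x-z$ inside the Hartree nonlinearity and write
\begin{equation*}
\iint \tilde u_1(y)\tilde u_2(y)V_N(x-y,x-z)\tilde u_3(z)\tilde u_4(z)\,dy\,dz\,\tilde u_5(x) \;=\; \iint V_N(y',z')\,G_{y',z'}(x)\,dy'\,dz',
\end{equation*}
where $G_{y',z'}(x):=\tilde u_1(x-y')\tilde u_2(x-y')\tilde u_3(x-z')\tilde u_4(x-z')\tilde u_5(x)$ is a genuinely \emph{local} quintic nonlinearity, formed from translated copies of the $\tilde u_j$'s. Applying Minkowski's integral inequality in $(y',z')$ and inserting the inhomogeneous linear estimate in the $X^s$-norm yields
\begin{equation*}
\left\|\int_a^t e^{i(t-s)\Delta}\iint V_N(y',z')G_{y',z'}(s)\,dy'\,dz'\,ds\right\|_{X^s(I)} \le \iint V_N(y',z')\left\|\int_a^t e^{i(t-s)\Delta} G_{y',z'}(s)\,ds\right\|_{X^s(I)}dy'\,dz'.
\end{equation*}
Next, I would invoke \cite[Proposition~4.1]{HerrTatTzv-11}, which gives precisely the local version of \eqref{eq:nonlinear-estimate-torus}, applied to the five translated factors constituting $G_{y',z'}$. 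Writing $\tau_a f(x):=f(x-a)$, the key point is that the norms $X^s(I)$, $Y^s(I)$ and $Z'(I)$ are all defined via Fourier multipliers and standard $L^{p}_{t,x}$ spaces, hence they are invariant under spatial translation: $\|\tau_a u\|_{X^s(I)}=\|u\|_{X^s(I)}$ and $\|\tau_a u\|_{Z'(I)}=\|u\|_{Z'(I)}$. Consequently the resulting bound on the inner norm is uniform in $(y',z')$, and integrating against $V_N(y',z')\,dy'\,dz'$ together with $\|V_N\|_{L^1}\le C$ yields \eqref{eq:nonlinear-estimate-torus}.

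The main obstacle is technical: I must verify that Proposition~4.1 of \cite{HerrTatTzv-11}, originally stated for the quintic nonlinearity $|u|^4u$, extends to arbitrary conjugation patterns $\tilde u_j\in\{u_j,\bar u_j\}$. This extension is standard, since the underlying machinery—Bourgain's $L^4$ and bilinear Strichartz estimates on $\T^3$, combined with the atomic $U^p$/$V^p$ spaces—is insensitive to complex conjugation of individual factors. The translation invariance of the norms, which is what makes the Minkowski argument quantitative, is likewise immediate: spatial translation on $\T^3$ acts by a unit-modulus phase $e^{i\xi\cdot a}$ on each Fourier coefficient $\hat u(t,\xi)$, and the spaces $U^2_t$, $V^2_t$, and $L^{p}_{t,x}$ are all invariant under multiplication by such phases. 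No further modification of the Herr-Tataru-Tzvetkov/Ionescu-Pausader analysis is required.
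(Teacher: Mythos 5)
Your proof is correct, and it takes a genuinely different route from the paper's. The paper re-derives the proof of \cite[Lemma 3.2]{IonPau-12} from scratch: it passes to the dual formulation via \cite[Proposition 2.11]{HerrTatTzv-11}, performs the full dyadic decomposition, and inserts the translated factors into the trilinear estimates of \cite[Lemma 3.1]{IonPau-12}, invoking translation invariance of the $Y^0$ and $Z$ norms at that level before re-executing the Cauchy--Schwarz and Schur-type summations. Your argument instead treats the local quintic estimate of \cite[Proposition 4.1]{HerrTatTzv-11} / \cite[Lemma 3.2]{IonPau-12} as a black box: after the change of variables $(y',z')=(x-y,x-z)$, linearity of the Duhamel operator and Minkowski's inequality for the $X^s(I)$-valued Bochner integral give
\begin{equation*}
\left\|\int_a^t e^{i(t-s)\Delta}\iint V_N(y',z')\,G_{y',z'}(s)\,\d y'\d z'\,\d s\right\|_{X^s(I)}
\le \iint V_N(y',z')\left\|\int_a^t e^{i(t-s)\Delta}G_{y',z'}(s)\,\d s\right\|_{X^s(I)}\d y'\d z',
\end{equation*}
and then translation invariance of $X^s$ and $Z'$ (which is exactly the unit-phase argument you give, since translation multiplies each Fourier coefficient by a $t$-independent unimodular constant) makes the inner norm independent of $(y',z')$, so that $\|V_N\|_{L^1(\R^6)}=\|V\|_{L^1(\R^6)}$ closes the bound. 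This buys you a substantially shorter proof, since the dyadic bookkeeping is delegated entirely to the cited lemma; the only additional point you must confirm is that the cited estimate is genuinely multilinear in five independent inputs with arbitrary conjugation pattern rather than being stated only for $|u|^4 u$, which is indeed the form of \cite[Lemma 3.2]{IonPau-12} and is also implicit in the paper's own derivation. The paper's re-derivation, by contrast, makes the argument self-contained and makes explicit where the $L^1$-norm of $V_N$ and the translations enter, at the cost of repeating several pages of harmonic analysis.
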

\begin{proof}
We will just proof the case where $\sigma$ is the identity since the case for all other permutations follows in the same way.
By \cite[Proposition 2.11]{HerrTatTzv-11} we know that $$\int_0^te^{i(t-s)\Delta}P_{\le M}\left(\iint \tilde u_1\tilde u_2V_N\tilde u_3\tilde u_4\d y\d z \,\,\tilde u_5\right)\d s \in X^s(I)$$ with
\begin{align*}
&\left\|\int_0^te^{i(t-s)\Delta}P_{\le M}\left(\iint \tilde u_1\tilde u_2V_N\tilde u_3\tilde u_4\d y\d z \,\,\tilde u_5\right)\d s\right\|_{X^s(I)} \\ &\le \sup \left|\int_{I\times\T^3}P_{\le M}\left(\iint \tilde u_1(t,y)\tilde u_2(t,y)V_N(x-y,x-z)\tilde u_3(t,z)\tilde u_4(t,z)\d y\d z \,\,\tilde u_5(t,x)\right) \overline{v(t,x)}\d x\d t\right|,
\end{align*}
where the supremum is taken over all $v\in Y^{-s}(I)$ with $ \|v\|_{Y^{-s}}=1$.
Define $u_0 = P_{\le M} v$. With this notation we need to prove
\begin{align}
&\label{eq:Xs-estimate}\nn\left|\int_{I\times\T^3}\iint \tilde u_1(t,y)\tilde u_2(t,y)V_N(x-y,x-z)\tilde u_3(t,z)\tilde u_4(t,z)\d y\d z \,\,\tilde u_5(t,x) \tilde u_0(t,x)\d x\d t\right|\\ &\le \|u_0\|_{Y^{-s}(I)} \|u_1\|_{X^s(I)}\prod_{j=2}^5\|u_j\|_{Z'(I)}.
\end{align}
The desired result will then follow by taking the limit $M\to\infty$. 

We now use the dyadic decompositon $u_k = \sum_{M_k}P_{M_k}u_k$. With that and the Cauchy-Schwarz inequality we see that in order to bound the left hand side of \eqref{eq:Xs-estimate} it suffices to bound
\begin{align*}
S=\sum_{\cM}\iint V_N(y,z)&\|P_{M_0}\tilde u_0P_{M_2}\tilde u_2(\cdot-y)P_{M_4}\tilde u_4(\cdot-z)\|_{L^2_{t,x}(I\times\T^3)}\times\\
&\times\|P_{M_1}\tilde u_1(\cdot-y)P_{M_3}\tilde u_3(\cdot-z)P_{M_5}\tilde u_5\|_{L^2_{t,x}(I\times\T^3)}\d y\d z,
\end{align*}
where $\cM$ is the set of tuples $(M_0,M_1,M_2,M_3,M_4,M_5)$ of dyadic numbers satisfying $M_i\ge1$ and $$M_5\le M_4 \le \cdots \le M_1,\quad\max\{M_0,M_2\}\sim M_1.$$
We first consider the subset $\cM_1 \subset \cM$ with $M_2 \le M_0 \sim M_1$:
Using \cite[Lemma 3.1]{IonPau-12} and translation invariance of the $Y^0$ and $Z'$ norms we get
\begin{align*}
S_1 \lesssim &\iint V_N(y,z)\d y\d z \sum_{\cM_1}\left(\frac{M_5}{M_1} + \frac{1}{M_3}\right)^{\delta}\left(\frac{M_4}{M_0} + \frac{1}{M_2}\right)^{\delta}\times\\
&\quad\quad\quad\times \|P_{M_1}u_1\|_{Y^0}\|P_{M_3}u_3\|_{Z'}\|P_{M_5}u_5\|_{Z'}\|P_{M_0}u_0\|_{Y^0}\|P_{M_2}u_2\|_{Z'}\|P_{M_4}u_4\|_{Z'}. \\
&\lesssim \sum_{\cM}\left(\frac{M_5}{M_1} + \frac{1}{M_3}\right)^{\delta}\left(\frac{M_4}{M_0} + \frac{1}{M_2}\right)^{\delta}\times\\
&\quad\quad\quad\times \|P_{M_1}u_1\|_{Y^0}\|P_{M_3}u_3\|_{Z'}\|P_{M_5}u_5\|_{Z'}\|P_{M_0}u_0\|_{Y^0}\|P_{M_2}u_2\|_{Z'}\|P_{M_4}u_4\|_{Z'},
\end{align*}
for some $\delta >0$.
By using the Cauchy-Schwarz inequality we can sum with respect to $M_2$,$M_3$,$M_4$, and $M_5$ which gives
\begin{align*}
S_1 \lesssim \prod_{j=2}^5\|u_j\|_{Z'}\sum_{M_0\sim M_1}\|P_{M_0}u_0\|_{Y^0}\|P_{M_1}u_1\|_{Y^0} \sim \prod_{j=2}^5\|u_j\|_{Z'} \sum_{M_0\sim M_1}M_0^{-s}\|P_{M_0}u_0\|_{Y^0}\,M_1^{s}\|P_{M_1}u_1\|_{Y^0}.
\end{align*}
Using again Cauchy-Schwarz in $M_1$ we get
\begin{align*}
S_1 \lesssim \|u_0\|_{Y^{-s}}\|u_1\|_{Y^s}\prod_{j=2}^5\|u_j\|_{Z'} \lesssim  \|u_0\|_{Y^{-s}}\|u_1\|_{X^s}\prod_{j=2}^5\|u_j\|_{Z'}.
\end{align*}
For the subset $\cM_2 \subset \cM$ with $M_0 \le M_2 \sim M_1$ we use both estimates in \cite[Lemma 3.1]{IonPau-12} and obtain
\begin{align*}
S_2 &\lesssim \sum_{\cM_2}\iint V_N(y,z)\d y\d z\left(\frac{M_5}{M_1} + \frac{1}{M_3}\right)^\delta M_0^{1/2 -5/p_0}M_2^{1/2 -5/p_0}M_4^{10/p_0 -2} \times \\
&\quad\quad\quad\quad\quad \times\|P_{M_1}u_1\|_{Y^0}\|P_{M_3}u_3\|_{Z'}\|P_{M_5}u_5\|_{Z'}\|P_{M_0}u_0\|_{Z}\|P_{M_2}u_2\|_{Z}\|P_{M_4}u_4\|_{Z}.
\end{align*}
Now using Strichartz's estimate \cite[Corollary 2.2]{IonPau-12} (and the embedding $Y^0(I)\hookrightarrow U^{p_0}_\Delta(I,L^2)\hookrightarrow U_\Delta^{p_1}(I,L^2)$, with notation $U_\Delta$ in \cite{IonPau-12}) we have 
$$\|P_{M_0}u_0\|_{Z(I)} \lesssim M_0\left(\|u_0\|_{U^{p_0}_\Delta(I,L^2)} + \|u_0\|_{U^{p_1}_\Delta(I,L^2)}\right) \lesssim M_0 \|P_{M_0}u_0\|_{Y^0(I)}.$$
Putting this in the above we obtain
\begin{align*}
S_2 &\lesssim \sum_{\cM_2} \left(\frac{M_5}{M_1} + \frac{1}{M_3}\right)^\delta M_0^{3/2 -5/p_0}M_2^{1/2 -5/p_0}M_4^{10/p_0 -2} \times \\
&\quad\quad\quad\quad\quad \times\|P_{M_1}u_1\|_{Y^0}\|P_{M_3}u_3\|_{Z'}\|P_{M_5}u_5\|_{Z'}\|P_{M_0}u_0\|_{Y^0}\|P_{M_2}u_2\|_{Z}\|P_{M_4}u_4\|_{Z} \\
&\le \sum_{\cM_2} \left(\frac{M_5}{M_1} + \frac{1}{M_3}\right)^\delta \|P_{M_1}u_1\|_{Y^0}\|P_{M_3}u_3\|_{Z'}\|P_{M_5}u_5\|_{Z'}\|P_{M_0}u_0\|_{Y^0}\|P_{M_2}u_2\|_{Z}\|P_{M_4}u_4\|_{Z}.
\end{align*}
We can now use the Cauchy-Schwarz to sum over $M_5, M_4, M_3$ to get
\begin{align*}
S_2 &\lesssim \prod_{j=3}^5\|u_j\|_{Z'}\sum_{M_1 \sim M_2 \ge M_0}\|P_{M_1}u_1\|_{Y^0}\|P_{M_0}u_0\|_{Y^0}\|P_{M_2}u_2\|_{Z} \\
&\lesssim \prod_{j=2}^5\|u_j\|_{Z'}\sum_{M_1 \ge M_0}\|P_{M_1}u_1\|_{Y^0}\|P_{M_0}u_0\|_{Y^0} \\
&\lesssim  \prod_{j=2}^5\|u_j\|_{Z'}\sum_{M_1 \ge M_0}\left(\frac{M_0}{M_1}\right)^s\|P_{M_1}u_1\|_{Y^s}\|P_{M_0}u_0\|_{Y^{-s}}.
\end{align*}
Now using Schur's Lemma we obtain
\begin{align*}
S_2 \lesssim \|u_0\|_{Y^{-s}}\|u_1\|_{Y^s}\prod_{j=2}\|u_j\|_{Z'} \lesssim \|u_0\|_{Y^{-s}}\|u_1\|_{X^s}\prod_{j=2}\|u_j\|_{Z'}.
\end{align*}
This finishes the proof.

\end{proof}
\subsection{Local well-posedness}
In this section we will prove the local theory for the quintic Hartree equation on $\T^3$. We will proceed very similarly to \cite[Proposition 3.3]{IonPau-12} (see also \cite[Propositon 3.3]{IonPau-12b}). In fact the only real difference is the treatment of the nonlinear term which is provided by Lemma~\ref{lem:nonlinear-estimate}.
\begin{lemma}[Local well-posedness]
	For every $u_0\in H^1(\T^3)$ there exists a $\eps>0$ depending on $\|u_0\|_{H^1(\T^3)}$ such that on any time interval $I \ni 0$, $|I|\le 1$ with
	\begin{align}
	\label{eq:smallnesslocalTor}
	\|e^{it\Delta}u_0\|_{Z'(I)} \le \eps
	\end{align}
	 there exists a unique solution $u\in X^1(I)$ of \eqref{eq:Hartree-Torus}. \\
	If furthermore $u\in X^1(I)$ is a solution of \eqref{eq:Hartree-Torus} on some open interval $I$ satisfying 
	\begin{align}
	\label{eq:Zsmall}\|u\|_{Z(I)}<\infty
	\end{align} 
	then $u$ can be extended to some neighborhood of $\overline{I}$ and
	\begin{align}
	\|u\|_{X^1(I)} \le C(\|u_0\|_{H^1(\T^3)},\|u\|_{Z(I)}).
	\end{align}
\end{lemma}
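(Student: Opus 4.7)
The plan is to reproduce the contraction-mapping argument of Ionescu--Pausader \cite[Proposition 3.3]{IonPau-12} for the quintic NLS on $\T^3$, with Lemma~\ref{lem:nonlinear-estimate} playing the role that the trilinear/quintilinear estimate plays there. First I would set up the Duhamel map
\begin{equation*}
\Phi(u)(t) = e^{it\Delta} u_0 - \tfrac{i}{2}\int_0^t e^{i(t-s)\Delta}\Big(\iint |u(s,y)|^2 V_N(x-y,x-z) |u(s,z)|^2 \d y\d z \Big) u(s)\,\d s
\end{equation*}
on the complete metric space
\begin{equation*}
B(I,a,b) = \{ u \in X^1(I) : \|u\|_{X^1(I)}\le a,\ \|u\|_{Z'(I)}\le b\},
\end{equation*}
with $a = 2\|u_0\|_{H^1}$ and $b = 2\eps$ (to be adjusted). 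The linear term lies in $B$ by Strichartz and the assumption \eqref{eq:smallnesslocalTor}.

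Next I would apply Lemma~\ref{lem:nonlinear-estimate} with $s=1$: for $u \in B(I,a,b)$ the Duhamel term is bounded in $X^1(I)$ by $C\|u\|_{X^1(I)}\|u\|_{Z'(I)}^4 \le C a b^4$, and similarly in $Z'$ via interpolation $\|\cdot\|_{Z'} \lesssim \|\cdot\|_{X^1}$, which gives $\|\Phi(u)\|_{Z'(I)} \le \|e^{it\Delta}u_0\|_{Z'(I)} + C a b^4$. Choosing $\eps$ small enough (depending on $\|u_0\|_{H^1}$) so that $C a b^4 \le \min(a/2,b/2)$ shows $\Phi:B\to B$. For the contraction, the nonlinear difference decomposes into five terms of the schematic form $\iint \tilde w_1 \tilde u_2 V_N \tilde u_3 \tilde u_4\,\d y\d z\,\tilde u_5$ with $w = u - v$; each is estimated by Lemma~\ref{lem:nonlinear-estimate} in the form $\|w\|_{X^1(I)}\prod\|u_j\text{ or }v_j\|_{Z'(I)}^4 \le C b^4 \|u-v\|_{X^1(I)}$, so $\Phi$ is a contraction on $B$ for $\eps$ small.

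For the second (extension/a-priori) statement I would use the standard partitioning argument: if $\|u\|_{Z(I)}<\infty$, then since $Z$ is essentially an $\ell^p$-type spacetime norm with $|J|\le 1$, we can split $I = \bigcup_{k=1}^K I_k$ with $|I_k|\le 1$ and $\|u\|_{Z(I_k)}$ arbitrarily small on each piece. On each $I_k$ the equation gives $\|u\|_{X^1(I_k)} \lesssim \|u(t_k)\|_{H^1} + \|u\|_{X^1(I_k)}\|u\|_{Z'(I_k)}^4$, and using $\|u\|_{Z'(I_k)}^4 \le \|u\|_{Z(I_k)}^2 \|u\|_{X^1(I_k)}^2$ together with smallness of $\|u\|_{Z(I_k)}$ and a continuity argument in the endpoint of $I_k$ yields $\|u\|_{X^1(I_k)}\le 2\|u(t_k)\|_{H^1}$. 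Summing the $H^1$-energy growth across the $K$ subintervals and the local existence result near each endpoint allow $u$ to be extended past $\overline I$ with the stated quantitative bound.

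The main obstacle I expect is the bookkeeping in the second step, where one must convert an a priori bound on the relatively weak $Z(I)$ norm into a bound on the full $X^1(I)$ norm: this is precisely why the intermediate space $Z'(I) = Z(I)^{1/2}X^1(I)^{1/2}$ is introduced, and the five-fold product structure of the three-body nonlinearity makes Lemma~\ref{lem:nonlinear-estimate} require a $Z'$-factor on four of the five inputs, which is the right count to absorb one $\|u\|_{X^1(I_k)}$ factor into the left-hand side after using the interpolation and the smallness of $\|u\|_{Z(I_k)}$. The rest is the continuity/bootstrap step which is standard once the nonlinear estimate is available; uniqueness follows from the same contraction estimate applied to two solutions on a small time interval together with a continuity argument.
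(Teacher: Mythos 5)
Your proposal follows the same path as the paper: contraction mapping in the ball of $X^1(I)$ with a secondary smallness constraint in $Z'(I)$, with Lemma~\ref{lem:nonlinear-estimate} (at $s=1$) supplying the key five-linear estimate, and the $Z'$–interpolation norm doing the absorption. The paper handles the extension/a-priori part by pointing to \cite[Proposition 3.3]{IonPau-12} and \cite[Lemma 3.4]{IonPau-12b} rather than reproducing the partition-and-bootstrap, but your sketch of that step (splitting $I$ into pieces with small $Z$-norm, rewriting $\|u\|_{Z'}^4 = \|u\|_Z^2\|u\|_{X^1}^2$, and closing by a continuity argument) is the intended argument.
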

\begin{proof}
	We will proceed in the standard way by using a fixed point argument.
	Let $E:=\|u_0\|_{H^1(\T^3)}$ and define the set
	\begin{align*}
	E(I,a) = \left\{v\in X^1(I): \|v\|_{X^1(I)}\le 2E,\quad \|v\|_{Z'(I)}\le a\right\},
	\end{align*} 
	which is closed in $X^1(I)$. 
	We will consider the following map 
	\begin{align*}
	\Phi(u)(t) = e^{it\Delta}u_0 - \frac{i}{2}\int_0^t e^{i(t-s)\Delta} \iint |u(s,y)|^2 V_N(x-y,x-z)|u(s,z)|^2\d y \d z \,u(s) \d s.
	\end{align*}
	Using \cite[Proposition 2.10]{HerrTatTzv-11} for the linear term and Lemma~\ref{lem:nonlinear-estimate} for the nonlinear term, we see that
	\begin{align*}
	\|\Phi(u)(t)\|_{X^1(I)}  &\le \|u_0\|_{H^1(\T^3)} + C\|u\|^4_{Z'(I)}\|u\|_{X^1(I)}	 \le E + Ca^4E \\
	\|\Phi(u)(t)\|_{Z'(I)}  &\le \eps + C\|u\|^4_{Z'(I)}\|u\|_{X^1(I)}	\le \eps + Ca^4E.
	\end{align*}  
	In the second line we have used assumption \eqref{eq:smallnesslocalTor} and that the $Z'(I)$ can be bounded by the $X^1(I)$ norm for the nonlinear term.
	 Hence by choosing $a=2\eps$ and then $\eps>0$ small enough we obtain that $\Phi(E(I,a)) \subseteq E(I,a)$.
	
	Moreover, using again Lemma~\ref{lem:nonlinear-estimate} we obtain that
	\begin{align*}
	\|\Phi(u)- \Phi(v)\|_{X^1(I)}  \lesssim \left(\|u\|^4_{Z'} + \|v\|^4_{Z'}\right)\|u-v\|_{X^1(I)} \lesssim a^4\|u-v\|_{X^1(I)}.
	\end{align*}
	For $a>0$ small enough this gives that $\Phi$ is a contraction map on $E(I,a)$. Using now the contraction mapping principle we see that there exists a unique $u\in E(I,a)$ which solves \eqref{eq:Hartree-Torus}.\\
	To see the uniqueness of the solution in the whole space $X^1(I)$ let us assume that there exist $u,v \in X^1(I)$ solution of \eqref{eq:Hartree-Torus}. If we choose an open subinterval $J \subseteq I$ containing $0$ we have that for $J$ small enough $u,v\in E(J,a)$. By uniqueness in $E(J,a)$ we know that $u_{|J} = v_{|J}$. Hence, the set $\{u=v\}$ is open and closed in $I$ and therefore equal to $I$.
	
	The extension result for finite $Z(I)$ norm \eqref{eq:Zsmall} follows in the same way as in \cite[Proposition 3.3]{IonPau-12} (see also \cite[Lemma 3.4]{IonPau-12b}).
\end{proof}
\subsection{Proof of Theorem \ref{thm:Torusmain}} In this section we will proof the global existence and regularity stated in Theorem~\ref{thm:Torusmain}. We will use the global well-posedness theory of the quintic nonlinear Schrödinger equation on $\T^3$
\begin{equation}
\label{eq:quinticNLSTorus}
i\partial_t \tilde{u} = -\Delta\tilde{u} +b_0|\tilde{u}|^4\tilde{u}
\end{equation}
provided in \cite{IonPau-12}. The quintic Hartree equation \eqref{eq:Hartree-Torus} will then be considered as a perturbation of \eqref{eq:quinticNLSTorus}. Using a stability result similar to \cite[Proposition 3.4]{IonPau-12} in the case of the quintic NLS, this will give the global theory of \eqref{eq:Hartree-Torus}.
\begin{lemma}[Stability]
	\label{lem:Stability-Torus}
	Assume $I$ is an open bounded interval and $\tilde u$ in $X^1(I)$ a solution of the pertubed equation
	\begin{equation}
	\label{perteq}
	i\partial_t\tilde{u} = - \Delta\tilde{u} + \frac{1}{2}\iint |\tilde{u}(y)|^2 V_N(x-y,x-z)|\tilde{u}(z)|^2\d y \d z\,\tilde{u} + e
	\end{equation}
	for some function $e$. Moreover, assume that
	\begin{align}
	\|\tilde u\|_{Z(I)} &\le M,\\
		\|\tilde{u}\|_{L^{\infty}_t \dot H^1_x (I \times \T^3)} &\le E, \\
	\|u_0 - \tilde u(0)\|_{H^1(\T^3)}&+\left\|\int_{0}^t e^{i(t-s)\Delta}e(s,\cdot)\d s\right\|_{X^1(I)}	\le \eps
	\end{align}
	for some $M,E\ge 0$ and some small enough $\eps>0$.
	
	Then there exists a solution $u\in X^1(I)$ of \eqref{eq:Hartree-Torus}
	with
	\begin{align}
	\|u\|_{X^1(I)} \le C(M,E).
	\end{align}
\end{lemma}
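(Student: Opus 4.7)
The plan is to adapt the long-time perturbation strategy of Section~\ref{sec:Har-lon} to the torus setting, using Lemma~\ref{lem:nonlinear-estimate} as the key multilinear estimate in place of the $\R^3$ Strichartz bounds. The mechanism is to first upgrade $\|\tilde u\|_{Z(I)} \le M$ together with $\|\tilde u\|_{L^\infty_t \dot H^1_x} \le E$ into a full $X^1(I)$ control $\|\tilde u\|_{X^1(I)} \le C(M,E)$, then subdivide $I$ into finitely many subintervals $I_1,\dots,I_K$ of length $\le 1$ on which $\|\tilde u\|_{Z'(I_j)} \le \delta$ for a fixed small $\delta$, and finally run a short-time contraction argument for the difference $v = u-\tilde u$ on each piece, iterating from $I_1$ through $I_K$. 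The $X^1(I)$ upgrade uses the previous local theory: bootstrapping the equation for $\tilde u$ via Lemma~\ref{lem:nonlinear-estimate} gives $\|\tilde u\|_{X^1(I_j)} \le E + C\delta^4\|\tilde u\|_{X^1(I_j)} + \varepsilon$ on each small subinterval, which closes after absorbing $\delta^4$ and then summing.

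On each subinterval $I_j = [t_j,t_{j+1}]$ the strategy is to solve the Duhamel equation for $v$,
\begin{equation*}
v(t) = e^{i(t-t_j)\Delta}v(t_j) - i\int_{t_j}^t e^{i(t-s)\Delta}\bigl(\mathcal{N}(v+\tilde u) - \mathcal{N}(\tilde u)\bigr)\d s + i\int_{t_j}^t e^{i(t-s)\Delta} e(s)\,\d s,
\end{equation*}
by contraction on a small ball in $X^1(I_j)$. The nonlinear difference $\mathcal{N}(v+\tilde u)-\mathcal{N}(\tilde u)$ expands into a finite sum of quintic Hartree-type multilinear terms, each containing at least one factor of $v$ together with four factors drawn from $v$ and $\tilde u$. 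Lemma~\ref{lem:nonlinear-estimate} bounds each such contribution in $X^1(I_j)$ by a product of one $X^1$ norm and four $Z'$ norms, leading to a schematic estimate
\begin{equation*}
\|v\|_{X^1(I_j)} \le C\varepsilon_j + C\delta^4 \|v\|_{X^1(I_j)} + \sum_{k=2}^{5} C\,\delta^{5-k}\,\|v\|_{X^1(I_j)}^{k},
\end{equation*}
where $\varepsilon_j$ collects the contribution of $v(t_j)$ and of the forcing $e$ on $I_j$. Fixing $\delta$ small absorbs the linear term, after which a standard continuity argument gives $\|v\|_{X^1(I_j)} \le C\varepsilon_j$ as long as $\varepsilon_j$ stays below a fixed threshold.

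To close the iteration I would propagate the smallness from $I_j$ to $I_{j+1}$: using the Duhamel identity together with the just-proven bound on $v$ and the hypothesis on $\int_0^t e^{i(t-s)\Delta} e\,\d s$, one derives $\varepsilon_{j+1} \le C_0(\varepsilon_j + \varepsilon)$ on the new ``data plus forcing'' quantity, so that after $K$ iterations $\varepsilon_K \le C_0^K \varepsilon$. Since $K = K(M,E,\delta)$ is finite, choosing the original $\varepsilon$ small enough depending on $M,E$ ensures each step's smallness hypothesis is met, and finally $\|u\|_{X^1(I)} \le \|v\|_{X^1(I)} + \|\tilde u\|_{X^1(I)} \le C(M,E)$ as claimed; the local well-posedness lemma then actually produces $u$ by patching solutions on the $I_j$'s.

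The main obstacle is the bookkeeping of the iteration on $\T^3$: unlike the $\R^3$ framework, where Strichartz estimates cleanly transport control between subintervals, here one must verify carefully that the Duhamel tail $\|\int_{t_j}^t e^{i(t-s)\Delta} e(s)\,\d s\|_{X^1(I_{j+1})}$ remains under control at every step, and that the geometric growth $C_0^K$ does not overwhelm the original $\varepsilon$. This is handled by choosing $\delta$ first (which fixes $K$) and then choosing $\varepsilon \ll C_0^{-K}$; the remaining delicate point is applying Lemma~\ref{lem:nonlinear-estimate} to all mixed terms of $\mathcal{N}(v+\tilde u)-\mathcal{N}(\tilde u)$ with the correct factor placed in $X^s$ and the others in $Z'$, which is precisely what the permutation-invariant form of that lemma is designed to accommodate.
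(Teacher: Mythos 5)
Your proposal correctly reconstructs the long-time perturbation argument that the paper itself omits (the paper only says that, using Lemma~\ref{lem:nonlinear-estimate} for the nonlinear term, the proof follows the pattern of Proposition~3.4 in Ionescu--Pausader). Your outline — first upgrading the $Z$/energy control on $\tilde u$ to an $X^1$ bound of size $C(M,E)$, subdividing $I$ into $K=K(M,E,\delta)$ subintervals of length $\le 1$ with small $Z'$-norm, running the contraction for $v=u-\tilde u$ on each piece via Lemma~\ref{lem:nonlinear-estimate}, and finally iterating with $\delta$ chosen first and $\varepsilon\ll C_0^{-K}$ — is precisely the intended argument and the $\T^3$ analogue of Lemma~\ref{Long time Pertubation}.
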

Using Lemma~\ref{lem:nonlinear-estimate} for the nonlinear term, the proof of Lemma~\ref{lem:Stability-Torus} follows similarly to \cite[Propositon 3.4]{IonPau-12} (consider also \cite[Proposition 3.5]{IonPau-12b}) and is therefore omitted. \\

Now we come to the proof of the main result Theorem~\ref{thm:Torusmain}.

\begin{proof}[Proof Theorem~\ref{thm:Torusmain}]
In order to prove Theorem~\ref{thm:Torusmain} we want to apply Lemma~\ref{lem:Stability-Torus} with $\tilde u$ being the solution of the pertubed equation given by the quintic NLS on $\T^3$
\begin{align*}
i\partial_t \tilde{u} &= -\Delta\tilde{u} +b_0|\tilde{u}|^4\tilde{u} \\
&=  - \Delta\tilde{u} + \frac{1}{2}\iint |\tilde{u}(y)|^2 V_N(x-y,x-z)|\tilde{u}(z)|^2\d y \d z\,\tilde{u} + e_N,
\end{align*}
with initial state $\tilde u(0,x) = u_0$.
Here we have defined the pertubation
\begin{equation*}
e_N = b_0|\tilde{u}|^4\,\tilde{u} - \frac{1}{2}\iint |\tilde{u}(y)|^2 V_N(x-y,x-z)|\tilde{u}(z)|^2\d y \d z\,\tilde{u}.
\end{equation*}
In order to use Lemma~\ref{lem:Stability-Torus} we want to show that $$\left\|\int_0^t e^{i(t-s)\Delta}e_N(s,\cdot)\d s\right\|_{X^1(I)}$$ is arbitrarily small for $N$ large. Without loss of generality we assume $|I|\le1$ in order to apply Lemma~\ref{lem:nonlinear-estimate}. 
By the triangle inequality we have
\begin{align*}
&\left\|\int_0^t e^{i(t-s)\Delta}e_N(s,\cdot)\d s\right\|_{X^1(I)} \\ 
&\le\left\|\int_0^te^{i(t-s)\Delta} \iint V_N(y,z)\left(|\tilde u(x-y)|^2 - |\tilde u(x)|^2\right)|\tilde u(x)|^2\d y\d z\,\tilde u(s,\cdot)\d s\right\|_{X^1(I)} \\
&+\left\|\int_0^te^{i(t-s)\Delta} \iint V_N(y,z)\left(|\tilde u(x-z)|^2 - |\tilde u(x)|^2\right)|\tilde u(x-y)|^2\d y\d z\,\tilde u(s,\cdot)\d s\right\|_{X^1(I)}.
\end{align*} 
We will only consider the first term since the second term will follow the same way.
%\begin{align*}
%&\left\|\int_0^t e^{i(t-s)\Delta}e_N(s,\cdot)\d s\right\|_{X^1(I)} \\&= \left\|\int_0^t e^{i(t-s)\Delta}\left(\left(b_0|\tilde u(s,\cdot)|^4 - \frac{1}{2}\iint|\tilde u(s)|^2V_N|\tilde u(s)|^2\d y\d z\right) \tilde u(s,\cdot)\right) \d s\right\|_{X^1(I)} \\
%&\le \sup\left|\int_{I} \overline{v(t,x)}\left(\left(b_0|\tilde u(t,x)|^4 - \frac{1}{2}\iint|\tilde u(t)|^2V_N|\tilde u(t)|^2\d y\d z\right) \tilde u(t,x)\right)\d t\d x\right| \\ 
%&\le \left|\right|
%\end{align*}
From Lemma~\ref{lem:nonlinear-estimate} we have that
\begin{align*}
&\left\|\int_0^te^{i(t-s)\Delta} \iint V_N(y,z)\left(|\tilde u(x-y)|^2 - |\tilde u(x)|^2\right)|\tilde u(x)|^2\d y\d z\,\tilde u(s,\cdot)\d s\right\|_{X^1(I)} \\
&\lesssim \|\tilde u\|^4_{X^1(I)} \sup_{|y|\le CN^{-\beta}}\|\tilde u(\cdot-y) - \tilde u\|_{X^1(I)},
\end{align*}
where we have used that $V_N(y,\cdot) = 0$ for $|y|>CN^{-\beta}$ since $V$ has compact support.
Using the global well-posedness of the quintic NLS on $\T^3$ proven in \cite{IonPau-12} we see that this expression is arbitrarily small for $N$ large. By Lemma~\ref{lem:Stability-Torus} we now obtain that \eqref{eq:Hartree-Torus} has a solution $u\in X^1(I)$. \\ \\

To conclude the regularity result \eqref{eq:H4bound1Torus} we use that $\|u\|_{X^1(I)}$ is finite and the nonlinear estimate in Lemma~\ref{lem:nonlinear-estimate}. Here it is important that the right side of \eqref{eq:nonlinear-estimate-torus} includes the weaker norm of $Z'$ which can be made arbitrarily small by localizing in time (unlike the $X^1$-norm). To be precise, since $\|u\|_{Z'(I)}$ is finite, for every $\delta>0$ the time interval $I$ can be split up into finitely many subintervals $I_0,\cdots, I_K$ such that
$$
\|u\|_{Z'(I_j)} \le \delta $$
for each $j =1,\cdots, K$. We also assume $|I_j| \le 1$ for each $j=1,\cdots, K$. Now using Duhamel's formula and Lemma~\ref{lem:nonlinear-estimate} we obtain
\begin{align*}
\|u\|_{X^4(I_0)} &\lesssim \|u_0\|_{H^4(\T^3)} + \|u\|^4_{Z'(I_0)}\|u\|_{X^4(I_0)} \\
&\le \|u_0\|_{H^4(\T^3)} + \delta^4\|u\|_{X^4(I_0)}.
\end{align*} 
If we choose $\delta>0$ small enough, this gives
\begin{align*}
\|u\|_{X^4(I_0)} \lesssim \|u_0\|_{H^4(\T^3)}.
\end{align*}
 The embedding $X^4(I_0)\hookrightarrow L^\infty(I_0,H^4(\T^3))$ now gives
$$\|u(t,\cdot)\|_{H^4(\T^3)} \lesssim \|u_0\|_{H^4(\T^3)},$$
for each $t\in I_0$.
Using this we can iterate the procedure and obtain
$\|u(t,\cdot)\|_{H^4(\T^3)} \lesssim \|u_0\|_{H^4(\T^3)}$ for all $t\in I$.
\end{proof}

\end{document}